\newcommand{\Ua}{U_{a,N,m}}
\newcommand*{\vac}{{\rm vac}}
\newcommand{\appref}[1]{\hyperref[#1]{{Appendix~\ref*{#1}}}}
\newcommand{\be}{\begin{eqnarray} \begin{aligned}}
		\newcommand{\ee}{\end{aligned} \end{eqnarray} }
\newcommand{\benn}{\begin{eqnarray*} \begin{aligned}}
		\newcommand{\eenn}{\end{aligned} \end{eqnarray*}}
\newcommand*{\textfrac}[2]{{{#1}/{#2}}}
\newcommand*{\cA}{\mathcal{A}} 
\newcommand*{\cB}{\mathcal{B}}
\newcommand*{\cC}{\mathcal{C}}
\newcommand*{\cH}{\mathcal{H}}
\newcommand*{\cL}{\mathcal{L}}
\newcommand*{\cQ}{\mathcal{Q}}
\newcommand*{\cP}{\mathcal{P}}
\newcommand*{\cX}{\mathcal{X}}
\newcommand*{\supp}{\mathrm{supp}}
\newcommand{\bc}{\begin{center}}
	\newcommand{\ec}{\end{center}}
\newcommand{\id}{\mathbb{I}}
\newcommand{\e}{\mathrm{e}}
\newtheorem{lemmamain}{Lemma}
\newtheorem{theorem}{Theorem}[section]
\newtheorem{lemma}{Lemma}[section]
\newtheorem{claim}[theorem]{Claim}
\newtheorem{definition}[theorem]{Definition}
\newtheorem{corollary}[theorem]{Corollary}
\def\id{\mathbb{I}}
\def\01{\{0,1\}}
\newcommand{\ceil}[1]{\lceil{#1}\rceil}
\newcommand*{\ExpE}{\mathbb{E}}
\newcommand*{\gkp}{\mathsf{GKP}}
\newcommand*{\lsb}{\mathsf{LSB}}
\newcommand{\ind}{\mathsf{ind}}
\DeclareSymbolFont{cyrletters}{OT2}{wncyr}{m}{n}
\DeclareMathSymbol{\Sha}{\mathalpha}{cyrletters}{"58}
\newcommand{\xmod}[1]{\!\!\mod{#1}\!}
\newcommand*{\round}[1]{\lfloor #1\rceil}
\DeclareMathOperator{\fra}{frac}
\newcommand{\Rem}{\textrm{Rem}}
\def\id{\mathbb{I}}
\def\01{\{0,1\}}
\newcommand*{\qubit}{{\mathbf{\mathrm{Q}}}}
\newcommand*{\boson}{{\mathbf{\mathrm{B}}}}
\newcommand*{\bosonB}{{\mathbf{\mathrm{B}}}}
\newcommand*{\bosonC}{{\mathbf{\mathrm{C}}}}
\newcommand*{\bosonA}{{\mathbf{\mathrm{A}}}}
\renewcommand*{\id}{\mathsf{id}}
\newcommand*{\good}{\mathsf{Good}}
\newcommand*{\dis}{\textnormal{\textsf{discret}}}
\newcommand*{\shorcircuit}{\cQ^{\textrm{Shor}}}
\newcommand*{\ourcircuit}{\cQ}
\newcommand*{\ourcircuitideal}{\cQ^{\textrm{ideal}}}
\newcommand{\Uan}{U_{a,N,m}}
\newcommand{\Vam}{V_{a,N,m}}
\newcommand{\circWithInitStates}{\mathcal{W}_{a,N}}
\newcommand{\semiideal}{\circWithInitStates(e^{-iRP}\Sha_{R}, \gkp, \ket{0})}
\newcommand{\circuitPhysicalStates}{\circWithInitStates(e^{-iRP}\gkp_{\kappa_\bosonA,\Delta_\bosonA}, \gkp_{\kappa_\bosonB,\Delta_\bosonB},\Psi_{\Delta_\bosonC})}
\newcommand{\kett}[1]{\ket{\vphantom{a^{2^2}} #1}}
\newcommand{\cpsi}[1]{c_{#1}}
\newcommand{\cShor}{\cC^{\mathsf{Shor}}}
\newcommand{\xdistance}[1]{\left\|\Psi^{(#1)}-\Psi^{(\the\numexpr #1 + 1\relax)}\right\|_1}
\DeclareMathOperator{\dev}{dev}
\begin{document}
	
	\pagestyle{plain}
	
	\title{Factoring an integer with\\
		three oscillators and a qubit}
	
	\author{
		\IEEEauthorblockN{Lukas Brenner\IEEEauthorrefmark{1}\IEEEauthorrefmark{3}, Libor Caha\IEEEauthorrefmark{1}\IEEEauthorrefmark{3}, Xavier Coiteux-Roy\IEEEauthorrefmark{1}\IEEEauthorrefmark{2}\IEEEauthorrefmark{3}, and Robert Koenig\IEEEauthorrefmark{1}\IEEEauthorrefmark{3}}
		\vspace{12pt}\\
		\IEEEauthorblockA{\IEEEauthorrefmark{1}\footnotesize School of Computation, Information and Technology, Technical University of Munich, Germany}\\
		\IEEEauthorblockA{\IEEEauthorrefmark{2}\footnotesize School of Natural Sciences, Technical University of Munich, Germany}\\
		\IEEEauthorblockA{\IEEEauthorrefmark{3}\footnotesize Munich Center for Quantum Science and Technology, Germany}
	}

	\maketitle
	\thispagestyle{plain}

	\begin{abstract}
		A common starting point of  traditional quantum algorithm design is the notion of a universal quantum computer with a scalable number of qubits. This convenient abstraction mirrors classical computations manipulating finite sets of symbols, and allows for a device-independent development of algorithmic primitives.
		
		Here we advocate an alternative approach centered on the physical setup and the associated set of natively available operations. We show that these can be leveraged to great benefit by sidestepping the standard approach of reasoning about computation  in terms of individual qubits. 
		
		As an example, we consider hybrid qubit-oscillator systems with linear optics operations augmented by certain qubit-controlled Gaussian unitaries. The continuous-variable (CV) Fourier transform has a native realization in such systems in the form of homodyne momentum measurements. We show that this fact can be put to algorithmic use. Specifically, we give a polynomial-time quantum algorithm in this setup which finds a factor of an $n$-bit integer~$N$.  Unlike Shor's algorithm, or CV implementations thereof based on qubit-to-oscillator encodings, our algorithm relies on the CV (rather than discrete) Fourier transform. The physical system used is independent of  the number~$N$ to be factored: It consists of a single qubit and three oscillators only.

	\end{abstract}
	
	Shor's discovery of an efficient quantum factoring algorithm~\cite{Shor} has profoundly impacted the field of quantum computing: It has amplified the hope for meaningful computational advantages derived from quantum effects, and has played a pivotal role in shaping our community's preconceptions about the  prerequisites for quantum computing. This has given  rise to the notion of a universal, scalable quantum computer as succinctly formulated by DiVicenzo's criteria~\cite{DiVicenzocriteria}. Crucially, this concept---which is by now common lore---stipulates that a computationally useful quantum device needs to provide a number of (logical) qubits scaling extensively with the problem size. For example, this is the case when implementing Shor's algorithm: It requires a number of qubits that is proportional to the number of bits specifying the integer to be factored. 
	
	Following this traditional concept of  scalability, the most established current approach when using continuous-variable (CV) systems for quantum computing is to encode a  single logical qubit in a suitable two-dimensional subspace of a bosonic mode. By associating a single physical information carrier (a boson) to each  individual qubit, this philosophy emphasizes modularity, breaking the engineering challenge into more manageable pieces, and allowing for device-independent approaches in algorithms design. It amounts to a quite literal interpretation of what it means to scale a quantum computation. 
	
	Here we argue that in light of the possibilities offered by CV quantum systems, this simple and seemingly inevitable
	idea of scalability may be too restrictive. We give a polynomial-time quantum algorithm which factors an $n$-bit integer (for any integer~$n$) using a  device relying on three harmonic oscillators (bosons) and a single qubit only. The algorithm uses a polynomial number of elementary operations  which are readily available in present-day experimental setups. In other words, our algorithm  trades problem size (length~$n$ of the binary representation of the integer to be factored) against circuit size  (i.e., number of gates) while keeping the underlying physical system (3~bosons and 1~qubit) fixed. In contrast, in Shor's algorithm, both the number of gates and the number of qubits (i.e., the physical system) scale with the problem size~$n$.

	We note that---as any finite-dimensional space can trivially be embedded into  a single harmonic oscillator---the question  of how to realize a computation with a small number of CV information carriers is only meaningful when restricting to basic, physically realizable operations, and with an estimate on the number of operations used. We show that a polynomial number of the following elementary operations---summarized in Fig.~\ref{fig:elementary gates}--- is sufficient to factor an $n$-bit integer, where we assume that unitaries can be classically controlled by measurement results (or efficiently computable functions thereof):
	\begin{enumerate}[(i)]
		\item {\em Single- and two-mode  linear quantum optics operations} with constant coupling strength:
		This includes the preparation of the single-mode vacuum state, a set of constant-strength Gaussian one- and two-mode unitaries, and homodyne quadrature measurements.\label{it:singletwomodeops}
		\item {\em Qubit-controlled single-mode Gaussian unitaries} of constant strength: We use 
		qubit-controlled phase space displacements by a constant, and qubit-controlled phase space rotations with a constant angle.
		\item
		{\em Single-qubit operations}, namely preparation of the single-qubit computational basis state~$\ket{0}$, and application of the single-qubit Hadamard gate.\label{it:singlequbitoperations}
	\end{enumerate}
	We emphasize that these individual building blocks are physically realizable with current technology in appropriate  experimental setups. While linear optics operations are clearly well-established, qubit-boson interactions, typically modeled by a Jaynes-Cummings-type Hamiltonian, are rapidly becoming a standard tool in various platforms such as superconducting circuits, where qubit-controlled displacements (with a resonant drive~\cite{Eickbusch_2022} and in the dispersive regime~\cite{CampagneEikbushetal20}) as well as qubit-controlled phase space rotation~\cite{DispersiveRegimeSCCircuit} (in the dispersive regime) were experimentally implemented. 
	We refer to~\cite{liu2024hybridoscillatorqubitquantumprocessors} for an up-to-date and thorough review of the state of the art, and a detailed discussion of physical realizations of the operations we use here.
	\begingroup
	\begin{figure}
		\centering
		\hspace*{-0.5cm}\includegraphics[width=0.9\linewidth]{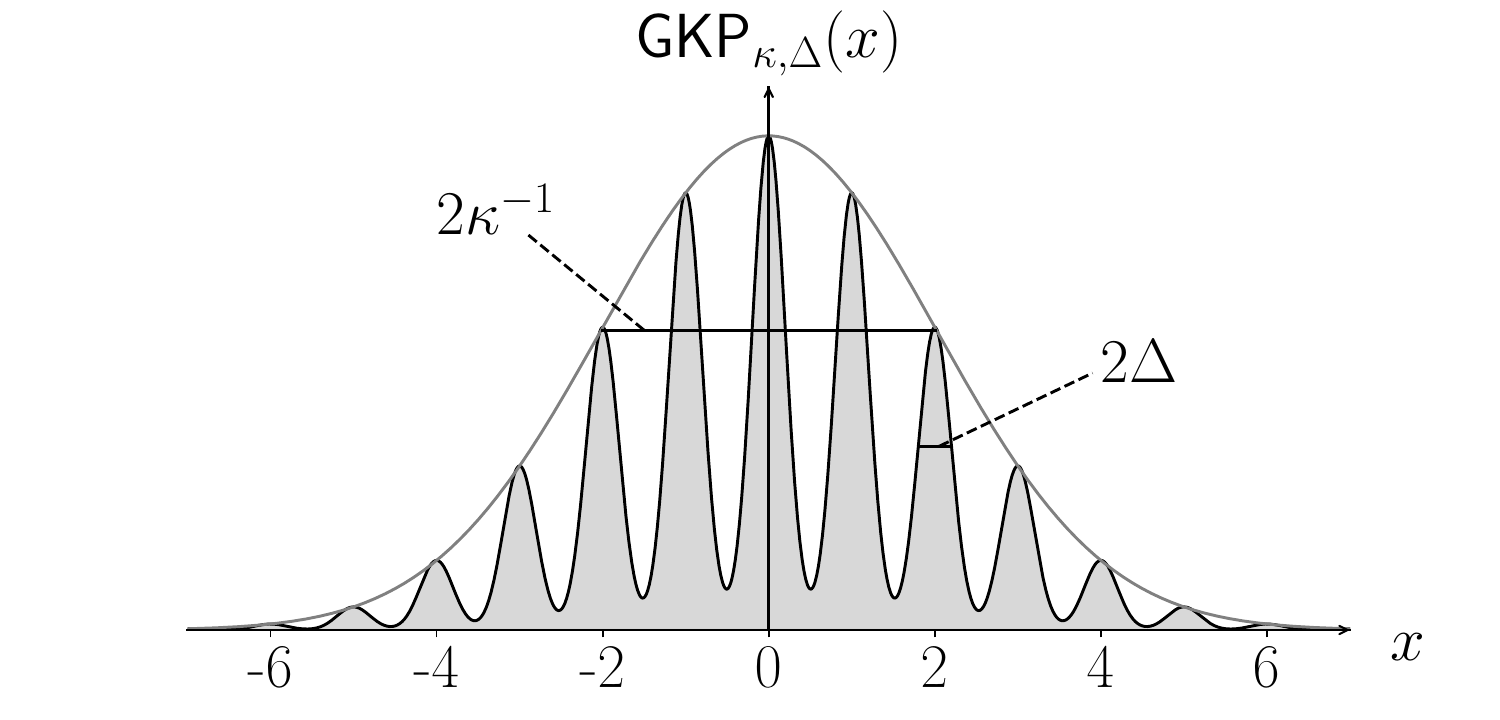}
		\caption{
			The approximate GKP state~$\ket{\gkp_{\kappa,\Delta}}$ defined in Eq.~\eqref{eq:gkpstatedefinitionapproximate}. In the limit~$(\kappa,\Delta)\rightarrow (0,0)$, this becomes a ``comb-state''~$\ket{\Sha}$, a formal superposition of Dirac-$\delta$-distributions centered on integers. For convenience, our convention differs slightly from the error-correction literature, where peaks are typically centered on integer multiplies of $\sqrt{2\pi}$ (instead of integers).\label{fig:approximategkpstates}}
	\end{figure}
	\endgroup

	\begingroup
	\begin{figure*}[bp]
		\centering
		\includegraphics[width=1\textwidth]{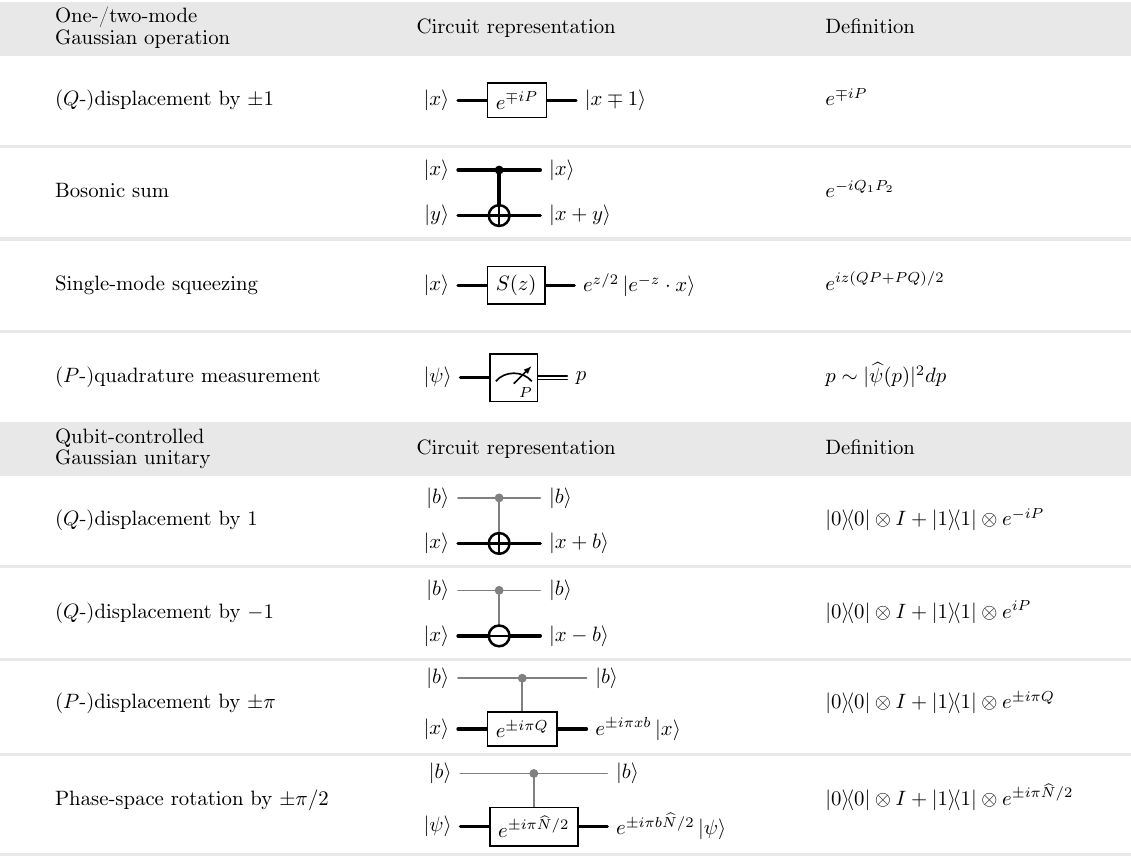}
		\vspace*{0.1cm}
		\caption{ 
			Elementary operations that we use in our circuits (CV systems are depicted by thick wires and qubit systems by thin grey wires).
			The circuit representations illustrate the actions for position-eigenstates~$\{\ket{x}\}_{x\in\mathbb{R}}$ of bosonic modes and the computational basis state~$\{\ket{b}\}_{b\in\{0,1\}}$ of qubit. The two-mode bosonic addition gate~$e^{-iQ_1P_2}$ transforms mode operators according to $(Q_1,P_1,Q_2,P_2)\mapsto (Q_1,P_1-P_2,Q_1+Q_2,P_2)$. A detailed discussion of how this gate can be realized in terms of beam-splitters and single mode squeezing  can be found in Ref.~\cite{liu2024hybridoscillatorqubitquantumprocessors}. A homodyne $P$-quadrature (measurement) produces a sample~$p\in\mathbb{R}$ from the distribution with density function~$p\mapsto |\widehat{\Psi}(p)|^2$, where $\widehat{\Psi}(p):=\frac{1}{(2\pi)^{1/2}}\int \Psi(x)e^{ipx}dx$ denotes the Fourier transform of~$\Psi$ when applied to a state~$\ket{\Psi}\in L^2(\mathbb{R})$. Homodyne $Q$-quadrature measurement is defined similarly. Controlled-phase space rotations are defined in terms of the number operator~$\widehat{N}=(Q^2+P^2-I)/2$.}\label{fig:elementary gates}
	\end{figure*}
	\endgroup
	For suitable choices of parameters (i.e., phase space displacement vectors, squeezing parameters and/or rotation angles), the physical operations~\eqref{it:singletwomodeops}--\eqref{it:singlequbitoperations} naturally realize certain 
	(real) arithmetic operations when the action on bosonic position-eigenstates~$\ket{x}$, $x\in\mathbb{R}$ and qubit computational basis states~$\ket{b}$, $b\in\{0,1\}$ is considered, see Fig.~\ref{fig:elementary gates}. Our algorithm relies on an extended arithmetic toolbox associated with certain 
	composite unitaries. These realize specific  arithmetic functionalities, see Fig.~\ref{fig:composite gates}. One of these unitaries  coherently performs a form of modular exponentiation:  It  computes~$x\mapsto f_{a,N,m}(x)$ for a function~$f_{a,N,m}$ such that \begin{align}
		f_{a,N,m}(x)\equiv a^x\pmod{N}\ \textrm{ for all } x \in \{0,...,2^m-1\}\ .
	\end{align}

	Depending on the setting, such a function can be 
	understood as a proxy for the  modular exponentiation map~$x \mapsto a^x\xmod N$  (with $a, x\in\mathbb{N}$). Correspondingly, we refer to~$f_{a,N,m}$ as a \emph{pseudomodular power}.

	\begingroup
	\begin{figure*}[hbp]
		\centering\includegraphics[width=\textwidth]{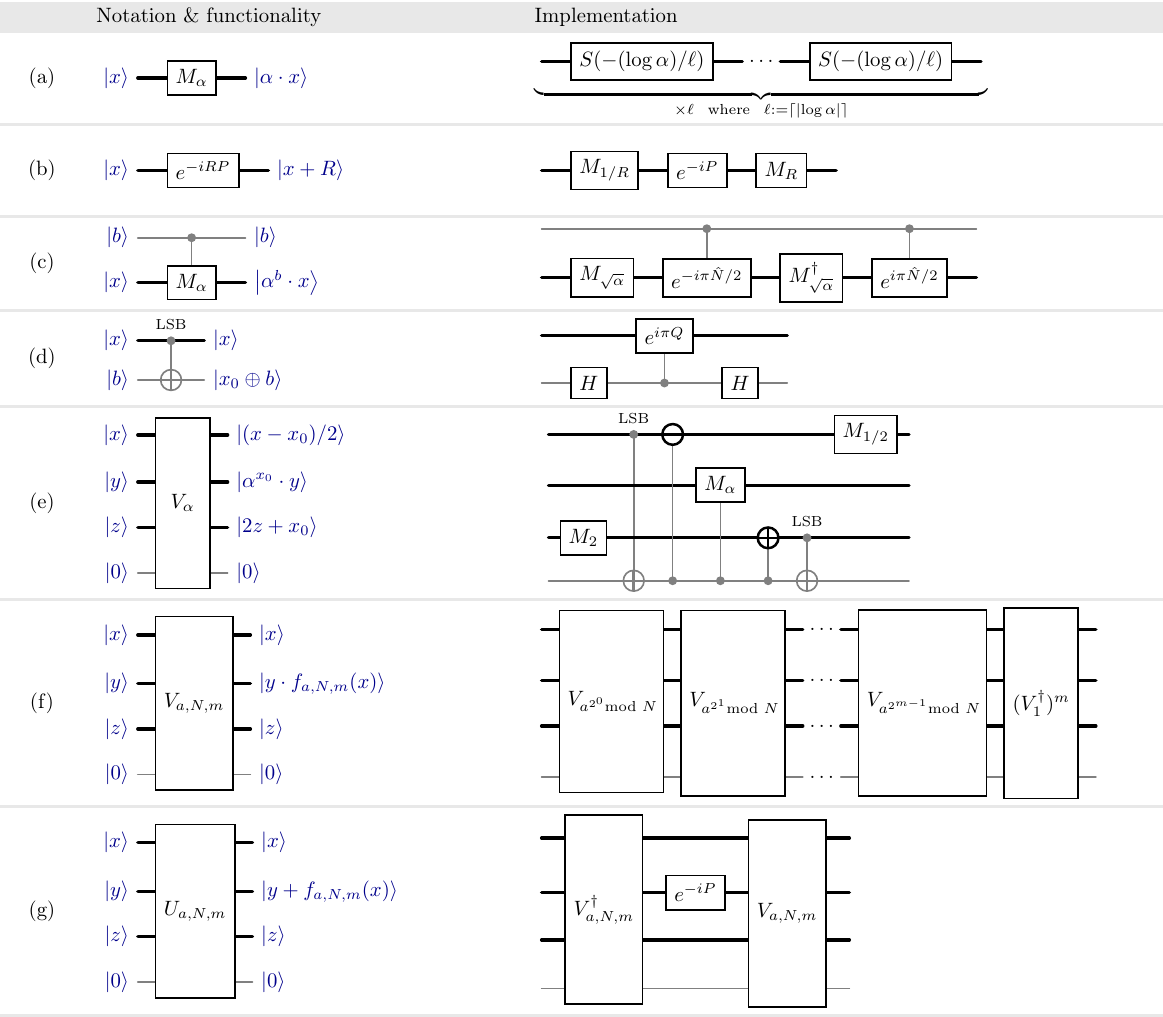}
		\caption{Composite unitaries: Their actions 
			are specified in blue (suppressing normalization factors) for the special case where $x=\sum_{i=0}^{m-1}2^i x_i$ is an $m$-bit integer. They include: (a)~scalar multiplication by a real number $\alpha>0$ (realized by a succession of $\ell\in\mathbb{N}$ constant-strength squeezing operations),
			(b)~translation of the position by a real number~$R>0$,  (c) a qubit-controlled version of scalar multiplication, 
			(d) coherent extraction of the least significant bit~$x_0$ of an integer~$x$ (e) an auxiliary unitary~$V_\alpha$ realizing scalar multiplication of the position~$y$ of the second mode by~$\alpha$, controlled by the least significant bit~$x_0$ of the position~$x$ of the first mode. This unitary then moves the bit~$x_0$ to the third ancillary mode and makes $x_1$ the new least significant bit of the position in the first mode. This is used to build (f) the unitary~$V_{a,N,m}$ that multiplies the position of the second mode by the pseudomodular power~$f_{a,N,m}(x)=\prod_{i=0}^{m-1} \left(a^{2^i}\xmod N \right)^{x_i}$ of the position~$x$ of the first mode. Finally, (g) the unitary~$U_{a,N,m}$ implementing a translation by the pseudomodular power $f_{a,N,m}(x)$ of the second bosonic mode controlled by the position of the first mode. Here  $b\in\{0,1\}$,  $a,N,m\in \mathbb{N}$, $z\in \mathbb{N}_0$ and $y \in \mathbb{R}$ are arbitrary.}
		\label{fig:composite gates}
	\end{figure*}
	\endgroup

	In addition to these derived operations, the second main ingredient of our algorithm are approximate, i.e., finitely squeezed, Gottesman-Kitaev-Preskill (GKP) states: Our circuit uses the single-mode (normalized) approximate GKP states $\ket{\gkp_{\kappa,\Delta}}\in L^2(\mathbb{R})$ (for $\kappa,\Delta>0$) defined by
	\begin{align}
		\gkp_{\kappa,\Delta}(x)&\propto\sum_{z\in\mathbb{Z}}
		e^{-\kappa^2 z^2/2}e^{-(x-z)^2/(2\Delta^2)} \ ,\label{eq:gkpstatedefinitionapproximate}
	\end{align}
	for $x\in\mathbb{R}$, 
	see Fig.~\ref{fig:approximategkpstates}. GKP states  have been introduced and extensively studied in the context of CV-quantum error correction, where their approximate translation-invariance allows to detect  displacement errors. 
	Due to their central importance, various proposals for the efficient preparation of approximate GKP states exist~\cite{Terhal_2016, Weigand_2018, Shi_2019, Hastrup_2021}. As the primary focus in this context is on detecting small (i.e., constant) shifts, existing protocols for their preparation typically target constant parameters~$(\kappa,\Delta)$. Furthermore, in the error correction context, it often suffices to measure  the quality of prepared states in terms of  effective squeezing parameters which capture the degree of invariance with respect to small shifts~\cite{Weigand_2018, Hastrup_2021}.  To our knowledge, the use of approximate GKP states in a quantum algorithmic context is new. This application requires significantly more stringent (i.e., non-constant) parameter choices for~$(\kappa,\Delta)$, as well as a fidelity guarantee. In separate work~\cite{brenneretalGKP2024}, we give an efficient (adaptive)  protocol~$\cP^{\gkp}_{\kappa,\Delta}$ with the following property:

	\begin{lemmamain}[{Corollary 5.3 of \cite{brenneretalGKP2024} (specialized)}] \label{lem: P gkp}
		There are two polynomials~$p,\tilde{p}$ with $p(0)=\tilde{p}(0)=0$ such that the following holds for all sufficiently large~$M\in\mathbb{N}$: For $\kappa=\Delta=p(1/M)$, there exists a state preparation protocol $\cP^{\gkp}_{\kappa,\Delta}$ that outputs a classical ``success'' flag with probability at least~$1/10$,
		and has the property that the output state $\rho\in\cB(L^2(\mathbb{R}))$ conditioned on ``success'' satisfies
		\begin{equation}
			\left\| \rho - \gkp_{\kappa,\Delta} \right\|_1 \le \tilde{p}(1/M)\ . 
		\end{equation}
		The protocol~$\cP^{\gkp}_{\kappa,\Delta}$ uses $O(\log M)$ operations of the form~\eqref{it:singletwomodeops}--\eqref{it:singlequbitoperations}, where some of the involved squeezing and displacement operations are classically controlled by (efficiently computable functions of) the measurement result associated with a $Q$-quadrature measurement.   
	\end{lemmamain}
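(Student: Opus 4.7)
The plan is to construct the protocol $\cP^{\gkp}_{\kappa,\Delta}$ as a measurement-driven, phase-estimation-like procedure that iteratively sculpts a finitely-squeezed seed state into the comb-plus-envelope structure of Eq.~\eqref{eq:gkpstatedefinitionapproximate}. Because $\kappa=\Delta=p(1/M)$ forces $\log(1/\kappa)=O(\log M)$, the target is effectively supported on a polynomial-in-$M$ number of significant peaks, indexable by an $O(\log M)$-bit binary register; this is precisely why $O(\log M)$ qubit-assisted rounds should suffice to carve it out.

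First, I would establish that $\gkp_{\kappa,\Delta}$ is $L^2$-close, up to an error polynomial in $1/M$, to its truncation to peaks with $|z|\le Z$ for $Z=\Theta(\kappa^{-1}\sqrt{\log(1/\kappa)})$. Then I would build the protocol in three stages. Stage~1 prepares a single-mode finitely-squeezed vacuum of position-width $\Delta$ via $O(\log(1/\Delta))$ constant-strength squeezing operations; this serves as the single-peak template. Stage~2 performs $O(\log Z)$ rounds of a qubit-assisted peak-doubling primitive: on round $j$, prepare the qubit via $H\ket{0}=\ket{+}$, apply a qubit-controlled position-displacement of the target mode by $2^{j}$, then apply $H$ and measure the qubit in the computational basis and reset. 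Each round splits every existing Gaussian peak into two, so after $k=O(\log Z)$ rounds the target mode carries a superposition of $2^k$ Gaussian peaks on an integer sublattice. Stage~3 imposes the envelope $e^{-\kappa^2 z^2/2}$ either by a qubit-controlled phase-space rotation of a small constant angle (dampening peaks roughly in proportion to their squared distance from the origin) or by fine-tuning the seed's squeezing. A single preliminary $Q$-quadrature measurement on an auxiliary mode fixes the overall lattice offset, and all subsequent displacements are classically controlled by this outcome, matching the adaptive feed-forward asserted in the statement.

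Finally, I would split the analysis into a fidelity bound and a success-probability bound. For the fidelity, I would expand the output wavefunction as an explicit Gaussian comb and compare it term-by-term with the truncated target from the first step; standard Gaussian $L^2$-estimates then convert to a trace-distance bound of the form $\tilde{p}(1/M)$ for some polynomial $\tilde{p}$. For the success probability, I would argue that the binary outcomes of the peak-doubling rounds can each be absorbed into classical feed-forward — a sign-correcting displacement on subsequent rounds — so that those rounds are effectively deterministic; the $1/10$ success probability then arises from accepting only a constant-sized window of outcomes for the initial $Q$-measurement, whose probability is bounded below uniformly in $M$.

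The main obstacle will be imposing the specific non-uniform envelope $e^{-\kappa^2 z^2/2}$ from within a protocol whose natural output is a uniform superposition over peaks. Coordinating the peak-width $\Delta$, envelope-width $1/\kappa$, and overall normalization — all tied to the same polynomial in $1/M$ — while keeping the total gate count at $O(\log M)$ and the success probability uniformly bounded below by $1/10$, is the central technical challenge, and it is precisely here that the adaptive dependence on the $Q$-quadrature outcome plays its decisive role.
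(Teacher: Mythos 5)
The paper does not actually prove this lemma: it is stated as a specialization of an external result (Theorem~5.1/Corollary~5.3 of the companion work \cite{brenneretalGKP2024}), and the supplementary material restates it in the same way, citing rather than deriving it. There is therefore no in-paper argument against which to compare your construction; what follows assesses your sketch on its own terms.

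Your overall shape---a squeezed seed, $O(\log M)$ qubit-assisted rounds with outcome feed-forward, and a constant-probability acceptance window on a $Q$-quadrature measurement---is the right one and matches the Terhal--Weigand/breeding lineage that \cite{brenneretalGKP2024} builds on. However, the envelope mechanism does not work as you describe. If the Stage~1 seed is a single peak of position-width $\Delta$, then $k$ rounds of qubit-controlled displacement by $2^{k-j}$ followed by measurement and sign-absorbing feed-forward produce $2^k$ roughly \emph{equal-weight} peaks on a window, i.e.\ a flat envelope, not $e^{-\kappa^2 z^2/2}$. Stage~3 as you pose it cannot repair this: a qubit-controlled rotation $e^{i\theta\widehat{N}}$ with constant $\theta$ is not a diagonal-in-$Q$ map (it mixes the $Q$ and $P$ quadratures), so it does not damp peak amplitudes with $|z|$; and ``fine-tuning the seed's squeezing'' only imposes a Gaussian envelope if the seed is already \emph{broad}, of position-width $\sim 1/\kappa$, rather than narrow of width $\Delta$. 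The standard ordering is the reverse of yours: take a broad position-space Gaussian of width $\sim 1/\kappa$ as the seed (this is the envelope), then use the $O(\log M)$ qubit rounds to phase-estimate the stabilizer $e^{2\pi i Q}$, thereby projecting onto a comb of width-$\Delta$ peaks at an offset determined by the measurement record and removed by feed-forward. With that ordering the envelope is inherited automatically, your Stage~3 is unnecessary, and the $1/\kappa$-vs-$\Delta$ role confusion disappears. You would also need to make both the fidelity bound $\tilde{p}(1/M)$ and the $1/10$ acceptance probability quantitative; the sketch currently asserts rather than derives them.
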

	
	With these preparations, we can complete the description of our algorithm. Its ``quantum subroutine'' is  given by the circuit~$\ourcircuit_{a,N}$ depicted in Fig.~\ref{fig:maincircuit}, but with the approximate initial 
	GKP states~$\ket{\gkp_{\kappa_\bosonA,\Delta_\bosonA}},\ket{\gkp_{\kappa_\bosonB,\Delta_\bosonB}}$ replaced by the output states of the preparation procedure~$\cP_{\kappa_\bosonA,\Delta_\bosonA}^{\gkp}$ and~$\cP_{\kappa_\bosonB,\Delta_\bosonB}^{\gkp}$ (for suitably chosen parameters~$\kappa_\bosonA,\Delta_\bosonA,\kappa_\bosonB,\Delta_\bosonB$), respectively.
	We note that for our choice of parameters (see the supplementary material), both the circuit~$\cQ_{a,N}$ and the preparation procedures~$\cP^{\gkp}_{\kappa,\Delta}$, and hence also our quantum subroutine, use~$O(n^2)$ operations from the set of operations~\eqref{it:singletwomodeops}--\eqref{it:singlequbitoperations}.

	\begin{figure*}[bp]
		\centering
		\includegraphics{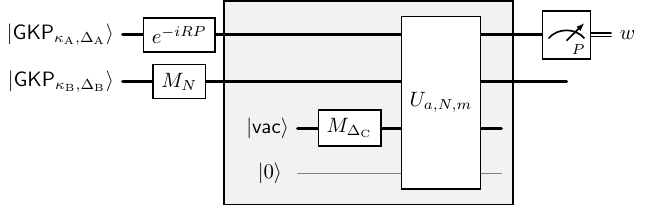}
		\caption{
			The quantum circuit $\ourcircuit_{a,N}$ where $a\in\mathbb{Z}_N^*$---see the supplementary material for a suitable choice of parameters~$R,m,\Delta_\bosonA,\kappa_\bosonA,\Delta_\bosonB,\kappa_\bosonB,\Delta_\bosonC>0$.
			The quantum subroutine used in our algorithm is obtained by
			replacing the initial approximate GKP states on the first and second modes by states prepared by the protocol~$\cP_{\kappa,\Delta}^{\gkp}$ (see Lemma~\ref{lem: P gkp}). 
			It applies a sequence of elementary operations (see  Fig.~\ref{fig:elementary gates}) and the derived unitaries (given in Fig.~\ref{fig:composite gates}) to two approximate GKP states (cf. Eq.~\eqref{eq:gkpstatedefinitionapproximate}) in the first and second mode, a vacuum state~$\ket{\mathsf{vac}}$ in the third mode and a qubit computational basis state~$\ket{0}$ in the qubit system. 
			The output is a sample~$w\in\mathbb{R}$ obtained by performing a  $P$-quadrature measurement on the first mode. The shaded subcircuit 
			approximately computes a pseudomodular power. To provide intuition, we will first discuss the effect of the circuit when this subcircuit is replaced by an ideal unitary~$U_{\mathbb{R},a,N}^{\textrm{ideal}}$ computing the real power, 
			see Fig.~\ref{fig:idealizedcircuit}.  
			\label{fig:maincircuit}
		}
	\end{figure*}
	
	Key to our algorithm is the fact that a single sample from the output distribution of the  circuit~$\ourcircuit_{a,N}$ can be post-processed by an efficient, i.e., polynomial-time classical algorithm, yielding a factor of~$N$ with a substantial probability. That is, we have the following:
	\begin{restatable}{lemmamain}{lemmain}\label{lem:mainlemma}
		Suppose $N$ is an $n$-bit number.
		There is a polynomial-time classical algorithm which---given a uniformly chosen element~$a\sim\mathbb{Z}_N^*$ and a single sample from the output distribution of the circuit~$\ourcircuit_{a,N}$ (see Fig.~\ref{fig:maincircuit})---produces a factor of $N$~with probability at least~$\Omega(1/\log n)$. 
	\end{restatable}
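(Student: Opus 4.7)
The plan is to set up a continuous-variable analog of Shor's order-finding algorithm and then apply the standard classical post-processing. I would proceed in four steps.

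First, I would analyze an idealized variant of $\ourcircuit_{a,N}$ in which the approximate GKP states are replaced by ideal Dirac combs $\ket{\Sha_R}$, the vacuum on mode C is replaced by a position eigenstate, and the shaded subcircuit is replaced by a unitary $U^{\textrm{ideal}}_{\mathbb{R},a,N}$ realizing $\ket{x}\ket{y}\mapsto\ket{x}\ket{y+f_{a,N,m}(x)}$ for integer $x\in\{0,\ldots,2^m-1\}$. Since the comb on mode A assigns equal weight to integer positions and $f_{a,N,m}$ coincides with $x\mapsto a^x\bmod N$ on this range, the joint state of modes A and B immediately before the $P$-measurement is, up to normalization, a uniform superposition of $\ket{x}\ket{a^x\bmod N}$ over $x\in\{0,\ldots,2^m-1\}$. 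Tracing out mode B concentrates mode A on an arithmetic progression $x_0+r\mathbb{Z}$ of spacing $r=\ord_N(a)$, and standard Fourier analysis of a truncated arithmetic progression implies that the $P$-measurement outcome $w$ has density concentrated in peaks of width $O(1/2^m)$ around the points $\{2\pi k/r\mid k\in\mathbb{Z}\}$, with $k$ effectively uniform in $\{0,\ldots,r-1\}$.

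Second, I would transfer this analysis to the physical circuit. Using Lemma~\ref{lem: P gkp} with parameters chosen as in the supplementary material, each GKP state on modes A and B is within inverse-polynomial trace distance of a suitably truncated ideal comb; analogous bounds hold for the finitely squeezed vacuum on C and for replacing $U^{\textrm{ideal}}_{\mathbb{R},a,N}$ by the circuit-realized $U_{a,N,m}$ from Fig.~\ref{fig:composite gates}(g). Summing the $O(n^2)$ elementary errors via the triangle inequality and monotonicity of trace distance under quantum channels, the output distribution of $\ourcircuit_{a,N}$ is within inverse-polynomial total variation of the idealized one. This is where I expect the main technical effort to lie: one has to track how finite squeezing and the Gaussian envelope of the GKP states propagate through $V_{a,N,m}$, which iteratively extracts least-significant bits and performs qubit-controlled scalar multiplications by the constants $a^{2^i}\bmod N$, and verify that the cumulative error remains inverse-polynomial in $n$ rather than blowing up super-polynomially with circuit depth.

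Third, on the classical side I would feed the sample $w$ into the continued-fractions algorithm applied to $w/(2\pi)$ with denominator bound $N$. Concentration around $2\pi k/r$ with inverse-polynomial width guarantees that this returns $k/r$ in lowest terms whenever $\gcd(k,r)=1$, and by the first step $k$ is effectively uniform over $\{0,\ldots,r-1\}$, so this event has probability $\varphi(r)/r=\Omega(1/\log\log r)=\Omega(1/\log n)$. Fourth, I would invoke the standard Miller reduction: having classically disposed of even $N$ and prime powers, for $a$ drawn uniformly from $\mathbb{Z}_N^*$ the order $r=\ord_N(a)$ is even and $a^{r/2}\not\equiv -1\pmod N$ with probability at least $1/2$, in which case $\gcd(a^{r/2}-1,N)$ is a nontrivial factor of $N$. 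Multiplying the $\Omega(1)$ success probabilities of the circuit and the Miller step with the $\Omega(1/\log n)$ contribution from continued fractions yields the claimed bound.
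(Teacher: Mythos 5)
Your four-step plan captures the right high-level architecture — idealize, transfer, apply continued fractions, invoke Miller's reduction — and your step~3 and step~4 match the paper's classical post-processing (Proposition~\ref{prop:propositionzero} combined with the $\phi(r)/r\geq\Omega(1/\log\log r)$ estimate). However, step~2 as stated has a genuine gap. You propose to bound the total-variation distance between the physical output distribution and the output distribution of the idealized circuit, but the idealized circuit uses Dirac-comb states (even a truncated comb $\sum_{x=0}^{2^m-1}\ket{x}$ is a finite sum of Dirac $\delta$'s, hence not in $L^2$). There is no density operator on the idealized side and hence no trace distance to speak of; the triangle inequality with the real circuit's output state does not apply. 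Your appeal to Lemma~\ref{lem: P gkp} compounds the confusion: that lemma bounds the distance between the preparation protocol's output $\rho$ and the (normalizable) state $\gkp_{\kappa,\Delta}$, not the distance between $\gkp_{\kappa,\Delta}$ and an ideal comb, and it is used in the paper only to upgrade Lemma~\ref{lem:mainlemma} to Theorem~\ref{thm:main}, not in the proof of Lemma~\ref{lem:mainlemma} itself.

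The paper's resolution is to introduce an \emph{explicit, normalizable} intermediate state $\Phi_a$ (Eq.~\eqref{eq:stateoutputm}), built from the approximate GKP states themselves, and then to split the work: Proposition~\ref{prop:propositionone} shows $\|\Psi_a-\Phi_a\|_1\leq 364\cdot 2^{-2n}$ by a hybrid argument interpolating through five intermediate states, and Proposition~\ref{prop:propositiontwo} establishes, by a direct Poisson-summation and Jacobi-triple-product analysis of $\widehat{\Phi}_a$, that the $P$-quadrature distribution of $\Phi_a$ assigns mass $\Omega(1/r)$ to each of the target interval unions $\Gamma_d(a)$. Only once the target is a genuine $L^2$-state is the trace-distance argument available. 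You correctly flag that the main technical danger is error accumulation in $V_{a,N,m}$ as bits are iteratively extracted, but your proposal does not resolve it; the paper's approximate-function-evaluation framework (Section~\ref{sec: approx function evaluation}) shows each application of $V_\alpha$ or $V_\alpha^\dagger$ doubles or halves the deviation parameter on the auxiliary mode, producing a geometric series of $\varepsilon^{1/4}$-errors that is summable precisely because $\Delta_\bosonC=2^{-50n}$ is taken exponentially small relative to $2^m$. Without this explicit bookkeeping, the claim that cumulative errors stay inverse-polynomial is an assertion, not a proof.
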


	For a given element~$a\in\mathbb{Z}_N^*$, our quantum subroutine proceeds identically to the circuit~$\ourcircuit_{a,N}$ but with the initial  approximate GKP states~$\ket{\gkp_{\kappa_\bosonA ,\Delta_\bosonA }}$ and
	~$\ket{\gkp_{\kappa_\bosonB ,\Delta_\bosonB}}$ replaced by states prepared using the protocol~$\cP^{\gkp}_{\kappa,\Delta}$. By Lemma~\ref{lem: P gkp}, the corresponding output distribution (conditioned on the preparation being successful, which happens with constant probability) is close to the output distribution of the circuit~$\ourcircuit_{a,N}$. This means that combining this quantum subroutine (for randomly chosen~$a\in\mathbb{Z}_N^*$) with the efficient classical algorithm of Lemma~\ref{lem:mainlemma} produces a factor of~$N$ with comparable probability. By repetition, we obtain the following by a suitable choice of parameters (see the supplementary material):
	
	\begin{restatable}[Efficient quantum algorithm for factoring] {theoremmain}{thmmain}\label{thm:main}
		There is a polynomial-time algorithm which, given an $n$-bit integer $N$,
		\begin{enumerate}[(I)]
			\item repeatedly uses a quantum circuit 
			on $3$~oscillators and $1$~qubit consisting of $O(n^2)$ elementary operations~\eqref{it:singletwomodeops}--\eqref{it:singlequbitoperations}, and
			\item produces a factor of $N$ with constant probability.
		\end{enumerate}
	\end{restatable}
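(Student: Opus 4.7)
The plan is to compose the quantum subroutine of Fig.~\ref{fig:maincircuit} (with ideal GKP inputs replaced by outputs of the preparation protocol $\cP^{\gkp}_{\kappa,\Delta}$) with the classical post-processing of Lemma~\ref{lem:mainlemma}, and amplify the success probability by repetition. First I would fix a parameter $M=M(n)=\poly(n)$ large enough that $\tilde{p}(1/M)\le c/\log n$ for a sufficiently small constant~$c$, determined by the constant hidden in the $\Omega(1/\log n)$ guarantee of Lemma~\ref{lem:mainlemma}. The parameters $(\kappa_\bosonA,\Delta_\bosonA)$ and $(\kappa_\bosonB,\Delta_\bosonB)$ stipulated by the supplementary material must then be chosen consistently with the form $\kappa=\Delta=p(1/M)$ required by Lemma~\ref{lem: P gkp} for this~$M$; this is a bookkeeping step, taking $M$ just large enough to simultaneously meet both the correctness tolerance implicit in Lemma~\ref{lem:mainlemma} and the availability constraints of Lemma~\ref{lem: P gkp}.

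Conditioned on both independent invocations of $\cP^{\gkp}$ succeeding---an event of probability at least $(1/10)^2=1/100$ by independence---the three-oscillator-plus-qubit input state~$\rho$ of the unitary part of $\ourcircuit_{a,N}$ differs from its ideal counterpart in trace distance by at most $2\tilde{p}(1/M)$, using subadditivity of trace distance under tensor product with the fixed vacuum and qubit states. Since the rest of $\ourcircuit_{a,N}$ consists of unitaries followed by a terminal $P$-quadrature measurement on the first mode, contractivity of trace distance under completely positive trace-preserving maps implies that the total variation distance between the sample distribution of our quantum subroutine (conditioned on preparation success) and that of $\ourcircuit_{a,N}$ itself is at most $2\tilde{p}(1/M)\le 2c/\log n$. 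The data-processing inequality ensures this bound carries through the classical post-processing of Lemma~\ref{lem:mainlemma}, so a single successful execution produces a factor of $N$ with probability at least $\Omega(1/\log n)-2c/\log n=\Omega(1/\log n)$ for $c$ chosen small enough. Including the $1/100$ preparation-success factor, one run of the full procedure produces a factor with probability $\Omega(1/\log n)$.

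Repeating $T=\Theta(\log n)$ times, each with a freshly sampled $a\sim\mathbb{Z}_N^*$ and independent GKP preparations, and returning any factor encountered, yields overall success probability $1-(1-\Omega(1/\log n))^{T}=\Omega(1)$. For the gate count, the pseudomodular-power unitary~$U_{a,N,m}$ iterates $m=O(n)$ stages of the gadget~$V_\alpha$ of Fig.~\ref{fig:composite gates}, each performing a qubit-controlled scalar multiplication by~$a^{2^i}\xmod N$ realized through $O(n)$ constant-strength squeezings, so $\ourcircuit_{a,N}$ uses $O(n^2)$ elementary operations; the two GKP preparations add only $O(\log M)=O(\log n)$ further operations. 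Hence a single invocation of the quantum subroutine uses $O(n^2)$ operations from the set~\eqref{it:singletwomodeops}--\eqref{it:singlequbitoperations} on the fixed physical register of three oscillators and one qubit. The main obstacle is the quantitative compatibility between the two lemmas: the GKP preparation error $\tilde{p}(1/M)$ must be made small relative to the constant hidden in the $\Omega(1/\log n)$ guarantee of Lemma~\ref{lem:mainlemma}, and the parameters at modes~$\bosonA$ and~$\bosonB$ must both be realizable as $(p(1/M),p(1/M))$ for a common~$M$; checking that the parameter choices in the supplementary material meet all these constraints is the main quantitative bookkeeping step.
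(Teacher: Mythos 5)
Your overall architecture matches the paper's: use Lemma~\ref{lem:mainlemma} to extract a factor with probability $\Omega(1/\log n)$ from a single run, replace the exact GKP inputs by states prepared with $\cP^{\gkp}_{\kappa,\Delta}$, control the resulting distance to the ideal output distribution by the data-processing inequality, and amplify by $\Theta(\log n)$ repetitions. That is the paper's route.

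There is, however, a concrete error in your gate-count reasoning. You assert that it suffices to take $M=M(n)=\poly(n)$ with $\tilde p(1/M)\le c/\log n$, and that the two GKP preparations therefore ``add only $O(\log M)=O(\log n)$ further operations.'' Both claims conflict with the parameter choices in Table~\ref{tab: parameters} that Lemma~\ref{lem:mainlemma} already requires. The proof of Lemma~\ref{lem:mainlemma} (via Propositions~\ref{prop:propositionone}--\ref{prop:propositiontwo}) is not correct for arbitrary small $\kappa,\Delta$; it needs specifically $\kappa_\bosonA=\Delta_\bosonA=2^{-16n}$, $\kappa_\bosonB=\Delta_\bosonB=2^{-18n^2}$, and $\Delta_\bosonC=2^{-50n}$. (These stringent values are dictated by the internal error budgets of the circuit analysis, e.g.\ the bound $\kappa_\bosonB^2 N^{2m}\le 2^{-2n^2}$ used in Lemma~\ref{lem: untruncation_error_tail}, not by the modest $1/\log n$ tolerance of the post-processing.) Since the preparation protocol uses $O(\log 1/\kappa)+O(\log 1/\Delta)$ operations, the mode-$\bosonB$ preparation alone costs $O(n^2)$, and the corresponding $M$ is of order $2^{\Theta(n^2)}$, not polynomial in $n$. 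Moreover, modes $\bosonA$ and $\bosonB$ use different $(\kappa,\Delta)$ and hence different values of $M$; there is no ``common $M$'' to tune as you suggest. Your final $O(n^2)$ conclusion happens to be correct, because the circuit $\circWithInitStates$ is itself of size $O(n^2)$, but the stated justification for the GKP-preparation cost is wrong and, taken literally, would also underestimate the preparation error budget: you cannot freely scale down the GKP parameters without redoing the analysis behind Lemma~\ref{lem:mainlemma}.

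Two smaller points. First, your lower bound of $1/100$ for the joint GKP-preparation success is correct and matches the paper's use of constant success probability, though the paper folds this into the final amplification step rather than tracking it separately. Second, you silently treat $\Psi_{\Delta_\bosonC}$ as a ``fixed vacuum state''; in fact it is a squeezed vacuum with $\Delta_\bosonC=2^{-50n}$, whose preparation costs $O(n)$ squeezings. This does not change the asymptotic count, but should be accounted for.
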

	In the remainder of this paper, we explain the key ideas behind the circuit~$\ourcircuit_{a,N}$ (cf.  Fig.~\ref{fig:maincircuit}), and the corresponding post-processing procedure. We then comment on the technical challenges involved in establishing Lemma~\ref{lem:mainlemma}. The full details of this derivation can be found in the supplementary material.
	
	Our factoring algorithm is the result of  translating Shor's algorithm~\cite{Shor} to CV systems, with specific modifications exploiting their potential. 
	These modifications mean that our construction---although formally quite similar to Shor's algorithm---is not simply obtained by  embedding a finite-dimensional quantum computation into an infinite-dimensional system.\footnote{As we show in separate work, such an embedding can indeed be constructed in a straightforward manner: any polynomial-time quantum computation on $n$~qubits can be realized to any desired accuracy using a polynomial number of operations from~\eqref{it:singletwomodeops}--\eqref{it:singlequbitoperations} acting on a system consisting of a single qubit and a single oscillator only.}  
	Instead, our approach relies on different algebraic structures: We use an approximate GKP state as a proxy for a uniform superposition over all integers instead of a uniform superposition over all $n$-bit integers as in Shor's algorithm. Instead of realizing modular arithmetic by  gates acting on finite-dimensional systems, our gates natively perform real arithmetic. To realize modular arithmetic we exploit the (approximate) stabilization property of (approximate) GKP states under suitable (discrete) displacements, a property that underlies their relevance for quantum error correction. Finally,  our algorithm leverages the Fourier transform on~$\mathbb{R}^n$ instead of the Fourier transform over a finite cyclic group. With these alternative choices, we can show that the physical  operations~\eqref{it:singletwomodeops}--\eqref{it:singlequbitoperations} (together with an efficient classical computation) are sufficient to address the factoring problem. The difference between our approach and Shor's algorithm is further expressed in the different complexities of the resulting algorithms: the quantum subroutine of our algorithm has circuit size~$O(n^2)$, whereas that of Shor's algorithm has size~$O(n^2\log n)$ (when relying on the best currently known classical multiplication algorithm with size~$O(n\log n)$ by~\cite{HarveyvanderHoeven}).

	{\bf Shor's circuit: Modular arithmetic and the Fourier transform over~$\mathbb{Z}_q$.} In more detail, recall that Shor's algorithm~\cite{Shor} builds on 
	Miller's reduction~\cite{Miller1976} of the problem of finding the prime factorization\footnote{Whenever the period~$r$ of $f_{a,N}$ is even  and $x^{r/2}\neq -1 \xmod N$, it can be used to compute a factor of~$N$. For an integer~$a\in \mathbb{Z}_N^*$ chosen uniformly at random,
		this is the case for the period~$r$ of~$f_{a,N}$ except with probability~$2^{-k}$, where $k$~is the number of factors of~$N$ (by an application of the Chinese remainder theorem).}
	of an (odd) integer~$N$ to the problem of  finding the period~$r$ of the function~$f_{a,N}(x)= a^x\xmod N$, where $a\in\mathbb{Z}_N^*$. Given a pair $(a,N)$, Shor's algorithm applies an efficient classical post-processing algorithm to a sample~$c$  produced by a (polynomial-size) quantum circuit~$\shorcircuit_{a,N}$. This gives the period~$r$ of the function~$f_{a,N}$ with probability at least $\Omega(\frac{1}{\log \log r})$, resulting in a polynomial-time factorization algorithm by repeated application. 
	\begin{figure*}[bp]
		\centering
		
		\begin{subfigure}[b]{0.4\linewidth}
			\centering
			\includegraphics{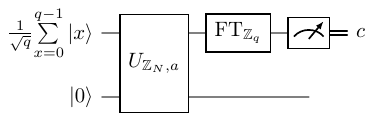}
			\vspace*{0.1cm}
			\caption{Shor's circuit $\shorcircuit_{a,N}$.}\label{fig:shorcircuit}
		\end{subfigure}~~~~~~~~~~~~~~~~~~~~
		\begin{subfigure}[b]{0.5\linewidth}
			\centering
			\includegraphics{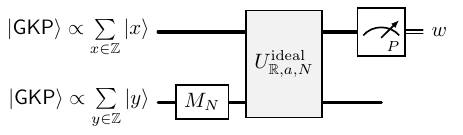}
			\caption{An idealized bosonic circuit $\ourcircuitideal_{a,N}$.}\label{fig:idealizedcircuit}
		\end{subfigure}
		
		\caption{Contrasting Shor's circuit with an idealized bosonic circuit $\ourcircuitideal_{a,N}$: We replace modular by real arithmetic. 
			Instead of modular exponentiation followed by Fourier transform and computational basis measurement, we use standard (real) exponentiation and homodyne detection. To obtain ``modularity'' (i.e., realize $\pmod N$ computations), the second bosonic mode is 
			initialized in a modified GKP state with spacing~$N$.}
		\label{fig:circuitsforfactoring}
	\end{figure*}

	A high-level description of Shor's circuit~$\shorcircuit_{a,N}$ is given in Fig.~\ref{fig:shorcircuit}.  
	It uses a binary encoding of $(\log_2 q)$-bit integers into~$\log_2 q$~qubits, where $q$ is the smallest power of $2$ such that $N^2 < q$. The circuit starts with the first register in the uniform superposition~$\frac{1}{q^{1/2}}\sum_{x=0}^{q-1}\ket{x}$ over all integers $z\in \{0,\ldots,q-1\}=:\mathbb{Z}_{q}$,  coherently computes the function~$f_{a,N}$ into a second register using a ``modular exponentiation unitary'' $U_{\mathbb{Z}_N,a}$ acting as
	\begin{align}
		U_{\mathbb{Z}_N,a}(\ket{x}\otimes\ket{y})=\ket{x}\otimes\ket{y+(a^x \xmod N)}\ ,
	\end{align} applies the (discrete) Fourier transform~$\textrm{FT}_{\mathbb{Z}_{q}}$  on $L^2(\mathbb{Z}_q)$ and finally measures the first register in the computational basis giving an outcome~$c\in\mathbb{Z}_q$. 
	Shor shows that  the output distribution~$p_{\mathbb{Z}_{q}}(c)$ is close to the uniform distribution on the set~$\{\frac{q}{r}\cdot d\ |\ d\in \{0,\ldots,r-1\}\}$, i.e., the output~$c$ is an integer multiple of~$q/r$ with high probability. The classical post-processing algorithm  then proceeds as follows:  With the  continued fraction expansion of~$c/q$, the number~$c/q$ is rounded to the nearest fraction of the form~$d'/r'$ with denominator~$r'$ smaller than~$N$. The value of~$r=r'$ can then be extracted whenever~$d'$ and~$r'$ are coprime.

	Here we argue that---in place of the circuit~$\shorcircuit_{a,N}$---the idealized bosonic circuit~$\ourcircuitideal_{a,N}$ given in Fig.~\ref{fig:idealizedcircuit} can be used, assuming that the classical post-processing procedure is appropriately modified.
	We note that the circuit~$\ourcircuitideal_{a,N}$  involves non-normalizable GKP states, i.e., the
	formal uniform superposition
	\begin{align}
		\ket{\gkp} \propto \sum_{x\in\mathbb{Z}}\ket{x}\ ,\label{eq:gkpidealstateuniform}
	\end{align}
	also called ``comb-states'' cf.~Fig.~\ref{fig:approximategkpstates}.  The circuit~$\ourcircuitideal_{a,N}$ is thus not physical, but it nevertheless illustrates the key ideas underlying our physical circuit~$\ourcircuit_{a,N}$ in Fig.~\ref{fig:maincircuit}. Let us highlight how it differs from  Shor's circuit~$\shorcircuit_{a,N}$.
	
	{\bf Real instead of modular arithmetic. }Clearly, 
	the expression~\eqref{eq:gkpidealstateuniform} is a natural analog of the uniform superposition of basis states used in Shor's circuit~$\shorcircuit_{a,N}$. The circuit~$\ourcircuit_{a,N}$ also    uses a CV-analogue of the modular exponentiation unitary~$U_{\mathbb{Z}_N,a}$.  Concretely, consider the unitary~$U^{\textrm{ideal}}_{\mathbb{R},a,N}$ which acts on pairs of position-eigenstates~$\ket{x},\ket{y}$ as
	\begin{align}
		&U_{\mathbb{R},a,N}^{\textrm{ideal}}(\ket{x}\otimes\ket{y})=\ket{x}\otimes\ket{y+a^x} \label{eq:idealizedcircuitfunctionality}
	\end{align}
	where
	\begin{align}
		a^{x}:=\begin{cases}
			a^{x} & \text{if~$x\ge 0$}\\
			(a^{-1} \xmod N)^{|x|}& \text{if $x<0$}\ .
		\end{cases}&    
	\end{align}
	
	Importantly, the exponentiation  in the definition of~$a^x$ is not taken modulo~$N$, but is to be understood over the non-negative reals. Integer values belonging to~$\mathbb{Z}_{N}$ are only obtained subsequently by means of a modular measurement.
	
	{\bf Homodyne measurement instead of Fourier transform.}
	As already mentioned, our bosonic circuits~$\ourcircuit_{a,N}$
	and $\ourcircuitideal_{a,N}$ exploit a peculiarity of CV-systems that allows to circumvent the need for constructing an implementation of the unitary Fourier transform. Indeed, homodyne $P$-quadrature measurements (natively available in typical quantum-optical systems) are equivalent to a Fourier transform~$\textrm{FT}_{\mathbb{R}}$ on $L^2(\mathbb{R})$ followed by a homodyne $Q$-quadrature measurement. Motivated by this, the circuit~$\ourcircuitideal_{a,N}$ simply applies a homodyne $P$-quadrature measurement on the first mode, giving a sample from the distribution
	\begin{align}
		p'_{\mathbb{R}}(w)\propto \left|\langle \widehat{w}|\otimes I, U_{\mathbb{R},a,N}^{\textrm{ideal}}(\ket{\gkp}\otimes M_N\ket{\gkp})\right|^2\ ,\label{eq:pprimedistributiondefinition}
	\end{align}
	for $w \in \mathbb{R}$.
	Here $\ket{\widehat{w}}:=\textrm{FT}_{\mathbb{R}}^\dagger \ket{w}$  is the momentum-eigenstate to value~$w$, i.e., $P\ket{\widehat{w}}=w\ket{\widehat{w}}$. 
	
	{\bf Recovering modular arithmetic with auxiliary $\gkp$ states.}
	A further key difference between Shor's circuit~$\shorcircuit_{a,N}$ and our bosonic circuits~$\ourcircuitideal_{a,N}$, (and the physically realizable circuit) $\ourcircuit_{a,N}$
	is the choice of initial state in the second register (see Fig.~\ref{fig:circuitsforfactoring}): Whereas in Shor's circuit,  the auxiliary register is initialized in the state~$\ket{0}$ (corresponding to $0\in\mathbb{Z}_q$), the idealized circuit~$\ourcircuitideal_{a,N}$ uses~$M_N\ket{\gkp}$ in its place. This is a uniform superposition~$M_N\ket{\gkp}\propto \sum_{y\in\mathbb{Z}}\ket{y\cdot N}$ of integer multiples of~$N$. This choice is dictated by the need to realize a computation modulo~$N$ (rather than only real arithmetic): Together 
	with the subsequent steps in the algorithm, it essentially realizes a  modular  $P$-quadrature measurement. 
	
	We note that modular measurements have traditionally been used in syndrome extraction circuits for GKP-codes~\cite{GKP}. An example is shown in Fig.~\ref{fig:gkpsyndromeextraction}. Our  circuits~$\ourcircuitideal_{a,N}$  and $\ourcircuit_{a,N}$
	are partly motivated by such modular measurement circuits. Indeed, by combining real arithmetic with this approach, they effectively realize an analogue of modular exponentiation followed by measurement.
	\begingroup
	\begin{figure}
		\centering
		\includegraphics{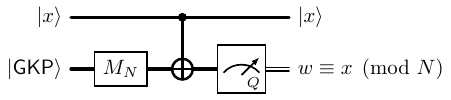}
		\caption{Modular position measurement using an auxiliary GKP state: The modular value~$x\xmod N$ of a position-eigenstate~$\ket{x}$ is obtained by ``adding'' $x$ to an auxiliary system in the state~$M_N\ket{\gkp}$, and then measuring the $Q$-quadrature of that system. Because of the fact that $M_N\ket{\gkp}\propto\sum_{y\in\mathbb{Z}}\ket{y\cdot N}$ is the uniform superposition of position-eigenstates with spacing~$N$, this circuit provides~$w\equiv x \pmod N$ (and no other information on~$x$). This reasoning has been used to construct syndrome extraction circuits for GKP-codes~\cite{GKP}.
			\label{fig:gkpsyndromeextraction}
		} 
	\end{figure}
	\endgroup
	
	{\bf Output distribution and classical post-processing.} A brief computation which we give below shows that the output distribution $p'_{\mathbb{R}}$ of the circuit~$\ourcircuitideal_{a,N}$  (see Eq.~\eqref{eq:pprimedistributiondefinition}) is (formally) 
	the uniform distribution on the set~$\left\{j/r\ |\ j\in\mathbb{Z}\right\}$.
	A suitable classical post-processing procedure is then evident: whenever the outcome~$j/r$ has the property that~$j$ and $r$ are coprime, the period can be recovered immediately from~$j/r$. Following Shor's analysis, such a favorable outcome is obtained with probability at least~$\Omega(1/\log \log r)$, leading to a polynomial overall runtime when the circuit is used repeatedly\footnote{The integers~$j$ and $r$ are coprime with probability at least $\phi(r)/r$, where $\phi(\cdot)$ is Euler's totient function. Using the bound given in~\cite[Thm. 328]{HardyWright}, one can conclude that we recover the period $r$ of $a$ modulo $N$ with probability at least $\delta/\log\log r$ for some $\delta = \Theta(1)$. Therefore, it is enough to repeat the procedure $\log\log N$ times, i.e.\@ logarithmically many times in the number of bits of $N$ to succeed with probability at least $1-e^{-1}$.}.

	To show the claim that the output distribution~$p'_{\mathbb{R}}$ is (formally) uniform on the set~$\{ j/r\ |\ j\in\mathbb{Z}\}$, first observe that  the state before the measurement can be written as
	\begin{align}
		U_{\mathbb{R},a,N}^{\textrm{ideal}}(\ket{\gkp}\otimes M_N\ket{\gkp})\label{eq:statetobemeasuredxam}\\
		\propto \sum_{x\in\mathbb{Z}}\sum_{y\in\mathbb{Z}}\ket{x}\otimes &\ket{y\cdot N+a^x}\\
		\propto \sum_{x\in \mathbb{Z}}\sum_{y\in\mathbb{Z}}\ket{x}\otimes&\ket{y\cdot N+(a^x\xmod N)}\ .
	\end{align}
	We note that  modular arithmetic arises here  because of the invariance of the state~$M_N\ket{\gkp}$ under translations by integer multiples of~$N$.
	
	To analyze the output distribution~$p'_{\mathbb{R}}$, we may without loss of generality assume that the $Q$-quadrature of the second system of the state~\eqref{eq:statetobemeasuredxam} is measured (instead of simply tracing out the system). Since~$r$ is the order of the element~$a$ in the multiplicative group~$\mathbb{Z}_N^*$, the measurement outcome $m\in\mathbb{R}$ has the form $m=\ell\cdot N+a^k$ for some $\ell\in\mathbb{Z}$ and $k\in \{0,\ldots,r-1\}$. The corresponding post-measurement state of the first mode then is the state
	\begin{align}
		\sum_{\substack{x\in\mathbb{Z}\\ a^x\equiv a^k \pmod{N}}}\ket{x}
		&= \sum_{\substack{x\in\mathbb{Z}\\ x\equiv k \pmod{r}}} \ket{x}\\
		&= \sum_{j\in\mathbb{Z}} \ket{j\cdot r+k}=:\ket{\Sha_{k,r}}\  .
	\end{align}
	Because the state~$\ket{\Sha_{k,r}}=e^{-ikP}\ket{\Sha_{0,r}}$ is the result of a translation in position applied to the 
	state~$\ket{\Sha_{r}}:=\ket{\Sha_{0,r}}$, its Fourier transform~$\widehat{\Sha}_{k,r}$ differs from that of the Dirac comb~$\Sha_{r}(x)\propto\sum_{j\in\mathbb{Z}} \delta(x-jr)$  (which is identical to~$M_r\ket{\gkp}$) only by a $p$-dependent phase factor, i.e., ~$\widehat{\Sha}_{k,r}(p)=e^{-ikp}\widehat{\Sha}_{0,r}(p)$. In particular, this implies that when measuring the momentum of the first mode, the resulting distribution over outcomes~$c\in\mathbb{R}$ is independent of~$k$ and equal to $p'_{\mathbb{R}}(c) \propto \left|\widehat{\Sha}_{r}(c)\right|^2$. The claim then follows from the fact that the Fourier transform of the Dirac comb is
	$\widehat{\Sha}_r(x)=\frac{1}{r}\Sha_{\frac{1}{r}}(x)$, see e.g.,~\cite{bracewell2000fourier}.
	
	{\bf A physically realizable circuit.} This formal discussion of the idealized circuit~$\ourcircuitideal_{a,N}$ merely illustrates the basic ideas. The circuit~$\ourcircuitideal_{a,N}$ falls short of being a real, i.e., physically amenable circuit in two important ways. First, it relies on idealized (infinitely squeezed) GKP states which are unnormalizable and hence unphysical. Second, we have not provided a circuit decomposition of the unitary~$U_{\mathbb{R},a,N}^{\textrm{ideal}}$ (defined in Eq.~\eqref{eq:idealizedcircuitfunctionality}) into elementary operations from the list of operations~\eqref{it:singletwomodeops}--\eqref{it:singlequbitoperations}. These issues are addressed by the circuit~$\ourcircuit_{a,N}$  (see Fig.~\ref{fig:maincircuit}) and Lemma~\ref{lem:mainlemma}, which summarizes the result of the corresponding analysis.

	The main building blocks of the circuit~$\ourcircuit_{a,N}$ in Fig.~\ref{fig:maincircuit}  are:
	\begin{enumerate}[(1)]
		\item Two approximate GKP states $\ket{\gkp_{\kappa_\bosonA, \Delta_\bosonA}}$ and $\ket{\gkp_{\kappa_\bosonB, \Delta_\bosonB}}$ with parameters $\kappa_\bosonA = \Delta_\bosonA = 2^{-\Theta(n)}$ and $\kappa_\bosonB=\Delta_\bosonB = 2^{-\Theta(n^2)}$, respectively. (As explained above, high-fidelity approximations to these states can be produced by a protocol having polynomial complexity in~$n$, see Lemma~\ref{lem: P gkp}.)

		\item A displacement $e^{-iRP}=M_Re^{-iP}M_{R^{-1}}$ with $R = 2^{\Theta(n)}$ applied to the first mode (which can be realized by a polynomial number of constant-strength operations): This 
		shifts the center (expectation value of~$Q$) of the initial state~$\ket{\gkp_{\kappa_\bosonA ,\Delta_\bosonA}}$ to the right,  ensuring that the subsequent 
		arithmetic operations (corresponding to the coherent evaluation of a pseudomodular power) are applied to positive arguments 
		only---up to a negligible part having to do with the shape of the envelope of the state $\gkp$.
		\item The operation $M_N$ applied to the second mode is used to prepare the state $M_N\ket{\gkp}$ (used to implement a modular measurement, cf. Fig.~\ref{fig:gkpsyndromeextraction}).
		\item The unitary $M_{\Delta_\bosonC}$ applied to the vacuum state in the third mode, where $\Delta_\bosonC = 2^{-\Theta(n)}$. (We note that this unitary can be realized by a polynomial number of constant-strength squeezing operations, cf. Fig.~\ref{fig:elementary gates}.)

		\item
		The unitary~$U_{a,N,m}$ with parameter $m = \Theta(n)$ acting on~$L^2(\mathbb{R})^{\otimes 3} \otimes\mathbb{C}^2$ constitutes the rest of the circuit before the measurement. This unitary acts  on position-eigenstates~$\ket{x}\otimes \ket{y}\otimes \ket{z}$ where $z=0$
		and a qubit in the state~$\ket{0}$ as 
		\begin{align}
			U_{a,N,m}\left(\ket{x}\otimes\ket{y}\otimes\ket{0}\otimes \ket{0}\right)\qquad\qquad\qquad \\
			\quad \qquad =\ket{x}\otimes \ket{y + f_{a,N,m}(x)}\otimes\ket{0} \otimes \ket{0}\ ,
		\end{align}
		when $x\in\{0,..., 2^m-1\}$ and $y\in\mathbb{R}$ is arbitrary, i.e., it coherently computes the pseudomodular power~$f_{a,N,m}(x)$ (see Fig.~\ref{fig:composite gates}). It takes the role of~$U_{\mathbb{R},N}^{\mathsf{ideal}}(a)$ in the idealized circuit~$\ourcircuitideal_{a,N}$ (see Fig.~\ref{fig:idealizedcircuit}). 
		\item A homodyne $P$-quadrature measurement of the first bosonic mode  realizing the Fourier transform on $L^2(\mathbb{R})$ of the first mode followed by a homodyne $Q$-quadrature measurement.
	\end{enumerate}

	{\bf Analysis of the physically realizable circuit.} The use of physically realizable (approximate) GKP states~$\ket{\gkp_{\kappa_\bosonA ,\Delta_\bosonA  }}$ and~$\ket{\gkp_{\kappa_\bosonB ,\Delta_\bosonB }}$ with (finite) squeezing parameters $(\kappa_\bosonA ,\Delta_\bosonA)$ and $(\kappa_\bosonB ,\Delta_\bosonB)$ in the circuit~$\ourcircuit_{a,N}$ necessitates a detailed analysis of arithmetic function evaluation by unitaries on~$L^2(\mathbb{R})^{\otimes 3} \otimes\mathbb{C}^2$. Specifically, we have to consider  non-integer inputs, i.e., cases when position states~$\ket{x}$ with $x\in\mathbb{R}\backslash \mathbb{Z}$ are involved. By continuity arguments and using the definition of approximate  GKP states, we argue that for functions~$\Psi\in L^2(\mathbb{R})^{\otimes 3} \otimes\mathbb{C}^2$ whose support is concentrated sufficiently close to the set of integers, the considered unitaries approximately implement a rounded version of function evaluation. For example, we argue that---when acting on a product  state~$\Psi_\bosonA\otimes\Psi_\bosonB\otimes\Psi_\bosonC\otimes\ket{0}_\qubit$ such that 
	(i) $\Psi_\bosonA$ is (in position-space) fully supported in the neighborhood of an integer~$x\in\mathbb{N}_0$, and (ii)~$\Psi_\bosonC$ is supported near $z=0$ ---the unitary~$U_{a,N,m}$ displaces the position of the second mode by the pseudomodular power~$f_{a,N,m}(x)$ of~$x$ (see Lemma~\ref{lem: Ua shor circuit error} in the supplementary material for a quantitative statement). Applied to the circuit~$\ourcircuit_{a,N}$, this analysis shows that the associated output distribution on~$\mathbb{R}$ still has the property that the samples can still be post-processed to find the period~$r$ with a significant probability.
	
	{\bf Conclusions and Outlook.} 
	Our polynomial-time factoring algorithm exemplifies the benefits of hardware-aware quantum algorithms design: We show how to algorithmically leverage a set of natively available operations in hybrid qubit-oscillator-systems. This draws attention to a physically motivated computational model which deserves further study from a complexity-theoretic perspective. Despite the efficiency guarantee we establish, our proposal is---at present---a proof-of-principle proposal of theoretical nature. To realize our algorithm experimentally, significant hurdles still need to be overcome. 
	
	First, although our algorithm only uses a polynomial number of active single-mode (squeezing) operations, the states produced in the course of our algorithm have extensive energy. This fact is hardly surprising---indeed, it is necessary to meaningfully encode the integer to be factored. We note that the challenge of producing highly squeezed state is certainly not a new one, but a primary focus of experimentally oriented research. In this sense, our proposal aligns with central objectives in the field.
	
	Second, while our algorithm establishes a new connection between continuous-variable quantum error-correcting codes (in the form of GKP states) and quantum algorithms, it does not incorporate fault-tolerance considerations. We hope that---similar to the way Shor’s algorithm spurred interest in quantum error-correction---our work will motivate further research in fault-tolerant quantum information-processing in hybrid qubit-oscillator systems. 
	
	{\bf Acknowledgements.}
	LB, LC and RK acknowledge support by the European Research Council
	under grant agreement no. 101001976 (project EQUIPTNT), as well as the Munich Quantum Valley, which is supported by the Bavarian state government through the Hightech Agenda Bayern Plus. XCR thanks the Swiss National Science Foundation (SNSF) for their financial support and acknowledges funding from the BMW endowment fund.
	\normalem
	
	\addtocontents{toc}{\protect\setcounter{tocdepth}{-3}}
	
	\addtocontents{toc}{\protect\setcounter{tocdepth}{3}}
	\ULforem

	\clearpage
	\renewcommand*{\thepage}{\normalfont\arabic{page}}
	\fontsize{11}{15}\selectfont
	\onecolumn 
	\titleformat{\section}{\Large\bfseries}{\thesection}{1em}{}
	\titleformat{\subsection}{\large\bfseries}{\thesubsection}{1em}{}
	\titleformat{\subsubsection}{\bfseries}{\thesubsubsection}{1em}{}
	\titlespacing*{\section}{0pt}{*2.5}{*1.5}
	\titlespacing*{\subsection}{0pt}{*2.5}{*1.5}
	\titlespacing*{\subsubsection}{0pt}{*2}{*1}
	\def\thesection{\arabic{section}}
	\def\thesubsection{\arabic{section}.\arabic{subsection}}
	\def\thesubsubsection{\arabic{section}.\arabic{subsection}.\arabic{subsubsection}}
	
	\section*{\huge{Supplementary material}}
	In this supplementary material, we provide additional intuition on our factorization algorithm and 
	establish our main result (Theorem~\ref{thm:main} in the main text). 
	Recall that the quantum circuit $\cQ_{a,N}$  (see Fig.~\ref{fig:maincircuit} in the main text) is the quantum subroutine used in the factorization algorithm. We reproduce it here for the reader's convenience:
	\begin{align}
		\cQ_{a,N} ~&:=~
		\vcenter{\hbox{\includegraphics{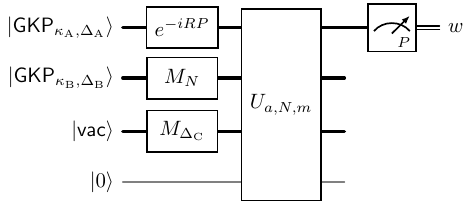}}}
	\end{align}
	Here $N\in\mathbb{N}$ is an $n$-bit integer to be factored (i.e., $2^{n-1}\leq N\leq 2^n-1$), and~$a\in\mathbb{Z}_N^*$ is a randomly chosen integer coprime to~$N$. We denote the first, second and third modes and the qubit in the circuit $\cQ_{a,N}$ by the labels $\bosonA,\bosonB,\bosonC$ and $\qubit$, respectively.
	The circuit~$\cQ_{a,N}$ additionally depends on the parameters~$(R,m)$
	and~$(\kappa_\bosonA,\Delta_\bosonA,\kappa_\bosonB,\Delta_\bosonB, 
	\Delta_\bosonC)$. In this work, we  make the choice of parameters given by Table~\ref{tab: parameters}.
	\begin{table}[ht]
		\centering
		\renewcommand{\arraystretch}{1.3}
		\normalsize
		\begin{tabular}{c||l}
			\multirow{2}{*}{Unitaries} & $m=17n$ \\
			& $R 
			=2^{17n-1} $\\
			\hline\hline
			system $\bosonA$ & $\kappa_{\bosonA}=\Delta_{\bosonA}=2^{-16n}$\\
			\hline
			system $\bosonB$ & $\kappa_{\bosonB}=\Delta_{\bosonB}=2^{-18n^2}$\\
			\hline
			system $\bosonC$ & $\Delta_{\bosonC} = 
			2^{-50n}$
		\end{tabular}
		\caption{The parameters~$(N,a,m,R)$ determine 
			what unitaries are applied in the circuit~$\cQ_{a,N}$. The parameters~$(\kappa_\bosonA,\Delta_\bosonA)$, $(\kappa_\bosonB,\Delta_\bosonB)$ and~$\Delta_\bosonC$ determine the amount of squeezing in the modes~$\bosonA,\bosonB$ and~$\bosonC$, respectively. The table gives a choice of these parameters such that output samples of~$\cQ_{a,N}$ can be used to factor~$N$, and~$\cQ_{a,N}$ can be realized by a circuit of depth and size polynomial in~$n$.}
		\label{tab: parameters}
	\end{table}
	
	We proceed as follows: To develop intuition, we first discuss a version of the quantum subroutine where the 
	initial  approximate 
	GKP states~$\ket{\gkp_{\kappa_\bosonA,\Delta_\bosonA}}$,~$\ket{\gkp_{\kappa_\bosonB,\Delta_\bosonB}}$  are replaced by idealized GKP states and where the vacuum state~$\ket{\vac}$ is replaced by a position-eigenstate localized at~$x=0$ (see Section~\ref{sec: circuit explanation}).  In Section~\ref{sec: approximate analysis}, we prove our main result (Theorem~\ref{thm:main} in the main text). We show
	that the circuit~$\ourcircuit_{a,N}$ 
	can be used to find a factor of~$N$ even when physically meaningful initial states (approximate GKP states) are used. This proof relies on three propositions, which we show in subsequent sections. They can be  summarized as follows:
	\begin{LaTeXdescription}
		\item[Proposition 1] (shown in Section~\ref{sec: prop zero proof}) identifies 
		a sufficient condition for when 
		access to samples of an ($N$-dependent) probability distribution on~$\mathbb{R}$ can be used to efficiently find a factor of~$N$. 
		\item[Proposition 2] (shown in Section~\ref{sec: prop one proof}) derives an expression for (an approximation to) the state of the circuit~$\cQ_{a,N}$ before the measurement.
		\item[Proposition 3] (shown in Section~\ref{sec: prop two proof}) analyzes the 
		output distribution of~$\cQ_{a,N}$, and shows that the condition of Proposition~1 is satisfied.
	\end{LaTeXdescription}
	Combined with the efficient preparation procedure for approximate GKP states  from~\cite{SMbrenneretalGKP2024} (see Lemma~\ref{lem: P gkp} in the main text), Propositions~$1$--$3$ imply 
	our main result (Theorem~\ref{thm:main}). 
	Indeed, Propositions~$1$--$3$ mean that the output of the circuit~$\cQ_{a,N}$  can be efficiently post-processed to obtain a factor of~$N$ (see Lemma~\ref{lem:mainlemma}). To obtain a circuit which does not use approximate GKP initial states as a resource, 
	we use the GKP state preparation protocol from~\cite{SMbrenneretalGKP2024} to prepare initial states $\sigma_{\kappa_\bosonA,\Delta_\bosonA}, \sigma_{\kappa_\bosonB,\Delta_\bosonB}\in\cB(L^2(\mathbb{R}))$ close to approximate GKP states in modes $\bosonA$ and $\bosonB$, respectively. Finally, we achieve the constant success probability stated in Theorem~\ref{thm:main} by repetition.

	We need a number of auxiliary concepts and results.
	In Section~\ref{sec: approx function evaluation} we discuss how to perform approximate function evaluation using CV quantum systems. 
	In Section~\ref{sec: properties appx GKP}, we summarize relevant properties of approximate GKP states. Combinatorial bounds on Gaussian sums used in our analysis are given in Section~\ref{sec: combinatorial bounds}.

	\subsection*{Circuits analyzed in this supplementary material}
	It is convenient to view the circuit~$\cQ_{a,N}$, as well as other circuits appearing in our analysis, as elements of a family of circuits parametrized by a triple~$(\rho_\bosonA,\rho_\bosonB,\rho_\bosonC)$ of single-mode initial states which may or may not be mixed. For each such triple, we define a circuit $\circWithInitStates(\rho_\bosonA,\rho_\bosonB,\rho_\bosonC)$ 
	by  replacing the initial states (up to the unitaries~$e^{-iRP}$ on~$\bosonA$ and $M_{\Delta_\bosonC}$ on~$\bosonC$) 
	in the circuit~$\cQ_{a,N}$ by $(\rho_\bosonA,\rho_\bosonB,\rho_\bosonC)$,  i.e., we set
	\vspace*{0.3cm}
	\begin{align}
		\circWithInitStates(\rho_\bosonA,\rho_\bosonB,\rho_\bosonC)~&:=~
		\vcenter{\hbox{\includegraphics{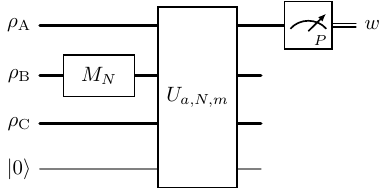}}}
		\label{eq:cW circuit}
	\end{align}
	\vspace*{0.3cm}
	
	\noindent
	With this convention, the circuits we analyze can be described as follows:
	\begin{enumerate}
		\item The circuit~$\cQ_{a,N}$ 
		is equal to
		\begin{align}
			\cQ_{a,N}=\circWithInitStates(e^{-i R P }\gkp_{\kappa_\bosonA,\Delta_\bosonA},\gkp_{\kappa_\bosonB,\Delta_\bosonB},\Psi_{\Delta_\bosonC})\ ,
		\end{align}
		where $\Psi_{\Delta_\bosonC}:=M_{\Delta_\bosonC}\ket{\vac}$ is a squeezed vacuum state.
		\item 
		The idealized circuit we discuss in Section~\ref{sec: main circuit ideal initial states} to develop intuition is 
		\begin{align}
			\circWithInitStates(\e^{-iRP}\Sha_{R}, \gkp, \ket{0})\ ,
		\end{align}
		where 
		$\ket{\Sha_R}\propto \sum_{x=-R}^{R-1}\ket{x}$ is a (formal) superposition of position-eigenstates~$\ket{x}$, where $x\in \{-R,\ldots,R-1\}$,~$\ket{\gkp}\propto \sum_{x\in\mathbb{Z}}\ket{x}$ is an idealized GKP state,  and~$\ket{0}$ is the position-eigenstate with eigenvalue~$x=0$. 
		\item
		The circuit used to obtain Theorem~\ref{thm:main} uses states~$\sigma_{\kappa_\bosonA,\Delta_\bosonA}$ and $\sigma_{\kappa_\bosonB,\Delta_\bosonB}$ produced by our approximate GKP state preparation protocol~\cite{SMbrenneretalGKP2024}  in modes~$\bosonA$ and~$\bosonB$, with a displacement applied to mode $\bosonA$, as well as a squeezed vacuum state~$\Psi_{\Delta_\bosonC}$ in  mode~$\bosonC$, i.e., it amounts to the circuit
		\begin{align}
			\circWithInitStates(e^{-iRP} \sigma_{\kappa_\bosonA,\Delta_\bosonA} e^{iRP},\sigma_{\kappa_\bosonB,\Delta_\bosonB},\Psi_{\Delta_\bosonC})\ .
		\end{align}
	\end{enumerate}
	In the proof of Proposition~\ref{prop:propositionone}, we will also use the following intermediate circuit, which depends on three ``truncation'' parameters~$(\varepsilon_\bosonA,\varepsilon_\bosonB,\varepsilon_\bosonC)$:
	\begin{enumerate}\setcounter{enumi}{3}
		\item 
		Given~$\varepsilon_\bosonA,\varepsilon_\bosonB,\varepsilon_\bosonC\in (0,1/2)$, we consider the circuit
		\begin{align}
			\circWithInitStates(c\cdot \Pi_{[-1/2,2R-1/2]} e^{-i R P }\gkp_{\kappa_\bosonA,\Delta_\bosonA}^{\varepsilon_\bosonA},\gkp_{\kappa_\bosonB,\Delta_\bosonB}^{\varepsilon_\bosonB},\Psi_{\Delta_\bosonC}^{\varepsilon_\bosonC})\ ,
		\end{align}
		where we define $\Pi_{\cC}$ as the orthogonal projection onto the subspace of $L^2(\mathbb{R})$ of functions supported on the set $\cC \subseteq\mathbb{R}$, i.e., $\Pi_{\cC}$ acts on $L^2(\mathbb{R})$ as pointwise multiplication by the characteristic function of the set $\cC$. 
		The factor $c$ is a normalization constant.
		We define the union of intervals of length~$2\varepsilon$ centered around integers as $\mathbb{Z}(\varepsilon)=\mathbb{Z}+[-\varepsilon,\varepsilon]=\{x+y\mid x\in\mathbb{Z},\ y\in\mathbb{R} ,\ |y|\le\varepsilon\}$.
		The state $\gkp^\varepsilon_{\kappa,\Delta} =
		\Pi_{\mathbb{Z}(\varepsilon)}\gkp_{\kappa,\Delta}/\|\Pi_{\mathbb{Z}(\varepsilon)}\gkp_{\kappa,\Delta}\|$
		is then obtained by truncating the support 
		of the approximate GKP state~$\gkp_{\kappa,\Delta}$ to $\mathbb{Z}(\varepsilon)$. 
		Similarly, $\Psi^{\varepsilon}_{\Delta}=\Pi_{[-\varepsilon,\varepsilon]}\Psi_\Delta/\|\Pi_{[-\varepsilon,\varepsilon]}\Psi_\Delta\|$ is the result of truncating the support of the Gaussian state~$\Psi_\Delta$  to the interval~$[-\varepsilon,\varepsilon]$.
	\end{enumerate}
	
	\clearpage
	\tableofcontents
	
	\clearpage
	\section{Action of the  circuit~$\semiideal$}\label{sec: circuit explanation}
	In this section, we give a high-level  overview of the 
	functionality of our  circuit~$\ourcircuit_{a,N}$ (Fig.~\ref{fig:maincircuit} in the main text), as well as its  various building blocks. To do so, we consider idealized initial states which are defined in terms of the position-``eigenstates''~$\ket{x}$ of the position (quadrature) operator~$Q$. Here~$\ket{x}$ denotes the eigenstate of~$Q$ with eigenvalue~$x\in\mathbb{R}$. 
	Strictly speaking, these are distributions, and so are most of the expressions we deal with in this section. This discussion is for illustrative purposes only. We shall give a rigorous version of these arguments dealing with normalizable functions (i.e., elements of~$L^2(\mathbb{R})$) in Section~\ref{sec: approximate analysis}.

	Recall that we obtain the circuit $\semiideal$ from the circuit~$\ourcircuit_{a,N}$ by replacing the initial state of 
	\begin{enumerate}
		\item  mode~$\bosonA$  by the formal uniform superposition
		$e^{-iRP} \ket{\Sha_R}= e^{-iRP} \sum_{x=-R}^{R-1} \ket{x}=\sum_{x=0}^{2R-1}\ket{x}$ of position-eigenstates~$\{\ket{x}\}_{x=0}^{2R-1}$, 
		\item mode~$\bosonB$ by the idealized GKP state
		$\ket{\gkp}\propto \sum_{y\in\mathbb{Z}}\ket{y}$,
		defined as the (formal) uniform superposition of position-eigenstates with integer positions~$y\in\mathbb{Z}$,
		\item  mode~$\bosonC$ by the position-eigenstate~$\ket{0}$ localized at $x=0$.
	\end{enumerate}
	The corresponding circuit~$\semiideal$ is shown in Fig.~\ref{fig:idealizedcircuitforanalysis} with intermediate states $\Psi_0,\Psi_1,\Psi_2$.
	
	\begin{figure}[h!]
		\begin{center}
			\includegraphics{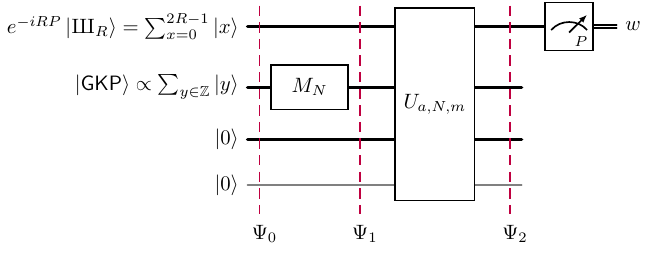}
		\end{center}
		\caption{The circuit
			$\semiideal$ which uses idealized initial states. We indicate intermediate states (distributions)~$\Psi_0,\Psi_1,\Psi_2$ that are used in our analysis in Section~\ref{sec: main circuit ideal initial states}.
			\label{fig:idealizedcircuitforanalysis}}
	\end{figure}
	
	\noindent
	To compute the state~$\Psi_2$ produced in this circuit (i.e., the state before measurement, see Fig.~\ref{fig:idealizedcircuitforanalysis}), we 
	analyze the unitary $\Uan$. This  entails an analysis of the elementary gates in Section~\ref{sec: building blocks individual} as well as an analysis of the unitaries composing the unitary $\Uan$ (in Sections~\ref{sec: ideal V gate} and~\ref{sec: VaNm idealized}). This will allow us to verify that~$\semiideal$ implements what can be seen as a CV version of the quantum subroutine of Shor's algorithm, see Section~\ref{sec: main circuit ideal initial states}.
	
	\subsection{Action of composite unitaries}\label{sec: building blocks individual}
	
	In this section, we discuss the action of the composite unitaries introduced in Fig.~\ref{fig:composite gates} of the main paper.  Specifically, we study their effect on position-eigenstates and explain how the corresponding action arises from the elementary operations introduced in Fig.~\ref{fig:elementary gates} of the main paper. We also analyze the corresponding circuit size (i.e., the number of elementary gates used) and depth (i.e., the number of gate layers) in each case.

	\subsubsection{Scalar multiplication}\label{sec: ideal scalar multi}
	Let $\alpha>0$ be arbitrary. The multiplication operator $M_\alpha$ is defined in terms of the squeezing operator~$S(z)=e^{iz(QP+PQ)/2}$ as
	\begin{align}
		M_\alpha = S( -\log \alpha) \ .
	\end{align}
	The action of~$M_\alpha$
	on an element~$\Psi\in L^2(\mathbb{R})$ is given 
	by
	\begin{align}
		(M_\alpha\Psi)(x)&=\frac{1}{\sqrt{\alpha}}\Psi(x/\alpha)\qquad\textrm{ for }\qquad x\in\mathbb{R}\ .\label{eq:malphaactionexplained}
	\end{align}
	This implies that for~$x\in\mathbb{R}$, the position-eigenstate~$\ket{x}$ is transformed as
	\begin{align}
		M_\alpha\ket{x}&\propto \ket{\alpha \cdot x}\qquad\textrm{ for any }x\in\mathbb{R}\ .\label{eq:positionbasisactionmalapha}
	\end{align}
	(Here and in the following we omit normalization constants since position-eigenstates are not normalizable in the first place. We note that these normalization constants are fixed by the action of the corresponding function being evaluated, i.e., scalar multiplication by~$\alpha$ in this case, see Section~\ref{sec: approx function evaluation}.) 
	In other words, the operator~$M_\alpha$ acts as a scalar multiplication by~$\alpha$ in position-space. We represent this as 
	\begin{align}
		\vcenter{\hbox{\includegraphics{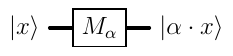}}}
		\qquad\textrm{ for }\qquad x\in \mathbb{R}\ .
	\end{align}
	
	We note that this action can equivalently be understood by the fact that squeezing is a Gaussian unitary, and the action on the mode operators $(Q,P)$ in the Heisenberg picture is given by 
	\begin{align}
		\begin{matrix}
			M_\alpha^\dagger Q M_\alpha&=&\alpha Q\\
			M_\alpha^\dagger P M_\alpha&=&\frac{1}{\alpha} P\ .
		\end{matrix}\label{eq:heisenbergactionmalpha}
	\end{align}
	Eqs.~\eqref{eq:malphaactionexplained},~\eqref{eq:positionbasisactionmalapha} and~\eqref{eq:heisenbergactionmalpha} are all essentially equivalent expressions defining the action of~$M_\alpha$. We will use these different formalisms interchangeably, but will mostly rely on terms of the form~\eqref{eq:positionbasisactionmalapha}.

	We can find the following decomposition of $M_\alpha$ in terms of elementary operations. Let 
	\begin{align}
		\ell:= \ceil{\abs{\log \alpha}}\ .
	\end{align}
	We decompose unitary $M_\alpha$ as
	\begin{align}
		M_\alpha&=S(-(\log\alpha)/\ell)^\ell\ ,\label{eq:malphadecomposition}
	\end{align}
	where the power $\ell$ indicates repeated application of~$\ell$ single-mode squeezing operations $S(-(\log \alpha)/\ell)$  of a constant strength, i.e., the absolute value of the squeezing parameter is bounded by $\abs{(\log\alpha)/\ell} \le 1$. Therefore the unitary~$M_\alpha$ can be implemented by a circuit of size and depth~$\ell=O(\abs{\log \alpha})$ for $\alpha\to\infty$ (respectively $\alpha\to 0$).

	\subsubsection{Translation by $R$}\label{sec: translation by R}
	Let $R\geq 1$. Recall that the unitary~$e^{-iRP}$ is Gaussian and acts as a translation in position-space, i.e.,
	\begin{align}
		\begin{matrix}
			e^{iRP}Qe^{-iRP}&=&Q+R\cdot I\\
			e^{iRP}Pe^{-iRP}&=&P\ 
		\end{matrix}\label{eq:expresstranslation}\ , 
	\end{align}
	where $I$ denotes the identity on~$L^2(\mathbb{R})$. We express this as
	\begin{align}
		\vcenter{\hbox{\includegraphics{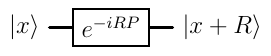}}}
		\qquad \textrm{ for \qquad $x\in \mathbb{R}$} \ .
	\end{align}
	Combining~\eqref{eq:heisenbergactionmalpha} and 
	~\eqref{eq:expresstranslation}, it is straightforward to verify that 
	\begin{align}
		e^{-iRP} =  M_{R} \cdot e^{-iP}\cdot M_{1/R}\ . \label{eq:eirpdecompos}
	\end{align}
	By decomposing both~$M_R$ and $M_{1/R}$
	as in Eq.~\eqref{eq:malphadecomposition}, 
	Eq.~\eqref{eq:eirpdecompos} implies that the unitary~$e^{-iRP}$ can be realized by a circuit of size and depth~$O(\log R)$.

	\subsubsection{Qubit-controlled scalar multiplication}\label{sec: ideal controlled scalar mult} 
	The qubit-controlled version of 
	the unitary~$M_\alpha$ is denoted $\mathsf{ctrl}M_\alpha$. It  multiplies the position $x\in\mathbb{R}$ of a bosonic mode by a scalar $\alpha>0$ if the control qubit is in the state~$\ket{1}$.  
	This operation is introduced in  Fig.~\ref{fig:composite gates} of the main paper as
	\begin{align}
		\vcenter{\hbox{\includegraphics[scale=1]{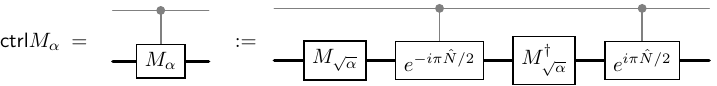}}} \ \ \ ,
	\end{align}
	where $\hat{N}=(Q^2+P^2-I)/2$.
	
	We can show that the unitary $\mathsf{ctrl}M_\alpha$ has the following action on a product state~$\ket{b}\otimes\ket{x}$, where~$\ket{b}\in \{\ket{0},\ket{1}\}$ is a computational basis state of a qubit, and $\ket{x}$ for $x\in\mathbb{R}$ denotes the position-eigenstate: We have
	\begin{align}
		\mathsf{ctrl}M_\alpha \left(\ket{b}\otimes\ket{x}\right)
		&\propto
		\begin{cases} 
			\ket{0}\otimes\ket{x}& \qquad \textrm{ if } b=0\\
			\ket{1}\otimes\ket{\alpha\cdot x} & \qquad \textrm{ if } b=1\ .
		\end{cases}    
	\end{align}
	We denote this as
	\begin{align}
		\vcenter{\hbox{\includegraphics[scale=1]{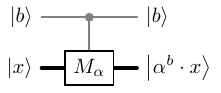}}}
		\qquad \textrm{ for}\qquad b\in\{0,1\} \ \textrm{ and }\  x\in\mathbb{R}\ . 
	\end{align}
	This claim is trivial for~$b=0$. For $b=1$, it follows  from the identity
	\begin{align}
		e^{-i\pi \hat{N}/2}M_{\sqrt{\alpha}}^\dagger e^{i\pi \hat{N}/2}
		&=M_{\sqrt{\alpha}}
	\end{align}
	because the Gaussian unitary~$e^{i\pi \hat{N}/2}$ acts as 
	\begin{align}
		\begin{matrix}
			e^{i\pi \hat{N}/2}Qe^{-i\pi\hat{N}/2}&=&-P\\
			e^{i\pi \hat{N}/2}Pe^{-i\pi\hat{N}/2}&=&Q\ .
		\end{matrix}
	\end{align}

	By definition, the circuit realizing $\mathsf{ctrl}M_\alpha$ is composed of two controlled $e^{\pm i \pi \hat{N}/2}$ gates and two unitaries $M_{\sqrt{\alpha}}$ and $M_{\sqrt{\alpha}}^\dagger=M_{1/\sqrt{\alpha}}$ (see Section~\ref{sec: ideal scalar multi}).
	Therefore, both the circuit size as well as the circuit depth of the gate $\mathsf{ctrl}M_\alpha$ are dominated by that of $M_{\sqrt{\alpha}}$ and scale as $O(|\log \alpha|)$ for $\alpha\to\infty$ (respectively $\alpha\to 0$).

	\subsubsection{Coherent extraction of the least significant bit (LSB)}\label{sec: ideal LSB}
	Let $x\in \mathbb{Z}$ be an integer. We define the least significant bit (LSB) of $x$ as $x_0=x\xmod 2$.

	The following circuit (see Fig.~\ref{fig:composite gates} of the main paper) defines a unitary~$U^{\lsb}_{\bosonA \rightarrow \qubit}$ acting on mode $\bosonA$ and qubit $\qubit$:
	\begin{align}
		U^{\lsb}_{\bosonA \rightarrow \qubit}~=~%
		\raisebox{-42pt}{\mbox{\includegraphics{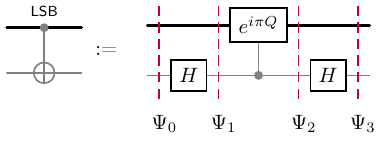}}}\ .\label{circ: LSB}
	\end{align}
	We show that it extracts the LSB of the position of mode $\bosonA$ into the qubit $\qubit$. That is, for any integer~$x\in\mathbb{Z}$ encoded into the position-eigenstate $\ket{x}$ of a mode $\bosonA$,
	this unitary exactly computes the LSB $x_0$ of~$x$ into the qubit $\qubit$, i.e., we have 
	\begin{align}
		U^{\lsb}_{\bosonA\rightarrow \qubit}\left(\ket{x}_\bosonA\otimes\ket{b}_Q\right)=\ket{x}_\bosonA\otimes\ket{x_0\oplus b}_\qubit\qquad\textrm{ for all }x\in \mathbb{Z} \textrm{ and }b\in \{0,1\}\ ,\label{eq:lsbgateactiondef}
	\end{align}
	where $\oplus$ denotes addition modulo~$2$.
	\begin{proof}
		We consider the intermediate states $\Psi_0$, $\Psi_1$ and $\Psi_2$, see \eqref{circ: LSB}.
		Starting from $\ket{\Psi_0}\propto \ket{x}\otimes\ket{b}$, we derive
		\begin{align}
			\ket{\Psi_1}&\propto (I\otimes H)\ket{\Psi_0}\\
			&\propto \ket{x} \otimes \big(\ket{0} + (-1)^{b} \ket{1} \big)\ ;
		\end{align}
		hence
		\begin{align}
			\ket{\Psi_2}&\propto \big(I\otimes I+e^{i \pi Q} \otimes \proj{1}\big) \ket{\Psi_1}\\
			&\propto \ket{x}\otimes\ket{0} + (-1)^{b} e^{i \pi x} \ket{x}\otimes\ket{1} \\
			&\propto \ket{x}\otimes\ket{0} + (-1)^{b} e^{i \pi x_0} \ket{x}\otimes\ket{1}\label{eq: x to x0}\\
			&\propto \ket{x}\otimes\ket{0} + (-1)^{x_0+b} \ket{x}\otimes\ket{1} \ .
		\end{align}
		We note that Eq.~\eqref{eq: x to x0} relies on the identity~$e^{i\pi x}=(-1)^x=(-1)^{x_0}$ for any $x\in\mathbb{Z}$, i.e., the fact that the LSB $x_0$ determines whether or not $x$ is even or odd.
		
		It follows that the output state is 
		\begin{align}
			\ket{\Psi_3}&\propto(I\otimes H)\ket{\Psi_2}\\
			&\propto\ket{x}\otimes\ket{x_0\oplus b}\ ,
		\end{align}
		which is the claim.
	\end{proof}
	We will represent this as 
	\begin{align}
		\vcenter{\hbox{\includegraphics{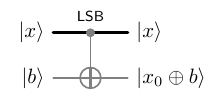}}}
		\qquad\textrm{ for }\qquad x\in\mathbb{Z} \textrm{ and }b\in \{0,1\}\ . 
	\end{align}
	
	The unitary $U_{\bosonA\to\qubit}^\lsb$ is implemented by a circuit of a constant size and constant depth.

	\subsection{The unitary $V_\alpha$}\label{sec: ideal V gate}
	This building block $V_\alpha$ is used repeatedly (with $\alpha>0$) in the definition of the unitary~$\Vam$ (see 
	Fig.~\ref{fig:composite gates} in the main paper). It is defined as follows, where we labeled the respective bosonic modes by $\bosonA,\bosonB,\bosonC$ and the qubit by $\qubit$. 
	
	\begin{align}
		\vcenter{\hbox{\includegraphics[scale=1]{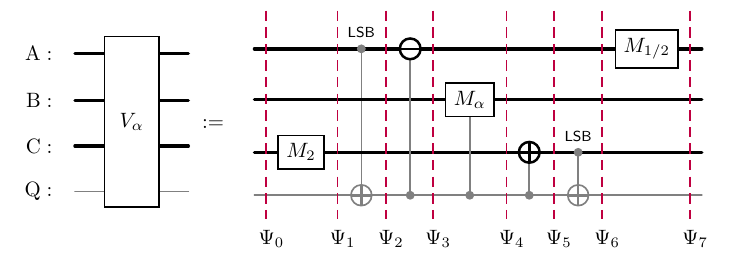}}}
		\label{circ: V alpha}
	\end{align}

	We show that this unitary acts on a tensor product of position-eigenstates $\ket{x},\ket{y},\ket{z}$ and of a qubit in the state~$\ket{0}$ as
	\begin{align}
		V_\alpha\big(\ket{x}_\bosonA\otimes\ket{y}_\bosonB\otimes\ket{z}_\bosonC\otimes\ket{0}_\qubit\big)    
		&~\propto~ \ket{(x-x_0)/2}_\bosonA\otimes\ket{\alpha^{x_0}\cdot y}_\bosonB\otimes\ket{2z+x_0}_\bosonC\otimes\ket{0}_\qubit \label{eq: V alpha action}
	\end{align}
	for all $x\in\mathbb{Z}$, $y\in\mathbb{R}$ and $z\in\mathbb{Z}$.

	\begin{proof}
		The unitary~$V_\alpha$ is the result of applying a sequence of $7$~unitaries, see the illustration~\eqref{circ: V alpha} for the definition of intermediate states~$\Psi_0,\ldots,\Psi_{7}$. 
		
		Starting with the initial state 
		\begin{align}
			\ket{\Psi_0}&\propto \ket{x}\otimes\ket{y}\otimes\ket{z}\otimes\ket{0}\ ,
		\end{align}
		we first squeeze mode~$\bosonC$, obtaining
		\begin{align}
			\ket{\Psi_1}&\propto (I\otimes I \otimes M_2 \otimes I) \Psi_0\\
			&\propto \ket{x}\otimes\ket{y}\otimes\ket{2z}\otimes\ket{0}\ .
		\end{align}
		We then extract the LSB $x_0$ of the position $x$ of mode~$\bosonA$ to the qubit, resulting in
		\begin{align}
			\ket{\Psi_2}&\propto U^{\lsb}_{\bosonA \rightarrow \qubit} \ket{\Psi_1}\\
			&\propto \ket{x}\otimes\ket{y}\otimes\ket{2z}\otimes\ket{x_0}\ .
		\end{align}
		Having extracted the LSB $x_0$ to the qubit, we can now subtract $x_0$ from the position-eigenstate of mode~$\bosonA$ by applying a qubit-controlled translation, resulting in
		\begin{align}
			\ket{\Psi_3}&\propto \big(I \otimes I\otimes I \otimes \proj{0}+e^{i P}\otimes I\otimes I \otimes \proj{1} \big)  \ket{\Psi_2}\\
			&\propto \ket{x-x_0}\otimes\ket{y}\otimes\ket{2z}\otimes\ket{x_0}\ .
		\end{align}
		After this step, the position $x-x_0$ of mode~$\bosonA$ is even. We then use $x_0$ (stored in the qubit~$\qubit$) as a control to multiply the position-eigenstate of mode~$\bosonB$ by $\alpha$, giving
		\begin{align} 
			\ket{\Psi_4}&\propto \big(I \otimes I\otimes I \otimes \proj{0}+I \otimes M_\alpha\otimes I \otimes \proj{1} \big) \ket{\Psi_3}\\
			&\propto \ket{x-x_0}\otimes\ket{\alpha^{x_0}\cdot y}\otimes\ket{2z}\otimes\ket{x_0}\ ,
		\end{align}
		and add $x_0$ to the position of mode~$\bosonC$ by a qubit-controlled translation. This results in
		\begin{align}
			\ket{\Psi_5}&\propto \big(I \otimes I\otimes I \otimes \proj{0}+I \otimes I \otimes e^{-iP} \otimes \proj{1} \big)  \ket{\Psi_4}\\
			&\propto \ket{x-x_0}\otimes\ket{\alpha^{x_0}\cdot y}\otimes\ket{2z+x_0}\otimes\ket{x_0}\ .
		\end{align}
		We then restore the qubit~$\qubit$ to the computational basis state $\ket{0}$ by application of the unitary $U^{\lsb}_{\bosonC\to \qubit}$, that is,
		\begin{align}
			\ket{\Psi_6}&\propto U^{\lsb}_{\bosonC \rightarrow \qubit} \ket{\Psi_5}\\
			&\propto \ket{x-x_0}\otimes\ket{\alpha^{x_0}\cdot y}\otimes\ket{2z+x_0}\otimes\ket{0}\ .
		\end{align}
		Here we used that $2z+x_0$ has LSB $x_0$.
		In the final step, we apply squeezing to mode~$\bosonA$, obtaining
		\begin{align}
			\ket{\Psi_7}&\propto  \big(M_{1/2} \otimes I\otimes I \otimes I\big)\ket{\Psi_6}\\
			&\propto \ket{(x-x_0)/2}\otimes\ket{\alpha^{x_0}\cdot y}\otimes\ket{2z+x_0}\otimes\ket{0}\ .
		\end{align}
		This proves Eq.~\eqref{eq: V alpha action}.
	\end{proof}

	To summarize, the unitary $V_\alpha$ acts as
	\begin{align}
		\vcenter{\hbox{\includegraphics{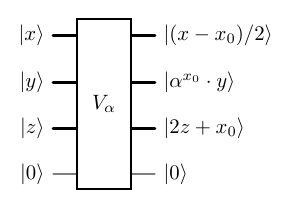}}} \qquad \textrm{ for \qquad $x\in\mathbb{Z}$, $y\in\mathbb{R}$ and $z\in\mathbb{Z}$ \ .}
	\end{align}

	Furthermore, both the size as well as the depth of  the circuit implementing the unitary~$V_\alpha$ scale as $O(|\log\alpha|)$ for $\alpha \rightarrow 0$ respectively $\alpha \rightarrow \infty$. Indeed, these quantities are dominated by the circuit size and depth of the unitary~$\mathsf{ctrl}M_\alpha$, whereas the other gates only contribute a constant.

	\subsection{Controlled multiplication by a pseudomodular power}\label{sec: VaNm idealized}
	
	At the core of the circuit $\semiideal$ is the composed unitary $\Vam$. It is defined by iterative application of the unitaries~$V_\alpha$ with changing parameter $\alpha$ as follows (cf. Fig.~\ref{fig:composite gates} of the main paper). 
	\begin{align}
		\raisebox{-68pt}{\mbox{\includegraphics{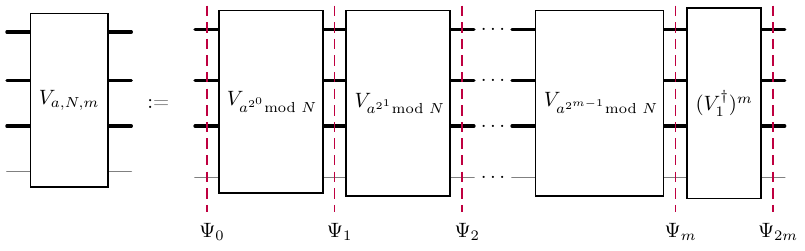}}}
		\label{eq: VaNm}
	\end{align}
	Since each $V_\alpha$ is given by a sequence of elementary gates (see Eq.~\eqref{circ: V alpha}) and this set is closed under taking inverses, the unitary $V_{a,N,m}$ can be decomposed into elementary gates.
	
	We will argue that
	the unitary~$\Vam$ multiplies the position $y$ of the mode~$\bosonB$ by the value~$f_{a,N,m}(x)$ where $x\in\mathbb{N}_0$ is the position of the mode~$\bosonA$. Here
	\begin{align}
		f_{a,N,m}(x)&=\prod_{i=0}^{m-1} \left(a^{2^i}\xmod N \right)^{x_i}\ ,\label{eq:pseudomodularpowerfirstdef}
	\end{align}
	where we used the $m$ least significant bits in the binary representation of $x\in\mathbb{N}_0$
	\begin{align}
		x&=\sum_{j=0}^{\infty} x_j 2^j\ .
	\end{align}
	We call the function~$f_{a,N,m}(x)$ 
	the pseudomodular power. 
	In particular, we have $a^x=a^{\sum_{i=0}^{m-1}x_i2^{i}}=\prod_{i=0}^{m-1}\left(a^{2^i}\right)^{x_i}$ for any integer $x\in\{0,\ldots, 2^m-1\}$ and thus 
	thus
	\begin{align}
		f_{a,N,m}(x)&\equiv a^x \pmod{N} \qquad \textrm{ for all $x\in\{0,\ldots, 2^m-1\}$}\ . \label{eq: pseudomodular power mod N}
	\end{align}
	We show the following action of~$\Vam$ for all $x\in\mathbb{N}_0$, $y\in\mathbb{R}$ and $z\in\mathbb{N}_0$. We have
	\begin{align}
		\Vam \big(\ket{x}\otimes\ket{y}\otimes\ket{z}\otimes\ket{0}\big) ~\propto~
		\ket{x}\otimes\ket{f_{a,N,m}(x)\cdot y}\otimes\ket{z}\otimes\ket{0} \ .
		\label{eq: action ctrl M pseudo}
	\end{align}

	\begin{proof}
		Let $x\in\mathbb{N}_0$, $y\in\mathbb{R}$ and $z\in\mathbb{N}_0$. Consider the initial state
		\begin{align}
			\ket{\Psi_0}&\propto\ket{x}\otimes\ket{y}\otimes\ket{z}\otimes\ket{0}\ ,
		\end{align}
		see diagram~\eqref{eq: VaNm}.

		Each element $(V_{a^{2^i}\xmod N})_{i\in\{0,\ldots,m-1\}}$ of the first $m$ gates defining~$\Vam$ corresponds to a bit-controlled multiplication by a factor~$(a^{2^i}\xmod)^{x_i}$ in the definition~\eqref{eq:pseudomodularpowerfirstdef}  of the pseudomodular power.

		In more detail, the application of~$V_{a^{2^0}\xmod N}$ to~$\Psi_0$ multiplies the position of the mode~$\bosonB$ by the factor \mbox{$(a^{2^0}~\xmod N)^{x_0}$} and moves $x_0$ (originally the LSB of the the position $x$ in mode~$\bosonA$) to the LSB of the position in mode~$\bosonC$, that is,
		\begin{align}
			\ket{\Psi_1}
			&\propto V_{a^{2^0}\xmod N}\ket{\Psi_0}\\
			&\propto \ket{(x-x_0)/2}\otimes\ket{ (a^{2^0}\xmod N)^{x_0}\cdot y}\otimes\ket{
				2z+x_0}\otimes \ket{0} \ .
		\end{align}
		Subsequently,~$V_{a^{2^1}\xmod N}$ is applied to~$\Psi_1$. This multiplies the position of mode~$\bosonB$ by the second factor $(a^{2^1}\xmod N)^{x_1}$ of the pseudomodular power and moves $x_1$ (previously the LSB of the position $(x-x_0)/2$ of the mode~$\bosonA$) to the LSB of the position in the mode~$\bosonC$, i.e., 
		\begin{align}
			\ket{\Psi_2}&\propto V_{a^{2^1}\xmod N}\ket{\Psi_1}
			\\
			&\propto \ket{(x-2^1x_1-2^0x_0)/2^2}\otimes\ket{(a^{2^1}\xmod N)^{x_1}\cdot (a^{2^0}\xmod N)^{x_0} \cdot y}\otimes\ket{2^2z+2^1x_0+2^0x_1}\otimes\ket{b}\ .\notag
		\end{align}
		Note that $x_0$ is now in the second-least-significant bit of mode~$\bosonC$).
		Inductively, we obtain for all $x\in\mathbb{N}_0$, $y\in\mathbb{R}$ and $z\in \mathbb{N}_0$ that after applying the $j$th unitary $V_{a^{2^{j-1}}\xmod N}$ in the sequence, we have the state
		\begin{align}
			\ket{\Psi_{j}}
			&\propto \ket{x^{(j)}}\otimes\ket{y^{(j)}}\otimes\ket{z^{(j)}}\otimes\ket{0}\ ,
		\end{align}
		where
		\begin{align}
			x^{(j)}&=2^{-j} x-\sum_{i=0}^{j-1} 2^{-j+i} x_i\\
			y^{(j)}&=\left(\prod_{i=0}^{j-1} \left(a^{2^{i}} \xmod N\right)^{x_i} \right)\cdot y\label{eq: xj, yj, zj}\\ 
			z^{(j)}&=2^j z + 
			\sum_{i=0}^{j-1} 2^{j-1-i} x_i \ .
		\end{align}
		Thus applying the first $m$ gates, the resulting state is
		\begin{align}
			\ket{\Psi_{m}}
			&\propto \ket{x^{(m)}}\otimes\ket{y^{(m)}}\otimes\ket{z^{(m)}}\otimes\ket{0}\ \\
			&= \ket{x^{(m)}}\otimes\ket{f_{a,N,m}(x)\cdot y}\otimes\ket{z^{(m)}}\otimes\ket{0}\ ,
		\end{align}
		where we used the definition~\eqref{eq:pseudomodularpowerfirstdef} of the pseudomodular power.
		
		At this stage, the multiplication of the position $y$ of the mode~$\bosonB$ by the pseudomodular power~$f_{a,N,m}(x)$ of the position $x$ of the mode~$\bosonA$ is completed. However, $x$ (or more precisely the $m$ lowest bits of $x$) has been moved from mode~$\bosonA$ to mode~$\bosonC$ (in the reversed binary representation).
		The remaining steps (the $m$-fold application of $V^\dagger_1$) undo that latter, undesired operation.

		Since $(x,y,z) \mapsto (2x + z_0, y, (z-z_0)/2)$ is a bijection on $\mathbb{N}_0\times\mathbb{R}\times\mathbb{N}_0$ which is computed by the unitary $V_1$, it is easy to verify that
		\begin{align}
			V^\dagger_1\left(\ket{x}\otimes\ket{y}\otimes\ket{z}\otimes\ket{0}\right) = \ket{2x+z_0}\otimes\ket{y}\otimes\ket{(z-z_0)/2}\otimes\ket{0} 
		\end{align}
		for all $x\in\mathbb{N}_0$, $y\in\mathbb{R}$ and $z\in\mathbb{N}_0$. Applying the unitary $V^\dagger_1$ to $\Psi^{(m)}$ gives
		\begin{align}
			V^\dagger_1 \Psi_{m} &\propto \ket{2x^{(m)}+z^{(m)}_{0}}\otimes \kett{f_{a,N,m}(x)\cdot y} \otimes \ket{(z^{(m)}-z^{(m)}_{0})/2}\otimes \ket{0} \ .\label{eq: V1dagg single application}
		\end{align}
		By definition of $z^{(m)}$ we have that $z^{(m)}_{0}=x_{m-1}$, thus
		\begin{align}
			2x^{(m)}+z^{(m)}_{0} 
			&= 2\left(2^{-m} x-\sum_{i=0}^{m-1} 2^{-m+i} x_i\right)+x_{m-1} \\
			&=2^{-(m-1)} x-\sum_{i=0}^{(m-1)-1} 2^{-(m-1)+i} x_i\\
			&=x^{(m-1)}\ , \label{eq: V dagger xm}
		\end{align}
		where we used definition of $x^{(j)}$ from~\eqref{eq: xj, yj, zj}. Similarly, we have
		\begin{align}
			(z^{(m)}-z^{(m)}_{0})/2 
			&= \left(\left(2^m z + 
			\sum_{i=0}^{m-1} 2^{m-1-i} x_i \right)-x_{m-1}\right)/2\\
			&=2^{m-1} z +
			\sum_{i=0}^{(m-1)-1} 2^{(m-1)-1-i} x_i\\
			&=z^{(m-1)}\ . \label{eq: V dagger zm}
		\end{align}
		Inserting these into~\eqref{eq: V1dagg single application} we obtain
		\begin{align}
			V^\dagger_1 \ket{\Psi_{m}} &\propto \ket{x^{(m-1)}}\otimes \ket{f_{a,N,m}(x)\cdot y} \otimes \ket{z^{(m-1)}}\otimes \ket{0} \ .
		\end{align}
		By induction, it follows that
		\begin{align}
			(V^\dagger_1)^m\ket{\Psi_{m}}
			&\propto \ket{x}\otimes\ket{f_{a,N,m}(x)\cdot y}\otimes\ket{z}\otimes\ket{0}
		\end{align}
		for all
		$x\in\mathbb{N}_0$, $y\in\mathbb{R}$ and $z\in\mathbb{N}_0$.
		This completes the proof.
	\end{proof}

	We summarize this action as
	\begin{align}
		\vcenter{\hbox{\includegraphics{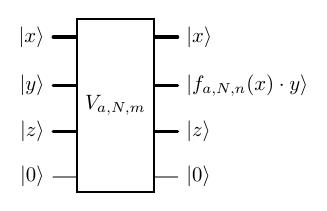}}}
		\qquad \text{ for } \qquad \textrm{  $x\in\mathbb{N}_0$, $y\in\mathbb{R}$  and  $z\in\mathbb{N}_0$}\ .
	\end{align}
	
	By definition, $\Vam$ is composed of $m$ gates $V_\alpha$ with $0 < \alpha < N$ and $m$ gates $V_1^\dagger$. Therefore, both the size as well as the depth of the unitary $\Vam$ scale as $O(m \log N)$. 
	
	\subsection{Controlled translation by a pseudomodular power}\label{sec: explanation Ua}
	The unitary~$\Ua= \Vam^\dagger \cdot (I \otimes e^{-i P} \otimes I\otimes I )\cdot \Vam$ translates the position $y$  of the mode~$\bosonB$ by $f_{a,N,m}(x)$, where $x$ is the position of the mode~$\bosonA$. Diagrammatically it is defined as 
	\begin{align}
		\raisebox{-65pt}{\mbox{\includegraphics{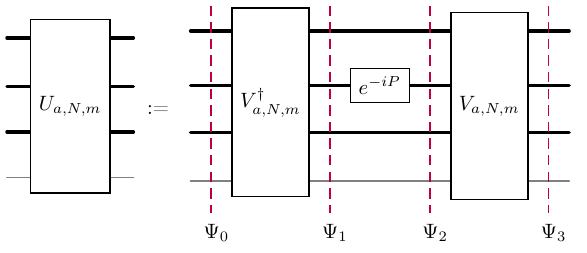}}}\ \ . \label{eq: UaNm gates}
	\end{align}
	We show that the unitary~$\Ua$ acts as
	\begin{align}
		\Ua \left(\ket{x}\otimes\ket{y}\otimes\ket{z}\otimes\ket{0}\right) & \propto \ket{x}\otimes\ket{y+f_{a,N,m}(x)}\otimes\ket{z}\otimes\ket{0} \label{eq: Ua integers action}
	\end{align}
	for all $x\in\mathbb{N}_0$, $y\in\mathbb{R}$ and $z\in\mathbb{N}_0$.
	
	\begin{proof}
		We first note that~\eqref{eq: action ctrl M pseudo} implies
		\begin{align}
			\Vam^\dagger\left(\ket{x}\otimes\ket{y}\otimes\ket{z}\otimes\ket{0}\right) = \ket{x}\otimes\ket{y/f_{a,N,m}(x)}\otimes\ket{z}\otimes\ket{0}
		\end{align}
		for all $x\in\mathbb{N}_0$, $y\in\mathbb{R}$, and $z\in\mathbb{N}_0$.
		
		Thus starting from $\ket{\Psi_0}\propto \ket{x}\otimes\ket{y}\otimes\ket{z}\otimes\ket{0}$,
		we obtain (see Eq.~\eqref{eq: UaNm gates})
		\begin{align}
			\ket{\Psi_1}&\propto (\Vam)^\dagger \ket{\Psi_0}\\
			&\propto \ket{x}\otimes\ket{y/ f_{a,N,m}(x)}\otimes\ket{z}\otimes\ket{0}\ ,\\
			\ket{\Psi_2}&\propto(I \otimes e^{-i P} \otimes I\otimes I)\ket{\Psi_1}\\
			&\propto\ket{x}\otimes\ket{1+y/ f_{a,N,m}(x)}\otimes\ket{z}\otimes\ket{0}\ ,\\
			\ket{\Psi_3}&\propto (\Vam) \ket{\Psi_2}\\
			&\propto\ket{x}\otimes\ket{y+f_{a,N,m}(x)}\otimes\ket{z}\otimes\ket{0}\ ,\label{eq: Ua result}
		\end{align}
		which is the claim.
	\end{proof}
	
	We summarize this as
	\begin{align}
		\vcenter{\hbox{\includegraphics{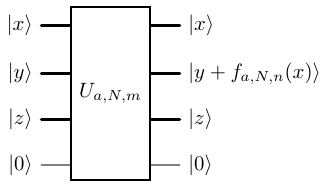}}}
		\qquad \text{ for \qquad $x\in\mathbb{N}_0$, $y\in\mathbb{R}$ and $z\in\mathbb{N}_0$\ .}
	\end{align}
	
	Since both  $\Vam^\dagger$ and $\Vam$ can be realized by a circuit of size $O(m\log N)$ and $e^{-iP}$ belongs to our gate set, both the circuit size as well as the circuit depth of the unitary $\Ua$ scale as $O(m\log N)$.

	\subsection{Analysis of
		the circuit~$\semiideal$}\label{sec: main circuit ideal initial states}

	We now explain the intuition behind the circuit~$\ourcircuit_{a,N}$ (see Fig.~\ref{fig:maincircuit} of the main text) by analyzing its idealized version $\semiideal$ (see Fig.~\ref{fig:idealizedcircuitforanalysis}).
	We show that the circuit $\semiideal$ implements
	a form of (pseudo)modular exponentiation and a modular measurement.
	
	Note that a rigourous analysis of circuit~$\ourcircuit_{a,N}$ is provided in Section~\ref{sec: approximate analysis}.  The following analysis of the circuit~$\semiideal$ is for illustrative purposes only.

	\textbf{Pseudomodular exponentiation.}
	We analyze the action of the circuit $\semiideal$. Recall that the initial state is (see Fig.~\ref{fig:idealizedcircuitforanalysis})
	\begin{align}
		\ket{\Psi_0}
		&\propto e^{-iRP}\ket{\Sha_R}\otimes\ket{\gkp}\otimes\ket{0}\otimes\ket{0}\\
		&= \left(\sum_{x= 0}^{2R-1} \ket{x}\right) \otimes \left(\sum_{y\in\mathbb{Z}}\ket{y}\right)\otimes\ket{0}\otimes\ket{0}\ ,
	\end{align}
	where $R=2^{17n-1}=2^{m-1}$ (see Table~\ref{tab: parameters}). Thus the state $\Psi_1$ is
	\begin{align}
		\ket{\Psi_1}&\propto  \left(I\otimes M_N\otimes I \otimes I\right)\, \ket{\Psi_0}\\
		&\propto  \left(I\otimes M_N\otimes I \otimes I\right)  \left(\sum_{x=0}^{2R-1}\ket{x}\right)\otimes\left(\sum_{y\in\mathbb{Z}}\ket{y}\right)\otimes\ket{0}\otimes\ket{0}\\
		&\propto \left(\sum_{x=0}^{2R-1}\ket{x}\right)\otimes\left(\sum_{y\in\mathbb{Z}}\ket{N\cdot y}\right)\otimes\ket{0}\otimes\ket{0}
	\end{align}
	Subsequently, we apply the unitary $\Ua$. Following Eq.~\eqref{eq: Ua integers action} we obtain
	\begin{align}
		\ket{\Psi_2}&\propto \Ua\,\ket{\Psi_1}\\
		&\propto\sum_{x=0}^{2R-1}\sum_{y\in\mathbb{Z}}\ket{x}\otimes\ket{N\cdot y+ f_{\alpha,N,m}(x)}\otimes\ket{0}\otimes\ket{0}\ .
	\end{align}
	
	\textbf{Modular measurement.}
	Let us trace out both systems $\bosonC$ and $\qubit$. At this point, 
	measuring the mode~$\bosonB$ implements a form of modular measurement of the  position of mode~$\bosonA$, in our case a $\pmod{N}$-measurement. 
	Suppose we measure the mode~$\bosonB$ and obtain the measurement outcome $k \in \mathbb{Z}$. This implies that every $x\in\{0,\ldots 2R-1\}$ contained in the support of the reduced state of the mode~$\bosonA$ has to satisfy
	\begin{align}
		N\cdot y+ f_{\alpha,N,m}(x)=k\qquad \textrm{for~some~}y\in\mathbb{Z} \ .
	\end{align}
	In particular
	\begin{align}
		k &\equiv f_{a,N,m}(x) \pmod{N}\\
		&\equiv a^x \pmod{N}\ ,
	\end{align}
	where we obtained the last equality by~\eqref{eq: pseudomodular power mod N}. Thus after obtaining the measurement outcome $k$ in mode~$\bosonB$, the reduced state $\Psi_2^{(k)}$ of mode~$\bosonA$ is
	\begin{align}
		\ket{\Psi_{2}^{(k)}} \propto \sum_{\substack{x\in\{0,\ldots,2R-1\} :\\
				a^x\equiv k\xmod{N}}}\ket{x} \ .
	\end{align}
	This state is formally similar to the state produced by the quantum subroutine of Shor's algorithm, but with position-encoded integers.
	
	The size (i.e., number of elementary operations) of the circuit $\semiideal$ excluding the preparation of the initial states $(e^{-iRP}\Sha_R,\gkp,\ket{0})$ is  the size of the composed unitaries $M_N$ and $\Ua$, plus the $P$-quadrature measurement. With the  parameters~$(m,R)$ given in Table~\ref{tab: parameters} this adds up to $O(\log N)+O(n^2)+1=O(n^2)$ elementary operations. Ignoring the initial state preparation, the circuit $\circWithInitStates(e^{-iRP}\Sha_R,\gkp,\ket{0})$   is thus composed of $O(n^2)$ elementary operations (of the form~\eqref{it:singletwomodeops}--\eqref{it:singlequbitoperations} in the main paper).

	\clearpage
	\section{Circuit analysis  with finitely squeezed GKP states}\label{sec: approximate analysis}

	In this section, we prove that the circuit $\cQ_{a,N}$ (cf. Fig.~\ref{fig:maincircuit} in the main paper) can be used to factor integers even though it uses approximate (instead of ideal) GKP states and finitely squeezed Gaussian states (instead of position-eigenstates). We use that this remains true even if the circuit is initialized with states prepared by our GKP state preparation protocol~\cite{SMbrenneretalGKP2024}.

	We first give the proof 
	of Lemma~\ref{lem:mainlemma} (of the main paper) which we restate here for convenience:
	\lemmain*

	Finally, we prove that combined with the GKP state preparation protocol from~\cite{SMbrenneretalGKP2024} and repetition of our algorithm leads to the Theorem~\ref{thm:main} (of the main paper), which we restate here.
	\thmmain*

	We first discuss relevant definitions of approximate GKP states  in Section~\ref{sec: approximate gkp}. In Section~\ref{sec: proof main thm}, we give the proof of Lemma~\ref{lem:mainlemma} and Theorem~\ref{thm:main}. The proof of various auxiliary results is deferred to subsequent sections.

	\subsection{Definition of approximate GKP states} \label{sec: approximate gkp}
	
	We define approximate position-eigenstates as well as approximate GKP states in this section and summarize some of their properties.
	
	Let us start with approximate position-eigenstates. For $z\in\mathbb{R}$ let $\ket{\chi_\Delta(z)}$ be an approximate position-eigenstate defined as
	\begin{align}
		\ket{\chi_\Delta(z)}&:=e^{-iPz} S(\log1/\Delta)\ket{\vac}\ , 
	\end{align}
	where $S(\xi) =e^{i\frac{\xi}{2}(QP+PQ)}$  is the single-mode squeezing operation. By definition, $\chi_{\Delta}(z)\in L^2(\mathbb{R})$ is the function
	\begin{align}
		(\chi_\Delta(z))(x)&:=\Psi_\Delta(x-z)\ ,\label{eq:chiDeltadefinition}
	\end{align}
	where $\Psi_\Delta$ is the centered Gaussian
	given by
	\begin{align}
		\Psi_\Delta(x)&=\frac{1}{(\pi \Delta^2)^{1/4}}e^{-x^2/(2\Delta^2)}\ \qquad \textrm{for}\qquad x\in \mathbb{R}.\label{eq:Psi Delta}
	\end{align}
	Notice that $|\Psi_\Delta(\cdot)|^2$ is a centered normal distribution with variance~$\sigma^2=\Delta^2/2$. The vacuum state $\ket{\textrm{vac}}$ is defined as $\Psi_{\Delta}$ where $\Delta=1$.
	
	We define the approximate GKP state 
	\begin{align}
		\ket{\mathsf{GKP}_{\kappa,\Delta}}:=C_{\kappa,\Delta} \sum_{z\in\mathbb{Z}}\eta_\kappa(z)\ket{\chi_\Delta(z)}\label{eq:gkp approximate eta chi}\ 
	\end{align}
	where $C_{\kappa, \Delta}\in\mathbb{R}$ is a normalization constant, and $\eta_\kappa(\cdot)$ a Gaussian envelope specified by a Gaussian distribution with variance~$\sigma^2=\frac{1}{2\kappa^2}$, i.e., for $\kappa>0$ 
	\begin{align}
		\eta_\kappa(z)&=\frac{\sqrt{\kappa}}{\pi^{1/4}}e^{-\kappa^2z^2/2} \ ,\label{eq:eta kappa def}
	\end{align}
	for illustration see Fig.~\ref{fig:gkp-large-squeezing}. We note that our definition~\eqref{eq:gkp approximate eta chi} is slightly different from the convention in literature, see e.g.~\cite{SMGKP}, where a non-integer spacing is used.

	For our analysis it is convenient to define truncated approximate GKP states. This will ensure certain orthogonality properties. To this end let us define the $\pm\varepsilon$-supported position-eigenstate $\ket{\chi_\Delta^{\varepsilon}(z)}$ as a truncated Gaussian having support on $[z-\varepsilon,z+\varepsilon]$ for some small~$\varepsilon\in(0,1/2)$. More precisely,~$\chi^\varepsilon_\Delta(z)\in L^2(\mathbb{R})$ is defined as \begin{align}
		(\chi^\varepsilon_\Delta(z))(x)&:=\Psi^\varepsilon_\Delta(x-z) \qquad\textrm{for}\qquad x\in \mathbb{R}\ ,
	\end{align}
	where
	\begin{align}
		\Psi_\Delta^\varepsilon&=\frac{\Pi_{[-\varepsilon,\varepsilon]}\Psi_\Delta}{\left\|\Pi_{[-\varepsilon,\varepsilon]}\Psi_\Delta\right\|}\ ,    
	\end{align}
	where we denote by $\Pi_\cC$ the orthogonal projection onto  the subspace of~$L^2(\mathbb{R})$ of functions supported on~$\cC \subseteq\mathbb{R}$.
	Since $\varepsilon<1/2$ the $\pm\varepsilon$-supported position-eigenstates
	satisfy the orthogonality property
	\begin{align}
		\langle \chi_\Delta^{\varepsilon}(z), \chi_\Delta^{\varepsilon}(z')\rangle=\delta_{z,z'} \qquad \text{ for all }\qquad z, z'\in \mathbb{Z}\ .
	\end{align}
	Furthermore, the overlap between a $\pm\varepsilon$-supported position-eigenstate and a displaced vacuum state centered at $z$ is close to $1$ for $\Delta$ small and $\varepsilon\in [\sqrt{\Delta},1/2)$, see Lemma~\ref{lem: gaussian gaussian epsilon}
	\begin{align}
		\left|\langle \Psi_\Delta, \Psi_\Delta^\varepsilon\rangle \right|^2 
		&\ge 1-2\Delta \  .\label{eq: Psi Psi eps overlap restricted ep}
	\end{align}

	\begin{figure}[!htb]
		\centering
		\begin{subfigure}[b]{0.45\textwidth}
			\centering
			\hspace*{-0.7cm}\includegraphics[width= 0.8\textwidth]{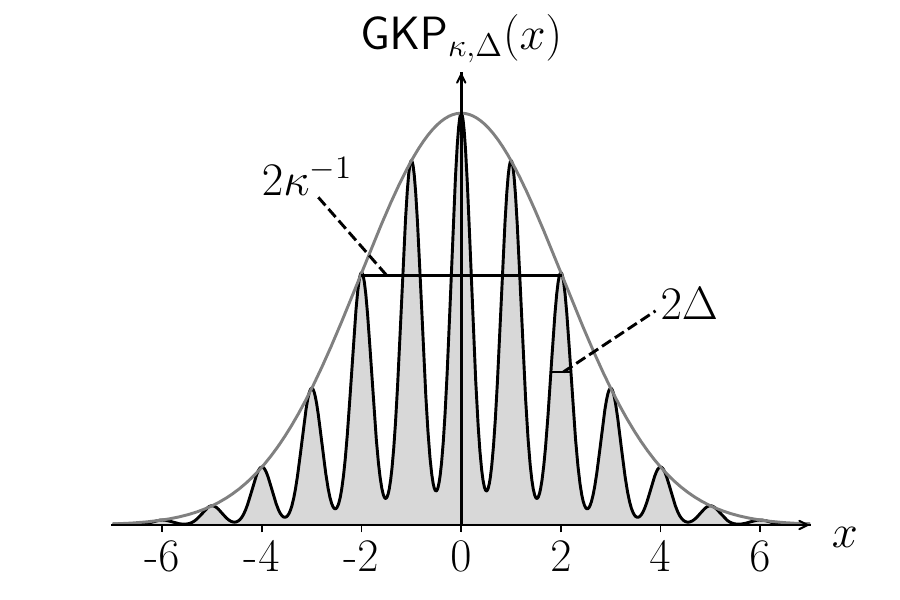}
			\caption{The state $\ket{\gkp_{\kappa,\Delta}}\in L^2(\mathbb{R})$.}
			\label{fig:gkp-large-squeezing}
		\end{subfigure}
		\hfill
		\begin{subfigure}[b]{0.45\textwidth}
			\centering
			\hspace*{-0.7cm}\includegraphics[width=0.8\textwidth]{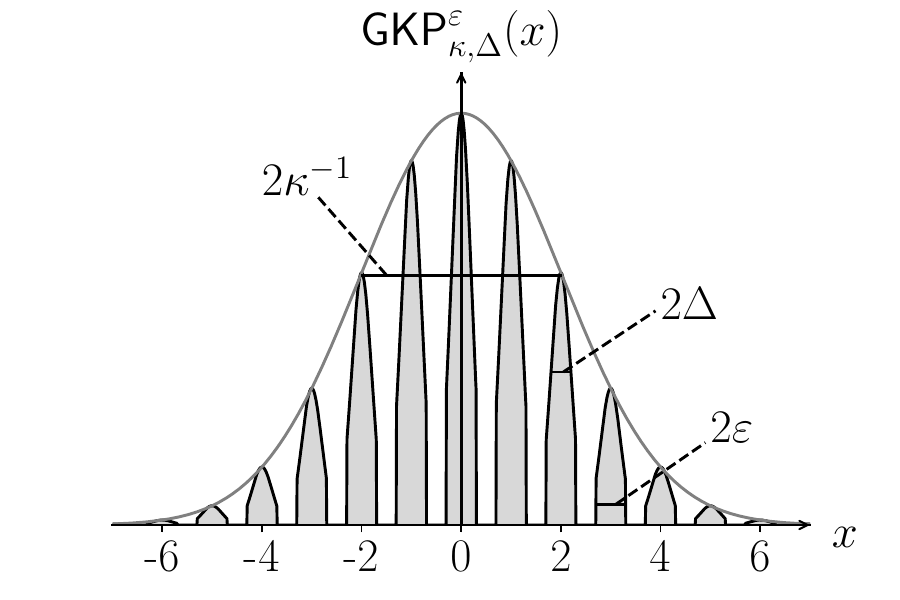}
			\caption{The state $\ket{\gkp_{\kappa,\Delta}^\varepsilon}\in L^2(\mathbb{R})$.}
			\label{fig:truncated-gkp}
		\end{subfigure}
		\caption{The approximate GKP state and its ``truncated'' version represented in position space.}
	\end{figure}

	We define the approximate GKP state $\ket{\gkp_{\kappa,\Delta}^{\varepsilon}}$ with truncated peaks as
	\begin{align}
		\ket{\gkp_{\kappa, \Delta}^{\varepsilon}}&:= C_\kappa \sum_{z\in\mathbb{Z}}\eta_\kappa(z)\ket{\chi_\Delta^\varepsilon(z)} \ , \label{eq:approximate gkp def eta xi eps}
	\end{align}
	where $C_\kappa$ is the normalization factor. A state $\ket{\gkp_{\kappa, \Delta}^{\varepsilon}}$ is shown in Fig.~\ref{fig:truncated-gkp}. The GKP state with truncated peaks is related to its ``non-truncated'' version by 
	\begin{align}
		\gkp_{\kappa, \Delta}^{\varepsilon} = \frac{\Pi_{\mathbb{Z}(\varepsilon)}\gkp_{\kappa, \Delta}}{\left\| \Pi_{\mathbb{Z}(\varepsilon)}\gkp_{\kappa, \Delta}\right\|}\ ,
	\end{align}
	where $\mathbb{Z}(\varepsilon)=\mathbb{Z}+[-\varepsilon, \varepsilon]$ where the addition is to be understood as a Minkowski sum.

	\subsection{Proof of Lemma~\ref{lem:mainlemma} and Theorem~\ref{thm:main}: Efficient quantum algorithm for factoring}\label{sec: proof main thm}
	
	We first provide a sufficient condition for when samples of a real-valued random variable provide a means for factoring. This statement is a straightforward generalization of the analysis of the classical post-processing procedure in Shor's algorithm.
	Recall that $\mathbb{Z}_N=\{0,\ldots, N-1\}$ is the set of elements in the ring of integers modulo $N$ and $\mathbb{Z}_{N}^*=\{a\in\mathbb{Z}_N ~|~ \gcd(a,N)=1\}$ is the set of elements of the multiplicative group of integers modulo $N$. We denote the order of~$a$ in the group~$\mathbb{Z}_N^*$ by~$r(a)$, i.e., this is the smallest positive integer $r$ such that $a^r\equiv 1\xmod{N}$.
	
	The sufficient condition for a family~$\{p_a\}_{a\in\mathbb{Z}_N^*}$ of probability density functions on~$\mathbb{R}$ is expressed by the following definition.
	\begin{definition}\label{def:suitabledistribution}
		Let $N\in \mathbb{N}$ be an integer. Let $q=\min\{2^k\mid k\in\mathbb{N}, N^2<2^k \}$ be the smallest power of $2$ greater than $N^2$. Let $a\in \mathbb{Z}_N^*$, and let  $r(a)$ denote the order of~$a\in\mathbb{Z}_N^*$. For $d\in\mathbb{Z}_{r(a)}$ define the union of intervals
		\begin{align}
			\Gamma_{d}(a):=\bigcup_{j\in \mathbb{Z}}\left[j+\frac{d}{r(a)} - \frac{1}{2q}, j+\frac{d}{r(a)} + \frac{1}{2q} \right]\ . \label{eq:Omega d}
		\end{align}
		A family~$\{p_a\ |\ p_a:\mathbb{R}\rightarrow [0,\infty)\}_{a\in\mathbb{Z}_N^*}$ of probability density functions on~$\mathbb{R}$ is called {\em suitable} if 
		\begin{align}
			\min_{d\in \mathbb{Z}_r} \int_{\Gamma_{d}(a)} p_a(w) dw\ge \Omega\left(1\right) \cdot \frac{1}{r(a)} \qquad\textrm{ for all }\qquad a\in\mathbb{Z}_N^*\ .
		\end{align}
	\end{definition}
	We depict the unions of intervals $\Gamma_d(a)$ in Fig.~\ref{fig:gammas}.
	\begin{figure}[ht]
		\centering
		\vspace*{0.5cm}
		\includegraphics[width=\linewidth]{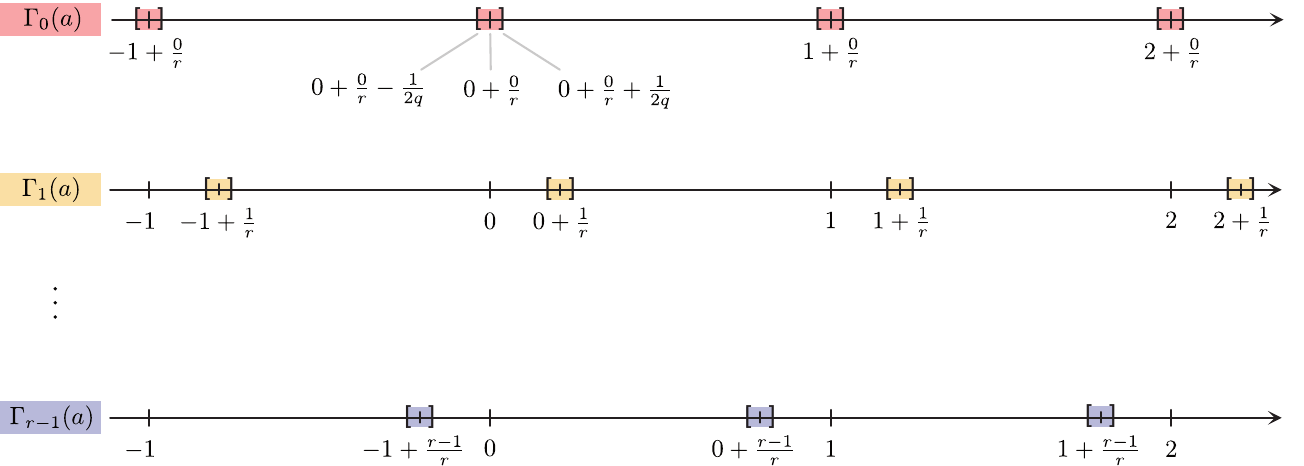}
		\vspace*{0.1cm}
		\caption{An illustration of the union $\Gamma_d(a)$ of intervals  from Definition~\ref{def:suitabledistribution}, for $d\in \{0,\ldots,r-1\}$ depicted on the real axis. We shaded the intervals belonging to the same set~$\Gamma_d(a)$ by the same color.}
		\label{fig:gammas}
	\end{figure}
	
	The following proposition shows that --- as suggested by the terminology --- suitable families of distributions can be used to factor.
	
	\begin{restatable}[Classical post-processing subroutine]{proposition}{propositionzero}\label{prop:propositionzero} 
		There is a polynomial-time classical algorithm~$\cA$ such that the following holds. Given an~$n$-bit integer~$N\in\mathbb{N}$, assume that
		a pair~$(x,a)$ is generated as follows:
		\begin{enumerate}[(i)]
			\item 
			the element $a\sim U_{\mathbb{Z}_N^*}$  is drawn from the uniform distribution $U_{\mathbb{Z}_N^*}$ on $\mathbb{Z}_N^*$, and 
			\item 
			$x\sim p_a$ is drawn from $p_a$, where $\{p_a\ |\ p_a:\mathbb{R}\rightarrow [0,\infty)\}_{a\in\mathbb{Z}_N^*}$ is a suitable family of probability density functions.
		\end{enumerate}
		Then the algorithm~$\cA$ --- on input~$(x,a,N)$ ---  outputs a factor of $N$ with probability at least
		\begin{align}
			\Pr_{\substack{a\sim U_{\mathbb{Z}_N^*}\\ x\sim p_a}}\left[\,\cA(x,a,N) \textrm{ divides } N\,\right] 
			\ge \Omega\left(\frac{1}{\log n}\right)\ . 
		\end{align}
		The runtime of $\cA$ is polynomial in $n$.
	\end{restatable}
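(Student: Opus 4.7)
The plan is to adapt the classical post-processing of Shor's algorithm to our continuous-variable setting, combined with Miller's reduction from factoring to order-finding. Throughout, let~$r=r(a)$ denote the order of~$a$ in~$\mathbb{Z}_N^*$.

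First, I would establish a probabilistic reduction at the level of densities. Since~$q>N^2>r^2$, the width~$1/q$ of each interval making up~$\Gamma_d(a)$ is strictly less than the spacing~$1/r$ between adjacent centers~$d/r$ and~$(d+1)/r$; hence the sets~$\{\Gamma_d(a)\}_{d\in \mathbb{Z}_r}$ are pairwise disjoint modulo~$1$. Combining the suitability hypothesis with the standard lower bound $\phi(r)/r=\Omega(1/\log\log r)$ on Euler's totient function, I would then obtain
\begin{align}
\Pr_{x\sim p_a}\!\left[\,x\in \Gamma_d(a)\text{ for some }d\in \mathbb{Z}_r\text{ with }\gcd(d,r)=1\,\right]\;\ge\; \phi(r)\cdot \Omega(1/r)\;=\;\Omega(1/\log n)\ .
\end{align}

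Next, I would describe the algorithm~$\cA$ itself, which is essentially Shor's classical post-processing: given~$(x,a,N)$, it (i) computes the fractional part $y:=x-\lfloor x\rfloor\in [0,1)$ (only $O(n)$ bits of precision in~$x$ are needed, which is legitimate in our computational model); (ii) runs the continued-fraction expansion of~$y$ and extracts the unique fraction~$d'/r'$ in lowest terms with $r'\le N$ and $|y-d'/r'|\le 1/(2(r')^2)$ (uniqueness follows from the classical Legendre criterion); (iii) if~$r'$ is even, returns any nontrivial element of~$\{\gcd(a^{r'/2}-1,N),\,\gcd(a^{r'/2}+1,N)\}$. Conditional on the event that $x\in \Gamma_d(a)$ for some~$d$ coprime to~$r$, the bound $|y-d/r|\le 1/(2q)<1/(2r^2)$ forces $(d',r')=(d,r)$, so the algorithm correctly recovers the true order~$r$.

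The final ingredient is Miller's reduction: prime powers can be detected and factored in polynomial time (so we may assume~$N$ has at least two distinct prime factors), and then with probability at least~$1/2$ over a uniform~$a\sim U_{\mathbb{Z}_N^*}$, the order~$r(a)$ is even and $a^{r(a)/2}\not\equiv -1\pmod{N}$, in which case the gcd step produces a nontrivial factor of~$N$. Multiplying the two independent probabilities yields the claimed $\Omega(1/\log n)$ success bound, and all subroutines (continued fractions, gcd, modular exponentiation, prime-power testing) run in time polynomial in~$n$. The main technical point is not any single step---each is a textbook ingredient---but checking that the quantitative slack~$1/(2q)$ baked into Definition~\ref{def:suitabledistribution} is exactly sharp enough to trigger the Legendre uniqueness lemma: this is what ties the threshold~$q>N^2$ to the tolerance~$1/(2r^2)$ and makes the~$\Omega(1/r)$ probability mass guarantee actually usable.
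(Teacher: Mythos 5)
Your proof is correct and arrives at the same conclusion via a slightly different but essentially equivalent route. The paper's proof proceeds by first reducing to a \emph{black-box} invocation of Shor's discrete post-processing (Theorem~\ref{thm:shor factorization postprocessing}): it introduces an explicit discretization map $\dis:\mathbb{R}\to\mathbb{Z}_q$ rounding $\fra(x)$ to the nearest multiple of $1/q$, proves a covering lemma ($\dis^{-1}(\{c-1,c,c+1\})\supset\Gamma_d(a)$, Lemma~\ref{lem: Omega d dis}), and works around the fact that the interval $\Gamma_d(a)$ may straddle two discretization bins by running the post-processing on all three of $c-1,c,c+1$. You instead unfold Shor's post-processing directly on the fractional part $y=\fra(x)$: since $x\in\Gamma_d(a)$ gives $\lvert y-d/r\rvert\le 1/(2q)<1/(2r^2)$, Legendre's criterion and the unique-fraction argument for $q>N^2$ recover $d/r$ from the continued-fraction expansion without ever discretizing to $\mathbb{Z}_q$, which avoids both the covering lemma and the neighbor-checking trick. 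What the paper's route buys is modularity (re-use of Shor's theorem as a clean probabilistic statement); what yours buys is a more self-contained argument that sidesteps the discretization bookkeeping. One point you gloss over that the paper's discretization handles for free: you should confirm that rounding $x$ to $O(n)$ bits of precision keeps $\lvert \hat y - d/r\rvert$ below the Legendre threshold — the margin $1/(2r^2)-1/(2q)\ge (q-N^2)/(2N^2q)$ can be as small as $\Theta(N^{-4})$, so you need roughly $4n$ bits, still polynomial but worth stating explicitly. Also, the two events you ``multiply'' (Miller condition on $a$; $x$ lands in a coprime $\Gamma_d$) are not literally independent, but since the suitability bound holds uniformly in $a$, the conditional estimate goes through; you may want to phrase it that way. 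Neither point is a genuine gap.
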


	The remainder of the proof of Lemma~\ref{lem:mainlemma} establishes that the output distributions of the circuit $\cQ_{a,N}=\circuitPhysicalStates$ in Fig.~\ref{fig:maincircuit} of the main text 
	(for different~$a\in\mathbb{Z}_N^*$) are suitable, i.e., satisfy Definition~\ref{def:suitabledistribution} and thus can be used for factoring according to Proposition~\ref{prop:propositionzero}. It consists of two parts summarized by Propositions~\ref{prop:propositionone} and~\ref{prop:propositiontwo} below.

	First, we consider the output state of the circuit $\cQ_{a,N}$ in Fig.~\ref{fig:maincircuit} before the measurement. It can be characterized as follows.
	For convenience, we use the shorthand notation $\|\Psi - \Phi\|_1 :=\|\proj{\Psi}-\proj{\Phi}\|_1$ for the variational distance between two rank-$1$-projections corresponding to pure states $\Psi, \Phi \in L^2(\mathbb{R})$.
	
	\begin{restatable}[Output state of the circuit~$\ourcircuit_{a,N}$]{proposition}{propositionone}\label{prop:propositionone} Let $N\ge 8$ (i.e., $n\ge 4$) and $a\in\mathbb{Z}_N^*$. Let $(m,R,\kappa_\bosonA,\Delta_\bosonA,\kappa_\bosonB,\Delta_\bosonB, \Delta_\bosonC)$ be as specified in Table~\ref{tab: parameters}. 
		Let $\Psi_a$ be the final state before the measurement of the circuit $\cQ_{a,N}$ in Fig.~\ref{fig:maincircuit} of the main text. Let~$\Phi_a$ be the state 
		\begin{align}
			\!\!\ket{\Phi_a} &= c_{\Phi_a} \sum_{z\in \mathbb{Z}}
			\eta_{\kappa_\bosonA}(z-R)\ket{\chi_{\Delta_\bosonA }(z)} \otimes 
			e^{-i(a^z\xmod N)P_\bosonB}M_N\ket{\gkp^{\varepsilon_\bosonB}_{\kappa_\bosonB,\Delta_\bosonB}}\otimes \ket{\Psi^{\varepsilon_\bosonC}_{\Delta_\bosonC}} \otimes \ket{0}\  , \label{eq:stateoutputm}
		\end{align}
		where $c_{\Phi_a}$ is a normalization constant and where $\varepsilon_\bosonB = \sqrt{\Delta_\bosonB}$ and $\varepsilon_\bosonC = \sqrt{\Delta_\bosonC}$.
		Then
		\begin{align}
			\left\|\Psi_a-\Phi_a\right\|_1 &\le 364 \cdot 2^{-2n}\ .
		\end{align}
	\end{restatable}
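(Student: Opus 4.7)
The plan is to interpolate between $\Psi_a$ and $\Phi_a$ through a short chain of intermediate states, invoking the triangle inequality for the trace norm at each step. The underlying idea mirrors the idealized analysis of Section~\ref{sec: main circuit ideal initial states}: when the input to $U_{a,N,m}$ has position-space support concentrated in narrow intervals around integers~$z$, the circuit acts, up to tracked errors, like the position-eigenstate action established in Sections~\ref{sec: ideal V gate}--\ref{sec: explanation Ua}; subsequently, the approximate translation invariance of $M_N\ket{\gkp^{\varepsilon_\bosonB}_{\kappa_\bosonB,\Delta_\bosonB}}$ collapses coherent real arithmetic into modular arithmetic.

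First, I would replace each physical initial state by its truncated counterpart. Concretely, substitute $e^{-iRP}\gkp_{\kappa_\bosonA,\Delta_\bosonA}\mapsto c\cdot\Pi_{[-1/2,2R-1/2]}\,e^{-iRP}\,\gkp^{\varepsilon_\bosonA}_{\kappa_\bosonA,\Delta_\bosonA}$ with $\varepsilon_\bosonA=\sqrt{\Delta_\bosonA}$, and analogously truncate modes~$\bosonB$ and~$\bosonC$ with $\varepsilon_\bosonB=\sqrt{\Delta_\bosonB}$ and $\varepsilon_\bosonC=\sqrt{\Delta_\bosonC}$. The trace-distance cost of each truncation is controlled by the Gaussian tail bound Eq.~\eqref{eq: Psi Psi eps overlap restricted ep}, together with the observation that the envelope $\eta_{\kappa_\bosonA}(\cdot-R)$ has exponentially small tails outside $\{0,\ldots,2R-1\}$; under the parameters of Table~\ref{tab: parameters}, each contribution is well within $O(2^{-2n})$. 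Since unitary evolution preserves the trace norm, the state produced by the resulting truncated circuit differs from $\Psi_a$ by a correspondingly small amount.

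Next, I would analyse the action of $U_{a,N,m}$ branch by branch on the truncated input. The truncated mode-$\bosonA$ state is a superposition of pairwise-orthogonal peaks $\chi^{\varepsilon_\bosonA}_{\Delta_\bosonA}(z)$ for $z\in\{0,\ldots,2R-1\}=\{0,\ldots,2^m-1\}$, weighted by $\eta_{\kappa_\bosonA}(z-R)$. Invoking Lemma~\ref{lem: Ua shor circuit error}, which quantifies the action of $U_{a,N,m}$ on such peak-like inputs, each branch $\chi^{\varepsilon_\bosonA}_{\Delta_\bosonA}(z)\otimes M_N\gkp^{\varepsilon_\bosonB}_{\kappa_\bosonB,\Delta_\bosonB}\otimes\Psi^{\varepsilon_\bosonC}_{\Delta_\bosonC}\otimes\ket{0}$ is mapped, up to a controlled per-branch error, to $\chi^{\varepsilon_\bosonA}_{\Delta_\bosonA}(z)\otimes e^{-if_{a,N,m}(z)P_\bosonB}M_N\gkp^{\varepsilon_\bosonB}_{\kappa_\bosonB,\Delta_\bosonB}\otimes\Psi^{\varepsilon_\bosonC}_{\Delta_\bosonC}\otimes\ket{0}$. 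The mode-$\bosonA$ orthogonality ensures that the per-branch errors aggregate into a single coherent trace-distance bound of the same order, rather than multiplying by the number $2^m$ of branches.

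Finally, I would convert the real label $f_{a,N,m}(z)$ into the modular label $a^z\xmod N$. Because $f_{a,N,m}(z)\equiv a^z\pmod{N}$ for $z\in\{0,\ldots,2^m-1\}$, the difference $k_zN:=f_{a,N,m}(z)-(a^z\xmod N)$ is an integer multiple of~$N$ with $|k_z|\le N^{m-1}$. The state $M_N\gkp^{\varepsilon_\bosonB}_{\kappa_\bosonB,\Delta_\bosonB}$ has peaks at integer multiples of~$N$ modulated by the envelope $\eta_{\kappa_\bosonB}$, hence is approximately invariant under $e^{-i\ell N P_\bosonB}$ whenever $|\ell|\,\kappa_\bosonB N\ll 1$. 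The choices $\kappa_\bosonB=2^{-18n^2}$ and $m=17n$ give $|k_z|\,\kappa_\bosonB N\le N^m\kappa_\bosonB\le 2^{-n^2}$, so this replacement costs only an exponentially small per-branch error. Collecting the three sources of error via the triangle inequality yields the claimed bound $\|\Psi_a-\Phi_a\|_1\le 364\cdot 2^{-2n}$. The main technical obstacle is the second step: $U_{a,N,m}$ is a composition of $\Theta(n^2)$ elementary gates, so one must verify that the errors accumulate additively rather than multiplicatively along this depth. This succeeds because the $\varepsilon$-truncated peaks remain orthogonal throughout and because each elementary operation inside $V_{a,N,m}$ acts as an exact bijection on the integer lattice of position-space inputs, so only the widths---and not the centres---of the Gaussian bumps contribute error that propagates through the composition.
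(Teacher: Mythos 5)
Your overall strategy matches the paper's: interpolate between $\Psi_a$ and $\Phi_a$ through a chain of intermediate states, bounding each step via properties of approximate GKP states and the approximate-function-evaluation framework (Lemma~\ref{lem: Ua shor circuit error}). Steps 1--3 of your proposal correspond to the paper's $\Psi^{(0)}\!\to\!\Psi^{(1)}\!\to\!\Psi^{(2)}\!\to\!\Psi^{(3)}$ (Lemmas~\ref{lem:truncationerr}, \ref{lem: approx_unitary}, \ref{lem: untruncation_error_tail}).

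However, there is a gap: after your three steps the terminal state is what the paper calls $\Psi^{(3)}$,
\begin{align}
\cpsi{3}\sum_{z=0}^{2R-1}\eta_{\kappa_\bosonA}(z-R)\ket{\chi^{\varepsilon_\bosonA}_{\Delta_\bosonA}(z)}\otimes e^{-i(a^z\xmod N)P_\bosonB}M_N\ket{\gkp^{\varepsilon_\bosonB}_{\kappa_\bosonB,\Delta_\bosonB}}\otimes\ket{\Psi^{\varepsilon_\bosonC}_{\Delta_\bosonC}}\otimes\ket{0}\ ,
\end{align}
whereas the target $\Phi_a$ of Eq.~\eqref{eq:stateoutputm} sums over \emph{all} $z\in\mathbb{Z}$ and uses the \emph{untruncated} peaks $\chi_{\Delta_\bosonA}(z)$ in mode~$\bosonA$ (while $\gkp^{\varepsilon_\bosonB}$ and $\Psi^{\varepsilon_\bosonC}$ remain truncated). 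You therefore still need two further comparisons: (i) extend the $z$-sum from $\{0,\dots,2R-1\}$ to~$\mathbb{Z}$, controlled by the envelope tail $\|\Pi_{[-R/2,R/2]}\gkp^{\varepsilon_\bosonA}_{\kappa_\bosonA,\Delta_\bosonA}\|^2\ge 1-2e^{-\kappa_\bosonA^2R^2/4}$ (Lemma~\ref{lem:proj norm}); and (ii) un-truncate the mode-$\bosonA$ Gaussians $\chi^{\varepsilon_\bosonA}_{\Delta_\bosonA}(z)\to\chi_{\Delta_\bosonA}(z)$, which is bounded via the ratio of normalization constants $C_{\kappa_\bosonA,\Delta_\bosonA}^2/C_{\kappa_\bosonA}^2\ge 1-7\Delta_\bosonA$ (Lemma~\ref{lem: norm and ratio norm gkp gkp ep}) together with the peak overlap $\langle\Psi^{\varepsilon_\bosonA}_{\Delta_\bosonA},\Psi_{\Delta_\bosonA}\rangle$. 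These are the paper's $\Psi^{(3)}\!\to\!\Psi^{(4)}\!\to\!\Psi^{(5)}$ steps (Lemmas~\ref{lem:shift_inariance_error} and~\ref{lem:untruncation_error_peak}), each contributing an error of $2^{-2n}$ to the quoted total $364\cdot2^{-2n}$. Without them, the triangle inequality does not reach $\Phi_a$, so the claimed bound is not established by the three steps you list.
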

	
	The expression~\eqref{eq:stateoutputm} allows us to show that
	the distribution $p_{\Psi_a}$ obtained by measuring the $P$-quadrature of the state~$\Psi_a$ has the  property
	required by Proposition~\ref{prop:propositionzero}. Indeed, the monotonicity of the $L^1$-norm distance under measurement implies that  the distributions $p_{\rho}$ and $p_{\sigma}$ obtained by measuring any two states $\rho, \sigma \in \cB(L^2(\mathbb{R}))$ satisfy
	\begin{align}
		\|p_{\rho}-p_{\sigma}\|_1 \le \|\rho-\sigma\|_1  \label{eq: bound distribution by trace dist}\ .
	\end{align}
	Furthermore, we have the following statement about the distribution $p_{\Phi_a}$ obtained from a $P$-quadrature measurement of mode~$\bosonA$ of the state $\Phi_a$.
	\begin{restatable}[Distribution of measurement outcomes]{proposition}{propositiontwo}\label{prop:propositiontwo} 
		Let $N\ge 2$. For  $a\in \mathbb{Z}_N^*$ let $\Phi_a \in L^2(\mathbb{R})^{\otimes 3}\otimes \mathbb{C}^2$ be the state defined by Eq.~\eqref{eq:stateoutputm} of Proposition~\ref{prop:propositionone}, with the parameters $(m,R,\kappa_\bosonA,\Delta_\bosonA,\kappa_\bosonB,\Delta_\bosonB, \Delta_\bosonC)$ specified by Table~\ref{tab: parameters} and with $ \varepsilon_\bosonB=\sqrt{\Delta_\bosonB}$ and  $\varepsilon_\bosonC=\sqrt{\Delta_\bosonC}$. Let~$p_{\Phi_a}:\mathbb{R}\rightarrow [0,\infty)$ be the probability density function of the distribution of outcomes when performing a $P$-quadrature measurement on mode~$\bosonA$ of~$\Phi_a$. Then the family of distributions $\{p_{\Phi_a}\}_{a\in \mathbb{Z}_N^*}$ is suitable. 
	\end{restatable}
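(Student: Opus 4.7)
The plan is to establish that the $P$-quadrature distribution of mode~$\bosonA$ of the state~$\Phi_a$ is, up to negligible errors, a comb of narrow peaks centered at rationals of the form $\ell/r(a)$, with roughly uniform mass distributed across the $r(a)$ residue classes, so that the defining inequality of Definition~\ref{def:suitabledistribution} is satisfied.

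\textbf{Decomposition by the cyclic structure of $a$.} Since modes~$\bosonC$ and~$\qubit$ appear as product factors in~\eqref{eq:stateoutputm}, I would trace them out and reduce to the joint state on $\bosonA\bosonB$. Exploiting that $a^z\xmod N$ depends only on $z\bmod r(a)$, I would re-index the sum by the residue $k:=z\bmod r(a)$ to obtain
\begin{align}
\ket{\Phi_a} \propto \sum_{k=0}^{r(a)-1} \ket{A_k}_\bosonA \otimes \ket{B_k}_\bosonB \otimes \ket{\Psi^{\varepsilon_\bosonC}_{\Delta_\bosonC}}_\bosonC \otimes \ket{0}_\qubit ,
\end{align}
with $\ket{A_k}:=\sum_{z\equiv k\,(\bmod\,r(a))} \eta_{\kappa_\bosonA}(z-R)\ket{\chi_{\Delta_\bosonA}(z)}$ and $\ket{B_k}:=e^{-i(a^k\xmod N)P}M_N\ket{\gkp^{\varepsilon_\bosonB}_{\kappa_\bosonB,\Delta_\bosonB}}$.

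\textbf{Reduction to a classical mixture via $\bosonB$-orthogonality.} The key structural observation is that the $\ket{B_k}$ are pairwise orthogonal: $M_N\ket{\gkp^{\varepsilon_\bosonB}_{\kappa_\bosonB,\Delta_\bosonB}}$ is supported on $N\mathbb{Z}+[-\varepsilon_\bosonB,\varepsilon_\bosonB]$, the $r(a)$ integers $\{a^k\xmod N\}_{k=0}^{r(a)-1}$ are distinct elements of $\{0,\ldots,N-1\}$, and $\varepsilon_\bosonB=\sqrt{\Delta_\bosonB}<1/2$, so the supports are disjoint. Tracing out~$\bosonB$ therefore yields a classical mixture $\rho_\bosonA \propto \sum_k \|B_k\|^2 \ket{A_k}\bra{A_k}$, with $\|B_k\|$ independent of $k$ (the $\ket{B_k}$ are unitary translates of one another) and $\|A_k\|^2=\sum_j \eta_{\kappa_\bosonA}(jr(a)+k-R)^2$ depending on $k$ only through a shift $\delta_k\in[0,r(a))$ of the Gaussian envelope of width $1/\kappa_\bosonA\gg r(a)$---hence constant up to exponentially small corrections. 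The mixture is therefore essentially uniform in $k$.

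\textbf{Fourier analysis, peak structure and verification of Definition~\ref{def:suitabledistribution}.} Writing $\hat{A}_k(p)=\hat{\Psi}_{\Delta_\bosonA}(p)\,e^{ipk}\,T_k(p)$ with $T_k(p)=\sum_j \eta_{\kappa_\bosonA}(jr(a)+k-R)e^{ipjr(a)}$, the $P$-distribution is $p_{\Phi_a}(p)\propto |\hat{\Psi}_{\Delta_\bosonA}(p)|^2\, \sum_k |T_k(p)|^2$. Re-indexing the double sum $\sum_k|T_k(p)|^2$ by $z=jr(a)+k$ decouples the $k$-dependence and yields a theta-type function $\sum_{m\in r(a)\mathbb{Z}} G(m)\,e^{ipm}$ with $G(m)$ approximating the continuous self-convolution of $\eta_{\kappa_\bosonA}$ (so $G(m)\approx e^{-\kappa_\bosonA^2 m^2/4}$, the sum-to-integral error being controlled by the estimates of Section~\ref{sec: combinatorial bounds}). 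Applying Poisson summation, this function is a sum of Gaussian peaks of width $O(\kappa_\bosonA)$ centered at integer multiples of $1/r(a)$ (in the convention matching Definition~\ref{def:suitabledistribution}), modulated by the envelope $|\hat{\Psi}_{\Delta_\bosonA}(p)|^2$ of width $\sim 1/\Delta_\bosonA$. With the parameters of Table~\ref{tab: parameters}, $\kappa_\bosonA=2^{-16n}$ is much smaller than the interval half-width $1/(2q)=\Theta(2^{-2n})$, so each peak at $\ell/r(a)$ fits inside a single interval of the set $\Gamma_{\ell\bmod r(a)}(a)$; and $1/\Delta_\bosonA=2^{16n}\gg 1$ ensures that each residue class $d\in\mathbb{Z}_{r(a)}$ contains many peaks inside the envelope, whence a Riemann-sum approximation $\sum_{j\in\mathbb{Z}}|\hat{\Psi}_{\Delta_\bosonA}(j+d/r(a))|^2 \approx \int |\hat{\Psi}_{\Delta_\bosonA}(p)|^2\,dp=1$ shows that each $\Gamma_d(a)$ carries mass $(1+o(1))/r(a)$. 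I expect the main technical obstacle to be quantifying these approximation errors simultaneously---Poisson-summation tails, the $\delta_k$-shift corrections to $\|A_k\|$ and $T_k$, and the piecewise-constant envelope approximation on scale $1/r(a)$---tightly enough that the lower bound $\Omega(1/r(a))$ survives when combined with Proposition~\ref{prop:propositionone}'s $O(2^{-2n})$ slack in trace distance.
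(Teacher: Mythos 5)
Your proposal reaches the correct conclusion via a genuinely different route from the paper, and the key structural observation---that after tracing out modes~$\bosonB,\bosonC,\qubit$ and exploiting the orthogonality of the $\ket{B_k}$, the $P$-distribution factors as $p_{\Phi_a}(p)\propto |\widehat{\Psi}_{\Delta_\bosonA}(p)|^2\sum_k|T_k(p)|^2$---is the same as the paper's decomposition into $\sum_k|\widehat{\Theta}_k(w)|^2$. Where you diverge is in \emph{how} the final sum over residues~$k$ is handled. The paper keeps each $\widehat{\Theta}_k$ isolated, applies Poisson summation over the lattice $r\mathbb{Z}$ to each one to obtain a sum with phase factors $e^{2\pi i(R-\ind_a(k))z/r}$, and then must lower-bound the modulus of that complex-weighted sum via the Jacobi triple product identity (Lemma~\ref{lem:gausssumjacobitriple}), restricting attention to the off-center windows $\omega\in[\kappa_\bosonA/4,\kappa_\bosonA/2]$ where the relevant $|\xi|$ is small enough for that bound to apply. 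You instead sum $|T_k|^2$ over $k$ \emph{before} transforming, which produces the autocorrelation $\sum_{\ell\in r\mathbb{Z}}G(\ell)e^{ip\ell}$ with $G(\ell)=\sum_z\eta_{\kappa_\bosonA}(z-R)\eta_{\kappa_\bosonA}(z-\ell-R)\ge 0$. Because $G$ is manifestly positive-definite, its Poisson dual is a nonnegative-coefficient sum of Gaussian peaks and the cancellation problem that forces the paper to invoke Jacobi's identity simply does not arise; this is a genuine simplification. The trade-off is that the paper produces an explicit constant ($e^{-\pi^2}/64\cdot r^{-1}$) through fully quantitative intermediate lemmas, whereas your route leaves several error controls at the heuristic level---the sum-to-integral estimate $G(\ell)\approx e^{-\kappa_\bosonA^2\ell^2/4}$, the Poisson tails, and the Riemann-sum on the envelope $|\widehat{\Psi}_{\Delta_\bosonA}|^2$. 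You should also note that the paper must separately lower-bound the normalization constant $|c_{\Phi_a}|^2\ge 1/4$ (Lemma~\ref{lem: c phi}); your implicit "$\propto$" normalization would need an analogous control before the Riemann-sum argument can be converted into the explicit $\Omega(1)\cdot r^{-1}$ lower bound required by Definition~\ref{def:suitabledistribution}.
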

	
	With these ingredients, we are ready to prove Lemma~\ref{lem:mainlemma}.
	Indeed, Proposition~\ref{prop:propositionone} combined 
	with Eq.~\eqref{eq: bound distribution by trace dist} and Proposition~\ref{prop:propositiontwo} imply that the family of output distributions $\{p_{\Psi_a}\}_{a\in\mathbb{Z}_N^*}$ of the circuit~$\cQ_{a,N}$ is suitable. That is, for $a\in\mathbb{Z}_N^{*}$ we have (for sufficiently large $N$) that
	\begin{align}
		\min_{d\in \mathbb{Z}_r} \int_{\Gamma_{d}(a)} p_{\Psi_a}(w) dw
		& \ge \Bigg( \min_{d\in \mathbb{Z}_r} \int_{\Gamma_{d}(a)} p_{\Phi_a}(w) dw \Bigg) -\left\|p_{\Psi_a}-p_{\Phi_a}\right\|_1\\
		& \ge \Bigg( \min_{d\in \mathbb{Z}_r} \int_{\Gamma_{d}(a)} p_{\Phi_a}(w) dw \Bigg) -\left\|\Psi-\Phi\right\|_1 \label{eq: QaN output distribution prop one} \\
		&\ge \Omega\left(1\right)\cdot\frac{1}{r(a)} - 364\cdot 2^{-2n}\\
		& \ge \Omega\left(1\right)\cdot \frac{1}{r(a)}\ , \label{eq: QaN output distribution prop}
	\end{align}
	where we used the triangle inequality to obtain the first inequality and $r(a) < N < 2^n$, in the last step. We note that $\Omega(1)$ is independent of $a$.
	This proves Lemma~\ref{lem:mainlemma}.
	
	It remains to prove Theorem~\ref{thm:main}. To do so we use the GKP state preparation protocol of~\cite{SMbrenneretalGKP2024} which we denote by $\cP^{\gkp}_{\kappa,\Delta}$ in the following. This protocol prepares a mixed state close (in trace distance) to the GKP state $\ket{\gkp_{\kappa,\Delta}}$, see Theorem~\ref{thm:P gkp} below (stated in its informal variant as Lemma~\ref{lem: P gkp} in the main text).
	\begin{theorem}[{\!\!\!\cite[Theorem 5.1]{SMbrenneretalGKP2024}} (see also Corollary~5.3)]\label{thm:P gkp}
		There exists a protocol $\cP^{\gkp}_{\kappa,\Delta}$
		that outputs a classical ``success'' flag with probability at least~$1/10$,
		and has the property that the output state $\rho\in\cB(L^2(\mathbb{R}))$ conditioned on success 
		satisfies
		\begin{align}
			\left\| \rho - \gkp_{\kappa,\Delta} \right\|_1 \le O(\Delta^{1/2})+O(\kappa^{1/3})\ ,
		\end{align}
		in the limit $(\kappa,\Delta)\to(0,0)$. The protocol uses $O(\log1/\kappa)+O(\log 1/\Delta)$ operations of the form~\eqref{it:singletwomodeops}--\eqref{it:singlequbitoperations}, where some of the involved squeezing- and displacement operations are classically controlled by (efficiently computable functions of) the measurement result associated with a $Q$-quadrature measurement.
	\end{theorem}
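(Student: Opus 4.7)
The plan is to construct an adaptive protocol that assembles $\ket{\gkp_{\kappa,\Delta}}$ via phase-estimation-style modular quadrature measurements on a single oscillator coupled to a qubit ancilla. The elementary primitive is a one-round modular $Q$-measurement: prepare a fresh ancilla qubit in $\ket{+}$, apply a constant-strength qubit-controlled displacement, then Hadamard-measure the qubit. The measurement outcome weakly projects the oscillator onto an eigenspace of $e^{i 2\pi Q}$ and classically reveals one bit of $Q\xmod{1}$; a classically controlled displacement correction then recenters the probability mass toward integer positions. A dual primitive built from qubit-controlled phase-space rotations together with $Q$-measurements accomplishes the analogous sharpening along the $P$-quadrature, which is what enforces the Gaussian envelope in position space.

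First I would initialize the oscillator in a squeezed vacuum state of position-space width of order $1/\kappa$, so that the envelope factor $\eta_\kappa$ appearing in~\eqref{eq:gkp approximate eta chi} is already approximately correct. I would then run $L_1 = O(\log 1/\Delta)$ rounds of the modular $Q$-measurement with coupling strengths successively refined by factors of two, extracting the binary expansion of $Q\xmod{1}$ to precision $O(\Delta)$ and feeding back classically controlled displacements that concentrate the peaks near integers. Analogously, $L_2 = O(\log 1/\kappa)$ rounds of the dual modular measurement along $P$ would tighten the envelope to width $O(1/\kappa)$. Each round uses $O(1)$ elementary operations of the form~\eqref{it:singletwomodeops}--\eqref{it:singlequbitoperations}, giving the claimed total count of $O(\log 1/\kappa) + O(\log 1/\Delta)$.

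For the trace-distance bound, I would track a reference ``target'' state through the protocol and bound, round by round, the $L^1$-distance between the actual post-measurement state and the target conditioned on a ``good'' measurement outcome. Gaussian tail estimates on peaks of width $\Delta$ give the $O(\Delta^{1/2})$ contribution, while envelope-shaping and discretization errors give the $O(\kappa^{1/3})$ contribution; the gentle-measurement lemma combined with the data-processing inequality propagates these estimates through the classically controlled feedback. The overall ``success'' flag is declared precisely when every round produces a good outcome, and a union bound over $L_1+L_2$ rounds controls the failure probability.

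The main obstacle is achieving the trace-distance bound \emph{and} the constant acceptance probability $\ge 1/10$ simultaneously. Outcome probabilities depend on the current (random) oscillator state, so a naive strict acceptance rule decays geometrically in the number of rounds. The fix is to choose each round's acceptance window adaptively, with a coupling-strength schedule that makes every round accept with probability $1-o(1/(L_1+L_2))$; proving by a martingale-style argument that this schedule meets both targets at once — and optimizing it to recover the specific exponents $1/2$ and $1/3$ in the error terms — is the delicate quantitative step in the analysis.
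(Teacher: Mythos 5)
The statement you are trying to prove is not actually proved anywhere in this paper: Theorem~\ref{thm:P gkp} is imported verbatim, cited as Theorem~5.1 and Corollary~5.3 of the companion paper~\cite{SMbrenneretalGKP2024}, and the present manuscript simply invokes it as a black box (cf.\ Lemma~\ref{lem: P gkp} in the main text). So there is no ``paper's own proof'' here to compare against; anything you write is being matched against a construction and analysis that live in an entirely different document, which this paper does not reproduce.

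On its own merits, your sketch is a reasonable but under-justified outline of a Terhal--Weigand-style adaptive phase-estimation protocol (indeed \cite{Terhal_2016,Weigand_2018} are the natural precedents, and they are cited in the main text). Two concrete gaps stand out. First, your ``each round accepts with probability $1-o(1/(L_1+L_2))$'' schedule is asserted but not constructed: the outcome distribution of each round depends on the (random) post-measurement state from the previous round, and it is far from automatic that one can drive the per-round rejection probability down to $o(1/\log(1/\kappa\Delta))$ while simultaneously tightening both the peaks to width $\Delta$ and the envelope to width $1/\kappa$; a martingale argument would need an explicit supermartingale and a concrete stopping rule, neither of which you supply. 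Second, the asymmetric exponents $\Delta^{1/2}$ vs.\ $\kappa^{1/3}$ are a strong hint that the two directions are \emph{not} handled by the same mechanism in the cited work, and your description treats them symmetrically (``a dual primitive\ldots accomplishes the analogous sharpening''); nothing in your argument explains why peak errors should scale as a square root while envelope errors scale as a cube root. Without the source of this asymmetry you cannot recover the stated bound, only a weaker symmetric one. If you want to prove this statement, you would need to reconstruct the specific truncation, discretization, and post-selection analysis of \cite{SMbrenneretalGKP2024} rather than appeal generically to ``Gaussian tail estimates'' and the ``gentle-measurement lemma.''
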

	
	Let $\Psi_a$ be the final state before the measurement of the circuit $\cQ_{a,N}$. Let $\rho_a$ be the final (mixed) state before the measurement in the circuit $\circWithInitStates(e^{-iRP} \sigma_{\kappa_\bosonA,\Delta_\bosonA} e^{iRP},\sigma_{\kappa_\bosonB,\Delta_\bosonB},\Psi_{\Delta_\bosonC})$, where we used the approximate GKP state preparation protocol $\cP^{\gkp}_{\kappa_i,\Delta_i}$ to generate the state $\sigma_{\kappa_i,\Delta_i}$ in modes $i = \bosonA, \bosonB$. Using the triangle inequality together with~\eqref{eq: bound distribution by trace dist} yields
	\begin{align}
		\min_{d\in \mathbb{Z}_r} \int_{\Gamma_{d}(a)} p_{\rho_a}(w) dw
		&\ge \Bigg( \min_{d\in \mathbb{Z}_r} \int_{\Gamma_{d}(a)} p_{\Psi_a}(w) dw \Bigg) -\left\|\rho_a-\Psi_a \right\|_1 \ .
	\end{align}
	By the invariance of the $L^1$-norm distance under unitaries we obtain 
	\begin{align}
		\left\|\rho_a-\Psi_a\right\|_1
		&= \left\| \sigma_{\kappa_\bosonA,\Delta_\bosonA} \otimes \sigma_{\kappa_\bosonB,\Delta_\bosonB}  \otimes \Psi_{\Delta_\bosonC}\otimes\proj{0} - \gkp_{\kappa_\bosonA, \Delta_\bosonA}\otimes \gkp_{\kappa_\bosonB, \Delta_\bosonB} \otimes  \Psi_{\Delta_\bosonC}\otimes\proj{0}\right\|_1\\
		&\le  \left\| \sigma_{\kappa_\bosonA,\Delta_\bosonA} -  \gkp_{\kappa_\bosonA, \Delta_\bosonA} \right\|_1 +  \left\|  \sigma_{\kappa_\bosonB,\Delta_\bosonB}   - \gkp_{\kappa_\bosonB, \Delta_\bosonB}  \right\|_1 \label{eq:thmaminStateclosenessTriangleIneq}\\
		&\le O(\Delta_\bosonA^{1/2})+O(\kappa_\bosonA^{1/3}) + O(\Delta_\bosonB^{1/2})+O(\kappa_\bosonB^{1/3}) \label{eq:thmmainStateClosenessProt}\\
		&=O(2^{-8n} +  2^{-16n/3} +  2^{-9n^2/2} + 2^{-18n^2/3})\\
		&\le O(2^{-4n})\ 
	\end{align}
	where we used that~$\|\rho_1\otimes \sigma-\rho_2\otimes\sigma\|_1=\|\rho_1-\rho_2\|_1$ for states~$\rho_1,\rho_2,\sigma$, and that 
	\begin{align}
		\|\rho_\bosonA\otimes\rho_\bosonB-\sigma_\bosonA\otimes\sigma_\bosonB\|_1
		&\le\|\rho_\bosonA\otimes\rho_\bosonB-\sigma_\bosonA\otimes\rho_\bosonB\|_1+\|\sigma_\bosonA\otimes\rho_\bosonB-\sigma_\bosonA\otimes\sigma_\bosonB\|_1 \\
		&\le \|\rho_\bosonA-\sigma_\bosonA\|_1 + \|\rho_\bosonB-\sigma_\bosonB\|_1
	\end{align}
	for states $\rho_\bosonA,\rho_\bosonB,\sigma_\bosonA,\sigma_\bosonB$ to obtain Eq.~\eqref{eq:thmaminStateclosenessTriangleIneq} and Theorem~\ref{thm:P gkp} (with the parameters
	$\kappa_\bosonA, \Delta_\bosonA, \kappa_\bosonB, \\ \Delta_\bosonB, \Delta_\bosonC$ specified in Table~\ref{tab: parameters}) to obtain Eq.~\eqref{eq:thmmainStateClosenessProt}.
	Combining this with~\eqref{eq: QaN output distribution prop} and using the fact that $r(a) < 2^n$ we obtain
	\begin{align}
		\min_{d\in \mathbb{Z}_r} \int_{\Gamma_{d}(a)} p_{\rho_a}(w) dw
		& \ge \Omega\left(1\right) \cdot \frac{1}{r(a)}-O(2^{-4n})
		\ge \Omega\left(1\right)\cdot \frac{1}{r(a)}\ .
	\end{align}
	Thus Proposition~\ref{prop:propositionzero} implies that there exists a classical polynomial-time algorithm which when applied to a single sample from the outcome distribution yields a factor of $N$ with probability at least $\Omega(1/\log n)$.
	A simple amplification with $\log n$ repetitions yields the claim (II) of Theorem~\ref{thm:main}. Claim (I) of Theorem~\ref{thm:main} is obtained by realizing the circuit $\cQ_{a,N}$ (see Fig.~\ref{fig:maincircuit}) together with the preparation of the initial states with the parameters
	$(m,R,\kappa_\bosonA,\Delta_\bosonA,\kappa_\bosonB,\Delta_\bosonB, \Delta_\bosonC)$ given in Table~\ref{tab: parameters}. We analyzed the size and depth of the circuit $\semiideal$ 
	in Section~\ref{sec: main circuit ideal initial states} which did not accounted for the initial state preparation but for the circuit $\circWithInitStates$ only. To prepare GKP states we use the protocol $\cP_{\kappa,\Delta}^\gkp$ of~\cite{SMbrenneretalGKP2024} producing states $\sigma_{\kappa,\Delta}$ close to the state $\gkp_{\kappa,\Delta}$, see Theorem~\ref{thm:P gkp}. Thus the number of operations needed to prepare the initial state $e^{-iRP}\sigma_{\kappa_\bosonA,\Delta_\bosonA}$ on the mode $\bosonA$ is $O(\log 1/\kappa_\bosonA)+O(\log 1/\Delta_{\bosonA})+O(\log R)=O(n)$, where we haven also taken into account the translation by $R$ performed by the decomposed unitary $e^{-iRP}$ (cf. Section~\ref{sec: translation by R}) and used the parameters $\kappa_\bosonA=\Delta_\bosonA=2^{-16n}$. The initial state on the mode~$\bosonB$ is a GKP state prepared by protocol $\cP_{\kappa,\Delta}^\gkp$ with parameters $\kappa_\bosonB=\Delta_\bosonB=2^{-18n^2}$ (cf. Table~\ref{tab: parameters}) which requires $O(n^2)$ elementary operations to prepare. The initial state on mode $\bosonC$ is the squeezed vacuum state $\ket{\Psi_{\Delta_\bosonC}}=M_{\Delta_\bosonC}\ket{\vac}$ whose preparation requires $O(\abs{\log \Delta_{\bosonC}})=O(n)$ elementary operations since $\Delta_\bosonC=2^{-50n}$. Therefore, the number of elementary operations needed to implement $\circWithInitStates(e^{-iRP} \sigma_{\kappa_\bosonA,\Delta_\bosonA} e^{iRP},\sigma_{\kappa_\bosonB,\Delta_\bosonB},\Psi_{\Delta_\bosonC})$ including its initial states scales as $O(n^2)$ and so does its circuit size and depth.

	\clearpage
	\section{Proof of Proposition~\ref{prop:propositionzero}: Classical post-processing}\label{sec: prop zero proof}
	In this section, we prove 
	Proposition~\ref{prop:propositionzero}, which we restate here for the reader's convenience.
	\propositionzero*
	
	Here we prove Proposition~\ref{prop:propositionzero} by relating it to the classical post-processing subroutine of Shor's algorithm~\cite{SMShor}. To this end, let us define the set of good inputs for the post-processing subroutine of Shor's algorithm.
	\begin{definition}\label{def:good set}
		Let $N\in\mathbb{N}$. Let $q$ be the smallest power of $2$ such that $q> N^2$.  We define the following set for all $a\in \mathbb{Z}_N^*$
		\begin{align}
			\good_a&:=\left\{c\in \mathbb{Z}_q \ |\ 
			\exists d\in\mathbb{Z}_{r(a)}\textrm{ such that }
			\left|\frac{c}{q}-\frac{d}{r(a)}\right|\leq \frac{1}{2q}
			\right\}\ ,
		\end{align}
		where $r(a)$ is the order of $a$ in $\mathbb{Z}^*_N$.
	\end{definition}
	We use this set to paraphrase the success probability of the classical post-processing of Shor's algorithm in the following theorem.

	\begin{theorem}[Shor's post-processing subroutine, paraphrased from~\cite{SMShor}]\label{thm:shor factorization postprocessing} There is an efficient classical algorithm~$\cShor$ with the following properties.
		Given an~$n$-bit integer~$N\in\mathbb{N}$, assume that
		a pair~$(c,a)$ is generated as follows:
		\begin{enumerate}[(i)]
			\item 
			the element $a\sim U_{\mathbb{Z}_N^*}$  is drawn from the uniform distribution $U_{\mathbb{Z}_N^*}$ on $\mathbb{Z}_N^*$, and 
			\item 
			$c\sim P_a$ is drawn from $P_a$, where $\{P_a\ |\ P_a:\mathbb{Z}_q\rightarrow [0,1]\}_{a\in\mathbb{Z}_N^*}$ is a family of probability distributions on~$\mathbb{Z}_q$. 
		\end{enumerate}
		Then the algorithm~$\cShor$, given a sample $c\sim P_a$ and the two parameters $a$ and $N$, outputs a factor of $N$ with probability at least
		\begin{align}
			\Pr_{\substack{a\sim U_{\mathbb{Z}_N^*}\\ c\sim P_a}}[\cShor(c, a, N) \textrm{ divides } N]\ge \Omega(1)\cdot \min_{a\in \mathbb{Z}_N^*} \left( \frac{r(a)}{\log\log r(a)} \cdot \min_{c\in\good_a} P_a(c)\right) \label{eq: shor prob bound}
		\end{align}
		in time polynomial in $n$.
	\end{theorem}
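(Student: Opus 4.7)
The plan is to follow Shor's original analysis of the classical post-processing, adapted to extract the dependence on $\min_{c \in \good_a} P_a(c)$. First I would describe the algorithm $\cShor$: on input $(c, a, N)$, it computes the continued-fraction expansion of $c/q$ and returns the unique rational $d'/r'$ in lowest terms with $r' \le N$ satisfying $|c/q - d'/r'| \le 1/(2q)$, if such a convergent exists. Uniqueness is standard: $q > N^2$ implies $1/(2q) < 1/(2r'^2)$ for any $r' \le N$, so any such $d'/r'$ must be a convergent of $c/q$ and is hence uniquely determined. The algorithm then attempts Miller's reduction: if $r'$ is even, it outputs $\gcd(a^{r'/2} \pm 1, N)$; otherwise it reports failure. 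A preprocessing step first tests whether $N$ is a prime power (and factors it directly by classical means if so), so we may henceforth assume that $N$ has at least two distinct odd prime factors. All steps run in time polynomial in $n$.

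Next I would establish correctness under the appropriate events. Fix $a \in \mathbb{Z}_N^*$ with order $r = r(a)$ and suppose $c \in \good_a$; by Definition~\ref{def:good set} there exists $d \in \mathbb{Z}_r$ with $|c/q - d/r| \le 1/(2q)$, so the algorithm recovers $d'/r' = d/r$ in lowest terms, giving $r' = r/\gcd(d, r)$. When $\gcd(d, r) = 1$ this yields $r' = r$, and Miller's reduction then produces a non-trivial factor of $N$ whenever $r$ is even and $a^{r/2} \not\equiv -1 \pmod N$. By the Chinese-remainder-theorem argument recalled in the footnote of the main text, for $a \sim U_{\mathbb{Z}_N^*}$ and $N$ with at least two distinct prime factors, both Miller conditions hold with probability at least~$1/2$.

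For the probability bound, the key observation is that the map $c \mapsto d(c)$ is a bijection from $\good_a$ onto $\mathbb{Z}_r$: the intervals defining $\good_a$ in~\eqref{eq:Omega d} are pairwise disjoint since $q > N^2 \ge r^2$. Writing $c_d$ for the unique element of $\good_a$ corresponding to $d \in \mathbb{Z}_r$, we obtain
\begin{align}
\Pr_{c \sim P_a}\!\left[c \in \good_a \text{ and } \gcd(d(c), r) = 1\right]
\;=\; \sum_{\substack{d \in \mathbb{Z}_{r}\\ \gcd(d, r) = 1}} P_a(c_d)
\;\ge\; \phi(r) \cdot \min_{c \in \good_a} P_a(c)\ ,
\end{align}
and the Hardy--Wright bound $\phi(r)/r = \Omega(1/\log\log r)$ then gives, for each fixed $a$,
\begin{align}
\Pr_{c \sim P_a}\!\left[r' = r(a)\right]
\;\ge\; \Omega\!\left(\frac{r(a)}{\log\log r(a)}\right) \cdot \min_{c \in \good_a} P_a(c)\ .
\end{align}
Averaging over $a \sim U_{\mathbb{Z}_N^*}$ and combining with the constant Miller success probability yields the claimed bound~\eqref{eq: shor prob bound}.

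The main obstacle is the bookkeeping needed to cleanly separate the three ingredients --- uniqueness of the convergent, Euler-totient counting for the coprimality of $d$ and $r$, and Miller's reduction over a random $a$ --- and to ensure the $a$-dependent quantities compose correctly when one takes the minimum over~$a\in\mathbb{Z}_N^*$. The technical content itself is entirely standard and is spelled out in detail in~\cite{SMShor}.
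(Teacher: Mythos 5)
The paper does not supply its own proof of this theorem: it is stated as ``paraphrased from~\cite{SMShor}'' and used as a black box in the proof of Proposition~\ref{prop:propositionzero}. Your proof is a correct reconstruction of Shor's standard argument (continued-fraction recovery of $d/r$, the $\phi(r)$ coprimality count with Hardy--Wright, and Miller's reduction over random~$a$), and the passage from the per-$a$ bound to the stated bound via $\min_a$ and the constant Miller success probability is handled correctly.

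Two small imprecisions worth fixing. First, the map $c\mapsto d(c)$ is guaranteed \emph{surjective} onto $\mathbb{Z}_r$ but is a bijection only when no $qd/r$ is a half-integer; in the tie case two values of $c$ map to the same $d$. This does not break your argument---surjectivity and the disjointness of the intervals $\{qd/r - 1/2 \le c \le qd/r+1/2\}$ across distinct~$d$ are all you need to select $\phi(r)$ distinct elements $c_d\in\good_a$---but the equality ``$\Pr[\cdots] = \sum_{d} P_a(c_d)$'' should then be the inequality ``$\ge$''. Second, when you invoke disjointness you cite Eq.~\eqref{eq:Omega d}, which is the definition of the real-line sets $\Gamma_d(a)$ rather than of $\good_a\subseteq\mathbb{Z}_q$ (Definition~\ref{def:good set}); the correct justification is that $q>N^2\ge r^2$ makes the windows $\left[\tfrac{d}{r}-\tfrac{1}{2q},\tfrac{d}{r}+\tfrac{1}{2q}\right]$ pairwise disjoint so each $c\in\good_a$ determines $d$ uniquely. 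Neither point affects the conclusion.
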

	Theorem~\ref{thm:shor factorization postprocessing} shows that samples from the family~$\{P_a\}_{a\in\mathbb{Z}_N^*}$ can be used to factor~$N$ as long as each~$P_a$ is such that every element~$c\in \good_a$ has non-negligible probability~$P_a(c)$.
	
	We can relax this condition somewhat, demanding only that~$P_a(\{c-1,c,c+1\})$ is non-negligible for every~$c\in\good_a$. (Here $c-1,c+1\in\mathbb{Z}_q$, i.e., substraction and addition is to be understood modulo~$q$.) To this end, consider the following modified algorithm~$\cC^{\mathsf{Shor}'}$. It again takes~$a\sim U_{\mathbb{Z}_N^*}$, a sample~$c\sim P_a$ and $N$, and proceeds as follows: For $\ell\in \{-1,0,1\}$, the 
	algorithm computes~$\cC^{\mathsf{Shor}}(c+\ell,a,N)$. It returns the corresponding number if one of the three outputs is a factor of~$N$, and outputs failure otherwise. It is easy to check that (as a consequence of Theorem~\ref{thm:shor factorization postprocessing}), we then also have

	\begin{align}
		\Pr_{\substack{a\sim U_{\mathbb{Z}_N^*}\\ c\sim P_a}}[\cC^{\mathsf{Shor}'}(c, a, N) \textrm{ divides } N]
		&\ge \Omega(1)\cdot \min_{a\in \mathbb{Z}_N^*} \left( \frac{r(a)}{\log\log r(a)} \cdot \min_{c\in\good_a} 
		P_a\left( \{c-1, c, c+1\}\right)\right)\ .
		\label{eq: prob bound min sum}
	\end{align}
	
	This reasoning can trivially be extended to probability density functions on the real line combined with a discretization procedure.
	Let $\dis:\mathbb{R}\to \mathbb{Z}_{q}$ be a ``discretization'' function. 
	Then a probability distribution  $\widetilde{p}:\mathbb{R}\to [0,1]$ 
	on~$\mathbb{R}$ induces a distribution~$p:\mathbb{Z}_q\rightarrow[0,1]$
	by
	\begin{align}
		P(c)&=\int_{\dis^{-1}(\{c\})}
		\tilde{p}(x)dx\qquad\textrm{ for }\qquad c\in\mathbb{Z}_q\ .
	\end{align}
	In particular, statements such as~\eqref{eq: prob bound min sum} can be translated to families~$\{p_a\}_{a\in\mathbb{Z}_N^*}$ of probability distributions on the real line. We argue that doing so with a specific discretization function~$\dis$ leads to the concept of a suitable family of distributions, see Def.~\eqref{def:suitabledistribution}.
	
	Concretely, consider the function
	\begin{align}
		\begin{matrix}
			\dis:& \mathbb{R} & \rightarrow & \mathbb{Z}_q\\
			&x& \mapsto& \min\left( \underset{c\in\{0,\ldots,q\}}{\arg\min}\left|\frac{c}{q}-\fra(x)\right|\right) \xmod{q}
		\end{matrix}
	\end{align}
	where $\fra(x) := x-\lfloor x\rfloor $ is the fractional part of~$x\in\mathbb{R}$. It returns the closest multiple of $1/q$ to the fractional part of~$x$. One can check that the function $\dis(\cdot)$ satisfies the following two properties:
	\begin{enumerate}[(i)]\setlength{\itemsep}{7pt}
		\item $\dis(x) = \dis(x+1) \quad \textrm{ for all } x\in \mathbb{R}\ ,$ i.e., it is translation-invariant, and \label{it: period dis}
		\item $\dis\left(\left\{\frac{c}{q}\right\} + \left(-\frac{1}{2q}, \frac{1}{2q}\right]\right) = \{c\}\, \quad \textrm{ for all } c\in \mathbb{Z}_q$\ , i.e., any argument $1/(2q)$-close to~$c/q$ for $c\in\mathbb{Z}_q$ gets mapped to~$c$.\label{it:interval dis}
	\end{enumerate}
	
	We use these properties
	to show the following relation between the unions of intervals $\Gamma_d(a)$ from Proposition~\ref{prop:propositionzero} and the action of the discretization function.
	\begin{lemma} \label{lem: Omega d dis}
		Let $a\in\mathbb{Z}_N^*$, let $r(a)$ be the order of~$a$ in~$\mathbb{Z}_N^*$, and let $c \in \good_a$. Then there exists $d \in \mathbb{Z}_{r(a)}$ such that
		\begin{align}
			\dis^{-1}\left( \{c -1, c, c+1\}\right) \supset \Gamma_{d}(a)\ ,
		\end{align}
		where $\Gamma_d(a)$ is the union of intervals defined in Definition~\ref{def:suitabledistribution} (see Eq.~\eqref{eq:Omega d}).
	\end{lemma}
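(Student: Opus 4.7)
The plan is to track a point $x \in \Gamma_d(a)$ through the discretization function and show that its image can differ from $c$ by at most one in $\mathbb{Z}_q$. First I would unpack the hypotheses: since $c \in \good_a$, there exists $d \in \mathbb{Z}_{r(a)}$ with $|c/q - d/r(a)| \leq 1/(2q)$, and I would use precisely this $d$ as the witness claimed in the lemma. For any $x \in \Gamma_d(a)$, write $x = j + d/r(a) + \delta$ with $j \in \mathbb{Z}$ and $|\delta| \leq 1/(2q)$. Property~(\ref{it: period dis}) (translation invariance of $\dis$) lets me discard $j$, so it suffices to analyze $\dis(d/r(a) + \delta)$.

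Next I would apply the triangle inequality to obtain
\begin{equation}
\left| \left(\tfrac{d}{r(a)} + \delta\right) - \tfrac{c}{q} \right| \;\leq\; \left|\tfrac{c}{q} - \tfrac{d}{r(a)}\right| + |\delta| \;\leq\; \tfrac{1}{q}\ ,
\end{equation}
working on the circle $\mathbb{R}/\mathbb{Z}$ to accommodate the boundary cases $d=0$ with $\delta<0$ or $d=r(a){-}1$ with $d/r(a)+\delta \geq 1$. Property~(\ref{it:interval dis}) says that the three intervals $[c'/q - 1/(2q), c'/q + 1/(2q)]$ for $c' \in \{c-1, c, c+1\}$ (taken modulo $q$) are mapped by $\dis$ to $\{c-1\}$, $\{c\}$ and $\{c+1\}$, respectively, and their union is the arc $[c/q - 3/(2q), c/q + 3/(2q)]$ on $\mathbb{R}/\mathbb{Z}$. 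This arc contains the ball $[c/q - 1/q, c/q + 1/q]$ in which $d/r(a) + \delta$ lies by the bound above, so $\dis(d/r(a) + \delta) \in \{c-1, c, c+1\}$, which is the desired inclusion.

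The main (minor) obstacle is purely bookkeeping: correctly treating the cyclic wrap-around when $d/r(a)+\delta$ falls slightly below $0$ or slightly above $1$, so that the ``three neighbours'' $\{c-1,c,c+1\}$ are interpreted modulo $q$ rather than in $\mathbb{Z}$. This is handled once and for all by recasting everything on the circle $\mathbb{R}/\mathbb{Z}$ via the periodicity property of $\dis$, after which the argument is a one-line triangle inequality followed by covering the ball of radius $1/q$ by three consecutive discretization cells of radius $1/(2q)$.
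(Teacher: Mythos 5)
Your proof is correct and follows essentially the same route as the paper's: use the translation invariance of $\dis$ to reduce to the interval around $d/r(a)$, then observe that this interval lies within distance $1/q$ of $c/q$ (triangle inequality) and is therefore covered by the three discretization cells of $c-1$, $c$, $c+1$. Your explicit framing on the circle $\mathbb{R}/\mathbb{Z}$ is a small stylistic streamlining of the paper's handling of the $\bmod\,q$ wrap-around, but the core argument is the same.
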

	\begin{proof}
		For brevity, let us write~$r=r(a)$.  We partition $\Gamma_d(a)$ as
		\begin{align}
			\Gamma_d(a) &= \bigcup_{j\in\mathbb{Z}}\left[j+\frac d r-\frac{1}{2q},j+\frac d r+\frac{1}{2q}\right] \\
			&=\bigcup_{j\in\mathbb{Z}} \Gamma_{j,d} \qquad \textrm{  where }\qquad \Gamma_{j,d}:=\left[j+\frac d r-\frac{1}{2q},j+\frac d r+\frac{1}{2q}\right]\ ,
		\end{align}
		where we observed that the intervals~$\Gamma_{j,d}$ are in both $j\in\mathbb{Z}$ and $d\in\mathbb{Z}_r$ pairwise disjoint (since $q>r\ge 1$). Furthermore, we have 
		\begin{align}
			\Gamma_{j,d}=\{j\}+\Gamma_{0,d}\qquad\textrm{ for every }\qquad j\in\mathbb{Z}\ \label{eq:omegajdtranslation}
		\end{align}
		where the addition is to be understood as  a Minkowski sum. We claim that
		\begin{align}
			\{c-1,c,c+1\}\supseteq\dis(\Gamma_{0,d})\ .\label{eq:claimcontainmentccplus}
		\end{align}
		With~\eqref{eq:omegajdtranslation}, this implies the claim because of property~\eqref{it: period dis} of the discretization function~$\dis$.
		
		To prove~\eqref{eq:claimcontainmentccplus}, recall that the assumption $c \in \good_a$ means that there is an element $d\in \mathbb{Z}_r$ such that
		\begin{align}
			\abs{\frac{c}{q} - \frac{d}{r}}\le \frac{1}{2q}\ . \label{eq: dist cq dr}
		\end{align}
		This implies that 
		\begin{align}
			\frac{c}{q} \in \Gamma_{0, d}\  .
		\end{align}
		Since~$\Gamma_{0,d}$ is an interval of diameter~$\frac1q<\frac{3}{2q}$, it follows that
		\begin{align}
			\left(\frac{c}{q}-\frac{3}{2q},\frac{c}{q}+\frac{3}{2q}\right]\supset \Gamma_{0,d}\ ,
		\end{align}
		because
		\begin{align}
			\left(\frac{c}{q}-\frac{3}{2q},\frac{c}{q}+\frac{3}{2q}\right]
			&=\left(\left\{\frac{c-1}{q}\right\}+I_q\right)\cup \left(\left\{\frac{c}{q}\right\}+I_q\right)\cup \left(\left\{\frac{c+1}{q}\right\}+I_q\right)\ ,\label{eq:threeintervals}
		\end{align}
		where $I_q=(-1/(2q),1/(2q)]$. We illustrate the covering of $\Gamma_{0,d}$ by the three intervals in the RHS of Eq.~\eqref{eq:threeintervals} in Fig.~\ref{fig:cover by three intervals}.
		The claim~\eqref{eq:claimcontainmentccplus} follows from property~\eqref{it:interval dis} of the discretization function~$\dis$. 
	\end{proof}
	\begin{figure}[ht]
		\centering
		\begin{subfigure}[t]{\textwidth}
			\centering
			\includegraphics[width= \linewidth]{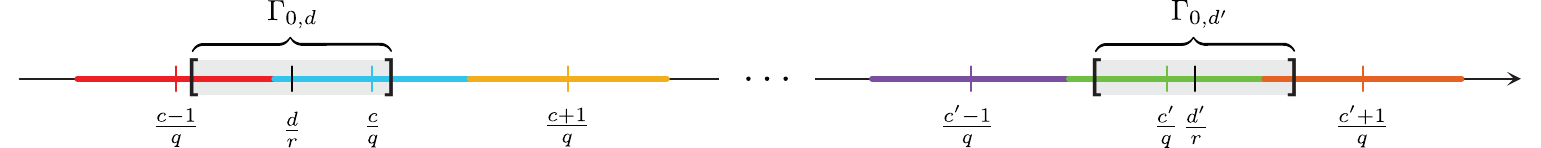}
			\caption{A covering of the interval $\Gamma_{0,d}$ by three intervals $\{\frac{c-1}{q}\}+I_q$, $\{\frac{c}{q}\}+I_q$, $\{\frac{c+1}{q}\}+I_q$ colored in red, blue and yellow, respectively. The function $\dis$ maps 
				these intervals to $c-1,c$ and $c+1$, respectively. 
				In the given example, the set~$\Gamma_{0,d}$ is covered by the two  intervals $\{\frac{c-1}{q}\}+I_q$ and $\{\frac{c}{q}\}+I_q$ (red and blue). On the other hand, the displayed interval $\Gamma_{0,d'}$ is covered by the two intervals $\{\frac{c'}{q}\}+I_q$ and $\{\frac{c'+1}{q}\}+I_q$ (green and orange). 
				In general, using three intervals~$\{\frac{c-1}{q}\}+I_q$, $\{\frac{c}{q}\}+I_q$, $\{\frac{c+1}{q}\}+I_q$ is always sufficient to 
				cover~$\Gamma_{j,d}$ (in this illustration $c,c'\in\good_a$).}
		\end{subfigure}
		\begin{subfigure}[t]{\textwidth}
			\vspace*{0.5cm}
			\includegraphics[width=\linewidth]{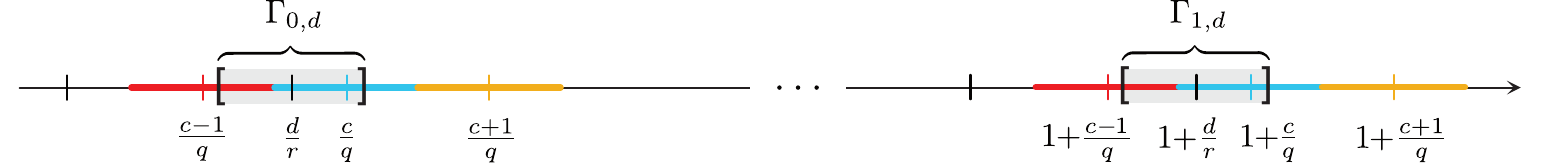}
			\caption{A covering of interval $\Gamma_{j,d}$, for $j\in \{0,1\}$. These intervals are covered by $\{j+\frac{c-1}{q}\}+I_q$, $\{j+\frac{c}{q}\}+I_q$, $\{j+\frac{c+1}{q}\}+I_q$ and mapped by the function $\dis$ to $c-1,c,c+1$, respectively. We can observe the similar behaviour for any $j\in\mathbb{Z}$.}
		\end{subfigure}
		\caption{An illustration of the covering of the interval $\Gamma_{0,d}$ by the three intervals $\{\frac{c-1}{q}\}+I_q$, $\{\frac{c}{q}\}+I_q$, $\{\frac{c+1}{q}\}+I_q$, 
			where $I_q=(-1/(2q),1/(2q)]$, used in the proof of Lemma~\ref{lem: Omega d dis} and explanation of the action of the discretization function $\dis$.
		}
		\label{fig:cover by three intervals}
	\end{figure}

	In the following we prove Proposition~\ref{prop:propositionzero}. By the assumption that the family $\{p_a~|~ p_a:\mathbb{R}\to [0,\infty) \}_{a\in\mathbb{Z}_N^*}$ of probability distributions is suitable (see Def.~\ref{def:suitabledistribution}), 
	we have 
	\begin{align}
		\min_{d\in \mathbb{Z}_r} \int_{\Gamma_{d}(a)} p_a(w) dw\ge \Omega\left(1\right)\cdot \frac{1}{r(a)}\qquad \textrm{ for all }\qquad a\in\mathbb{Z}_N^*\ .
	\end{align}
	It follows from Lemma~\ref{lem: Omega d dis}  that for every~$a\in\mathbb{Z}_N^*$, the random variable~$C_a=\dis(X_a)$, where $X_a\sim p_a$ is drawn from~$p_a$, is distributed according to a distribution~$P_a$ that satisfies
	\begin{align}
		P_a(\{c-1,c,c+1\})\geq p_a(\Gamma_d(a))\geq \Omega\left(1\right) \cdot \frac{1}{r(a)} \qquad\textrm{ for every }c\in\good_a\ .
	\end{align}
	(Here we write $p_a(A)=\int_A p_a(x)dx$ for $A\subseteq \mathbb{R}$.) 
	Applying~\eqref{eq: prob bound min sum}
	to the family~$\{P_a\}_{a\in\mathbb{Z}_N^*}$ of distributions
	yields
	\begin{align}
		\Pr_{\substack{a\sim U_{\mathbb{Z}_N^*}\\ c\sim P_a}}[\cC^{\mathsf{Shor}'}(c, a, N) \textrm{ divides } N]
		&\ge \Omega(1)\cdot \left(\min_{a\in \mathbb{Z}_N^*}\frac{1}{\log\log r(a)}\right)\ .
	\end{align}
	This is the claim because $r(a)\leq N$.
	
	\clearpage
	\section{Proof of Proposition~\ref{prop:propositionone}: Output state of circuit} \label{sec: prop one proof}
	This section is devoted to the proof of Proposition~\ref{prop:propositionone}, which characterizes the output state~$\Psi_a$ of the circuit~$\cQ_{a,N}=\circuitPhysicalStates$  before the measurement (see Fig.~\ref{fig:maincircuit} of the main paper). By definition, this output state~$\Psi_a$ is equal to
	\begin{align}
		\ket{\Psi_a}&=
		\Uan\Big(e^{-iRP_\bosonA}\ket{\gkp_{\kappa_\bosonA,\Delta_\bosonA}}\otimes 
		M_N\ket{\gkp_{\kappa_\bosonB,\Delta_\bosonB}}\otimes \ket{\Psi_{\Delta_\bosonC}}\otimes \ket{0}\Big)\ 
		,\label{eq:stateonex}
	\end{align}
	where $\Uan$ is the unitary specified in Section~\ref{sec: explanation Ua} and the parameters $\kappa_\bosonA,\Delta_\bosonA,\kappa_\bosonB,\Delta_\bosonB,\Delta_\bosonC$ and $R,m$ are as in Table~\ref{tab: parameters}. Proposition~\ref{prop:propositionone}  (which we restate here for convenience) is the following.
	\propositionone*
	\begin{proof}
		To establish Proposition~\ref{prop:propositionone}, we 
		go through a sequence~
		\begin{align}
			\Psi_a=\Psi^{(0)},\Psi^{(1)},\ldots,\Psi^{(5)}=\Phi_a
		\end{align} of states and compare these pairwise.  Their definition is given in Table~\ref{tab:approximatealgorithm}. 
		We establish error bounds of the form
		\begin{align}
			\left\|\Psi^{(j)}-\Psi^{(j-1)}\right\|_1 &\leq \varepsilon^{(j)}\qquad\textrm{ for }\qquad j\in \{1,\ldots,5\}\ .\label{eq:errorboundpsij}
		\end{align}
		With the triangle inequality, this implies that
		\begin{align}
			\left\|\Phi_a-\Psi_a\right\|_1 & \leq \sum_{j=1}^{5}\varepsilon^{(j)} = 364\cdot 2^{-2n}\ ,
		\end{align}
		implying Proposition~\ref{prop:propositionone}.

		\begin{table}[ht]
			\normalsize
			\centering
			{
				\arraycolsep=1.2pt
				\def\arraystretch{1.3}
				$\begin{array}{llrrr} 
					\hline\hline
					\ket{\Psi^{(0)}}= &  \Uan(e^{-iRP_\bosonA}\ket{\gkp_{\kappa_\bosonA,\Delta_\bosonA }}\hspace{-12ex}&\otimes ~M_N\ket{\gkp_{\kappa_\bosonB ,\Delta_\bosonB }}&\otimes  \ket{\Psi_{\Delta_\bosonC}} &\otimes\ket{0})   \\
					\ket{\Psi^{(1)}}=
					& \cpsi{1}\, \Uan(\mathop{{\large\Pi}}_{[-1/2,2R-1/2]}e^{-iRP_\bosonA}\ket{\gkp^{\varepsilon_\bosonA }_{\kappa_\bosonA,\Delta_\bosonA }}\hspace{-12ex}&\otimes~ M_N\ket{\gkp^{\varepsilon_\bosonB }_{\kappa_\bosonB ,\Delta_\bosonB }}&\otimes\ket{\Psi_{\Delta_\bosonC}^{\varepsilon_\bosonC}}&\otimes \ket{0})  \\
					\ket{\Psi^{(2)}}= &\cpsi{2}\sum_{z=0}^{2R-1} 
					\eta_{\kappa_\bosonA}(z-R)\!\ket{\chi^{\varepsilon_\bosonA }_{\Delta_\bosonA }(z)} &\otimes~ 
					e^{-if_{a,N,m}(z)P_\bosonB}~M_N\ket{\gkp^{\varepsilon_\bosonB }_{\kappa_\bosonB ,\Delta_\bosonB }}&\otimes\ket{\Psi_{\Delta_\bosonC}^{\varepsilon_\bosonC}}& \otimes \ket{0}~ \\ 
					\ket{\Psi^{(3)}}= & \cpsi{3}\sum_{z=0}^{2R-1}
					\!\eta_{\kappa_\bosonA}(z-R)\!\ket{\chi^{\varepsilon_\bosonA }_{\Delta_\bosonA }(z)}&\otimes~e^{-i(a^z \xmod{N})P_\bosonB} 
					M_N\ket{\gkp^{\varepsilon_\bosonB }_{\kappa_\bosonB ,\Delta_\bosonB }}&\otimes\ket{\Psi_{\Delta_\bosonC}^{\varepsilon_\bosonC}}&\otimes \ket{0}~ \\
					\ket{\Psi^{(4)}}= & \cpsi{4}\sum_{z\in \mathbb{Z}}
					\eta_{\kappa_\bosonA}(z-R)\ket{\chi^{\varepsilon_\bosonA }_{\Delta_\bosonA }(z)}&\otimes~
					e^{-i(a^z\xmod N)P_\bosonB}M_N\ket{\gkp^{\varepsilon_\bosonB }_{\kappa_\bosonB ,\Delta_\bosonB }}&\otimes\ket{\Psi_{\Delta_\bosonC}^{\varepsilon_\bosonC}}&\otimes \ket{0}~  \\
					\ket{\Psi^{(5)}}= & \cpsi{5}\sum_{z \in \mathbb{Z}}
					\eta_{\kappa_\bosonA}(z-R)\ket{\chi_{\Delta_\bosonA }(z)} &\otimes~
					e^{-i(a^z\xmod N)P_\bosonB}M_N\ket{\gkp^{\varepsilon_\bosonB }_{\kappa_\bosonB ,\Delta_\bosonB }}&\otimes\ket{\Psi_{\Delta_\bosonC}^{\varepsilon_\bosonC}}&\otimes \ket{0}~  \\[0.2cm] \hline
				\end{array}\newline\vspace*{0.5cm}
				\arraycolsep=10pt
				\def\arraystretch{1.3}
				\hspace*{-1.2cm}\vspace*{0.3cm}\begin{array}{@{\hskip 0.01cm}ll@{\hskip 5.3cm}r}
					\xdistance{0}~~~\le &\varepsilon^{(1)} = 9\cdot 2^{-2n}& (\textrm{Lemma~~\ref{lem:truncationerr}})\\
					\xdistance{1}~~~\le&\varepsilon^{(2)} = 352\cdot 2^{-2n}&(\textrm{Lemma~\ref{lem: approx_unitary}})\\
					\xdistance{2}~~~\le&\varepsilon^{(3)} = 2^{-2n}&(\textrm{Lemma~~\ref{lem: untruncation_error_tail}})\\
					\xdistance{3}~~~\le&\varepsilon^{(4)} = 2^{-2n}&(\textrm{Lemma~~\ref{lem:shift_inariance_error}})\\
					\xdistance{4}~~~\le&\varepsilon^{(5)}=2^{-2n}&(\textrm{Lemma~~\ref{lem:untruncation_error_peak}})\\[0.15cm] 
					\hline\hline
				\end{array}
				$}
			\caption{Above, the states used in the analysis. The constants~$\cpsi{j}$ are chosen such that the states are normalized. 
				Below, the results establishing closeness~$\varepsilon^{(j)}$ between the respective consecutive states $\Psi^{(j-1)}$ and $\Psi^{(j)}$ (cf.~\eqref{eq:errorboundpsij}).
			} 
			\label{tab:approximatealgorithm} 
		\end{table}
		Let us briefly sketch the main steps. The corresponding detailed proofs (including the estimates on the errors~$\varepsilon^{(j)}$, $j\in \{1,\ldots,5\}$) are deferred to subsequent sections. 
		
		\noindent {\bf Closeness of $\Psi^{(0)}$ and $\Psi^{(1)}$:} 
		The state $\Psi^{(0)}=\Psi_a$ of the circuit in Fig.~\ref{fig:maincircuit} before the measurement 
		is given by Eq.~\eqref{eq:stateonex}, that is
		\begin{align}
			\ket{\Psi^{(0)}}=\Uan \ket{\Phi^{(0)}}\ \textrm{ where }\  \ket{\Phi^{(0)}}= e^{-iRP_\bosonA}\ket{\gkp_{\kappa_\bosonA,\Delta_\bosonA }}\otimes M_N\ket{\gkp_{\kappa_\bosonB ,\Delta_\bosonB }}\otimes\ket{\Psi_{\Delta_\bosonC}}\otimes \ket{0}\label{eq:stateone}\tag{$\Psi^{(0)}$}\ .
		\end{align}
		For sufficiently small~$\varepsilon_\bosonA,\varepsilon_\bosonB,\varepsilon_\bosonC$, 
		the result
		\begin{align}
			\ket{\widetilde{\Phi}^{(0)}}&=    e^{-iRP_\bosonA}\ket{\gkp^{\varepsilon_{\bosonA}}_{\kappa_\bosonA,\Delta_\bosonA }}\otimes M_N\ket{\gkp^{\varepsilon_{\bosonB}}_{\kappa_\bosonB ,\Delta_\bosonB }}\otimes\ket{\Psi^{\varepsilon_\bosonC}_{\Delta_\bosonC}}\otimes \ket{0}\label{eq:Phiwidetildedefm}
		\end{align} of restricting the support of~$\Phi^{(0)}$ (and renormalizing) is close to~$\Phi^{(0)}$. Indeed, this follows from the fact that the distance measure we consider (the $L^1$-distance) is invariant under unitaries and tensorizes, together with the fact that a truncated approximate GKP state $\gkp^\varepsilon_{\kappa,\Delta}$ is close to an approximate GKP state~$\gkp_{\kappa,\Delta}$, and a truncated squeezed vacuum state~$\Psi^{\varepsilon}_\Delta$ is close to the squeezed vacuum state~$\Psi_\Delta$ (for appropriate parameter choices).
		
		Observe that~$e^{-iRP}\ket{\gkp^{\varepsilon_\bosonA}_{\kappa_\bosonA,\Delta_\bosonA }}$ 
		is centered around~$R$ (with an envelope whose variance is determined by~$\kappa_\bosonA$), and has peaks (local maxima) supported around integers (with interval width~$2\varepsilon_\bosonA $). It follows that 
		$e^{-iRP}\ket{\gkp^{\varepsilon_\bosonA}_{\kappa_\bosonA,\Delta_\bosonA }}$
		has most of its support on the interval~$[-1/2,2R-1/2]$, i.e., the state
		$\widetilde{\Phi}^{(0)}$ (see Eq.~\eqref{eq:Phiwidetildedefm})
		is close to the state
		\begin{align}
			\ket{\Phi^{(1)}}
			&=c_1\cdot \Pi_{[-1/2,2R-1/2]}e^{-iRP_\bosonA}\ket{\gkp^{\varepsilon_\bosonA}_{\kappa_\bosonA,\Delta_\bosonA}}\otimes M_N\ket{\gkp^{\varepsilon_\bosonB}_{\kappa_\bosonB,\Delta_\bosonB}}\otimes\ket{\Psi^{\varepsilon_{\bosonC}}_{\Delta_\bosonC}}\otimes\ket{0}\ ,\label{eq:phioneproofsketchdefhelpful}
		\end{align}
		where $c_1>0$ is a normalization constant. Thus $\Phi^{(0)}$ and $\Phi^{(1)}$ are close (by the triangle inequality). Since $\ket{\Psi^{(b)}}=U_{a,N,m}\ket{\Phi^{(b)}}$ for $b\in \{0,1\}$, it follows that the state~\ref{eq:stateone}  is $\varepsilon^{(1)}$-close to the \hypertarget{eq:statetwo}{state}
		\begin{align}
			\ket{\Psi^{(1)}}=\cpsi{1}\cdot \Uan\left(\large \Pi_{[-1/2,2R-1/2]} e^{-iRP_\bosonA}\ket{\gkp^{\varepsilon_\bosonA }_{\kappa_\bosonA,\Delta_\bosonA }}\otimes M_N\ket{\gkp^{\varepsilon_\bosonB }_{\kappa_\bosonB ,\Delta_\bosonB }}\otimes\ket{\Psi_{\Delta_\bosonC}^{\varepsilon_\bosonC }}\otimes \ket{0}\right)\ 
		\end{align}
		with an error~$\varepsilon^{(1)}$ given in  Lemma~\ref{lem:truncationerr}. 
		We note that~$\ket{\Psi^{(1)}}$
		is simply the output state (i.e., the state before the measurement) of the circuit  $\circWithInitStates(c_1\Pi_{[-1/2,2R-1/2]} e^{-i R P }\gkp_{\kappa_\bosonA,\Delta_\bosonA}^{\varepsilon_\bosonA},\gkp_{\kappa_\bosonB,\Delta_\bosonB}^{\varepsilon_\bosonB},\Psi^{\varepsilon_\bosonC}_{\Delta_\bosonC})$.

		\noindent {\bf Closeness of $\Psi^{(1)}$ and $\Psi^{(2)}$:}
		Arguing that~\hyperlink{eq:statetwo}{$\Psi^{(1)}$} is close to~$\Psi^{(2)}$ is the most technically challenging part of our proof. This is because it involves a detailed analysis of the action of the unitary~$\Uan$  
		when applied to states which are primarily supported around (certain) integers in position-space. To derive corresponding approximation results, we introduce a framework for approximate function evaluation by unitaries in Section~\ref{sec: approx function evaluation}.  Here we only sketch the idea. As argued 
		in Section~\ref{sec: explanation Ua}, the unitary~$\Uan$ acts as
		\begin{align}
			\Uan\left(\ket{x}\otimes\ket{y}\otimes\ket{u}\otimes\ket{0}\right)
			&~\propto~ \ket{x}\otimes \ket{y+f_{a,N,m}(x)}\otimes \ket{u} \otimes \ket{0}
		\end{align}
		for all $x\in \{0,\ldots,2R-1\}$, $y,u\in\mathbb{R}$,
		i.e., it shifts the position~$y$ of the mode~$\bosonB$ by the pseudomodular power~$f_{a,N,m}(x)$ of the position~$x$ of the mode $\bosonA$. 
		We argue that --- in a precise sense only involving $L^2$-functions (i.e., normalizable states) --- this is still approximately the case when the position $x\in\mathbb{R}$ of mode~$\bosonA$ belongs to the interval~$[-1/2,2R-1/2]$, and is additionally close to an integer. If these conditions are satisfied, the unitary~$U_{a,N,m}$ has the effect of (approximately) shifting the position of the mode~$\bosonB$ by~$f_{a,N,m}(\round{x})$, where $\round{x}\in\mathbb{Z}$ denotes the integer closest to the position~$x\in\mathbb{R}$ of the mode~$\bosonA$, see Lemma~\ref{lem: Ua shor circuit error} for a rigorous statement.
		
		We can use this to show that~$\Psi^{(1)}$ is close to~$\Psi^{(2)}$ as follows: First recall that the state~\hyperlink{eq:statetwo}{$\Psi^{(1)}$} is of the form~$\Psi^{(1)}=\Uan\Phi^{(1)}$
		where~$\Phi^{(1)}$ is the product state 
		given in Eq.~\eqref{eq:phioneproofsketchdefhelpful}. 
		In this product state,  the
		state $c_1\cdot \Pi_{[-1/2,2R-1/2]}e^{-iRP}\ket{\gkp^{\varepsilon_\bosonA}_{\kappa_\bosonA,\Delta_\bosonA}}$ on mode~$\bosonA$ is supported on~$[-1/2,2R-1/2]$ because of the projection~$\Pi_{[-1/2,2R-1/2]}$, and 
		additionally, every point~$x$ in the support is~$\varepsilon_\bosonA $-close to an integer (because of the definition of the truncated approximate GKP state~$\ket{\gkp^{\varepsilon_\bosonA }_{\kappa_\bosonA,\Delta_\bosonA }}$).
		Let $M \subseteq\mathbb{R}$, we define the set 
		\begin{align}
			M(\varepsilon) = M + [-\varepsilon,\varepsilon] = \{x+y\ |\ x\in M,\  |y|\le \varepsilon\}\ .  \label{eq: def M(varepsilon)}
		\end{align}
		In more detail, we have (using~$e^{-iRP}\ket{\chi^\varepsilon_\Delta(z)}=\ket{\chi^\varepsilon_\Delta(z+R)}$ and the fact that considered parameter~$R$ is an integer)
		\begin{align}
			\Pi_{[-1/2,2R-1/2]}e^{-iRP_\bosonA}\ket{\gkp^{\varepsilon_\bosonA }_{\kappa_\bosonA,\Delta_\bosonA }}
			&\propto\sum_{z\in\mathbb{Z}}\eta_{\kappa_\bosonA}(z)\Pi_{[-1/2,2R-1/2]}\ket{\chi^{\varepsilon_\bosonA }_{\Delta_\bosonA }(z+R)}\\
			&=\sum_{z\in\mathbb{Z}}\eta_{\kappa_\bosonA}(z-R)
			\Pi_{[-1/2,2R-1/2]}\ket{\chi^{\varepsilon_\bosonA }_{\Delta_\bosonA }(z)}\\
			&=\sum_{z=0}^{2R-1}\eta_{\kappa_\bosonA}(z-R) \ket{\chi^{\varepsilon_\bosonA }_{\Delta_\bosonA }(z)}\ ,\label{eq:hmve}
		\end{align}
		showing that this function has support contained in $\{0,\ldots,2R-1\}(\varepsilon_\bosonA)$. 
		Inserting expression~\eqref{eq:hmve} into the definition of~$\Phi^{(1)}$ gives
		\begin{align}
			\ket{\Phi^{(1)}}&\propto \sum_{z=0}^{2R-1}\sum_{y\in\mathbb{Z}}\eta_{\kappa_\bosonA}(z-R)\eta_{\kappa_\bosonB}(y)\ket{\chi^{\varepsilon_\bosonA}_{\Delta_\bosonA}(z)}\otimes M_N\ket{\chi^{\varepsilon_\bosonB}_{\Delta_\bosonB}(y)}\otimes\ket{\Psi^{\varepsilon_\bosonC}_{\Delta_\bosonC}}\otimes\ket{0}\ .\label{eq:psionelinearcombinationstatement}
		\end{align}
		Now consider a state of the form
		\begin{align}
			\ket{\chi^{\varepsilon_\bosonA}_{\Delta_{\boson_A}}(z)}\otimes M_N\ket{\chi^{\varepsilon_{\bosonB}}_{\Delta_\bosonB}(y)}\otimes\ket{\Psi^{\varepsilon_C}_{\Delta_\bosonC}}\otimes \ket{0}\qquad\textrm{ where }\qquad z\in \{0,\ldots,2R-1\}\textrm{ and }y\in\mathbb{Z}\ .
		\end{align}
		By definition, any position~$x\in\mathbb{R}$ occurring in the support of~$\chi^{\varepsilon_\bosonA}_{\Delta_\bosonA}(z)$ 
		is $\varepsilon_\bosonA$-close to the integer~$z=\round{x}$. 
		It follows from the discussion of~$\Ua$ above 
		that  for $z\in \{0,\ldots,2R-1\}$, we have 
		\begin{align}
			&U_{a,N,m}\left(\ket{\chi^{\varepsilon_\bosonA}_{\Delta_{\bosonA}}(z)}\otimes          M_N\ket{\chi^{\varepsilon_{\bosonB}}_{\Delta_\bosonB}(y)}\otimes\ket{\Psi^{\varepsilon_C}_{\Delta_\bosonC}}\otimes \ket{0}\right)\qquad\qquad\qquad \\
			&\qquad\quad \approx \ket{\chi^{\varepsilon_\bosonA}_{\Delta_\bosonA}(z)}\otimes e^{-if_{a,N,m}(z)P_\bosonB}M_N\ket{\chi^{\varepsilon_{\bosonB}}_{\Delta_{\bosonB}}(y)}\otimes\ket{\Psi_{\Delta_\bosonC}^{\varepsilon_{\bosonC}}}\otimes\ket{\Psi^{\varepsilon_\bosonC}_{\Delta_\bosonC}}\otimes \ket{0}\ ,
		\end{align}
		where we use the symbol~$\approx$ to indicate that these states are close 
		(see Lemma~\ref{lem: Ua shor circuit error} for a rigorous statement). 
		Applying this to each term in the superposition~\eqref{eq:psionelinearcombinationstatement}
		and using the pairwise orthogonality of the states~$\left\{\ket{\chi^{\varepsilon_\bosonA}_{\Delta_\bosonA}(z)}\right\}_{z\in\mathbb{Z}}$ 
		we conclude that
		$\Psi^{(1)}=\Uan\Phi^{(1)}$ is  $\varepsilon^{(2)}$-close to the \hypertarget{eq:statefour}{state}
		\begin{align}
			\ket{\Psi^{(2)}}= \cpsi{2}\sum_{z=0}^{2R-1} 
			\eta_{\kappa_\bosonA}(z-R)\ket{\chi^{\varepsilon_\bosonA }_{\Delta_\bosonA }(z)}\otimes 
			e^{-if_{a,N,m}(z)P_\bosonB}M_N\ket{\gkp^{\varepsilon_\bosonB }_{\kappa_\bosonB ,\Delta_\bosonB }}\otimes\ket{\Psi_{\Delta_\bosonC}^{\varepsilon_\bosonC}}\otimes \ket{0}\ ,
		\end{align}
		where an estimate of $\varepsilon^{(2)}$ is given in Lemma~\ref{lem: approx_unitary}. 
		
		\noindent {\bf Closeness of~$\Psi^{(2)}$ and $\Psi^{(3)}$:}
		The state $M_N\ket{\gkp^{\varepsilon_\bosonB }_{\kappa_\bosonB ,\Delta_\bosonB }}$ is approximately invariant under translations by (small) integer multiplies of~$N$. This implies that the state
		\begin{align}
			e^{-if_{a,N,m}(z)P}M_N\ket{\gkp^{\varepsilon_\bosonB }_{\kappa_\bosonB ,\Delta_\bosonB }}&\approx\  
			e^{-i(f_{a,N,m}(z)\! \xmod N)P_\bosonB}M_N\ket{\gkp^{\varepsilon_\bosonB }_{\kappa_\bosonB ,\Delta_\bosonB }}\label{eq:intermediatemxn}
		\end{align}
		(see Lemma~\ref{lem:approximateshiftinvarianceGKP}).
		Because the pseudomodular power satisfies
		\begin{align}
			f_{a,N,m}(z)&\equiv a^z\pmod N\qquad\textrm{ for all }\qquad z\in \{0,\ldots, 2R-1\}\ ,
		\end{align}
		Eq.~\eqref{eq:intermediatemxn}
		can be rewritten as 
		\begin{align}
			e^{-if_{a,N,m}(z)P_\bosonB}M_N\ket{\gkp^{\varepsilon_\bosonB }_{\kappa_\bosonB ,\Delta_\bosonB }}&\approx
			e^{-i(a^z\!\xmod N)P_\bosonB}M_N\ket{\gkp^{\varepsilon_\bosonB }_{\kappa_\bosonB ,\Delta_\bosonB }}\label{eq:intermediateuseful}
		\end{align}
		for all $z\in\{0,\ldots, 2R-1\}$. With Eq.~\eqref{eq:intermediateuseful} and using the pairwise orthogonality of the states~$\{\ket{\chi^{\varepsilon_\bosonA }_{\Delta_\bosonA }(z)}\}_{z\in\mathbb{Z}}$, we conclude that the state~\hyperlink{eq:statefour}{$\Psi^{(2)}$} is $\varepsilon^{(3)}$-close to the state~$\Psi^{(3)}$ defined \hypertarget{eq:statefive}{as}
		\begin{align}
			\cpsi{3}\sum_{z=0}^{2R-1} 
			\eta_{\kappa_\bosonA}(z-R)\ket{\chi^{\varepsilon_\bosonA }_{\Delta_\bosonA }(z)}\otimes 
			e^{-i(a^z\xmod N)P_\bosonB}M_N\ket{\gkp^{\varepsilon_\bosonB }_{\kappa_\bosonB ,\Delta_\bosonB }}\otimes \ket{0}\ ,
		\end{align}
		with an error~$\varepsilon^{(3)}$ bounded by Lemma~\ref{lem: untruncation_error_tail}. 
		
		\noindent {\bf Closeness of~$\Psi^{(3)}$ and $\Psi^{(4)}$:}
		The state~\hyperlink{eq:statefive}{$\Psi^{(3)}$} is  close to the state
		\begin{align}
			\ket{\Psi^{(4)}}= \cpsi{4}\sum_{z\in \mathbb{Z}} 
			\eta_{\kappa_\bosonA}(z-R)\ket{\chi^{\varepsilon_\bosonA}_{\Delta_\bosonA }(z)}\otimes 
			e^{-i(a^z\xmod N)P_\bosonB}M_N\ket{\gkp^{\varepsilon_\bosonB }_{\kappa_\bosonB ,\Delta_\bosonB }}\otimes\ket{\Psi_{\Delta_\bosonC}^{\varepsilon_\bosonC}}\otimes \ket{0}
		\end{align}
		up to an error~$\varepsilon^{(4)}$
		estimated in Lemma~\ref{lem:shift_inariance_error}. 
		This follows from the fact that for the chosen parameter~$\kappa_\bosonA$, the envelope~$z\mapsto \eta_{\kappa_\bosonA}(z-R)$ is sufficiently small for $z\not\in [0, 2R-1]$.
		
		\noindent {\bf Closeness of~$\Psi^{(4)}$ and $\Psi^{(5)}$:}
		The state~$\Psi^{(4)}$ is close to the state
		\begin{align}
			\ket{\Psi^{(5)}}=\cpsi{5}\sum_{z \in \mathbb{Z}}
			\eta_{\kappa_\bosonA}(z-R)\ket{\chi_{\Delta_\bosonA }(z)}\otimes 
			e^{-i(a^z\xmod N)P_\bosonB}M_N\ket{\gkp^{\varepsilon_\bosonB }_{\kappa_\bosonB ,\Delta_\bosonB }}\otimes\ket{\Psi_{\Delta_\bosonC}^{\varepsilon_\bosonC}}\otimes \ket{0}
		\end{align}
		up to an error~$\varepsilon^{(5)}$ (see Lemma~\ref{lem:untruncation_error_peak}). 
		Indeed, the states~$\Psi^{(4)}$ and~$\Psi^{(5)}$
		only differ by the fact that the Gaussians in mode~$\bosonA$ are truncated in the state~$\Psi^{(5)} = \Phi_a$. Furthermore, for $\varepsilon>0$ sufficiently small compared to~$\Delta$, a centered Gaussian~$\Psi_{\Delta}$ is close to
		its (normalized) truncation~$\Psi_\Delta^{\varepsilon}$ to the interval~$[-\varepsilon,\varepsilon]$, see Lemma~\ref{lem: gaussian gaussian epsilon}.

		This concludes the proof of Proposition~\ref{prop:propositionone}, except for 
		the task of establishing bounds~$\varepsilon^{(1)},\ldots,\varepsilon^{(5)}$ on the pairwise distances as in Eq.~\eqref{eq:errorboundpsij}. 
	\end{proof}
	
	\subsection{Distance bounds from properties of approximate GKP states}
	In this section, we derive upper bounds on the distances
	\begin{align}
		\|\Psi^{(0)}-\Psi^{(1)}\|_1,\  \|\Psi^{(2)}-\Psi^{(3)}\|_1,\ \|\Psi^{(3)}-\Psi^{(4)}\|_1\ \textrm{ and }\ \|\Psi^{(4)}-\Psi^{(5)}\|_1\ .
	\end{align}
	These are simple consequences of  basic properties of approximate GKP states (discussed in Section~\ref{sec: approximate gkp}). The remaining bound on $\|\Psi^{(1)}-\Psi^{(2)}\|_1$ will be derived in Section~\ref{sec:approx function evaluation implication for algorithm}.
	
	The first bound of this kind relies on the fact that
	the state~$\ket{\gkp_{\kappa,\Delta}}$ is close to its truncated version~$\ket{\gkp_{\kappa,\Delta}^\varepsilon}$ for suitably chosen parameters (see Lemma~\ref{lem: overlap truncated gkp}), and the fact that most of the support of the state~$\ket{\gkp^\varepsilon_{\kappa,\Delta}}$ is contained in the interval~$[-r,r]$ for suitably large~$r>0$ (see Lemma~\ref{lem:proj norm}). Together with the fact that the Gaussian state $\ket{\Psi_{\Delta}}$ is close to its truncated version $\ket{\Psi^\varepsilon_{\Delta}}$ for suitably chosen parameters (see Lemma~\ref{lem: gaussian gaussian epsilon}).
	
	\begin{lemma}\label{lem:truncationerr} Let $n\ge 4$.
		Let $(m,R,\kappa_\bosonA,\Delta_\bosonA,\kappa_\bosonB,\Delta_\bosonB, \Delta_\bosonC)$ be as specified in Table~\ref{tab: parameters}, and let  $\varepsilon_\bosonA=\sqrt{\Delta_\bosonA}$, $\varepsilon_\bosonB=\sqrt{\Delta_\bosonB}$ and $\varepsilon_\bosonC=\sqrt{\Delta_\bosonC}$. Consider 
		the two states
		\begin{align}
			\begin{array}{c@{\hskip 0.2cm}c@{\hskip 0.2cm}c@{\hskip 0.1cm}r@{\hskip 0.1cm}c@{\hskip 0.1cm}c@{\hskip 0.1cm}c@{\hskip 0.1cm}c@{\hskip 0.1cm}c@{\hskip 0.1cm}c@{\hskip 0.1cm}c}
				\ket{\Psi^{(0)}} &=&
				\Uan\big(& 
				e^{-iRP_\bosonA}\ket{\gkp_{\kappa_\bosonA,\Delta_\bosonA }} &\otimes& M_N\ket{\gkp_{\kappa_\bosonB ,\Delta_\bosonB }}
				&\otimes& \ket{\Psi_{\Delta_\bosonC}} 
				&\otimes& \ket{0}&\hspace{-1.8ex}\big)\notag\\
				\ket{\Psi^{(1)}}&=&
				\Uan\big(&
				\cpsi{1}\cdot\Pi_{[-1/2,2R-1/2]}e^{-iRP_\bosonA}\ket{\gkp^{\varepsilon_\bosonA }_{\kappa_\bosonA,\Delta_\bosonA }}
				&\otimes&M_N\ket{\gkp^{\varepsilon_\bosonB }_{\kappa_\bosonB ,\Delta_\bosonB }}
				&\otimes&\ket{\Psi_{\Delta_\bosonC}^{\varepsilon_\bosonC}} 
				&\otimes&\ket{0} 
				& \big)\ .\notag
			\end{array}
		\end{align}
		Then
		\begin{align}
			\left\|\Psi^{(0)}-\Psi^{(1)}\right\|_1
			&\le 9 \cdot 2^{-2n} =:\varepsilon^{(1)}\ .
		\end{align}
	\end{lemma}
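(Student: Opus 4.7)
The plan is to exploit unitarity and tensorization to reduce the problem to three single-mode comparisons, each of which is controlled by a previously established lemma about approximate GKP and Gaussian states.

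First I would use the fact that the trace distance is invariant under the unitary~$U_{a,N,m}$ to drop it from both sides, so that
\begin{align}
\left\|\Psi^{(0)}-\Psi^{(1)}\right\|_1 = \left\|\Phi^{(0)}-\Phi^{(1)}\right\|_1\ ,
\end{align}
where $\Phi^{(0)}$ and $\Phi^{(1)}$ denote the respective tensor-product inputs to $U_{a,N,m}$. Since both $\Phi^{(0)}$ and $\Phi^{(1)}$ are products of single-mode pure states, I would telescope between them one factor at a time: insert three hybrid product states and apply the triangle inequality, using the fact that $\|\rho_1\otimes\sigma-\rho_2\otimes\sigma\|_1 = \|\rho_1-\rho_2\|_1$. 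This bounds~$\|\Phi^{(0)}-\Phi^{(1)}\|_1$ by the sum of three single-mode trace distances on modes~$\bosonA,\bosonB,\bosonC$ (the qubit factor is identical).

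Next I would bound each single-mode contribution. On mode~$\bosonC$, one compares $\Psi_{\Delta_\bosonC}$ with $\Psi_{\Delta_\bosonC}^{\varepsilon_\bosonC}$; with $\varepsilon_\bosonC=\sqrt{\Delta_\bosonC}$ this is controlled by (the announced) Lemma~\ref{lem: gaussian gaussian epsilon}, yielding a distance~$O(\sqrt{\Delta_\bosonC})=O(2^{-25n})$. On mode~$\bosonB$, the unitary~$M_N$ can again be stripped by unitary invariance, leaving a comparison of $\gkp_{\kappa_\bosonB,\Delta_\bosonB}$ with $\gkp^{\varepsilon_\bosonB}_{\kappa_\bosonB,\Delta_\bosonB}$ for $\varepsilon_\bosonB=\sqrt{\Delta_\bosonB}$; the announced Lemma~\ref{lem: overlap truncated gkp} handles this, with parameters chosen so that the contribution is again exponentially small in~$n$. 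On mode~$\bosonA$, the situation is a two-step comparison: I would first replace $\gkp_{\kappa_\bosonA,\Delta_\bosonA}$ by $\gkp^{\varepsilon_\bosonA}_{\kappa_\bosonA,\Delta_\bosonA}$ (again via Lemma~\ref{lem: overlap truncated gkp}, after using unitary invariance of $e^{-iRP}$), and then separately control the error from applying the projector $\Pi_{[-1/2,2R-1/2]}$ and renormalizing.

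The step I expect to be the main obstacle is this final truncation of the translated GKP state: using $R\in\mathbb{Z}$ and the identity $e^{-iRP}\chi_{\Delta}^{\varepsilon}(z) = \chi_{\Delta}^{\varepsilon}(z+R)$, one sees that
\begin{align}
e^{-iRP}\gkp^{\varepsilon_\bosonA}_{\kappa_\bosonA,\Delta_\bosonA} \propto \sum_{z\in\mathbb{Z}}\eta_{\kappa_\bosonA}(z-R)\,\chi^{\varepsilon_\bosonA}_{\Delta_\bosonA}(z)\ ,
\end{align}
so the weight lying outside $[-1/2,2R-1/2]$ is governed by the Gaussian tail $\sum_{z\notin\{0,\ldots,2R-1\}}\eta_{\kappa_\bosonA}(z-R)^2$. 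With $\kappa_\bosonA=2^{-16n}$ and $R=2^{17n-1}$, the envelope has standard deviation of order~$2^{16n}$, while the distance from the center to the boundary is~$R=2^{17n-1}$, i.e., roughly $2^{n-1}$ envelope standard deviations, so a standard Gaussian tail estimate (provided by the announced Lemma~\ref{lem:proj norm}) yields an exponentially small error of order~$2^{-\Omega(2^n)}$, dominated by the other contributions. Converting overlap bounds to trace-distance bounds via the usual inequality $\|\psi-\phi\|_1 \le 2\sqrt{1-|\langle\psi,\phi\rangle|^2}$ for pure states, assembling the three contributions, and plugging in the numerical values from Table~\ref{tab: parameters} yields a sum bounded by $9\cdot 2^{-2n}$ for $n\ge 4$.
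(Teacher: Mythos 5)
Your proof takes essentially the same route as the paper's. Both reduce via unitary invariance of $U_{a,N,m}$ to comparing the tensor-product inputs, both split the task into the three single-mode truncations (controlled by Lemma~\ref{lem: overlap truncated gkp} for the two GKP modes and Lemma~\ref{lem: gaussian gaussian epsilon} for the Gaussian mode) plus the projection onto $[-1/2,2R-1/2]$ (controlled by Lemma~\ref{lem:proj norm}), and both assemble via the triangle inequality. The only cosmetic difference is presentational: you telescope the three single-mode truncations one factor at a time, while the paper introduces a single intermediate product state $\widetilde\Phi^{(0)}$ that incorporates all three truncations at once and bounds the overlap $|\langle\Phi^{(0)},\widetilde\Phi^{(0)}\rangle|^2$ as a product of three single-mode overlaps; plugging in the same lemmas, the two presentations produce the same $6(\sqrt{\Delta_\bosonA}+\sqrt{\Delta_\bosonB})+3\sqrt{\Delta_\bosonC}$ contribution, so the constants match and the argument is sound.
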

	\begin{proof}
		We consider the two states
		\begin{align}
			\ket{\Phi^{(0)}} &= e^{-iRP_\bosonA}\ket{\gkp_{\kappa_\bosonA,\Delta_\bosonA }}\otimes M_N\ket{\gkp_{\kappa_\bosonB ,\Delta_\bosonB }}\otimes\ket{\Psi_{\Delta_\bosonC}}\otimes \ket{0}\\
			\ket{\widetilde{\Phi}^{(0)}} &= e^{-iRP_\bosonA}\ket{\gkp^{\varepsilon_\bosonA }_{\kappa_\bosonA,\Delta_\bosonA }}\otimes M_N\ket{\gkp^{\varepsilon_\bosonB }_{\kappa_\bosonB ,\Delta_\bosonB }}\otimes\ket{\Psi^{\varepsilon_\bosonC}_{\Delta_\bosonC}}\otimes \ket{0}\ .
		\end{align}
		Using the unitary of~$e^{-iRP}$ and $M_N$, we then have 
		\begin{align}
			\left|\left\langle \Phi^{(0)}, \widetilde{\Phi}^{(0)}\right\rangle \right|^2 
			&= \left|\left\langle \gkp_{\kappa_\bosonA,\Delta_\bosonA }, \gkp_{\kappa_\bosonA,\Delta_\bosonA }^{\varepsilon_\bosonA }\right\rangle \right|^2 \cdot \left|\left\langle \gkp_{\kappa_\bosonB ,\Delta_\bosonB }, \gkp_{\kappa_\bosonB ,\Delta_\bosonB }^{\varepsilon_\bosonB }\right\rangle \right|^2 \cdot \left|\left\langle \Psi_{\Delta_\bosonC},\Psi^{\varepsilon_\bosonC}_{\Delta_\bosonC}\right\rangle\right|^2 \\
			&\ge (1-9\Delta_\bosonA)(1-9\Delta_\bosonB)(1-2\Delta_\bosonC)\\
			&\ge 1-9(\Delta_\bosonA  + \Delta_\bosonB ) -2\Delta_\bosonC \ ,\label{eq:lowerboundoverlapphitildephiz}
		\end{align}
		where we used Lemma~\ref{lem: overlap truncated gkp} (giving a lower bound on the overlap~$|\langle \gkp_{\kappa,\Delta},\gkp^{\varepsilon}_{\kappa,\Delta}|^2$) twice
		and Lemma~\ref{lem: gaussian gaussian epsilon} (giving a lower bound on~$|\langle \Psi_\Delta,\Psi^\varepsilon_\Delta\rangle|^2$) to obtain the first inequality. Combining~\eqref{eq:lowerboundoverlapphitildephiz} with the relation
		\begin{align} \label{eq: trace distance - fidelity}
			\left\lVert \psi - \varphi \right\rVert_1 = 2 \sqrt{1 - \abs{\left\langle \psi, \varphi \right\rangle}^2}\qquad\textrm{ for any two pure states $\psi$  and $\varphi$}\ ,
		\end{align}        
		we obtain
		\begin{align}
			\left\|\Phi^{(0)}-\widetilde{\Phi}^{(0)}\right \|_1 
			&\le 2\sqrt{9(\Delta_\bosonA  + \Delta_\bosonB )+2\Delta_\bosonC}\\
			&\le 6\left(\sqrt{\Delta_\bosonA }+\sqrt{\Delta_\bosonB }\right) + 3\sqrt{\Delta_\bosonC}\ , \label{eq:phi0phitild1}
		\end{align}
		where we used $\sqrt{x+y}\le \sqrt{x}+\sqrt{y}$ for all $x,y\ge 0$.

		Now consider the state
		\begin{align}
			\ket{\Phi^{(1)}}&=\cpsi{1}\cdot\Pi_{[-1/2,2R-1/2]}e^{-iRP}\ket{\gkp^{\varepsilon_\bosonA }_{\kappa_\bosonA,\Delta_\bosonA }}
			\otimes M_N\ket{\gkp^{\varepsilon_\bosonB }_{\kappa_\bosonB ,\Delta_\bosonB }}
			\otimes\ket{\Psi_{\Delta_\bosonC}^{\varepsilon_\bosonC}} 
			\otimes\ket{0}\ .
		\end{align}
		Then
		\begin{align}
			\ket{\Phi^{(1)}}=\cpsi{1}\cdot\Pi_{[-1/2,2R-1/2]}\ket{\widetilde{\Phi}^{(0)}}\ 
		\end{align}
		by definition. It follows that 
		\begin{align}
			\left| \left\langle\Phi^{(1)},\widetilde{\Phi}^{(0)}\right\rangle \right|^2 
			&=\cpsi{1}^2\cdot \abs{\langle \gkp_{\kappa_\bosonA,\Delta_\bosonA }^{\varepsilon_\bosonA },e^{iRP}\Pi_{[-1/2,2R-1/2]} e^{-iRP}\ket{\gkp_{\kappa_\bosonA,\Delta_\bosonA }^{\varepsilon_\bosonA }}}^2\\
			&=\cpsi{1}^2 \cdot\abs{\langle\gkp_{\kappa_\bosonA,\Delta_\bosonA }^{\varepsilon_\bosonA },\Pi_{[-R-1/2, R-1/2]}\ket{\gkp_{\kappa_\bosonA,\Delta_\bosonA }^{\varepsilon_\bosonA }}}^2\\
			&= \|\Pi_{[-R-1/2,R-1/2]}\gkp_{\kappa_\bosonA,\Delta_\bosonA }^{\varepsilon_\bosonA }\|^2\ ,\\
			&\ge \|\Pi_{[-R/2,R/2]}\gkp_{\kappa_\bosonA,\Delta_\bosonA }^{\varepsilon_\bosonA }\|^2\\
			&\ge 
			1-2 e^{-\frac{1}{4}\kappa_\bosonA^2 R^2}\ ,
		\end{align}
		where we used that
		\begin{align}
			e^{iRP}\Pi_{[-1/2,2R-1/2]} e^{-iRP}=\Pi_{[-R-1/2, R-1/2]} \ .
		\end{align}
		in the second step, the fact that 
		\begin{align}
			\cpsi{1}^2=\|\Pi_{[-1/2,2R-1/2]}e^{-iRP} \gkp_{\kappa_\bosonA,\Delta_\bosonA }^{\varepsilon_\bosonA }\|^{-2}=\|\Pi_{[-R-1/2,R-1/2]} \gkp_{\kappa_\bosonA,\Delta_\bosonA }^{\varepsilon_\bosonA }\|^{-2}\ .
		\end{align}
		in the third identity, the inclusion~$[-R/2,R/2]\subseteq[-R-1/2,R-1/2]$ to obtain the first inequality, and Lemma~\ref{lem:proj norm}
		(giving a lower bound on $\|\Pi_[-r,r]\gkp_{\kappa,\Delta}^\varepsilon\|^2$) to obtain the second inequality.
		With the relation~\eqref{eq: trace distance - fidelity} we conclude that
		\begin{align}
			\left\|\Phi^{(1)}-\widetilde{\Phi}^{(0)}\right\|_1&\le   2 \cdot \sqrt{2} e^{-\frac{1}  {8}\kappa_\bosonA^2 R^2} \le 3 e^{-\frac{1}{8}\kappa_\bosonA^2 R^2} \ .\label{eq:threthreetwo}
		\end{align}
		Combining~\eqref{eq:phi0phitild1} and~\eqref{eq:threthreetwo}
		with the triangle inequality yields
		\begin{align}
			\left\|\Psi^{(0)}-\Psi^{(1)}\right\|_1
			&\le  6\left(\sqrt{\Delta_\bosonA }+\sqrt{\Delta_\bosonB }\right) + 3\sqrt{\Delta_\bosonC} + 3e^{-\frac{1}{8}\kappa_\bosonA^2 R^2}\\
			&=6\left(2^{-8n}+2^{-9n^2}\right)+3\cdot 2^{-25n}+3e^{-\frac{1}{32}2^{2n}}\\
			&\le 6\cdot 2^{-2n}+2^{-2n} + 2^{-2n}+ 2^{-2n}\\
			&\le 9 \cdot 2^{-2n}
		\end{align}
		where we used the parameters
		$(\kappa_\bosonA,\Delta_\bosonA,\kappa_\bosonB,\Delta_\bosonB, \Delta_\bosonC, R)$ from  Table~\ref{tab: parameters}
		and the assumption that $n\ge 4$. The claim follows from this since we have 
		$\Psi^{(b)}=U_{a,N,m}\Phi^{(b)}$ for $b=0,1$ by definition of these states and because the $L^1$-distance is invariant under unitaries.
	\end{proof}

	The following lemma is a consequence of the fact that the state~$\ket{\gkp^\varepsilon_{\kappa,\Delta}}$
	is approximately invariant under small integer shifts, see Lemma~\ref{lem:approximateshiftinvarianceGKP} for a detailed statement. 
	\begin{lemma} \label{lem: untruncation_error_tail}
		Assume $n\ge 3$. Let
		$(m,R,\kappa_\bosonA,\Delta_\bosonA,\kappa_\bosonB,\Delta_\bosonB, \Delta_\bosonC)$ be as specified in Table~\ref{tab: parameters}. Let $\varepsilon_\bosonA=\sqrt{\Delta_\bosonA}$, $\varepsilon_\bosonB=\sqrt{\Delta_\bosonB}$ and $\varepsilon_\bosonC=\sqrt{\Delta_\bosonC}$.
		Consider the two states
		\begin{align*}
			\begin{array}{c@{\hskip 0.2cm}c@{\hskip 0.2cm}c@{\hskip 0.2cm}c@{\hskip 0.2cm}c@{\hskip 0.2cm}c@{\hskip 0.2cm}c@{\hskip 0.2cm}c@{\hskip 0.2cm}c@{\hskip 0.2cm}c@{\hskip 0.2cm}cc}
				\ket{\Psi^{(2)}} &=& \cpsi{2}\sum_{z=0}^{2R-1} 
				\eta_{\kappa_\bosonA}(z-R)\ket{\chi^{\varepsilon_\bosonA }_{\Delta_\bosonA }(z)}
				&\otimes& \hspace{3ex}e^{-if_{a,N,m}(z)P_\bosonB}M_N\ket{\gkp^{\varepsilon_\bosonB }_{\kappa_\bosonB ,\Delta_\bosonB }}&\otimes&\!\!\ket{\Psi_{\Delta_\bosonC}^{\varepsilon_\bosonC}}&\otimes&\ket{0}& ,\\
				\ket{\Psi^{(3)}}&=& \cpsi{3}\sum_{z=0}^{2R-1}
				\eta_{\kappa_\bosonA}(z-R)\ket{\chi^{\varepsilon_\bosonA }_{\Delta_\bosonA }(z)}\!\!&\otimes&\!\! e^{-i(a^z \xmod{N})P_\bosonB} 
				M_N\ket{\gkp^{\varepsilon_\bosonB }_{\kappa_\bosonB ,\Delta_\bosonB }} &\otimes& \ket{\Psi_{\Delta_\bosonC}^{\varepsilon_\bosonC}}&\otimes&\ket{0}& ,
			\end{array}
		\end{align*}
		where $f_{a,N,m}(z)$ is the pseudomodular power (cf. Eq.~\eqref{eq:pseudomodularpowerfirstdef}).
		Then 
		\begin{align}
			\left\|\Psi^{(2)}-\Psi^{(3)}\right\|_1
			\le 2^{-2n}
			=:\varepsilon^{(3)}\ .
		\end{align}
	\end{lemma}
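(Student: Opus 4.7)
The plan is to reduce the bound to the approximate shift-invariance of the truncated GKP state on mode~$\bosonB$, exploiting the fact that $\Psi^{(2)}$ and $\Psi^{(3)}$ differ only in a position-displacement applied to mode~$\bosonB$. Define, for each $z\in\{0,\ldots,2R-1\}$, the non-negative integer
\begin{align}
g_z \;=\; f_{a,N,m}(z) - (a^z\xmod N) \;=\; k_z\cdot N\ , \qquad k_z\in\mathbb{N}_0\ ,
\end{align}
which is an integer multiple of~$N$ because Eq.~\eqref{eq: pseudomodular power mod N} gives $f_{a,N,m}(z)\equiv a^z\pmod N$ for every $z\in\{0,\ldots,2^m-1\}=\{0,\ldots,2R-1\}$ (using $2R=2^m$). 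The crude estimate $f_{a,N,m}(z)=\prod_{i=0}^{m-1}(a^{2^i}\xmod N)^{z_i}\le N^m$ then yields the bound $k_z\le N^{m-1}\le 2^{n(m-1)}=2^{17n^2-n}$.

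Next, I would use the identity $e^{-ikN P}M_N = M_N e^{-ikP}$ for $k\in\mathbb{Z}$ (a direct consequence of the Heisenberg relation $M_N^\dagger P M_N=P/N$) to rewrite
\begin{align}
e^{-if_{a,N,m}(z)P_\bosonB}\,M_N\ket{\gkp_{\kappa_\bosonB,\Delta_\bosonB}^{\varepsilon_\bosonB}}
\;=\;
e^{-i(a^z\xmod N)P_\bosonB}\,M_N\,e^{-ik_z P}\ket{\gkp_{\kappa_\bosonB,\Delta_\bosonB}^{\varepsilon_\bosonB}}\ .
\end{align}
Because $\varepsilon_\bosonA<1/2$, the family $\{\chi^{\varepsilon_\bosonA}_{\Delta_\bosonA}(z)\}_{z\in\mathbb{Z}}$ is orthonormal, so the normalization constants coincide, $\cpsi{2}=\cpsi{3}=:c$, and the cross terms in different $z$ drop out. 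The (real and positive) Gaussian envelope $\eta_{\kappa_\bosonA}$ then yields
\begin{align}
\langle \Psi^{(2)},\Psi^{(3)}\rangle
\;=\; c^{2}\sum_{z=0}^{2R-1}\eta_{\kappa_\bosonA}(z-R)^{2}\,\bigl\langle \gkp_{\kappa_\bosonB,\Delta_\bosonB}^{\varepsilon_\bosonB},\, e^{-ik_z P}\gkp_{\kappa_\bosonB,\Delta_\bosonB}^{\varepsilon_\bosonB}\bigr\rangle\ ,
\end{align}
a convex combination of shift-overlaps of the truncated approximate GKP state on mode~$\bosonB$.

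I would then invoke the approximate shift-invariance (Lemma~\ref{lem:approximateshiftinvarianceGKP}): a direct computation using orthonormality of $\{\chi^\varepsilon_\Delta(z)\}_{z\in\mathbb{Z}}$ and completing the square in the Gaussian envelope gives $\langle \gkp^{\varepsilon}_{\kappa,\Delta}, e^{-ikP}\gkp^{\varepsilon}_{\kappa,\Delta}\rangle$ as a ratio of Gaussian sums that behaves as $e^{-\kappa^2 k^2/4}$ in the small-$\kappa|k|$ regime, and in particular satisfies $1-\langle \gkp^{\varepsilon}_{\kappa,\Delta}, e^{-ikP}\gkp^{\varepsilon}_{\kappa,\Delta}\rangle \le \kappa^2 k^2/4$. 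Averaging this bound with the non-negative weights $c^{2}\eta_{\kappa_\bosonA}(z-R)^{2}$ which sum to $1$ yields $1-\mathrm{Re}\langle \Psi^{(2)},\Psi^{(3)}\rangle \le \kappa_\bosonB^{2}(\max_z k_z)^2/4$, and hence
\begin{align}
\|\Psi^{(2)}-\Psi^{(3)}\|_1 \;\le\; 2\sqrt{1-|\langle\Psi^{(2)},\Psi^{(3)}\rangle|^{2}} \;\le\; \sqrt{2}\,\kappa_\bosonB\,\max_z k_z\ .
\end{align}

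Finally, plugging in the parameters $\kappa_\bosonB=2^{-18n^2}$ and the bound $k_z\le 2^{17n^2-n}$ gives $\sqrt{2}\,\kappa_\bosonB\max_z k_z\le 2^{-n^2-n+1}\le 2^{-2n}$ for $n\ge 2$, which is the claimed estimate $\varepsilon^{(3)}$. The main obstacle is the quantitative approximate shift-invariance of the truncated approximate GKP state, i.e., a clean bound of the form $1-\langle \gkp^{\varepsilon}_{\kappa,\Delta}, e^{-ikP}\gkp^{\varepsilon}_{\kappa,\Delta}\rangle = O(\kappa^2 k^2)$ uniform in $\Delta,\varepsilon$; once this is in hand (Lemma~\ref{lem:approximateshiftinvarianceGKP}), the rest of the argument is bookkeeping with the parameters from Table~\ref{tab: parameters} and the elementary bound on $k_z$.
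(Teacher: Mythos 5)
Your proposal follows the same route as the paper's proof: you orthogonalize out mode~$\bosonA$ using the disjoint-support property of $\{\chi^{\varepsilon_\bosonA}_{\Delta_\bosonA}(z)\}$, commute $M_N$ through the displacement via $M_N^\dagger P M_N = P/N$ to expose an integer shift $k_z$ of the truncated GKP state in mode~$\bosonB$, invoke the approximate shift-invariance Lemma~\ref{lem:approximateshiftinvarianceGKP}, bound $k_z$ by a power of $N$, and substitute the parameters from Table~\ref{tab: parameters}. The only small discrepancy is that you state the shift-invariance bound as $1-\langle\gkp^\varepsilon_{\kappa,\Delta},e^{-ikP}\gkp^\varepsilon_{\kappa,\Delta}\rangle \le \kappa^2 k^2/4$, whereas Lemma~\ref{lem:approximateshiftinvarianceGKP} as proved (via Lemma~\ref{lem:sum gaussians ratio shift}) gives the weaker constant $\kappa^2 k^2/2$; this does not affect the final estimate, which goes through for $n\ge 2$ either way, and the bound $k_z\le N^{m-1}$ you use is in fact marginally tighter than the paper's $|d(z)|\le N^{m}$.
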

	\begin{proof}
		By the pairwise orthogonality of the states $\{\chi^{\varepsilon_\bosonA }_{\Delta_\bosonA }(z)\}_{z\in\mathbb{Z}}$, we have
		\begin{align}
			\left\langle \Psi^{(2)}, \Psi^{(3)} \right\rangle 
			&= \cpsi{2} \cpsi{3}\!\!\sum_{z=0}^{2R-1} \!\eta_{\kappa_\bosonA}(z-R)^2\! \left\langle M_N \gkp^{\varepsilon_\bosonB }_{\kappa_\bosonB ,\Delta_\bosonB }, e^{i(f_{a,N,m}(z) - (a^z\!\xmod{N}))P_\bosonB} M_N \gkp^{\varepsilon_\bosonB }_{\kappa_\bosonB ,\Delta_\bosonB } \right\rangle\\
			&= \cpsi{2} \cpsi{3}\sum_{z=0}^{2R-1} \eta_{\kappa_\bosonA}(z-R)^2
			\langle \gkp^{\varepsilon_\bosonB }_{\kappa_\bosonB ,\Delta_\bosonB },
			e^{id(z)P_\bosonB}\gkp^{\varepsilon_\bosonB }_{\kappa_\bosonB ,\Delta_\bosonB }\rangle\label{eq:inserthereeqm}
		\end{align}
		where we used that $M_N^\dagger PM_N=P/N$ and introduced the abbreviation
		\begin{align}
			d(z) = \frac{f_{a,N,m}(z) - (a^z \xmod{N})}{N}\ .
		\end{align}
		The following properties of~$d(z)$ will be important for our analysis.
		\begin{claim}
			Let $z\in\{0,\ldots,2R-1\}$ be arbitrary. Then 
			\begin{enumerate}[(i)]
				\item \label{it:dzfirststatementxm}
				$d(z)\in\mathbb{Z}$, i.e., $d(z)$ is an integer and
				\item $|d(z)|\leq N^m$.\label{it:dzupperboundfmnm}
			\end{enumerate}
		\end{claim}
		\begin{proof}
			Statement~\eqref{it:dzfirststatementxm} follows from the fact that
			the pseudomodular power satisfies the relation
			\begin{align}
				f_{a,N,m}(z)\equiv a^z \xmod N\qquad\textrm{ for }\qquad z\in\{0,\ldots, 2^{m}-1\}\  .
			\end{align} 
			For our choice of parameters~$(m,R)$ (see Table~\ref{tab: parameters}), we have the relation $2^{m}-1=2R-1$, implying that 
			\begin{align}
				f_{a,N,m}(z)\equiv a^z \xmod N\qquad\textrm{ for }\qquad z\in\{0,\ldots, 2R-1\}\  .
			\end{align} 
			This shows that $f_{a,N,m}(z)-a^z$ is an integer multiple of~$N$ proving~\eqref{it:dzfirststatementxm}.
			
			To prove~\eqref{it:dzfirststatementxm}, recall that for any integer~$z\in\mathbb{N}_0$, the pseudomodular power $f_{a,N,m}(z)$ (see Eq.~\eqref{eq:pseudomodularpowerfirstdef}) is a 
			product of $m$~factors, each belonging to the set~$\mathbb{Z}_N=\{0,\ldots,N-1\}$. Thus 
			\begin{align}
				f_{a,N,m}(z)\leq N^{m}\qquad\textrm{ for every }\qquad z\in \mathbb{N}_0\ .
			\end{align} 
			In particular, we have 
			\begin{align}
				d(z)\leq \left|\frac{f_{a,N,m}(z)}{N}\right| + \left|a^z\xmod N\right|\leq \frac{1}{N}(N^{m}+N)\leq N^m\ 
			\end{align}
			as claimed.
		\end{proof}
		It follows from  property~\eqref{it:dzfirststatementxm} and Lemma~\ref{lem:approximateshiftinvarianceGKP} (giving a lower bound on~$\langle \gkp^\varepsilon_{\kappa,\Delta},e^{idP}\gkp^\varepsilon_{\kappa,\Delta}\rangle$ for integer~$d$) 
		that
		\begin{align}
			\langle \gkp^{\varepsilon_\bosonB }_{\kappa_\bosonB ,\Delta_\bosonB },
			e^{id(z)P_\bosonB}\gkp^{\varepsilon_\bosonB }_{\kappa_\bosonB ,\Delta_\bosonB }\rangle
			&\geq 1-\frac{\kappa_\bosonB ^2}{2} d(z)^2\\
			&\geq 1-\frac{\kappa_\bosonB ^2}{2} N^{2m}\qquad\textrm{ for all }z\in \{0,\ldots,2R-1\}\ .\label{eq:zlowerboundoverlapmv}
		\end{align}
		where we used property~\eqref{it:dzupperboundfmnm} to obtain the second inequality.

		Applying inequality~\eqref{eq:zlowerboundoverlapmv} to every summand in Eq.~\eqref{eq:inserthereeqm} gives 
		\begin{align}
			\left\langle \Psi^{(2)}, \Psi^{(3)} \right\rangle 
			&\geq      \cpsi{2} \cpsi{3}\left(1-\textfrac{\kappa_\bosonB ^2 N^{2m}}{2}\right)\cdot \sum_{z=0}^{2R-1}\eta_{\kappa_\bosonA}(z-R)^2\ .
		\end{align}
		Note that  by the pairwise orthogonality of the states $\{\chi^{\varepsilon_\bosonA }_{\Delta_\bosonA }(z)\}_{z\in\mathbb{Z}}$ and the definition of the states~$\Psi^{(2)}$ and~$\Psi^{(3)}$, we have     \begin{align}\cpsi{2}^{-2}=\cpsi{3}^{-2}&=\sum_{z=0}^{2R-1}\eta_{\kappa_\bosonA}(z-R)^2\ .
		\end{align}
		Thus
		\begin{align}
			\left\langle \Psi^{(2)}, \Psi^{(3)} \right\rangle
			&\ge 1-\kappa_\bosonB ^2N^{2m}/2 \ ,
			\label{eq: Psi2Psi3 overlap one}
		\end{align}
		The inequality $(1-x)^2\ge 1-2x$ for all $x\in\mathbb{R}$ and 
		the choice of $(m,\kappa_\bosonB)$ from Table~\ref{tab: parameters} give 
		\begin{align}
			\left|\left\langle \Psi^{(2)}, \Psi^{(3)} \right\rangle\right|^2
			&\ge 1-\kappa_\bosonB ^2 N^{2m}\\
			&\ge 1-2^{-2n^2} \ .
		\end{align}
		Hence  the relation between the overlap and the $L^1$-distance (cf.\ Eq.~\eqref{eq: trace distance - fidelity}) implies that 
		\begin{align}
			\left\|\Psi^{(2)}-\Psi^{(3)}\right\|_1
			&\le 2\cdot 2^{-n^2}\\
			&\le 2^{-2n} \qquad\textrm{for }n \ge 3\ . 
		\end{align}
	\end{proof}
	
	Another application of the fact that~$\gkp_{\kappa,\Delta}^\varepsilon\in L^2(\mathbb{R})$
	has most of its support in an interval of the form~$[-r,r]$ (with suitably chosen~$r$, see  Lemma~\ref{lem:proj norm}) is the following statement.
	\begin{lemma}\label{lem:shift_inariance_error}
		Let $n\ge 4$. 
		Let $(m,R,\kappa_\bosonA,\Delta_\bosonA,\kappa_\bosonB,\Delta_\bosonB, \Delta_\bosonC)$ be as specified in Table~\ref{tab: parameters}. Set  $\varepsilon_\bosonA=\sqrt{\Delta_\bosonA}$, $\varepsilon_\bosonB=\sqrt{\Delta_\bosonB}$ and $\varepsilon_\bosonC=\sqrt{\Delta_\bosonC}$.
		Consider the two states
		\begin{align}
			\begin{array}{c@{\hskip 0.2cm}c@{\hskip 0.2cm}c@{\hskip 0.2cm}c@{\hskip 0.2cm}c@{\hskip 0.15cm}c@{\hskip 0.15cm}c@{\hskip 0.15cm}c@{\hskip 0.15cm}c@{\hskip 0.15cm}c@{\hskip 0.15cm}c}
				\ket{\Psi^{(3)}}&=& \cpsi{3}\sum_{z=0}^{2R-1}
				\eta_{\kappa_\bosonA}(z-R)\ket{\chi^{\varepsilon_\bosonA }_{\Delta_\bosonA }(z)}&\otimes & e^{-i(a^z \xmod{N})P_\bosonB} 
				M_N\ket{\gkp^{\varepsilon_\bosonB }_{\kappa_\bosonB ,\Delta_\bosonB }} &\otimes& \ket{\Psi_{\Delta_\bosonC}^{\varepsilon_\bosonC}} &\otimes& \ket{0}\notag\\
				\ket{\Psi^{(4)}}&=& \cpsi{4}\sum_{z\in \mathbb{Z}}
				\eta_{\kappa_\bosonA}(z-R)\ket{\chi^{\varepsilon_\bosonA }_{\Delta_\bosonA }(z)}&\otimes & e^{-i(a^z\xmod{N})P_\bosonB} 
				M_N\ket{\gkp^{\varepsilon_\bosonB }_{\kappa_\bosonB ,\Delta_\bosonB }} &\otimes& \ket{\Psi_{\Delta_\bosonC}^{\varepsilon_\bosonC}} &\otimes& \ket{0}
			\end{array}\ .\notag
		\end{align}
		Then 
		\begin{align}
			\left\|\Psi^{(3)}-\Psi^{(4)}\right\|_1
			&\le 2^{-n}
			=:\varepsilon^{(4)}\ .
		\end{align}
	\end{lemma}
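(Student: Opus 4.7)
\textbf{Proof plan for Lemma~\ref{lem:shift_inariance_error}.}
The two states differ only in the summation range on mode~$\bosonA$: $\Psi^{(3)}$ truncates the $z$-sum to $\{0,\ldots,2R-1\}$, while $\Psi^{(4)}$ sums over all $z\in\mathbb{Z}$. All other factors (the mode-$\bosonB$ displacement amplitudes, the mode-$\bosonC$ state, and the qubit state) do not depend on~$z$ in a way that affects normalization. The plan is therefore to reduce the fidelity to a purely combinatorial Gaussian-sum problem, and then invoke the relation $\|\psi-\varphi\|_1=2\sqrt{1-|\langle\psi,\varphi\rangle|^2}$.

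First, I would use the orthogonality $\langle \chi_{\Delta_\bosonA}^{\varepsilon_\bosonA}(z),\chi_{\Delta_\bosonA}^{\varepsilon_\bosonA}(z')\rangle = \delta_{z,z'}$ for $z,z'\in\mathbb{Z}$ (valid since $\varepsilon_\bosonA=\sqrt{\Delta_\bosonA}<1/2$ for the parameters of Table~\ref{tab: parameters}) to collapse the double sum in $\langle \Psi^{(3)},\Psi^{(4)}\rangle$ to a single diagonal sum. Since the mode-$\bosonB$ vectors attached to each~$z$ are unit norm (unitary applied to~$M_N\ket{\gkp^{\varepsilon_\bosonB}_{\kappa_\bosonB,\Delta_\bosonB}}$), and the mode-$\bosonC$ and qubit factors are common, this yields
\begin{equation}
\langle \Psi^{(3)},\Psi^{(4)}\rangle \;=\; c_{\Psi^{(3)}}\,c_{\Psi^{(4)}}\sum_{z=0}^{2R-1}\eta_{\kappa_\bosonA}(z-R)^{2}\;=\;\sqrt{\frac{S_{\mathrm{in}}}{S_{\mathrm{tot}}}},
\end{equation}
where $S_{\mathrm{in}}:=\sum_{z=0}^{2R-1}\eta_{\kappa_\bosonA}(z-R)^{2}$ and $S_{\mathrm{tot}}:=\sum_{z\in\mathbb{Z}}\eta_{\kappa_\bosonA}(z-R)^{2}$, using $c_{\Psi^{(3)}}^{-2}=S_{\mathrm{in}}$ and $c_{\Psi^{(4)}}^{-2}=S_{\mathrm{tot}}$.

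Next, by the symmetry of $\eta_{\kappa_\bosonA}(\,\cdot-R)$ around $z=R$ and after the index shift $z\mapsto z-R$, the relative defect is
\begin{equation}
1-|\langle \Psi^{(3)},\Psi^{(4)}\rangle|^{2}\;=\;\frac{S_{\mathrm{tot}}-S_{\mathrm{in}}}{S_{\mathrm{tot}}}\;\le\;\frac{2}{S_{\mathrm{tot}}}\sum_{z\ge R}\eta_{\kappa_\bosonA}(z)^{2}.
\end{equation}
Since $\eta_{\kappa}(z)^{2}\propto e^{-\kappa^{2}z^{2}}$ and $S_{\mathrm{tot}}=\Theta(1)$ (the envelope has width $1/\kappa_\bosonA=2^{16n}\gg 1$), a standard Gaussian tail estimate of the kind collected in Section~\ref{sec: combinatorial bounds} gives a tail doubly exponentially small in~$n$: for the parameters of Table~\ref{tab: parameters} one has $\kappa_\bosonA^{2}R^{2}=2^{-32n}\cdot 2^{34n-2}=2^{2n-2}$, so the tail is bounded by $C\,e^{-\kappa_\bosonA^{2}R^{2}/2}/\kappa_\bosonA$, which is far smaller than $2^{-2n}/4$.

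Finally, applying $\|\Psi^{(3)}-\Psi^{(4)}\|_1 = 2\sqrt{1-|\langle \Psi^{(3)},\Psi^{(4)}\rangle|^{2}}$ yields a bound at most $2^{-n}$ for $n\ge 4$, as claimed. The only nontrivial step is getting a clean tail bound for $\sum_{z\ge R}e^{-\kappa_\bosonA^{2}z^{2}}$ and a matching lower bound on $S_{\mathrm{tot}}$; both are standard and should follow from the Gaussian-sum lemmas of Section~\ref{sec: combinatorial bounds}, so I do not expect a serious obstacle here.
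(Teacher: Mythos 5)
Your proposal is correct and takes essentially the same approach as the paper: reduce the fidelity to the ratio $S_{\mathrm{in}}/S_{\mathrm{tot}}$ of squared Gaussian envelope values via the orthogonality of $\{\chi^{\varepsilon_\bosonA}_{\Delta_\bosonA}(z)\}_{z\in\mathbb{Z}}$, and then bound the tail of a discrete Gaussian. The paper packages the last step differently: it recognizes $S_{\mathrm{in}}/S_{\mathrm{tot}} = \|\Pi_{[-R-1/2,\,R-1/2]}\gkp^{\varepsilon_\bosonA}_{\kappa_\bosonA,\Delta_\bosonA}\|^2 \ge \|\Pi_{[-R/2,\,R/2]}\gkp^{\varepsilon_\bosonA}_{\kappa_\bosonA,\Delta_\bosonA}\|^2$ and invokes Lemma~\ref{lem:proj norm} (which itself wraps the discrete-Gaussian concentration bound of Lemma~\ref{thm:discretegaussianconcentration}), whereas you estimate the tail $\sum_{z\ge R}\eta_{\kappa_\bosonA}(z)^2$ directly; these are equivalent. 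One small note: the $1/\kappa_\bosonA$ factor you flag in the tail bound is indeed $2^{16n}$, but it is harmlessly swamped by the doubly exponential factor $e^{-\kappa_\bosonA^2 R^2/2}$ with $\kappa_\bosonA^2 R^2 = 2^{2n-2}$, so the conclusion goes through, and in fact the proof yields the tighter bound $2^{-2n}$ used as $\varepsilon^{(4)}$ in Table~\ref{tab:approximatealgorithm}.
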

	\begin{proof}
		Due to the pairwise orthogonality of the  states~$\{\chi^{\varepsilon_\bosonA}_{\Delta_\bosonA}(z)\}_{z\in\mathbb{Z}}$ we have
		\begin{align}
			c_3^{-2}&=\sum_{z=0}^{2R-1}\eta_{\kappa_\bosonA}(z-R)^2\\
			c_4^{-2}&=\sum_{z\in\mathbb{Z}}\eta_{\kappa_\bosonA}(z-R)^2\ .
		\end{align}
		It follows (again using orthogonality) that 
		\begin{align}
			|\langle \Psi^{(3)},\Psi^{(4)}\rangle|^2 &=
			c_3^2c_4^2\left(\sum_{z=0}^{2R-1}
			\eta_{\kappa_\bosonA}(z-R)^2\right)^2\\
			&=
			\frac{\sum_{z=0}^{2R-1}\eta_{\kappa_\bosonA}(z-R)^2}{\sum_{z\in\mathbb{Z}}\eta_{\kappa_\bosonA}(z-R)^2}
			\\
			&=\frac{\sum_{z=-R}^{R-1}\eta_{\kappa_\bosonA}(z)^2}{\sum_{z\in\mathbb{Z}}\eta_{\kappa_\bosonA}(z)^2}\ .\label{eq:rzrmnvd}
		\end{align}
		Recalling the definition
		\begin{align}
			\ket{\gkp^{\varepsilon_\bosonA}_{\kappa_\bosonA,\Delta_\bosonA}}&=C_{\kappa_\bosonA,\Delta_\bosonA}\sum_{z\in \mathbb{Z}}\eta_{\kappa_A}(z)\ket{\chi^{\varepsilon_\bosonA}_{\Delta_\bosonA}(z)}
		\end{align}
		and observing that $\chi^{\varepsilon_\bosonA}_{\Delta_\bosonA}(z)$ has support contained in $(z-1/2,z+1/2)$, we can write~\eqref{eq:rzrmnvd} as 
		\begin{align}
			|\langle \Psi^{(3)},\Psi^{(4)}\rangle|^2 &=\left\| \Pi_{[-R-1/2,R-1/2]}\gkp_{\kappa_\bosonA, \Delta_\bosonA }^{\varepsilon_\bosonA }\right\|^2\\
			&\ge 
			\|\Pi_{[-R/2,R/2]}\gkp_{\kappa_\bosonA, \Delta_\bosonA }^{\varepsilon_\bosonA }\|^2 
		\end{align}
		where we used that $[-R/2,R/2]\subseteq[-R-1/2,R-1/2]$. With Lemma~\ref{lem:proj norm} (bounding the norm of $\Pi_{[-r,r]}\gkp^\varepsilon_{\kappa,\Delta}$) it follows that 
		\begin{align}
			\left|\left\langle \Psi^{(3)}, \Psi^{(4)} \right\rangle \right|^2 &\ge 
			1-2 e^{-\frac{1}{4}\kappa_\bosonA^2 R^2}
			\ .
		\end{align}
		Hence by the relation~\eqref{eq: trace distance - fidelity} between the overlap and the $L^1$-norm distance it follows that 
		\begin{align}
			\left\|\Psi^{(3)}-\Psi^{(4)}\right\|_1&\le 2\cdot \sqrt{2} e^{-\frac{1}{8}\kappa_\bosonA^2 R^2}\le 3 e^{-\frac{1}{8}\kappa_\bosonA^2 R^2} = 3e^{-\frac{1}{32} 2^{2n}} \le  2^{-2n}\ 
		\end{align} for 
		the choice of $(\kappa_\bosonA,R)$ as in Table~\ref{tab: parameters} and
		$n\ge 4$, as claimed.
	\end{proof}
	
	The following result only relies on the fact that
	the normalization constants~$C_{\kappa,\Delta}$ and $C_{\kappa}$ in the definition of~$\ket{\gkp_{\kappa,\Delta}}$ respectively~$\ket{\gkp_{\kappa,\Delta}^\varepsilon}$ are close to each other (see Lemma~\ref{lem: norm and ratio norm gkp gkp ep}). 
	
	\begin{lemma}\label{lem:untruncation_error_peak}
		Let $n\ge 1$.
		Let $(m,R,\kappa_\bosonA,\Delta_\bosonA,\kappa_\bosonB,\Delta_\bosonB, \Delta_\bosonC)$ be as specified in Table~\ref{tab: parameters}. Set $\varepsilon_\bosonA=\sqrt{\Delta_\bosonA}$, $\varepsilon_\bosonB=\sqrt{\Delta_\bosonB}$, $\varepsilon_\bosonC=\sqrt{\Delta_\bosonC}$.
		Consider the two states
		\begin{align}
			\begin{array}{c@{\hskip 0.2cm}c@{\hskip 0.2cm}c@{\hskip 0.2cm}c@{\hskip 0.2cm}c@{\hskip 0.15cm}c@{\hskip 0.15cm}c@{\hskip 0.15cm}c@{\hskip 0.15cm}c@{\hskip 0.15cm}c@{\hskip 0.15cm}c}
				\ket{\Psi^{(4)}}&=& \cpsi{4}\sum_{z\in \mathbb{Z}}
				\eta_{\kappa_\bosonA}(z-R)\ket{\chi^{\varepsilon_\bosonA }_{\Delta_\bosonA }(z)}&\otimes & e^{-i(a^z\xmod{N})P_\bosonB} 
				M_N\ket{\gkp^{\varepsilon_\bosonB }_{\kappa_\bosonB ,\Delta_\bosonB }} &\otimes& \ket{\Psi_{\Delta_\bosonC}^{\varepsilon_\bosonC}} &\otimes& \ket{0}& ,\\
				\ket{\Psi^{(5)}}&=&  \cpsi{5}\sum_{z\in \mathbb{Z}}
				\eta_{\kappa_\bosonA}(z-R)\ket{\chi_{\Delta_\bosonA }(z)}&\otimes & e^{-i(a^z\xmod{N})P_\bosonB} 
				M_N\ket{\gkp^{\varepsilon_\bosonB }_{\kappa_\bosonB ,\Delta_\bosonB }}  &\otimes& \ket{\Psi_{\Delta_\bosonC}^{\varepsilon_\bosonC}} &\otimes& \ket{0}& .
			\end{array}\notag
		\end{align}
		Then
		\begin{align}
			\left\|\Psi^{(4)}-\Psi^{(5)}\right\|_1&\le 2^{-2n} =:\varepsilon^{(5)} \ .
		\end{align}
	\end{lemma}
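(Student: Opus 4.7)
\emph{Proof plan.} The plan is to establish the $L^1$-bound via the pure-state identity~\eqref{eq: trace distance - fidelity}, by showing that the overlap $|\langle \Psi^{(4)}, \Psi^{(5)}\rangle|^2$ is at least $1 - O(\Delta_\bosonA)$.  Note that $\Psi^{(4)}$ and $\Psi^{(5)}$ have identical structure on modes $\bosonB,\bosonC,\qubit$ and differ only on mode~$\bosonA$, where the orthonormal truncated peaks $\chi^{\varepsilon_\bosonA}_{\Delta_\bosonA}(z)$ are replaced by the (approximately orthogonal) full Gaussian peaks $\chi_{\Delta_\bosonA}(z)$.  Writing $\ket{\psi_z} := e^{-i(a^z\xmod N)P_\bosonB}M_N\ket{\gkp^{\varepsilon_\bosonB}_{\kappa_\bosonB,\Delta_\bosonB}}\otimes\ket{\Psi^{\varepsilon_\bosonC}_{\Delta_\bosonC}}\otimes\ket{0}$, which is a unit vector for every $z\in\mathbb{Z}$, I would expand
\begin{align}
\langle \Psi^{(4)}, \Psi^{(5)}\rangle = c_4\,c_5 \sum_{z,z'\in\mathbb{Z}} \eta_{\kappa_\bosonA}(z-R)\,\eta_{\kappa_\bosonA}(z'-R)\,\langle \chi^{\varepsilon_\bosonA}_{\Delta_\bosonA}(z), \chi_{\Delta_\bosonA}(z')\rangle\,\langle \psi_z,\psi_{z'}\rangle
\end{align}
and split into diagonal and off-diagonal contributions.

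On the diagonal ($z=z'$), translation invariance gives $\langle \chi^{\varepsilon_\bosonA}_{\Delta_\bosonA}(z),\chi_{\Delta_\bosonA}(z)\rangle = \langle \Psi^{\varepsilon_\bosonA}_{\Delta_\bosonA}, \Psi_{\Delta_\bosonA}\rangle \geq \sqrt{1-2\Delta_\bosonA}$ by Lemma~\ref{lem: gaussian gaussian epsilon}, and $\langle \psi_z,\psi_z\rangle=1$; combined with the identity $c_4^{-2} = \sum_z\eta_{\kappa_\bosonA}(z-R)^2$ (orthogonality of truncated peaks), the diagonal contribution is $(c_5/c_4)\,\langle \Psi^{\varepsilon_\bosonA}_{\Delta_\bosonA},\Psi_{\Delta_\bosonA}\rangle$.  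For the off-diagonal ($z\neq z'\in\mathbb{Z}$), the support of $\chi^{\varepsilon_\bosonA}_{\Delta_\bosonA}(z)$ lies in $[z-\varepsilon_\bosonA,z+\varepsilon_\bosonA]$ with $\varepsilon_\bosonA=\sqrt{\Delta_\bosonA}<1/2$, so $|x-z'|\geq 1/2$ throughout, and Cauchy--Schwarz yields
\begin{align}
|\langle \chi^{\varepsilon_\bosonA}_{\Delta_\bosonA}(z), \chi_{\Delta_\bosonA}(z')\rangle| \leq \|\Pi_{[z-z'-\varepsilon_\bosonA,\,z-z'+\varepsilon_\bosonA]}\Psi_{\Delta_\bosonA}\|,
\end{align}
a Gaussian-tail quantity decaying like $e^{-(|z-z'|-\varepsilon_\bosonA)^2/(4\Delta_\bosonA^2)}$.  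Using $|\langle \psi_z,\psi_{z'}\rangle|\leq 1$ and summing against the envelope $\eta_{\kappa_\bosonA}$, the total off-diagonal contribution is negligible since $\Delta_\bosonA = 2^{-16n}$.

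An analogous computation for $c_5^{-2}=\big\|\sum_z\eta_{\kappa_\bosonA}(z-R)\,\chi_{\Delta_\bosonA}(z)\otimes\psi_z\big\|^2$ gives $c_5^{-2} = \sum_z\eta_{\kappa_\bosonA}(z-R)^2+(\text{super-exponentially small correction})$, so $c_5/c_4 = 1 - o(\Delta_\bosonA)$; this mirrors the closeness statement between $C_{\kappa,\Delta}$ and $C_\kappa$ from Lemma~\ref{lem: norm and ratio norm gkp gkp ep}, applied now in the presence of the $z$-dependent factor~$\psi_z$.  Combining these, $|\langle \Psi^{(4)}, \Psi^{(5)}\rangle|^2 \geq 1 - 2\Delta_\bosonA - o(\Delta_\bosonA)$, and the pure-state identity yields $\|\Psi^{(4)}-\Psi^{(5)}\|_1 \leq O(\sqrt{\Delta_\bosonA}) = O(2^{-8n})$, which is dominated by $2^{-2n}$ for all $n\geq 1$.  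The main technical care required will be the bookkeeping of off-diagonal Gaussian cross-terms against the slowly decaying envelope $\eta_{\kappa_\bosonA}$; since $\Delta_\bosonA = 2^{-16n}$ is super-polynomially small while $1/\kappa_\bosonA = 2^{16n}$ is only exponentially large, the Gaussian-tail decay $e^{-\Omega(1/\Delta_\bosonA^2)}$ dwarfs the envelope sum and the argument closes.
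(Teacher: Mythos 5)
Your proposal is correct and arrives at the same final estimate, but it takes a somewhat more laborious route than the paper's proof. The structural skeleton is the same: expand $\langle\Psi^{(4)},\Psi^{(5)}\rangle$ as a double sum over $z,z'$, identify the diagonal contribution $(c_5/c_4)\,\langle\Psi^{\varepsilon_\bosonA}_{\Delta_\bosonA},\Psi_{\Delta_\bosonA}\rangle \geq (c_5/c_4)\sqrt{1-2\Delta_\bosonA}$ via translation invariance and Lemma~\ref{lem: gaussian gaussian epsilon}, then control $c_5/c_4$. The main difference is in handling the off-diagonal terms: you estimate their magnitude via a Cauchy--Schwarz/Gaussian-tail argument, whereas the paper observes that \emph{all} terms in the double sum are non-negative --- both the peak overlaps $\langle\chi^{\varepsilon_\bosonA}_{\Delta_\bosonA}(z),\chi_{\Delta_\bosonA}(z')\rangle\geq 0$ (both functions are pointwise non-negative) and the auxiliary overlaps $\langle\theta(z),\theta(z')\rangle\in[0,1]$ --- so simply \emph{dropping} the off-diagonal terms already yields a lower bound on the overlap, with no Gaussian-tail bookkeeping needed at all. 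Since you do notice the positivity of the terms entering the diagonal, you could have exploited it here too and avoided the tail estimates entirely. Similarly, for $c_5/c_4$ the paper uses the clean inequalities $c_4 = C_{\kappa_\bosonA}$ and $c_5 \geq C_{\kappa_\bosonA,\Delta_\bosonA}$ (the latter again just from $\langle\theta(z),\theta(z')\rangle\leq 1$), and then invokes Lemma~\ref{lem: norm and ratio norm gkp gkp ep} to conclude $c_5/c_4\geq\sqrt{1-7\Delta_\bosonA}$; your direct estimate $c_5^{-2}=c_4^{-2}(1+o(\Delta_\bosonA))$ from Gaussian cross-peak decay is tighter but not needed at this parameter regime, and it requires the same tail bookkeeping again. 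Both roads lead to $\|\Psi^{(4)}-\Psi^{(5)}\|_1\leq O(\sqrt{\Delta_\bosonA})\leq 2^{-2n}$; yours is valid, but the positivity trick the paper uses is worth internalizing, as it collapses several of your steps.
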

	\begin{proof}
		For brevity, let us introduce the states
		\begin{align}
			\ket{\theta(z)}&=e^{-i(a^z\xmod{N})P_\bosonB} 
			M_N\ket{\gkp^{\varepsilon_\bosonB }_{\kappa_\bosonB ,\Delta_\bosonB }} \otimes \ket{\Psi_{\Delta_\bosonC}^{\varepsilon_\bosonC}}\qquad\textrm{ for }\qquad z\in\mathbb{Z}\ .
		\end{align}
		It is easy to check that
		\begin{align}
			0\leq \langle \theta(z),\theta(z')\rangle &\leq 1\qquad\textrm{ for all }\qquad z,z'\in \mathbb{Z}\ .\label{eq:thetaoverlapnonnega}
		\end{align}
		(In particular, these inner products are real-valued.) 
		In terms of these states, we have 
		\begin{align}
			\ket{\Psi^{(4)}}&=c_4\sum_{z\in\mathbb{Z}}\eta_{\kappa_\bosonA}(z-R)\ket{\chi^{\varepsilon_\bosonA}_{\Delta_\bosonA}(z)}\otimes\ket{\theta(z)}\\
			\ket{\Psi^{(5)}}&=c_5\sum_{z\in\mathbb{Z}}\eta_{\kappa_\bosonA}(z-R)\ket{\chi_{\Delta_\bosonA}(z)}\otimes\ket{\theta(z)}\ .
		\end{align}
		Thus
		\begin{align}
			\langle \Psi^{(4)},\Psi^{(5)}\rangle &=c_4c_5\sum_{z,z'\in\mathbb{Z}}
			\eta_{\kappa_\bosonA}(z-R)\eta_{\kappa_\bosonA}(z'-R) \langle \chi^{\varepsilon_\bosonA}_{\Delta_\bosonA}(z),\chi_{\Delta_\bosonA}(z')\rangle \cdot \langle \theta(z),\theta(z')\rangle\\
			&\geq c_4c_5\sum_{z\in\mathbb{Z}}
			\eta_{\kappa_\bosonA}(z-R)^2 \langle \chi^{\varepsilon_\bosonA}_{\Delta_\bosonA}(z),\chi_{\Delta_\bosonA}(z)\rangle \label{eq:psiffl}
		\end{align}
		where we used~\eqref{eq:thetaoverlapnonnega}   and the fact that~
		\begin{align}
			\langle \chi^{\varepsilon_\bosonA}_{\Delta_\bosonA}(z),\chi^{\varepsilon_\bosonA}_{\Delta_\bosonA}(z')\rangle\geq 0\qquad\textrm{ for all}\qquad  z,z'\in\mathbb{Z}\ . \label{eq:positivmiym}
		\end{align}
		
		By the pairwise orthogonality of the states~$\left\{\ket{\chi^{\varepsilon_\bosonA}_{\Delta_\bosonA}(z)}\right\}_{z\in\mathbb{Z}}$, we have 
		\begin{align}
			c_4^{-2}&=\sum_{z\in\mathbb{Z}} \eta_{\kappa_\bosonA}(z-R)^2=\sum_{z\in\mathbb{Z}} \eta_{\kappa_\bosonA}(z)^2
		\end{align}
		and we conclude that 
		\begin{align}
			c_4=C_{\kappa_\bosonA}\label{eq:cvimamv}
		\end{align} is equal to the normalization constant of the state~$\gkp^{\varepsilon_\bosonA}_{\kappa_\bosonA,\Delta_\bosonA}$. On the other hand, we have
		\begin{align}
			c_5^{-2}&=\sum_{z,z'\in\mathbb{Z}}\eta_{\kappa_\bosonA}(z-R)\eta_{\kappa_\bosonA}(z'-R) \langle \chi_{\Delta_\bosonA}(z),\chi_{\Delta_\bosonA}(z')\rangle\cdot \langle \theta(z),\theta(z')\rangle\\
			&\leq \sum_{z,z'\in\mathbb{Z}}\eta_{\kappa_\bosonA}(z-R)\eta_{\kappa_\bosonA}(z'-R) \langle \chi_{\Delta_\bosonA}(z),\chi_{\Delta_\bosonA}(z')\rangle\\
			&= \sum_{z,z'\in\mathbb{Z}}\eta_{\kappa_\bosonA}(z)\eta_{\kappa_\bosonA}(z') \langle e^{iRP_\bosonA}\chi_{\Delta_\bosonA}(z),e^{iRP_\bosonA}\chi_{\Delta_\bosonA}(z')\rangle\\
			&= \sum_{z,z'\in\mathbb{Z}}\eta_{\kappa_\bosonA}(z)\eta_{\kappa_\bosonA}(z') \langle \chi_{\Delta_\bosonA}(z),\chi_{\Delta_\bosonA}(z')\rangle=C_{\kappa_\bosonA,\Delta_\bosonA}^{-2}\ ,
		\end{align}
		where the inequality follows from~\eqref{eq:thetaoverlapnonnega} and~\eqref{eq:positivmiym}, and where we used the fact that~$e^{iRP}$ is unitary.  We conclude that $c_5$ is related to the normalization constant~$C_{\kappa_\bosonA,\Delta_\bosonA}$ of the state~$\gkp_{\kappa_\bosonA,\Delta_\bosonA}$ by 
		\begin{align}
			c_5&\geq C_{\kappa_\bosonA,\Delta_\bosonA}\ .\label{eq:cfiveem}
		\end{align}
		Combining Eqs.~\eqref{eq:cvimamv},~\eqref{eq:cfiveem} and~\eqref{eq:psiffl} gives
		\begin{align}
			\langle \Psi^{(4)},\Psi^{(5)}\rangle &\geq C_{\kappa_\bosonA}C_{\kappa_\bosonA,\Delta_\bosonA}\sum_{z\in\mathbb{Z}}\eta_{\kappa_\bosonA}(z-R)^2 \langle \chi^{\varepsilon_\bosonA}_{\Delta_\bosonA}(z),\chi_{\bosonA}(z)\rangle\\
			&=C_{\kappa_\bosonA}C_{\kappa_\bosonA,\Delta_\bosonA}\sum_{z\in\mathbb{Z}}\eta_{\kappa_\bosonA}(z)^2 \langle \chi^{\varepsilon_\bosonA}_{\Delta_\bosonA}(z+R),\chi_{\bosonA}(z+R)\rangle\\
			&=C_{\kappa_\bosonA}C_{\kappa_\bosonA,\Delta_\bosonA}\sum_{z\in\mathbb{Z}}\eta_{\kappa_\bosonA}(z)^2 \langle \chi^{\varepsilon_\bosonA}_{\Delta_\bosonA}(z),\chi_{\bosonA}(z)\rangle\ ,
		\end{align}
		where we used  that 
		$\ket{\chi^{\varepsilon_\bosonA}_{\Delta_\bosonA}(z+R)}=e^{iRP_\bosonA}|\chi^{\varepsilon_\bosonA}_{\Delta_\bosonA}(z)\rangle$
		and similarly $\ket{\chi_{\Delta_\bosonA}(z+R)}=e^{iRP_\bosonA}|\chi_{\Delta_\bosonA}(z)\rangle$, as well as the fact that $e^{iRP}$ is unitary. Using that 
		~$\langle \chi^{\varepsilon_\bosonA}_{\Delta_\bosonA}(z),\chi_{\bosonA}(z)\rangle=\langle \Psi^{\varepsilon_\bosonA}_{\Delta_\bosonA},\Psi_{\Delta_\bosonA}\rangle\geq \sqrt{1-2\Delta_\bosonA}$ again by the unitarity of translations, as well as Lemma~\ref{lem: gaussian gaussian epsilon}, we conclude that
		\begin{align}
			\langle \Psi^{(4)},\Psi^{(5)}\rangle &\geq \sqrt{1-2\Delta_\bosonA}\cdot C_{\kappa_\bosonA}C_{\kappa_\bosonA,\Delta_\bosonA}\sum_{z\in\mathbb{Z}}\eta_{\kappa_\bosonA}(z)^2\\
			&=\sqrt{1-2\Delta_\bosonA}\cdot \frac{C_{\kappa_\bosonA,\Delta_\bosonA}}{C_{\kappa_\bosonA}}
		\end{align}
		by definition of~$C_{\kappa_\bosonA}$. 
		Since the choice of parameters~$(\kappa_\bosonA,\Delta_\bosonA)$ (see Table~\ref{tab: parameters}) ensures that $\kappa_\bosonA,\Delta_\bosonA\in (0,1/4)$ we can apply  Lemma~\ref{lem: norm and ratio norm gkp gkp ep}
		to bound the fraction~$\frac{C_{\kappa_\bosonA,\Delta_\bosonA}}{C_{\kappa_\bosonA}}$ of normalization constants, obtaining
		\begin{align}
			\abs{\left\langle \Psi^{(4)}, \Psi^{(5)} \right\rangle}^2 
			&\ge (1-2\Delta_\bosonA )\frac{\left(C_{\kappa_\bosonA,\Delta_\bosonA }\right)^2}{\left(C_{\kappa_\bosonA}\right)^2}\\
			&\ge (1-2\Delta_\bosonA )(1-7\Delta_\bosonA ) \\
			&\ge1-9\Delta_\bosonA \ ,
		\end{align}
		where we used the inequality $(1-x)(1-y)\ge 1-x-y$ for all $x,y>0$. The relation~\eqref{eq: trace distance - fidelity} relating the $L^1$-distance and the overlap therefore implies that 
		\begin{align}
			\left\|\Psi^{(4)}-\Psi^{(5)}\right\|_1&\le 6\sqrt{\Delta_\bosonA } = 6\cdot 2^{-8n}\le 2^{-2n}\ ,
		\end{align}
		where we used the choice of parameter $\Delta_\bosonA$ from Table~\ref{tab: parameters} and the assumption $n\ge1$. This is the claim.
	\end{proof}

	It remains to establish an upper bound on the distance~$\left\|\Psi^{(3)}-\Psi^{(4)}\right\|_1$. This requires an analysis of the unitary~$\Ua$, which we provide in the next section.

	\clearpage
	\section{Approximate function evaluation for CV systems}\label{sec: approx function evaluation}
	The goal of this section is to analyze 
	the action of the unitary~$\Ua$ (defined in Eq.~\eqref{eq: UaNm gates})
	on states that have support on non-integer positions.  
	We will show that --- in a sense made precise in this section --- this unitary 
	still (approximately) computes the pseudomodular power even if the positions of modes~$\bosonA$ and~$\bosonC$ are only close to integers. The corresponding statement is given in Lemma~\ref{lem: Ua shor circuit error}. 
	Its formulation relies on a framework we introduce to formalize the notion of a unitary on a CV system approximately computing a function. In particular, we establish continuity bounds describing, e.g., what our LSB-gate does when applied to (position-encoded) reals that are only close to integers.
	
	In the following, we are interested in unitaries acting on $L^2(\cX)$ where $\cX$ is a measure space. In particular, we consider the case $\cX = \mathbb{R}^m \times \{0,1\}$ 
	(note that $L^2(\mathbb{R}^m)\cong L^2(\mathbb{R})^{\otimes m}$) 
	and  use that 
	\begin{align}
		L^2(\mathbb{R}^m\times\{0,1\}) \cong L^2(\mathbb{R}^m) \otimes \mathbb{C}^2\cong  L^2(\mathbb{R}^m,\mathbb{C}^2)\ .
	\end{align} More concretely, the isomorphism between the first two of these spaces is characterized by the following construction, using an orthonormal basis~$\{\ket{0},\ket{1}\}$ of~$\mathbb{C}^2$. A function $\Psi \in L ^2(\mathbb{R}^m\times\{0,1\})$ is mapped to~$\ket{\Psi_0}\otimes\ket{0}+\ket{\Psi_1}\otimes\ket{1}$, equivalently referred to as a {\em spinor} $(\Psi_0, \Psi_1)$, where $\Psi_i = \Psi(\cdot, i) \in L^2(\mathbb{R}^m) $ for $i\in\{0,1\}$. For two spinors $(\Psi_0, \Psi_1)$, $(\Phi_0, \Phi_1)$  the inner product on $L^2(\mathbb{R}^m) \otimes \mathbb{C}^2$ is then given by $\langle(\Psi_0, \Psi_1),  (\Phi_0, \Phi_1)\rangle = \langle \Psi_0, \Phi_0\rangle + \langle \Psi_1, \Phi_1\rangle$, where $\langle \Psi_i, \Phi_i\rangle$ for $i\in\{0,1\}$ is the usual inner product on $L^2(\mathbb{R}^m)$.  Finally, an element~$\Psi\in L^2(\mathbb{R}^m,\mathbb{C}^2)$ corresponds to the spinor~$(\Psi_0,\Psi_1)$ where $\Psi_i(x)=\langle i,\Psi(x)\rangle$ is the coefficient of~$\ket{i}$ in~$\ket{\Psi(x)}\in \mathbb{C}^2$ for $i=0,1$ and $x\in\mathbb{R}^m$.

	In the following, we use these isomorphic spaces interchangeably. Moreover, we will write an element $\Psi$ in one of these spaces as
	\begin{align}
		\ket{\Psi} = \int_{\mathbb{R}^m} dx\left( \Psi_0(x) \ket{x}\otimes\ket{0} + \Psi_1(x) \ket{x}\otimes\ket{1}\right)\ ,
	\end{align}
	where $\ket{x}$ is the (formal) tensor product of position-eigenstates $\ket{x}=\ket{x_1}\otimes\cdots\otimes\ket{x_m}$ for $x\in\mathbb{R}^m$. 
	We define the support of a function $\Psi \in L^2(\mathbb{R}^m \times \{0,1\})$ as 
	\begin{align}
		\supp(\Psi) = \overline{\left\{(x,b) \in \mathbb{R}^m \times \{0,1\}\ |\  \Psi(x,b) \neq 0\right\}}\ 
	\end{align}
	where $\overline{A}$ denotes the closure of a set~$A$.
	
	\subsection{Exact function evaluation}
	In the following, we consider injective functions of the form
	\begin{align}
		\begin{matrix}
			f: &A\times B        &\rightarrow & \mathbb{R}^m\times\{0,1\}&\\
			& (x,b)& \mapsto & f(x,b)=(f_1(x,b),f_2(x,b))\ .
		\end{matrix} \label{eq:function}
	\end{align}
	where $A\subseteq\mathbb{R}^m$ is closed  and $B \subseteq \{0,1\}$. Typically, we consider the case where $A=A_1 \times A_2 \times \cdots \times A_m$ is also a product set.  Furthermore, we assume that for each $b\in B$, the function~$f_1(\cdot,b):A\to \mathbb{R}^m$ is a diffeomorphism. Such a function $f$ defines an isometry 
	\begin{align}
		U_f:\begin{matrix}
			L^2(A\times B) & \rightarrow & U_fL^2(A\times B)\subseteq L^2(\mathbb{R}^m\times \{0,1\})
		\end{matrix}   
	\end{align}
	on the subspace~$L^2(A\times B)\subseteq L^2(\mathbb{R}^m\times \{0,1\})$  of functions with support contained in $A\times B$.
	That is, the isometry~$U_f$ is defined by its action
	\begin{align}
		U_f\left(\ket{x}\otimes\ket{b}\right)&=J_{f_1(\cdot,b)}(x)^{1/2}    \left(\ket{f_1(x,b)}\otimes\ket{f_2(x,b)}\right)\quad\textrm{ for } (x, b) \in A\times B \ ,\label{Eq: pointwise normalization U_f}
	\end{align}
	where $J_{f_1(\cdot,b)}(x)=|\det D_{f_1(\cdot ,b)}(x)|$ denotes the determinant of the Jacobi-matrix~$D_{f_1(\cdot ,b)}$ of~$f_1(\cdot,b)$ at the point~$x$. The factor~$J_{f_1(\cdot,b)}(x)^{1/2}$
	in Eq.~\eqref{Eq: pointwise normalization U_f} ensures that $U_f$ maps the subspace~$L^2(A\times B)$ isometrically to the image~$U_fL^2(A\times B)$.

	We note that the image of
	\begin{align}
		\ket{\Psi} = \sum_{b\in B} \int_{A}dx\, \Psi(x,b)\ket{x}\otimes\ket{b}\qquad\textrm{ for }\qquad \Psi\in L^2(A\times B)
	\end{align}
	is
	\begin{align}
		U_f\ket{\Psi}&=
		\sum_{c \in\{0,1\}} \int_{\mathbb{R}^m} dy\, \left(U_f \Psi\right)(y,c) \ket{y}\otimes\ket{c} 
	\end{align} 
	where we introduced the function $(U_f \Psi)\in L^2(\mathbb{R}^m\times \{0,1\})$ defined as
	\begin{align}
		\left(U_f \Psi\right)(y,c) = 
		\begin{cases} 
			(J_{f_1}(\cdot,b))(x)^{-1/2}\Psi(x)\big|_{
				(x,b)=f^{-1}(y,c)}
			&\qquad \textrm{if }(y,c) \in f(A\times B)\\
			0&\qquad \textrm{otherwise}
		\end{cases}\ .
	\end{align}
	Note that by construction we have 
	$\supp(U_f\Psi) \subseteq  \overline{f(A\times B)}$ whenever $\supp(\Psi) \subseteq A\times B$, i.e., 
	\begin{align}
		U_f\Psi\in L^2(\overline{f(A\times B)})\qquad\textrm{ for every }\qquad \Psi \in L^2(A\times B)\ .\label{eq:closureproperty}
	\end{align}

	Note that the $x$-dependent prefactor~$J_{f_1(\cdot,b)}(x)^{1/2}$ in Eq.~\eqref{Eq: pointwise normalization U_f} also implies that the composition is well-defined, i.e., for two injective functions $f: A\times B \to A'\times B'\subseteq \mathbb{R}^m\times\{0,1\}$, $g: A'\times B' \to  \mathbb{R}^m\times\{0,1\}$ with $A\times B \subseteq\mathbb{R}^m\times\{0,1\}$
	and diffeomorphic restrictions $f_1(\cdot,b)$, $g_1(\cdot, b')$ for $b\in B$ respectively $b' \in B'$, we have
	\begin{align} \label{eq: U_f composition}
		U_g \cdot U_f = U_{g\circ f}~:~\begin{matrix}
			L^2(A\times B) & \rightarrow & U_{f\circ g}L^2(A\times B)\subseteq L^2(\mathbb{R}^m\times \{0,1\})
		\end{matrix}   
	\end{align}

	Two special types of functions $f$ are of particular interest for our analysis.
	First, suppose $f=g\times\mathsf{id}$ takes the form~$f(x,b)=(g(x),b)$ with a diffeomorphism $g: A \to g(A)$. Then $U_{g\times \mathsf{id}}$ acts trivially on the qubit and 
	\begin{align}
		U_{g\times\mathsf{id}}\ket{\Psi}&=\sum_{b\in \{0,1\}}
		\int_{g(A)}dy\,  J_g(g^{-1}(y))^{-1/2}\Psi(g^{-1}(y),b)\ket{y}\otimes\ket{b}\\
		&=\sum_{b\in \{0,1\}}
		\int_{g(A)}dy\, J_{g^{-1}}(y)^{1/2}\Psi(g^{-1}(y),b)\ket{y}\otimes\ket{b}\ 
	\end{align}
	for $\Psi\in L^2(A\times \{0,1\})$.
	
	Another case of interest is where
	$f=\id\times (h \oplus \id)$
	takes the form $f(x,b)=(x,h(x) \oplus b)$
	for a function $h:\mathbb{R}^m\rightarrow \{0,1\}$. In this case we have 
	\begin{align}
		U_{\id\times (h\oplus \id)}\ket{\Psi}&
		=\sum_{b\in \{0,1\}}\int_{\mathbb{R}^m} dx\,
		\Psi(x,b)\ket{x}\otimes\ket{h(x) \oplus b}\ .
	\end{align}
	This case captures functions that extract information from a CV system into a qubit system.
	
	\subsection{Framework for approximate function evaluation}
	\label{sec:appx framework}
	We will be interested in unitaries that approximately compute certain functions on states that are supported on the corresponding domain. To formalize this notion, we introduce the following definition. 
	\begin{definition} \label{def: approx computation}
		Let $A\times B\subseteq \mathbb{R}^m\times \{0,1\}$ be a closed set. Let $f:A\times B\rightarrow\mathbb{R}^m\times \{0,1\}$ be injective with diffeomorphic restrictions $f_1(\cdot, b)$ for $b \in B$. Let $\varepsilon\geq 0$.
		Let $U$ be a unitary on $L^2(\mathbb{R}^m\times \{0,1\})$. We say that the unitary~$U$ \emph{$\varepsilon$-approximately computes~$f$ on $A\times B$} if
		\begin{align}
			\left\|
			U\Psi-U_f\Psi
			\right\|_1\leq \varepsilon\qquad\textrm{ for any  state }\qquad \ket{\Psi}\in L^2(A\times B)\ .\label{eq:l2atimesbx}
		\end{align}
		We will call a real~$\varepsilon>0$ satisfying~\eqref{eq:l2atimesbx} an \emph{error bound} for~$U$.
	\end{definition}

	In the following, we will use diagrammatic notation to specify the domain and range of a unitary $U$ which 
	$\varepsilon$-approximately computes  a function~$f:A\times B\rightarrow A'\times B'$ for $A,A'\subseteq\mathbb{R}^m$ and $B,B'\subseteq \{0,1\}$:  We write
	\begin{align}
		\vcenter{\hbox{\includegraphics{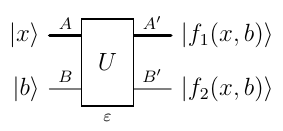}}}
	\end{align}
	We often consider product sets~$A=A_1\times\cdots\times A_m$.  If $U$ $\varepsilon$-approximately computes  a function 
	$f: (A_1\times\cdots \times A_m) \times B \to (A_1'\times \cdots \times A_m')\times B'$
	in the sense of Definition~\ref{def: approx computation}, we specify
	the sets~$A_1,\ldots,A_m\subseteq \mathbb{R}$, $B\in \{0,1\}$ 
	corresponding to the domain of~$f$, and the sets~$A'_1,\ldots,A'_m\subseteq \mathbb{R}$, $B'\in \{0,1\}$  corresponding to the range of~$f$ by labeling the associated input- and output wires, as follows:
	\begin{align}
		\vcenter{\hbox{\includegraphics{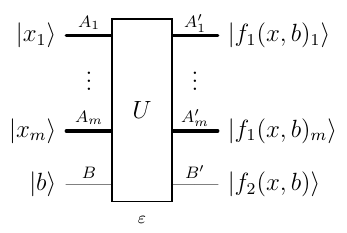}}}
		\label{eq: diagramatic notation}
	\end{align}
	We will omit $\varepsilon$ whenever $U$ computes the function $f$ exactly on~$A\times B$.

	The composition property~\eqref{eq: U_f composition} leads to the following lemma for the composition of two unitaries approximately computing two functions with compatible ranges and domains.

	\begin{lemma}  \label{lem: composition closeness}
		For $i\in \{1,2\}$, let $A_i\times B_i \subseteq \mathbb{R}^m \times \{0,1\}$ be a closed set, let $f^{(i)}:A_i\times B_i\rightarrow\mathbb{R}^m\times \{0,1\}$ and let $U^{(i)}$ be a unitary on~$L^2(\mathbb{R}^m\times \{0,1\})$ which $\varepsilon^{(i)}$-approximately~$f^{(i)}$ on $A_i\times B_i$. Assume further that 
		\begin{align}
			f^{(1)}(A_1\times B_1)\subseteq A_2\times B_2\ .
		\end{align}
		Then the unitary~$U^{(2)}U^{(1)}$ $(\varepsilon^{(1)}+\varepsilon^{(2)})$-approximately computes the function~$f^{(2)}\circ f^{(1)}$ on $A_1\times B_1$.
	\end{lemma}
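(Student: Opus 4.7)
The plan is to derive the result via a two-term triangle inequality decomposition, exploiting unitary invariance of the $L^1$-distance and the multiplicativity of the isometries~$U_f$ recorded in Eq.~\eqref{eq: U_f composition}. Concretely, given any state $\Psi\in L^2(A_1\times B_1)$, I would insert the intermediate vector $U^{(2)}U_{f^{(1)}}\Psi$ and write
\begin{align}
\bigl\|U^{(2)}U^{(1)}\Psi - U_{f^{(2)}\circ f^{(1)}}\Psi\bigr\|_1
&\le \bigl\|U^{(2)}U^{(1)}\Psi - U^{(2)}U_{f^{(1)}}\Psi\bigr\|_1 \\
&\quad + \bigl\|U^{(2)}U_{f^{(1)}}\Psi - U_{f^{(2)}\circ f^{(1)}}\Psi\bigr\|_1\ .
\end{align}

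For the first summand, unitary invariance of the $L^1$-distance gives equality with $\|U^{(1)}\Psi - U_{f^{(1)}}\Psi\|_1$, which is bounded by $\varepsilon^{(1)}$ directly from the hypothesis that~$U^{(1)}$ $\varepsilon^{(1)}$-approximately computes~$f^{(1)}$ on~$A_1\times B_1$, since $\Psi\in L^2(A_1\times B_1)$ by assumption.

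For the second summand, I would first rewrite $U_{f^{(2)}\circ f^{(1)}} = U_{f^{(2)}}U_{f^{(1)}}$ using the composition identity~\eqref{eq: U_f composition}, reducing the expression to $\|U^{(2)}\Phi - U_{f^{(2)}}\Phi\|_1$ with $\Phi:=U_{f^{(1)}}\Psi$. The key checkpoint is that this $\Phi$ lies in the correct domain to invoke the $\varepsilon^{(2)}$-bound: by the closure property~\eqref{eq:closureproperty} together with the inclusion $f^{(1)}(A_1\times B_1)\subseteq A_2\times B_2$ and the fact that $A_2\times B_2$ is closed, we obtain $\Phi \in L^2(\overline{f^{(1)}(A_1\times B_1)}) \subseteq L^2(A_2\times B_2)$. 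Moreover, $\Phi$ is a unit vector since $U_{f^{(1)}}$ is an isometry on $L^2(A_1\times B_1)$, so Definition~\ref{def: approx computation} applied to $U^{(2)}$ yields the bound $\varepsilon^{(2)}$. Combining the two estimates gives $\varepsilon^{(1)}+\varepsilon^{(2)}$, as claimed.

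This proof is essentially a bookkeeping exercise and I do not anticipate a substantive obstacle. The only subtle point worth flagging is verifying that the domain condition passes through the isometry~$U_{f^{(1)}}$, which is precisely where the hypothesis $f^{(1)}(A_1\times B_1)\subseteq A_2\times B_2$ and the closedness of $A_2\times B_2$ are needed; without closedness one would only land in $L^2(\overline{A_2\times B_2})$, but since Definition~\ref{def: approx computation} is formulated for closed sets this is automatic here.
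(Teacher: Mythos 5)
Your proof is correct and follows essentially the same route as the paper's: insert the intermediate vector $U^{(2)}U_{f^{(1)}}\Psi$, apply the triangle inequality, use unitary invariance of the $L^1$-distance on the first term, and use the closure property~\eqref{eq:closureproperty} together with $f^{(1)}(A_1\times B_1)\subseteq A_2\times B_2$ to justify applying the $\varepsilon^{(2)}$-bound to the second. The one small addition you make---explicitly noting that $U_{f^{(1)}}\Psi$ is normalized because $U_{f^{(1)}}$ is an isometry---is a reasonable point to flag, though the paper leaves it implicit.
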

	\begin{proof}
		Let us consider an arbitrary state $\Psi$ with $\supp(\Psi) \subseteq A_1\times B_1$. 
		Using~\eqref{eq:closureproperty} we have $\supp(U_{f^{(1)}}\Psi)\subseteq \overline{f^{(1)}(A_1\times B_1)} \subseteq A_2\times B_2$ by the assumption that $A_2\times B_2$ is closed. 
		Using the assumption about $U^{(2)}$, we therefore have 
		\begin{align} \label{eq: U_2U_1f - U_f2U_1 dist}
			\left\|
			U^{(2)} \left(U_{f^{(1)}} \Psi\right) - U_{f^{(2)}}\left( U_{f^{(1)}}\Psi \right)
			\right\|_1\leq \varepsilon_2\ .
		\end{align}
		By applying the triangle inequality and the invariance of the $L^1$-norm under unitaries, we obtain
		\begin{align}
			\left\|
			U^{(2)}\left( U^{(1)} \Psi \right)  -  U_{f^{(2)} \circ f^{(1)}}\Psi 
			\right\|_1  
			&\le \left\|
			U^{(2)} \left(U^{(1)} \Psi\right) -  U^{(2)} \left(U_{f^{(1)}} \Psi \right)
			\right\|_1 
			+ \left\|
			U^{(2)} \left( U_{f^{(1)}} \Psi   \right) -  
			U_{f^{(2)} \circ f^{(1)}}\Psi
			\right\|_1 \notag \\
			&= \left\|
			U^{(1)} \Psi   -  U_{f^{(1)}} \Psi  \right\|_1 +
			\left\| U^{(2)} \left( U_{f^{(1)}} \Psi   \right) - U_{f^{(2)}}\left( U_{f^{(1)}} \Psi \right)  
			\right\|_1 \notag \\
			&\le \varepsilon_1 + \varepsilon_2\ ,
		\end{align}
		where we used that  $ U_{f^{(2)}}\left( U_{f^{(1)}} \Psi \right) = U_{f^{(2)}\circ f^{(1)}} \Psi$ (see Eq.~\eqref{eq: U_f composition}), and 
		where the last inequality follows from the assumption on~$U^{(1)}$ combined with~\eqref{eq: U_2U_1f - U_f2U_1 dist}.
	\end{proof}

	\subsection{Approximate evaluation of functions  $h:\mathbb{R}^m\rightarrow \{0,1\}$}
	In this section, we present a sufficient condition for a unitary $U$ to $\varepsilon$-approximately compute a function $f: \mathbb{R}^m\times \{0,1\} \rightarrow \mathbb{R}^m\times \{0,1\}$ of the form $f= \id\times (h \oplus \id)$ with $h:\mathbb{R}^m\rightarrow \{0,1\}$.
	Such a family of functions is able to extract information about the CV system into the qubit while keeping the CV system unchanged. An example is our LSB-extraction gate (cf. Section~\ref{sec: ideal LSB}). We will use statements from this section to analyze its action on inputs close to integers (see Section~\ref{sec:approx analysis LSB}).
	
	To establish this sufficient condition, we need the following 
	\begin{lemma}\label{lem:overlapdistanceproperty}
		Let $\cX$ be a measure space. Let $0 < \varepsilon \le 1/2$ and $\phi\in L^2(\cX)$ with $\|\phi\|=1$ be arbitrary.
		For each $x\in\cX$, let $\Psi(x),\Psi'(x)\in\cH$ be normalized elements of a Hilbert space~$\cH$. Assume that
		\begin{align}
			\lvert 1 -  \langle \Psi'(x),\Psi(x)\rangle \rvert  \leq \varepsilon\qquad\textrm{ for all }\qquad x\in\cX\ .\label{eq:assumptionoverlapm}
		\end{align}
		Consider the states
		\begin{align}
			\ket{\Phi}&=\int_\cX dx\ \phi(x)\ket{x}\otimes\ket{\Psi(x)}\\
			\ket{\Phi'}&=\int_\cX dx\ \phi(x)\ket{x}\otimes\ket{\Psi'(x)}\ 
		\end{align}
		in the Hilbert space~$L^2(\cX, \cH)$. 
		Then
		\begin{align}
			|\langle \Phi,\Phi'\rangle|&\geq 1-2\varepsilon\ .
		\end{align}
	\end{lemma}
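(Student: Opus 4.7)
The plan is to compute $\langle \Phi, \Phi'\rangle$ directly and show that it lies close to $1$ in the complex plane. Since $\Phi, \Phi' \in L^2(\cX, \cH)$ are of the form $\int_\cX dx\, \phi(x)\ket{x}\otimes\ket{\Psi(x)}$ (resp.\ with $\Psi'$) and the $\{\ket{x}\}_{x\in\cX}$ are ``orthogonal'' across distinct $x$, the inner product factorizes into an integral of pointwise inner products:
\begin{align}
\langle \Phi, \Phi'\rangle = \int_\cX dx\, |\phi(x)|^2\, \langle \Psi(x), \Psi'(x)\rangle.
\end{align}

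The key observation is that the hypothesis $|1 - \langle \Psi'(x), \Psi(x)\rangle| \le \varepsilon$ transfers to the conjugated inner product: since $\langle \Psi(x), \Psi'(x)\rangle = \overline{\langle \Psi'(x), \Psi(x)\rangle}$ and complex conjugation preserves the absolute value of $1 - z$ (because $\overline{1-z} = 1 - \overline{z}$), we get $|1 - \langle \Psi(x), \Psi'(x)\rangle| \le \varepsilon$ for all $x \in \cX$. Using $\int_\cX dx\, |\phi(x)|^2 = \|\phi\|^2 = 1$ to rewrite $1$ as an integral and pulling the difference inside, the triangle inequality for integrals yields
\begin{align}
|1 - \langle \Phi, \Phi'\rangle| = \left|\int_\cX dx\, |\phi(x)|^2 \bigl(1 - \langle \Psi(x), \Psi'(x)\rangle\bigr)\right| \le \varepsilon \int_\cX dx\, |\phi(x)|^2 = \varepsilon.
\end{align}

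The reverse triangle inequality then gives $|\langle \Phi, \Phi'\rangle| \ge 1 - |1 - \langle \Phi, \Phi'\rangle| \ge 1 - \varepsilon$, which is (slightly stronger than) the claimed bound $1 - 2\varepsilon$. There is essentially no obstacle: the proof is a two-line calculation, and the only subtle point is the implicit measurability of $x \mapsto \langle \Psi(x), \Psi'(x)\rangle$, which is standard for vector-valued $L^2$ integration and does not require separate verification here.
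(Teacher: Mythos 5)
Your proof is correct, and it actually establishes the stronger bound $|\langle \Phi,\Phi'\rangle| \ge 1-\varepsilon$, which of course implies the stated $1-2\varepsilon$. The paper's own proof takes a slightly different and looser route: it separately extracts the real and imaginary parts of the pointwise overlap, observing from the hypothesis that $\mathrm{Re}\langle \Psi'(x),\Psi(x)\rangle \ge 1-\varepsilon$ and $|\mathrm{Im}\langle \Psi'(x),\Psi(x)\rangle| \le \varepsilon$, then lower-bounds $|\langle \Phi',\Phi\rangle|$ by bounding the integral of the real part from below and subtracting the integral of the absolute imaginary part, yielding $1-2\varepsilon$. Your approach keeps $1 - \langle \Psi(x),\Psi'(x)\rangle$ as a single complex quantity, uses $\int |\phi|^2 = 1$ to write $1 - \langle\Phi,\Phi'\rangle$ as a single integral, applies the integral triangle inequality once, and finishes with the reverse triangle inequality on $\mathbb{C}$. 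This avoids the lossy real/imaginary split and is both shorter and sharper. Since the lemma is only ever invoked with the $1-2\varepsilon$ conclusion, the extra factor is harmless in the paper, but your argument is the cleaner one.

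One small notational point worth being aware of: the paper effectively computes $|\langle\Phi',\Phi\rangle|$ with the pointwise factor $\langle\Psi'(x),\Psi(x)\rangle$, while you compute $|\langle\Phi,\Phi'\rangle|$ with $\langle\Psi(x),\Psi'(x)\rangle$; these are complex conjugates and hence have the same modulus, and your observation that conjugation preserves $|1-z|$ correctly transfers the hypothesis, so there is no discrepancy.
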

	\begin{proof}
		Eq.~\eqref{eq:assumptionoverlapm} implies the bounds
		\begin{align} \label{eq: bound Real Im part}
			\mathrm{Re}\langle \Psi'(x),\Psi(x)\rangle &\ge 1- \varepsilon \\
			\left| \mathrm{Im}\langle \Psi'(x),\Psi(x)\rangle\right|&\leq \varepsilon \ 
		\end{align}
		on the real- and imaginary parts of the overlap~$\langle \Psi'(x),\Psi(x)\rangle$ for $x\in\cX$.  Hence we can bound 
		\begin{align}
			\abs{\langle \Phi',\Phi\rangle}&= \abs{ \int dx \abs{\phi(x)}^2
				\langle \Psi'(x),\Psi(x)\rangle }\\
			&= \abs{\int dx \abs{\phi(x)}^2 \left(\mathrm{Re}
				\langle \Psi'(x),\Psi(x)\rangle
				+ i\/\mathrm{Im}\langle \Psi'(x),\Psi(x)\rangle \right) } \\
			&\geq \abs{ \int dx |\phi(x)|^2 \mathrm{Re}
				\langle \Psi'(x),\Psi(x)\rangle }   - \abs{ \int dx |\phi(x)|^2 \mathrm{Im}\langle \Psi'(x),\Psi(x)\rangle }  \\
			&\ge  \abs{ \int dx |\phi(x)|^2 \mathrm{Re}
				\langle \Psi'(x),\Psi(x)\rangle }   -  \int dx |\phi(x)|^2 \abs{\mathrm{Im}\langle \Psi'(x),\Psi(x)\rangle } \\
			& \geq 1 - 2\varepsilon \ ,
		\end{align}
		where the first inequality follows from the triangle inequality on $\mathbb{C}$  and the last inequality from~\eqref{eq: bound Real Im part}.
	\end{proof}
	The following lemma gives the desired sufficient condition. For ease of later application, we give a ``stable'' version including an additional system~$L^2(\mathbb{R}^\ell)$ where the function acts as the identity.
	\begin{lemma} \label{lem: approximate function discrete}
		Let $0 < \varepsilon \le 1/2$, let $A \subseteq \mathbb{R}^m$ be a closed 
		set and $f: A\times \{0,1\} \rightarrow  \mathbb{R}^m \times\{0, 1\}$ be a function of the form $f = \mathsf{id} \times (h \oplus \mathsf{id})$, i.e., $f(x,b) = (x, h(x) \oplus b)$. Let $\{\ket{\Omega(x,b)}\}_{(x,b)\in A\times\{0,1\}}\subseteq \mathbb{C}^2$ be a family of qubit states such that
		\begin{align} 
			|1 - \langle h(x) \oplus b, \Omega(x,b)\rangle| \le \varepsilon  \quad \textrm{for all } \quad (x,b) \in A\times \{0,1\}\ . \label{eq: approx function h assumption}
		\end{align}
		Let $U$ be a unitary acting on $L^2(\mathbb{R}^m)\otimes \mathbb{C}^2$ as 
		\begin{align}
			U\left(\ket{x}\otimes\ket{b}\right) = \ket{x} \otimes\ket{\Omega(x,b)}\qquad\textrm{ for }\qquad (x,b)\in A\times \{0,1\}\ .
		\end{align} Let $\ell \in \mathbb{N}_0$ be arbitrary.  
		Then the unitary 
		$I_{L^2(\mathbb{R}^\ell)}\otimes U$
		$(6\varepsilon^{1/4})$-approximately computes the function
		\begin{align}
			\begin{matrix}
				\id\times f:& \mathbb{R}^\ell\times A\times \{0,1\} & \rightarrow & \mathbb{R}^\ell \times \mathbb{R}^m\times \{0,1\}\\
				& (w,x,b) & \mapsto & (w,f(x,b))=(w,x,h(x)\oplus b)
			\end{matrix}
		\end{align}
		on the set~$\mathbb{R}^\ell\times A\times \{0,1\}$.
	\end{lemma}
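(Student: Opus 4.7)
The plan is to reduce the claim directly to the pointwise overlap bound of Lemma~\ref{lem:overlapdistanceproperty} after writing both states in a common basis. Pick an arbitrary $\Psi\in L^2(\mathbb{R}^\ell\times A\times\{0,1\})$ with $\|\Psi\|=1$, and expand it as
\begin{align}
\ket{\Psi}=\sum_{b\in\{0,1\}}\int_{\mathbb{R}^\ell\times A} dw\, dx\ \Psi(w,x,b)\,\ket{w}\otimes\ket{x}\otimes\ket{b}\ .
\end{align}
Because the function~$\id\times f$ acts trivially on all positions (its spatial part is the identity, so every Jacobian factor in~\eqref{Eq: pointwise normalization U_f} is~$1$), a direct computation gives
\begin{align}
(I_{L^2(\mathbb{R}^\ell)}\otimes U)\ket{\Psi}&=\sum_{b}\int dw\,dx\ \Psi(w,x,b)\,\ket{w}\otimes\ket{x}\otimes\ket{\Omega(x,b)}\ ,\\
U_{\id\times f}\ket{\Psi}&=\sum_{b}\int dw\,dx\ \Psi(w,x,b)\,\ket{w}\otimes\ket{x}\otimes\ket{h(x)\oplus b}\ .
\end{align}

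Next I would apply Lemma~\ref{lem:overlapdistanceproperty} with the measure space $\cX=\mathbb{R}^\ell\times A\times\{0,1\}$ (equipped with Lebesgue $\times$ counting measure), the scalar profile $\phi(w,x,b)=\Psi(w,x,b)$, and the normalized qubit states $\ket{\Psi(w,x,b)}:=\ket{\Omega(x,b)}$ and $\ket{\Psi'(w,x,b)}:=\ket{h(x)\oplus b}$. The hypothesis~\eqref{eq: approx function h assumption} is exactly the assumption~\eqref{eq:assumptionoverlapm} of that lemma (the $w$-independence of the qubit states is harmless). The lemma therefore yields
\begin{align}
\left|\langle (I\otimes U)\Psi,\, U_{\id\times f}\Psi\rangle\right|\ \ge\ 1-2\varepsilon\ .
\end{align}

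Converting this overlap bound into an $L^1$-distance via the standard identity~\eqref{eq: trace distance - fidelity} gives
\begin{align}
\|(I\otimes U)\Psi-U_{\id\times f}\Psi\|_1\ =\ 2\sqrt{1-|\langle\cdot,\cdot\rangle|^2}\ \le\ 2\sqrt{1-(1-2\varepsilon)^2}\ \le\ 4\sqrt{\varepsilon}\ .
\end{align}
Finally, for the stated regime $\varepsilon\in(0,1/2]$ one checks the elementary inequality $4\sqrt{\varepsilon}\le 6\varepsilon^{1/4}$ (equivalent to $\varepsilon^{1/4}\le 3/2$, which holds since $(1/2)^{1/4}<3/2$), giving the announced bound~$6\varepsilon^{1/4}$, uniformly in~$\Psi$. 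Since the argument is uniform in the choice of $\Psi\in L^2(\mathbb{R}^\ell\times A\times\{0,1\})$, this verifies the definition of $(6\varepsilon^{1/4})$-approximate computation of $\id\times f$ on~$\mathbb{R}^\ell\times A\times\{0,1\}$.

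There is no serious obstacle in this proof; the only thing to be careful about is the bookkeeping of the isomorphism $L^2(\mathbb{R}^\ell)\otimes L^2(\mathbb{R}^m)\otimes\mathbb{C}^2\cong L^2(\mathbb{R}^\ell\times A\times\{0,1\})$ and the observation that since the spatial part of $\id\times f$ is the identity, the Jacobian prefactor in~\eqref{Eq: pointwise normalization U_f} is trivial—otherwise the two states above would not share a common $(w,x)$-profile and Lemma~\ref{lem:overlapdistanceproperty} could not be applied pointwise. The weakened form $6\varepsilon^{1/4}$ (instead of $4\sqrt\varepsilon$) is presumably chosen for uniformity with the continuous-variable cases treated later, where the natural decay produces a quartic-root loss.
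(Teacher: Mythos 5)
Your application of Lemma~\ref{lem:overlapdistanceproperty} is not valid here, and this leaves a genuine gap. When you take $\cX = \mathbb{R}^\ell\times A\times\{0,1\}$, the states $\Phi,\Phi'$ appearing in that lemma live in $L^2(\cX)\otimes\cH \cong L^2(\mathbb{R}^\ell)\otimes L^2(\mathbb{R}^m)\otimes\mathbb{C}^2\otimes\mathbb{C}^2$: the label $b$ is encoded \emph{twice}, once as an orthonormal basis vector of $L^2(\cX)$ and once implicitly inside the auxiliary qubit factor $\cH$. The physical states $(I\otimes U)\Psi$ and $U_{\id\times f}\Psi$, by contrast, live in $L^2(\mathbb{R}^\ell)\otimes L^2(\mathbb{R}^m)\otimes\mathbb{C}^2$ with a single qubit, and in their overlap the summation index $b$ does \emph{not} enforce a Kronecker delta. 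Concretely,
\begin{align}
\langle U_{\id\times f}\Psi,\,(I\otimes U)\Psi\rangle \;=\; \sum_{b,b'}\int dw\,dx\;\overline{\Psi(w,x,b')}\,\Psi(w,x,b)\,\langle h(x)\oplus b',\,\Omega(x,b)\rangle\,,
\end{align}
whereas what Lemma~\ref{lem:overlapdistanceproperty} controls (after inserting the auxiliary label $\ket{b}$) is only the diagonal part $b=b'$, which is precisely the quantity $I_1$ in the paper's notation. Your deduction that the lemma ``therefore yields'' $|\langle(I\otimes U)\Psi,U_{\id\times f}\Psi\rangle|\ge 1-2\varepsilon$ conflates $I_1$ with the full overlap.

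The missing piece is the off-diagonal cross term $I_2$ (the $b\neq b'$ part). Hypothesis~\eqref{eq: approx function h assumption} gives no direct control of $\langle h(x)\oplus b\oplus 1,\Omega(x,b)\rangle$; one must first use the normalization of $\ket{\Omega(x,b)}\in\mathbb{C}^2$ to derive $|\langle h(x)\oplus b\oplus 1,\Omega(x,b)\rangle|\le\sqrt{2\varepsilon}$, and then a Cauchy--Schwarz argument over $b$ and $x$ to get $|I_2|\le\sqrt{2\varepsilon}$. This is exactly what the paper does, and it yields the overlap bound $\ge 1-(2\varepsilon+\sqrt{2\varepsilon})$ rather than your $1-2\varepsilon$. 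The $\sqrt{\varepsilon}$ contribution from $I_2$ dominates for small $\varepsilon$ and, after converting overlap to $L^1$-distance, produces a distance of order $\varepsilon^{1/4}$. So your concluding guess that the stated $6\varepsilon^{1/4}$ is a ``weakened form \dots for uniformity with the continuous-variable cases'' is a misdiagnosis: the quartic-root scaling is forced already at this step by the off-diagonal contribution you dropped, and your intermediate claim of a $4\sqrt{\varepsilon}$ bound is not established by your argument.
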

	
	\begin{proof}
		We first consider the case $\ell=0$.
		Let $\Psi$ be a state with $\supp(\Psi) \subseteq A\times \{0,1\}$. Then  we have 
		\begin{align}
			U_f\ket{\Psi}&=\sum_{b\in \{0,1\}}\int_{\mathbb{R}^m} dx \Psi(x,b)\ket{x}\otimes \ket{h(x)\oplus b}\\
			U\ket{\Psi}&=\sum_{b\in \{0,1\}}\int_{\mathbb{R}^m} dx \Psi(x,b)\ket{x}\otimes\ket{\Omega(x,b)}
		\end{align}
		by definition. It follows that
		\begin{align}
			\langle U_f \Psi, U\Psi\rangle
			&=\sum_{b,b'\in \{0,1\}} \int_{\mathbb{R}^m} \overline{\Psi(x,b')}\Psi(x,b) \langle h(x)\oplus b', \Omega(x,b)\rangle\ .
		\end{align}
		Separating the pairs~$(b,b')\in \{0,1\}^2$ into those where $b=b'$ and those where~$b\neq b'=b\oplus 1$, we obtain
		\begin{align}
			\langle U_f \Psi, U\Psi\rangle&=I_1+I_2\label{eq:ufionetwo}
		\end{align}
		where
		\begin{align}
			I_1&:=\sum_{b\in \{0,1\}} \int_{\mathbb{R}^m} dx \, \abs{\Psi(x,b)}^2 \langle h(x) \oplus b, \Omega(x,b)\rangle\\
			I_2&:=\sum_{b\in \{0,1\}} \int_{\mathbb{R}^m} dx \, \overline{\Psi(x,b\oplus 1)}\Psi(x, b) \langle h(x) \oplus b \oplus 1, \Omega(x, b)\rangle\ .
		\end{align}
		We can rewrite the term~$I_1$ as an inner product
		\begin{align}
			I_1 = \left\langle \sum_{b\in \{0,1\}} \int_{\mathbb{R}^m} dx \Psi(x,b)\ket{x}\otimes\ket{b}\otimes\ket{h(x)\oplus b}, \sum_{b'\in \{0,1\}} \int_{\mathbb{R}^m} dx' \Psi(x',b')\ket{x'}\otimes\ket{b'}\otimes\ket{\Omega(x', b')}\right\rangle \ .\notag
		\end{align}
		By Lemma~\ref{lem:overlapdistanceproperty} setting $\mathcal{X}= \mathbb{R}^m \times \{0,1\}$, we obtain (by the assumption~\eqref{eq: approx function h assumption}) that
		\begin{align} \label{eq: bound I_1}
			|I_1| \ge 1 - 2 \varepsilon \ .
		\end{align}
		Note that because~$\ket{\Omega(x,b)}\in \mathbb{C}^2$ has unit norm, we have
		\begin{align}
			\left|\langle  h(x)\oplus b,  \Omega(x,b)\rangle \right|^2 +  \left|\langle  h(x)\oplus b \oplus 1, \Omega(x,b) \rangle\right|^2 = 1 \quad \textrm{ for all } \quad (x,b) \in A\times \{0,1\}\ .
		\end{align}
		Since $|1 - \langle h(x) \oplus b, \Omega(x,b)\rangle|\le \varepsilon$ that implies $\left|\langle \Omega(x,b), h(x)\oplus b\rangle\right|^2 \ge 1 - 2\varepsilon$, we have
		\begin{align}
			\left|\langle h(x)\oplus b \oplus 1,  \Omega(x,b)  \rangle\right| \le \sqrt{2 \varepsilon}\qquad\textrm{ for all }\qquad (x,b)\in A\times \{0,1\}\ .
		\end{align}
		Therefore we obtain
		\begin{align}
			|I_2| \le \sqrt{2 \varepsilon} \left(\sum_{b\in \{0,1\}} \int_{\mathbb{R}^m} dx \, \abs{\Psi(x,b\oplus 1)}\cdot\abs{\Psi(x, b)} \right) \ .
		\end{align}
		By the Cauchy-Schwarz inequality applied first to $L^2(\mathbb{R}^m)$ and then to $\mathbb{R}^2$, we can further bound
		\begin{align}
			\sum_{b\in \{0,1\}} \int_{\mathbb{R}^m} dx \, \abs{\Psi(x,b\oplus 1)}\cdot\abs{\Psi(x, b)} &\le  \sum_{b\in \{0,1\}} \left( \int_{\mathbb{R}^m} dx \,\abs{\Psi(x,b\oplus 1)}^2\right)^{1/2} \left( \int_{\mathbb{R}^m} dx \,\abs{\Psi(x,b)}^2\right)^{1/2} \notag\\
			&= \sum_{b\in \{0,1\}} \lVert \Psi(\cdot,b\oplus 1)\rVert_{L^2(\mathbb{R}^m)} \lVert \Psi( \cdot,b)\rVert_{L^2(\mathbb{R}^m)}\\
			&\le  \left(\sum_{b\in \{0,1\}} \lVert \Psi(\cdot,b)\rVert_{L^2(\mathbb{R}^m)}^2\right)^{1/2} \left(\sum_{b\in \{0,1\}} \lVert \Psi(\cdot,b\oplus 1)\rVert_{L^2(\mathbb{R}^m)}^2\right)^{1/2} \notag \\
			&= \sum_{b\in \{0,1\}} \lVert \Psi(\cdot,b)\rVert_{L^2(\mathbb{R}^m)}^2\\
			&= \|\Psi\|^2 = 1\ .
		\end{align}
		The last step uses the fact that $\Psi$ is normalized.
		Thus
		\begin{align}
			|I_2| \le \sqrt{2 \varepsilon}\ .\label{eq: bound I_2}
		\end{align}
		It follows from~\eqref{eq:ufionetwo}
		combined with~\eqref{eq: bound I_1} and~\eqref{eq: bound I_2} that 
		\begin{align}
			\left|\langle U_f \Psi, U\Psi\rangle\right| &\geq |I_1|-|I_2|\\
			&\geq 1 - (2\varepsilon + \sqrt{2\varepsilon} )\ .
		\end{align}
		The relation between the overlap and the $L^1$-norm distance stated in Eq.~\eqref{eq: trace distance - fidelity} and the bound $\varepsilon \le \sqrt{\varepsilon}$ for $0< \varepsilon \le 1/2$ yield
		\begin{align}
			\left\lVert U_f\Psi  -  U \Psi \right\rVert_1 \le 2 \sqrt{2(2\varepsilon + \sqrt{2\varepsilon})} < 6 \varepsilon^{1/4} \ .
		\end{align}
		This proves the claim for~$\ell=0$. 
		
		Finally, we prove the ``stable'' version. Let $\ell>0$. Let us define the function
		\begin{align}
			\begin{matrix}
				g: & \mathbb{R}^\ell \times A\times \{0,1\} & \rightarrow & \mathbb{R}^\ell \times \mathbb{R}^m\times \{0,1\}\\
				& (w,x,b) &\mapsto & g(w,x,b)=(w,x,h'(w,x)\oplus b)
			\end{matrix}
		\end{align}
		where $h'(w,x) = h(x)$. The unitary $I_{L^2(\mathbb{R}^\ell)} \otimes U$ acts as $(I_{L^2(\mathbb{R}^\ell)}\otimes U)\left(\ket{w}\otimes\ket{x}\otimes\ket{b}\right) = \ket{w}\otimes\ket{x}\otimes\ket{\Omega'(w,x,b)}$, where $\ket{\Omega'(w,x,b)} := \ket{\Omega(x,b)}$. Setting $A' := \mathbb{R}^\ell \times A$, we have 
		\begin{align}
			|1 - \langle h'(w,x)\oplus b, \Omega'(w,x,b)\rangle|\le\varepsilon \qquad\textrm{ for all }\qquad (w,x,b)\in A'\times \{0,1\}\ ,
		\end{align} Thus the argument for~$\ell>0$ reduces  to the one for the case $\ell=0$.
	\end{proof}

	\subsection{Approximate function evaluation by various building blocks}\label{sec: Aprox Analysis}
	The analysis of the circuit~$\cQ_{a,N}$ 
	relies on our framework for approximate function evaluation: We repeatedly apply Lemma~\ref{lem: composition closeness}
	to evaluate how certain composed unitaries act on approximate position-eigenstates.

	\subsubsection{Approximate evaluation of the least significant bit extraction}\label{sec:approx analysis LSB}
	Consider the following unitary $U^{\lsb}_{\bosonA\to\qubit}$ acting on a single bosonic mode~$\bosonA$ and a qubit~$\qubit$.
	\begin{align}
		\includegraphics{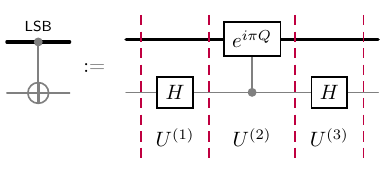}
	\end{align}
	This unitary is decomposed as
	\begin{align}
		U^{\lsb}_{\bosonA\to\qubit}=U^{(3)}U^{(2)}U^{(1)}\qquad\textrm{ where }
		\qquad
		\begin{matrix*}[l]U^{(1)}&=& I_\bosonA\otimes H_\qubit\\
			U^{(2)}&=& \mathsf{ctrl}_\qubit e^{i\pi Q_\bosonA} \\
			U^{(3)}&=& I_\bosonA\otimes H_\qubit \ .
		\end{matrix*} \label{eq: U LSB}
	\end{align}
	
	In Section~\ref{sec: ideal LSB} we argued  that when applied to a position-eigenstate~$\ket{x}$ with an integer position $x\in\mathbb{Z}$ and  a qubit in a computational basis state $\{\ket{0},\ket{1}\}$, the unitary~$U^{\mathsf{LSB}}_{\bosonA\to \qubit}$ computes the LSB of~$x$ into the qubit. That is, it acts as
	\begin{align}
		U^{\mathsf{LSB}}_{\bosonA\to\qubit}\left(\ket{x}_{\bosonA}\otimes\ket{b}_\qubit\right)&=\ket{x}_{\bosonA}\otimes\ket{x_0\oplus b}_\qubit \qquad\textrm{ for all }\qquad x\in\mathbb{Z}\textrm{ and } b\in \{0,1\}\ ,
	\end{align}
	where we used $x_0=x \xmod 2$ to denote the least significant bit of the integer~$x\in\mathbb{Z}$.

	We now consider the non-ideal case, where the input state is a position-eigenstate~$\ket{x}$ such that $x\in\mathbb{R}$ is not necessarily an integer, but is close to an integer up to an error~$\varepsilon<1/2$,
	that is, 
	$\min_{y\in\mathbb{Z}} |x-y| \le \varepsilon$.
	
	We will show that $U^{\mathsf{LSB}}_{\bosonA\to\qubit}$ approximately computes
	the LSB~$\round{x}_{0}$ of the rounded number~$\round{x}\in\mathbb{Z}$.
	Here we denote by $\round{x}\in\mathbb{Z}$ the closest integer to~$x\in\mathbb{R}$ (for $x=x'+1/2$, where $x'\in\mathbb{Z}$, we set $\round{x}=x'+1$).  Recall that for $\varepsilon>0$ and a set $M\subseteq\mathbb{Z}$, we define the set
	\begin{align}
		M (\varepsilon):=\{x\in\mathbb{R}\ |\ \exists y \in M: |x-y|\leq \varepsilon\}\ ,
	\end{align}
	which is  the Minkowski sum $M(\varepsilon)=M+[-\varepsilon,\varepsilon]$ of the set $M$ and the interval~$[-\varepsilon,\varepsilon]$. 
	
	The key property of the unitary $U^{\mathsf{LSB}}_{\bosonA\to\qubit}$ we use is the following.
	\begin{lemma}\label{lemma:lsb pointwise closeness}
		Let $U^{\lsb}_{\bosonA\to\qubit}$ be defined by Eq.~\eqref{eq: U LSB}. 
		Let $(x,b)\in\mathbb{Z}(\varepsilon)\times \{0,1\}$ be arbitrary. Then there  is a state~$\ket{\Omega(x,b)}\in \mathbb{C}^2$ such that 
		\begin{align}
			U^{\mathsf{LSB}}_{\bosonA\to\qubit}\left(\ket{x}_\bosonA \otimes
			\ket{b}_{\qubit}
			\right) = \ket{x}_\bosonA \otimes \ket{\Omega(x,b)}_\qubit\qquad\textrm{ and }\qquad 
			|1 - \langle \round{x}_{0}\oplus b, \Omega(x,b)\rangle| \le \frac{\varepsilon\pi}{2}\ .
		\end{align}
		Similarly, there is a state~$\ket{\Omega'(x,b)}$ such that \begin{align}      \left(U^{\mathsf{LSB}}_{\bosonA\to\qubit}\right)^\dagger\left(\ket{x}_\bosonA \otimes       \ket{b}_{\qubit}       \right) = \ket{x}_\bosonA \otimes \ket{\Omega'(x,b)}_\qubit  \qquad\textrm{ and }\qquad               |1 - \langle \round{x}_{0}\oplus b, \Omega'(x,b)\rangle| \le \frac{\varepsilon\pi}{2} . \end{align} 
	\end{lemma}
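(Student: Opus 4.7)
The plan is a direct end-to-end computation of $U^{\lsb}_{\bosonA\to\qubit}(\ket{x}\otimes\ket{b})$ through the three-gate decomposition $U^{(3)}U^{(2)}U^{(1)}$, followed by a simple trigonometric bound. The crucial structural observation is that $U^{(2)}=\mathsf{ctrl}_\qubit e^{i\pi Q_\bosonA}$ is diagonal in the position basis on mode~$\bosonA$, and both Hadamards act only on the qubit; hence for any eigenstate $\ket{x}$ of $Q_\bosonA$, the mode-$\bosonA$ factor is preserved throughout the circuit. This automatically yields the product form $\ket{x}_\bosonA\otimes\ket{\Omega(x,b)}_\qubit$ asserted in the lemma, and reduces the whole statement to an estimate on the qubit part.

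I would then compute the qubit state explicitly. Applying $U^{(1)}$ gives $\tfrac{1}{\sqrt2}(\ket{0}+(-1)^b\ket{1})$, the controlled $e^{i\pi Q_\bosonA}$ multiplies the $\ket{1}$-component by $e^{i\pi x}$, and the second Hadamard produces
\begin{align}
\ket{\Omega(x,b)}=\tfrac{1}{2}\bigl((1+(-1)^b e^{i\pi x})\ket{0}+(1-(-1)^b e^{i\pi x})\ket{1}\bigr).
\end{align}
Writing $x=\round{x}+\delta$ with $|\delta|\le\varepsilon$ and using $e^{i\pi x}=(-1)^{\round{x}_0}e^{i\pi\delta}$, a case split on the value of $\round{x}_0\oplus b\in\{0,1\}$ shows that in both cases
\begin{align}
\langle \round{x}_0\oplus b,\Omega(x,b)\rangle=\tfrac{1+e^{i\pi\delta}}{2}=e^{i\pi\delta/2}\cos(\pi\delta/2).
\end{align}

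The final step is the elementary identity $1-e^{i\pi\delta/2}\cos(\pi\delta/2)=\sin(\pi\delta/2)\bigl(\sin(\pi\delta/2)-i\cos(\pi\delta/2)\bigr)$, whose modulus equals $|\sin(\pi\delta/2)|\le \pi|\delta|/2\le \pi\varepsilon/2$. This gives the required bound. The statement for $(U^{\lsb}_{\bosonA\to\qubit})^\dagger$ is obtained by the same computation with $e^{i\pi x}$ replaced by $e^{-i\pi x}$; the resulting inner product becomes its complex conjugate $e^{-i\pi\delta/2}\cos(\pi\delta/2)$, which has identical absolute deviation from $1$. There is no real obstacle here: the only thing to be careful about is that the same $\Omega(x,b)$ works for both assignments $\round{x}_0\oplus b=0$ and $\round{x}_0\oplus b=1$, which is precisely what the case-split verifies, and that the constant $\pi/2$ is tight enough for the downstream application in Lemma~\ref{lem: approximate function discrete}.
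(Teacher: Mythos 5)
Your proof is correct and takes essentially the same approach as the paper: same three-gate decomposition, same observation that $\mathsf{ctrl}_\qubit e^{i\pi Q_\bosonA}$ is diagonal in the position basis on mode~$\bosonA$ (which yields the product form), the same explicit computation of $\ket{\Omega(x,b)}$, and the same final bound $|\sin(\pi\delta/2)|\le \pi\varepsilon/2$. The only cosmetic differences are that you obtain the overlap $\tfrac{1+e^{i\pi\delta}}{2}$ by a case split on $\round{x}_0\oplus b$, while the paper instead expands $\ket{\round{x}_0\oplus b}$ algebraically and uses the parity identity $(-1)^{\round{x}-\round{x}_0}=1$; both routes amount to the same cancellation.
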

	Lemma~\ref{lemma:lsb pointwise closeness}
	shows that  the unitary~$U^{\mathsf{LSB}}_{\bosonA\to\qubit}$ satisfies the sufficient condition for approximate function evaluation given in Lemma~\ref{lem: approximate function discrete}. We can therefore conclude that $U^{\mathsf{LSB}}_{\bosonA\to\qubit}$ computes the LSB of the position $x$ of mode~$\bosonA$ even if $x$ is only $\varepsilon$-close to an integer, for $\varepsilon\in [0,1/2)$. We argue that this is the case even for a ``stabilized'' version of~$U^{\mathsf{LSB}}_{\bosonA\to\qubit}$, i.e., when additional modes are involved, see Corollary~\ref{cor:LSBtracedist}.

	\begin{proof} By definition, $U^{\lsb}_{\bosonA\to\qubit}$ consists of three unitaries $U^{\lsb}_{\bosonA\to\qubit}=U^{(3)}U^{(2)}U^{(1)}$ (see Eq.~\eqref{eq: U LSB}). We apply this  to the state $\ket{x}_\bosonA \otimes\ket{b}_{\qubit}$. 
		After applying the first unitary $U^{(1)} = I_\bosonA \otimes H_\qubit$ (i.e., a Hadamard on the qubit) we have
		\begin{align}
			U^{(1)}\left(\ket{x}_\bosonA \otimes\ket{b}_{\qubit} \right) &= \ket{x}_\bosonA \otimes\left(H_\qubit \ket{b}_{\qubit}\right)\\
			&= \ket{x}_\bosonA \otimes\frac{1}{\sqrt{2}}\left(\ket{0}_\qubit+e^{-i\pi b}\ket{1}_\qubit\right)\ .
		\end{align}
		Second, we apply $U^{(2)} = I_\boson \otimes \proj{0}_\qubit + e^{i\pi Q_\bosonA} \otimes \proj{1}_\qubit$ and get
		\begin{align}
			U^{(2)} U^{(1)} \left(\ket{x}_\bosonA \otimes\ket{b}_{\qubit} \right) = \ket{x}_\bosonA \otimes \frac{1}{\sqrt{2}}\left(\ket{0}_\qubit +  e^{i\pi(x-b)} \ket{1}_\qubit\right)\ .
		\end{align}
		Finally, application of the unitary $U^{(3)}= I_\bosonA \otimes H_\qubit $  leads to 
		\begin{align}
			U^{(3)} U^{(2)} U^{(1)}\left(\ket{x}_\bosonA  \otimes \ket{b}_{\qubit} \right) &= \ket{x}_\bosonA \otimes\frac{1}{\sqrt{2}}\left( \ket{+}_\qubit + e^{i\pi(x-b)} \ket{-}_\qubit\right) \\
			&= \ket{x}_\bosonA \otimes \frac{1}{2}\left( \left(1 + e^{i\pi(x-b)}\right) \ket{0}_\qubit +  \left(1 - e^{i\pi(x-b)}\right) \ket{1}_\qubit \right)\\
			&= \ket{x}_\bosonA \otimes \ket{\Omega(x, b)}_\qubit \ \label{eq: U LSB output state}
		\end{align}
		where we introduced the single-qubit state
		\begin{align}
			\ket{\Omega(x,b)} = \frac{1}{2}\left( \left(1 + e^{i\pi(x-b)}\right) \ket{0}_\qubit +  \left(1 - e^{i\pi (x-b)}\right) \ket{1}_\qubit   \right)\ .
		\end{align}
		Since we can write
		\begin{align}
			\ket{\round{x}_{0} \oplus b} = \frac{1}{2}\left((1+e^{i\pi(\round{x}_{0} + b)})\ket{0}+(1-e^{i\pi(\round{x}_{0} + b)})\ket{1}\right) \ , \label{eq: qubit x0 oplus b}
		\end{align}
		the overlap of the state~$\ket{\Omega(x,b)}$ and the state~$\ket{\round{x}_{0} \oplus b}$ is
		\begin{align}
			\langle \round{x}_{0} \oplus b, \Omega(x, b)\rangle& = \frac{1}{2}\left(1 + e^{i\pi (x-\round{x}_{0}-2b)}\right) \\
			& = \frac{1}{2}\left(1 + e^{i\pi (x-\round{x}_{0})}\right) \\
			& = \frac{1}{2}\left(1 + e^{i\pi (x-\round{x}_{0})}\cdot e^{-i\pi (\round{x}-\round{x}_{0})}\right) \\
			& = \frac{1}{2}\left(1 + e^{i\pi \dev(x)}\right)\ ,
		\end{align}
		where we used that $\round{x}-\round{x}_0$ is even to obtain the penultimate step and where $\dev(x) := x - \round{x}$ is the deviation from the closest integer. 
		We are interested in $(x,b)\in \mathbb{Z}(\varepsilon)\times \{0,1\}$. By definition $\abs{\dev(x)} \le \varepsilon$ for $x\in \mathbb{Z}(\varepsilon)$ thus we have
		\begin{align}
			\abs{1 - \langle \round{x}_{0} \oplus b, \Omega(x, b)\rangle} &= \abs{\frac{1 - e^{i\pi\dev(x)}}{2}} = \abs{\sin \frac{\pi\dev(x)}{2}} \le \abs{\sin \frac{\pi\varepsilon}{2}} \le \frac{\pi\varepsilon}{2}\ ,\label{eq:chaintein}
		\end{align}
		where we used the identity $\abs{\frac{1-e^{i x}}{2}}=\abs{\sin \frac{x}{2} }$, the fact that $x \mapsto \sin x$ is odd on $\mathbb{R}$ and monotonically increasing on $[0,\pi/2]$, and $\abs{\sin x}\le \abs{x}$ for all $x\in\mathbb{R}$. This concludes the proof of the claim for the unitary~$U^{\lsb}_{\bosonA\rightarrow\qubit}$.
		
		Now consider the adjoint $(U^{\lsb}_{\bosonA\to\qubit})^\dagger=(U^{(1)})^\dagger(U^{(2)})^{\dagger}(U^{(3)})^\dagger$. 
		By the same line of reasoning as for $U^{\lsb}_{\bosonA\to\qubit}$ we obtain
		\begin{align}
			(U^{(3)})^\dagger(U^{(2)})^{\dagger}(U^{(1)})^\dagger\left(\ket{x}_\bosonA  \otimes \ket{b}_{\qubit} \right) 
			&= \ket{x}_\bosonA \otimes \frac{1}{2}\left( \left(1 + e^{-i\pi(x+b)}\right) \ket{0}_\qubit +  \left(1 - e^{-i\pi(x+b)}\right) \ket{1}_\qubit \right)\notag \\
			&= \ket{x}_\bosonA \otimes \ket{\Omega'(x, b)}_\qubit \ ,\label{eq: U LSB dagger output state}
		\end{align}
		where
		\begin{align}
			\ket{\Omega'(x, b)}&=\frac{1}{2}\left( \left(1 + e^{-i\pi(x+b)}\right) \ket{0}_\qubit +  \left(1 - e^{-i\pi(x+b)}\right) \ket{1}_\qubit \right)\ .
		\end{align}
		By~\eqref{eq: qubit x0 oplus b} the overlap satisfies
		\begin{align}
			\langle \round{x}_{0} \oplus b, \Omega'(x, b)\rangle        & = \frac{1}{2}\left(1 + e^{-i\pi(x+\round{x}_{0})}\right) \\
			& = \frac{1}{2}\left(1 + e^{-i\pi(x+\round{x}_{0})}\cdot e^{i\pi (\round{x}+\round{x}_0)}\right) \\
			& = \frac{1}{2}\left(1 + e^{-i\pi \dev(x)}\right)\ ,
		\end{align}
		where we obtained the second equality by noting that $\round{x}+\round{x}_0$ is even for $x\in\mathbb{R}$. It follows (as in Eq.~\eqref{eq:chaintein}) that $\abs{1 - \langle \round{x}_{0} \oplus b, \Omega'(x, b)\rangle} \le \frac{\varepsilon\pi}{2}$ for $(x,b)\in \mathbb{Z}(\varepsilon)\times \{0,1\}$. 
	\end{proof}

	\begin{corollary}\label{cor:LSBtracedist}
		Let $\varepsilon \in (0,1/2)$ and let $\ell\in \mathbb{N}_0$. Consider the unitary~$U = I_\bosonA\otimes U^{\mathsf{LSB}}_{\bosonB\to\qubit}$, where~$\bosonA$ denotes an $\ell$-mode system,~$\bosonB$ is a single mode, and~$\qubit$ denotes a qubit system. Then 
		\begin{center}
			\includegraphics{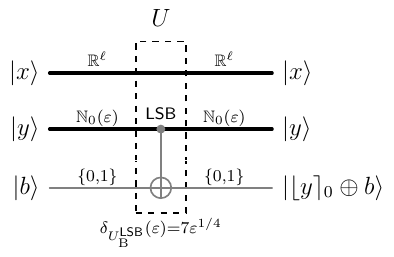}
		\end{center}
		in the diagrammatic notation (cf.~\eqref{eq: diagramatic notation}). 
		Furthermore, the unitary~$U^\dagger = I_\bosonA\otimes (U^{\mathsf{LSB}}_{\bosonB\to\qubit})^\dagger$ satisfies
		\begin{center}
			\includegraphics{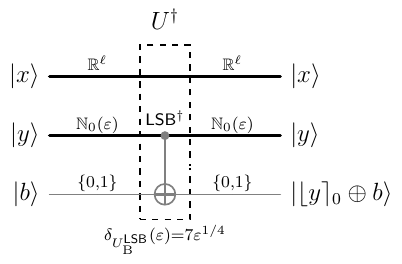}
		\end{center}
	\end{corollary}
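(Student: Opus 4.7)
The plan is to derive Corollary~\ref{cor:LSBtracedist} as a direct consequence of Lemma~\ref{lemma:lsb pointwise closeness} combined with the general sufficient condition for approximate function evaluation established in Lemma~\ref{lem: approximate function discrete}. Specifically, I would identify the function being approximately computed as $f = \mathsf{id}_{\mathbb{R}^\ell} \times \mathsf{id}_{\mathbb{R}} \times (h \oplus \mathsf{id}_{\{0,1\}})$ on the set $\mathbb{R}^\ell \times \mathbb{Z}(\varepsilon) \times \{0,1\}$, where $h: \mathbb{Z}(\varepsilon) \to \{0,1\}$ is $h(x) = \round{x}_0$, i.e., the LSB of the nearest integer. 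The unitary $U = I_\bosonA \otimes U^{\mathsf{LSB}}_{\bosonB \to \qubit}$ acts trivially on the $\ell$-mode system $\bosonA$, so it fits the ``stable'' variant of Lemma~\ref{lem: approximate function discrete} with $m=1$ and the auxiliary factor $\mathbb{R}^\ell$ playing the role of the unchanged system.

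First, I would invoke Lemma~\ref{lemma:lsb pointwise closeness} to conclude that for every $(x,b) \in \mathbb{Z}(\varepsilon) \times \{0,1\}$, the action of $U^{\mathsf{LSB}}_{\bosonB \to \qubit}$ on $\ket{x}_\bosonB \otimes \ket{b}_\qubit$ produces a state of the form $\ket{x}_\bosonB \otimes \ket{\Omega(x,b)}_\qubit$ with the pointwise bound
\begin{equation}
|1 - \langle \round{x}_0 \oplus b, \Omega(x,b) \rangle| \leq \frac{\varepsilon\pi}{2}\ .
\end{equation}
Since $U$ merely extends this by the identity on mode $\bosonA$, the same pointwise bound holds with $\Omega$ reinterpreted as independent of the $\bosonA$-coordinate $w \in \mathbb{R}^\ell$. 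Provided $\varepsilon \leq 1/\pi$ (so that $\varepsilon\pi/2 \leq 1/2$), Lemma~\ref{lem: approximate function discrete} then applies and yields that $U$ computes $f$ on $\mathbb{R}^\ell \times \mathbb{Z}(\varepsilon) \times \{0,1\}$ with error bound $6(\varepsilon\pi/2)^{1/4}$.

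For the adjoint $U^\dagger$, I would repeat the same argument using the second part of Lemma~\ref{lemma:lsb pointwise closeness}, which provides an analogous state $\Omega'(x,b)$ with the same pointwise overlap bound $\varepsilon\pi/2$. The same invocation of Lemma~\ref{lem: approximate function discrete} then produces the identical error bound for $U^\dagger$ approximately computing the same function $f$ (since the ideal LSB extraction is self-inverse up to the qubit XOR structure). I do not anticipate a significant obstacle: the pointwise analysis has already been done in Lemma~\ref{lemma:lsb pointwise closeness}, and the passage from pointwise closeness on the qubit factor to an $L^1$-distance bound on the full state is precisely the content of Lemma~\ref{lem: approximate function discrete}. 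The only subtle point is checking that the ``stable'' version of Lemma~\ref{lem: approximate function discrete} (with the unmodified $\mathbb{R}^\ell$ factor) applies cleanly here; this is immediate because the lemma explicitly accommodates an arbitrary $\ell \in \mathbb{N}_0$.
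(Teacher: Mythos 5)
Your proposal is correct and takes essentially the same route as the paper: the paper's proof of the corollary is precisely the observation that Lemma~\ref{lem: approximate function discrete} applied to the pointwise bound from Lemma~\ref{lemma:lsb pointwise closeness} yields an error of $6(\pi\varepsilon/2)^{1/4}$, which the paper then simplifies via $6(\pi/2)^{1/4}\le 7$ to state the final bound as $7\varepsilon^{1/4}$. Your remark about needing $\pi\varepsilon/2\le 1/2$ for the hypothesis of Lemma~\ref{lem: approximate function discrete} is a fair (if minor) point of care that the paper leaves implicit; in all applications $\varepsilon$ is exponentially small so this is never an issue.
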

	\begin{proof}
		Both claims follow immediately from Lemma~\ref{lem: approximate function discrete} applied to Lemma~\ref{lemma:lsb pointwise closeness} along with $6(\pi/2)^{1/4} \le 7$.
	\end{proof}
	
	\subsubsection{Approximate function evaluation by the unitaries $V_\alpha$, $\Vam$, and $\Ua$}
	In this  section we use our framework for approximate function evaluation to analyze the composed unitaries $V_\alpha$, $\Vam$, and $\Ua$ used in our algorithm.
	
	Let us recall the unitary $V_\alpha$ defined in Fig.~\ref{fig:composite gates}  of the main text for $\alpha > 0$. It is given by the circuit
	\begin{align}
		\vcenter{\hbox{\includegraphics[scale=1.1]{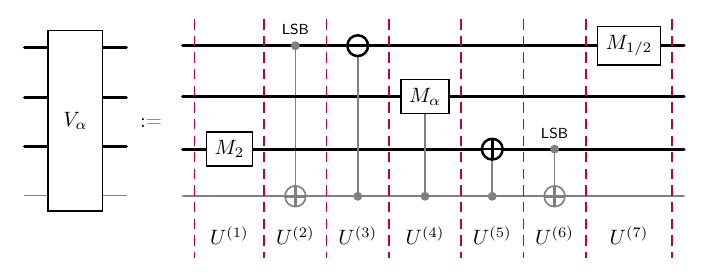}}}
		\label{eq: Valpha circuit}
	\end{align}
	i.e.,
	\begin{align*}   
		\left(V_\alpha\right)_{\bosonA\bosonB\bosonC\qubit}&=U^{(7)}U^{(6)}U^{(5)}U^{(4)}U^{(3)}U^{(2)}U^{(1)}
	\end{align*}
	where
	\begin{align}
		\begin{matrix*}[l]
			U^{(1)}&=&(M_{2})_{\bosonC}\\
			U^{(2)}&=&U^\lsb_{\bosonA\to\qubit}\\
			U^{(3)}&=&\mathsf{ctrl}_\qubit e^{iP_{\bosonA}}  \\
			U^{(4)}&=&\mathsf{ctrl}_{\qubit} (M_\alpha)_{\bosonB}
		\end{matrix*}\qquad\qquad \begin{matrix*}[l]
			U^{(5)}&=&\mathsf{ctrl}_\qubit e^{-iP_{\bosonC}}\\
			U^{(6)}&=&U^\lsb_{\bosonC\to\qubit}\\
			U^{(7)}&=&(M_{2^{-1}})_{\bosonA}\ .\\
			&&
		\end{matrix*} \label{eq: Valpha Us}
	\end{align}
	We depict the actions of the constituting unitaries on inputs that are close to integers on the modes~$\bosonA$ and~$\bosonC$ by the diagrams in Fig.~\ref{fig:V:Us} (cf.~\eqref{eq: diagramatic notation} for an explanation of the diagrammatic notation). These actions are obtained directly from the definitions of the respective unitaries and from Corollary~\ref{cor:LSBtracedist} for the LSB gates (by relabeling modes).
	
	\begin{figure}[!ht]
		\centering
		\begin{subfigure}[t]{0.47\textwidth}
			\raisebox{+20pt}{\includegraphics{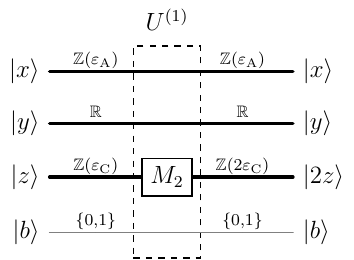}}
			\caption{Action of the unitary $U^{(1)}$.}
			\label{fig:V:U1}
		\end{subfigure}\hfill
		\begin{subfigure}[t]{0.47\textwidth}
			\includegraphics{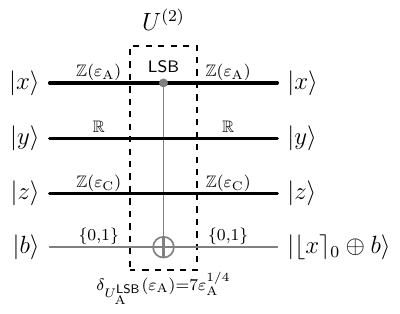}
			\caption{Action of $U^{(2)}$ (obtained by Corollary~\ref{cor:LSBtracedist}).}
			\label{fig:V:U2}
		\end{subfigure}\\[0.1cm]
		\begin{subfigure}[t]{0.47\textwidth}
			\raisebox{+4pt}{\includegraphics{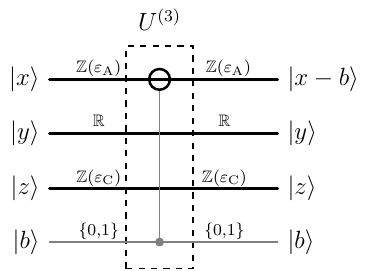}}
			\caption{Action of the unitary $U^{(3)}$.}
			\label{fig:V:U3}
		\end{subfigure}\hfill
		\begin{subfigure}[t]{0.47\textwidth}
			\includegraphics{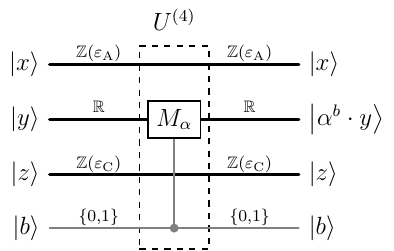}
			\caption{Action of the unitary $U^{(4)}$.}
			\label{fig:V:U4}
		\end{subfigure}\\[0.1cm]
		\begin{subfigure}[t]{0.47\textwidth}
			\raisebox{+22pt}{\includegraphics{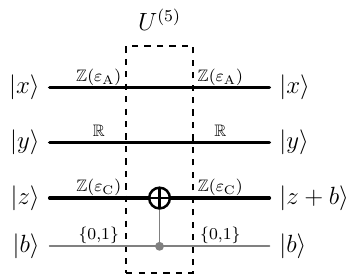}}
			\caption{Action of the unitary $U^{(5)}$.}
			\label{fig:V:U5}
		\end{subfigure}\hfill
		\begin{subfigure}[t]{0.47\textwidth}
			\includegraphics{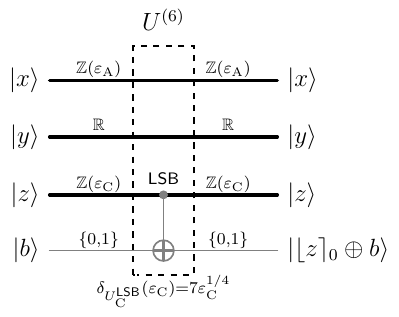}
			\caption{Action of $U^{(6)}$ (obtained by Corollary~\ref{cor:LSBtracedist}).}
			\label{fig:V:U6}
		\end{subfigure}
		\begin{subfigure}[b]{0.47\textwidth}
			\includegraphics{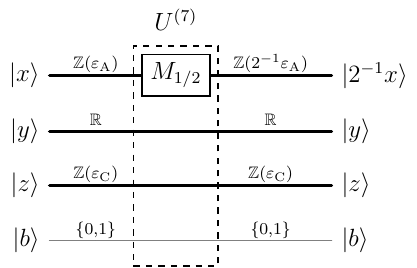}
			\caption{Action of the unitary $U^{(7)}$.}
			\label{fig:V:U7}
		\end{subfigure}
		\caption{The approximate action of the unitaries decomposing $V_{\alpha}=U^{(7)}U^{(6)}\cdots U^{(1)}$ displayed in the diagramatic notation~\eqref{eq: diagramatic notation} of our approximate function evaluation framework described in Section~\ref{sec:appx framework}.}
		\label{fig:V:Us}
	\end{figure}

	\begin{samepage}
		Using the elements expressed by Fig.~\ref{fig:V:Us}, we can show the following:
		\begin{lemma} \label{lem: V(a) unitary error}
			Let $\varepsilon_\bosonA\in(0,1/2)$ and $\varepsilon_\bosonC\in(0,1/4)$. Let $\alpha > 0$. Consider the unitary $V_\alpha$ defined by the circuit in Fig.~\ref{fig:composite gates}  of the main paper. Then 
			\begin{align}
				\hspace*{2cm}\vcenter{\hbox{\includegraphics{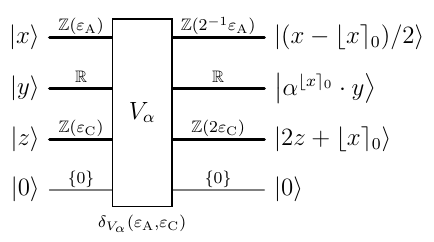}}}
			\end{align}	
			where 
			$\delta_{V_\alpha}(\varepsilon_\bosonA,\varepsilon_\bosonC)=7 \varepsilon_\bosonA^{1/4}+7(2\varepsilon_\bosonC)^{1/4}$. 
	\end{lemma}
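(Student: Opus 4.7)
The plan is to decompose $V_\alpha = U^{(7)}U^{(6)}\cdots U^{(1)}$ as in Eq.~\eqref{eq: Valpha Us} and apply the composition lemma (Lemma~\ref{lem: composition closeness}) to the sequence of approximate function evaluations summarized in Fig.~\ref{fig:V:Us}. Most of the seven constituents act \emph{exactly} on product sets of the form $A_\bosonA \times \mathbb{R} \times A_\bosonC \times \{0\}$ with $A_\bosonA \subseteq \mathbb{Z}(\varepsilon_\bosonA)$ and $A_\bosonC$ a suitable subset of some $\mathbb{Z}(\varepsilon')$: the single-mode squeezings $U^{(1)}=(M_2)_\bosonC$ and $U^{(7)}=(M_{2^{-1}})_\bosonA$ realize the rescalings $z\mapsto 2z$ and $x\mapsto x/2$ exactly, the qubit-controlled translations $U^{(3)}$ and $U^{(5)}$ are exact, and the qubit-controlled scalar multiplication $U^{(4)}$ exactly multiplies the $\bosonB$-position by $\alpha$ conditioned on the qubit. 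The only nontrivial errors come from the two LSB extractions $U^{(2)}$ and $U^{(6)}$.

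First I would track how the support on $\bosonA$ and $\bosonC$ evolves through the seven steps, following Fig.~\ref{fig:V:Us}. The initial input lives on $\mathbb{Z}(\varepsilon_\bosonA) \times \mathbb{R} \times \mathbb{Z}(\varepsilon_\bosonC) \times \{0\}$. After $U^{(1)}$, the $\bosonC$-register lives on $2\mathbb{Z}(2\varepsilon_\bosonC) \subseteq \mathbb{Z}(2\varepsilon_\bosonC)$; the doubling of the proximity parameter from $\varepsilon_\bosonC$ to $2\varepsilon_\bosonC$ is precisely what produces the $(2\varepsilon_\bosonC)^{1/4}$ contribution when the LSB is later extracted from $\bosonC$. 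The hypothesis $\varepsilon_\bosonC < 1/4$ keeps $2\varepsilon_\bosonC < 1/2$ so that Corollary~\ref{cor:LSBtracedist} still applies. Using its stable form (tensored with identities on the other modes), that corollary gives an error of at most $7\varepsilon_\bosonA^{1/4}$ for step $U^{(2)}$ and at most $7(2\varepsilon_\bosonC)^{1/4}$ for step $U^{(6)}$.

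Next I would verify that the composition of the seven exact maps, on nominal integer inputs $(x,y,z,0)$ with $x,z\in\mathbb{Z}$, is the target function $(x,y,z,0)\mapsto ((x-x_0)/2, \alpha^{x_0}y, 2z+x_0, 0)$. This has already been shown in Section~\ref{sec: ideal V gate}. Since the rounding map $x\mapsto \round{x}$ is constant on $\mathbb{Z}(\varepsilon)$ for $\varepsilon<1/2$, the same computation identifies the function $f$ on the full perturbed domain, namely $f(x,y,z,0)=((x-\round{x}_0)/2,\alpha^{\round{x}_0}y,2z+\round{x}_0,0)$. Six applications of Lemma~\ref{lem: composition closeness} along the chain $U^{(1)},\ldots,U^{(7)}$ then sum the per-step errors (with vanishing contributions from all but $U^{(2)}$ and $U^{(6)}$) to yield the advertised bound $7\varepsilon_\bosonA^{1/4}+7(2\varepsilon_\bosonC)^{1/4}=\delta_{V_\alpha}(\varepsilon_\bosonA,\varepsilon_\bosonC)$.

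The main obstacle I foresee is the bookkeeping of the intermediate domains on which Lemma~\ref{lem: composition closeness} must be invoked: the LSB extractions feed information into the qubit and the subsequent controlled gates $U^{(3)}$, $U^{(4)}$, $U^{(5)}$ use that qubit to conditionally act on the CV registers, so one has to make the image of each step lie inside the domain where the next step has the claimed exact/approximate behavior. Fig.~\ref{fig:V:Us} essentially encodes this bookkeeping; the remaining work is to write out the set inclusions carefully and confirm that each hypothesis of Lemma~\ref{lem: composition closeness} is satisfied at each of the seven stages.
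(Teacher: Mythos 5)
Your proposal matches the paper's own proof: the same decomposition $V_\alpha = U^{(7)}\cdots U^{(1)}$, the same identification of the two LSB extractions $U^{(2)}$ and $U^{(6)}$ as the only sources of error (with the $\varepsilon_\bosonC\mapsto 2\varepsilon_\bosonC$ scaling coming from the initial squeezing $U^{(1)}=(M_2)_\bosonC$), and the same repeated application of Lemma~\ref{lem: composition closeness} together with Corollary~\ref{cor:LSBtracedist} to accumulate $\delta_{V_\alpha} = 7\varepsilon_\bosonA^{1/4}+7(2\varepsilon_\bosonC)^{1/4}$. The remaining bookkeeping you flag at the end is exactly what the paper encodes in Figs.~\ref{fig:V:Us} and~\ref{fig:Valpha:Vs}, so no substantive gap remains.
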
\end{samepage}
	We note that the error bound $\delta_{V_\alpha}(\varepsilon_\bosonA,\varepsilon_\bosonC)$ does not depend on the parameter  $\alpha$ of the unitary $V_\alpha$. Below we will only use the action of $V_\alpha$ restricted to inputs close to non-negative integers in modes $\bosonA$ and $\bosonC$, i.e., 
	\begin{align}
		\hspace*{2cm}
		\raisebox{-60pt}{\mbox{\includegraphics{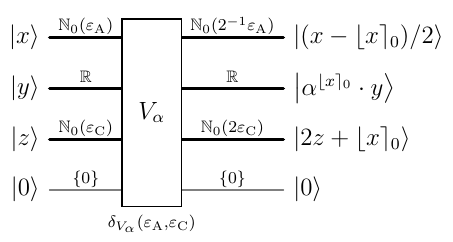}}}\label{eq: V alpha action restricted}
	\end{align}
	This is well-defined since $x\in\mathbb{N}_0(\varepsilon_\bosonA)$ implies that $(x-\round{x}_0)/2 \in \mathbb{N}_0(2^{-1}\varepsilon_\bosonA)$, and similarly, $z\in\mathbb{N}_0(\varepsilon_\bosonA)$ 
	implies that $2z+\round{x}_0\in \mathbb{N}_0(2\varepsilon_\bosonC)$ for any $x\in \mathbb{N}_0$. 
	\begin{proof}[Proof of Lemma~\ref{lem: V(a) unitary error}]
		Let us write $V^{(T)}=U^{(T)}\cdots U^{(1)}$ for $T\in \{2, \ldots,7\}$. We obtained the approximate action of each unitary $V^{(T)}$ by Lemma~\ref{lem: composition closeness} and list them in Fig.~\ref{fig:Valpha:Vs}. The claim follows by noting that $V^{(7)}=V_\alpha$ and by setting   (see Fig.~\ref{fig:v7last})
		\begin{align}
			\delta_{V_\alpha}&=\delta_{U_\bosonA^{\lsb}}(\varepsilon_\bosonA )+\delta_{U_\bosonC^{\lsb}}(2\varepsilon_\bosonC)\\
			&=7 \varepsilon_\bosonA^{1/4}+7(2\varepsilon_\bosonC)^{1/4}\ .
		\end{align}
	\end{proof}
	
	\begin{figure}[p]
		\centering
		\begin{subfigure}[t]{0.47\textwidth}
			\centering
			\hspace*{-0.8cm}\includegraphics{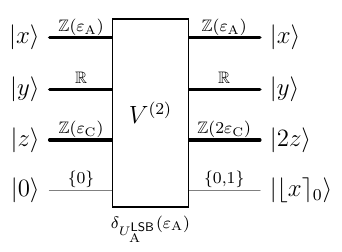}
			\caption{The action of the unitary $V^{(2)}$ is given by composition of $U^{(1)}$ and $U^{(2)}$ from Figs.~\ref{fig:V:U1} and~\ref{fig:V:U2}, respectively.}\label{fig:V:V2}
		\end{subfigure}\qquad
		\begin{subfigure}[t]{0.47\textwidth}
			\centering
			\hspace*{-0.4cm}\includegraphics{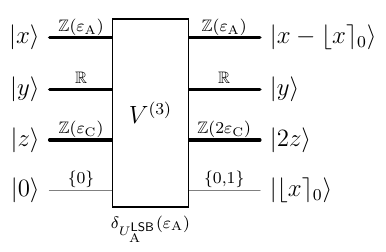}
			\caption{The action of the unitary $V^{(3)}$ is given by composition of $V^{(2)}$ from Fig.~\ref{fig:V:V2} with $U^{(3)}$ from Fig.~\ref{fig:V:U3}.}
			\label{fig:V:V3}
		\end{subfigure}\\[0.5cm]
		\begin{subfigure}[t]{0.47\textwidth}
			\centering
			\hspace*{-0.2cm}\includegraphics{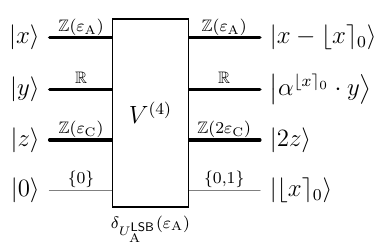}
			\caption{
				The action of $V^{(4)}$ is given by composing $V^{(3)}$ from Fig.~\ref{fig:V:V3} with $U^{(4)}$ from Fig.~\ref{fig:V:U4}.
			}\label{fig:V:V4}
		\end{subfigure}\qquad
		\begin{subfigure}[t]{0.47\textwidth}
			\centering
			\hspace*{-0.2cm}\includegraphics{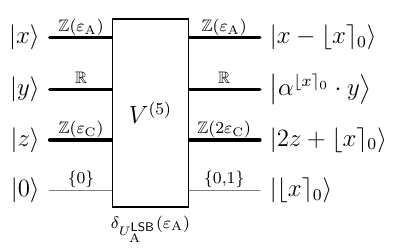}
			\caption{
				The action of $V^{(5)}$ is given by composing $V^{(4)}$ from Fig.~\ref{fig:V:V4} with $U^{(5)}$ from Fig.~\ref{fig:V:U5}.
			}\label{fig:V:V5}
		\end{subfigure}\\[0.5cm]
		\begin{subfigure}[t]{0.47\textwidth}
			\centering
			\includegraphics{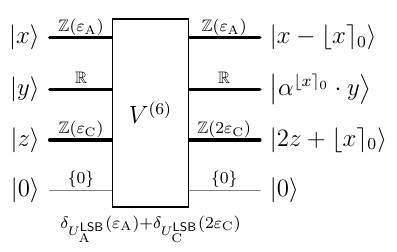}
			\caption{The action of $V^{(6)}$ is given by composing the unitary $V^{(5)}$ from Fig.~\ref{fig:V:V5} with the unitary $U^{(6)}$ from Fig.~\ref{fig:V:U6} and by noting that $\round{2z+\round{x}_0}_0=\round{2z}_0\oplus \round{x}_0=\round{x}_0$ for any $z\in\mathbb{Z}(\varepsilon_\bosonC)$ such that $2\varepsilon_\bosonC<1/2$.}\label{fig:V:V6}
		\end{subfigure}\qquad
		\begin{subfigure}[t]{0.47\textwidth}
			\centering
			\hspace*{0.6cm}\includegraphics{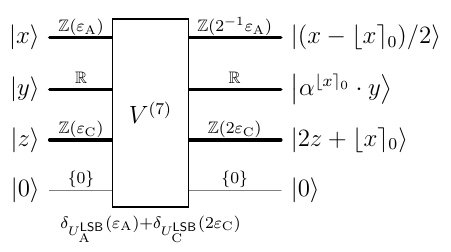}
			\caption{The action of the unitary $V^{(7)}$ is given by composing $V^{(6)}$ from Fig.~\ref{fig:V:V6} with $U^{(7)}$ from Fig.~\ref{fig:V:U7}.\label{fig:v7last}}
		\end{subfigure}
		\caption{Action of the unitaries $V^{(1)}$, \ldots, $V^{(7)}$ from the decomposition of the unitary~$V_{\alpha}$ obtained by repeated application of Lemma~\ref{lem: composition closeness}.}
		\label{fig:Valpha:Vs}
	\end{figure}

	We  also need the approximate action of the unitary $V_\alpha^\dagger$.
	\begin{lemma} \label{lem: V(a) dagger gate error}
		Let $\varepsilon_\bosonA\in(0,1/4)$ and $\varepsilon_\bosonC\in(0,1/2)$. Let $\alpha > 0$. Consider the unitary $V_\alpha^\dagger$, we have 
		\begin{align}
			\hspace*{2cm}
			\vcenter{\hbox{\includegraphics{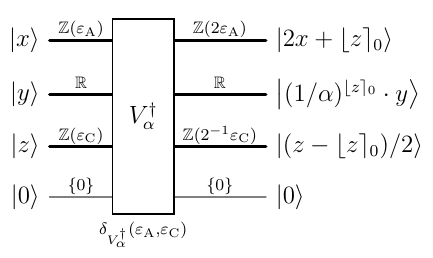}}}
		\end{align}	
		where $\delta_{V_\alpha^\dagger}(\varepsilon_\bosonA,\varepsilon_\bosonC)=7 (2\varepsilon_\bosonA)^{1/4}+7\varepsilon_\bosonC^{1/4}$.
	\end{lemma}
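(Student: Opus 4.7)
The plan is to mirror the proof of Lemma~\ref{lem: V(a) unitary error}, writing $V_\alpha^\dagger = (U^{(1)})^\dagger(U^{(2)})^\dagger\cdots(U^{(7)})^\dagger$ with the $U^{(j)}$ defined in Eq.~\eqref{eq: Valpha Us}, and tracking the approximate action through successive applications of Lemma~\ref{lem: composition closeness}. The first step is to invert the exact-integer identity~\eqref{eq: V alpha action} to record the ``target'' action
\begin{align}
V_\alpha^\dagger\big(\ket{\tilde x}\otimes\ket{\tilde y}\otimes\ket{\tilde z}\otimes\ket{0}\big) \propto \ket{2\tilde x+\tilde z_0}\otimes\ket{\alpha^{-\tilde z_0}\tilde y}\otimes\ket{(\tilde z-\tilde z_0)/2}\otimes\ket{0}
\end{align}
for $\tilde x,\tilde z\in\mathbb{N}_0$ and $\tilde y\in\mathbb{R}$, where $\tilde z_0:=\tilde z\xmod 2$. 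This action is consistent with the target diagram: it transports $\mathbb{N}_0(\varepsilon_\bosonA)\times\mathbb{N}_0(\varepsilon_\bosonC)$-concentrated inputs on modes $\bosonA,\bosonC$ to $\mathbb{N}_0(2\varepsilon_\bosonA)\times\mathbb{N}_0(\varepsilon_\bosonC/2)$-concentrated outputs, since doubling doubles the deviation from $\mathbb{Z}$, halving halves it, and integer translations preserve it.

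The next step is to produce the analogue of Fig.~\ref{fig:Valpha:Vs} by tracking the approximate support layer-by-layer. Of the seven adjoints, only the two LSB gates $(U^{(6)})^\dagger=(U^\lsb_{\bosonC\to\qubit})^\dagger$ and $(U^{(2)})^\dagger=(U^\lsb_{\bosonA\to\qubit})^\dagger$ contribute error; the remaining five (namely $(M_2)_\bosonA$, $(M_{1/2})_\bosonC$, the qubit-controlled displacements $\mathsf{ctrl}_\qubit e^{\pm iP}$, and the qubit-controlled $(M_{1/\alpha})_\bosonB$) act exactly on the relevant approximate-integer inputs. Concretely, applying $(U^{(7)})^\dagger=(M_2)_\bosonA$ first widens the concentration on mode~$\bosonA$ to $2\varepsilon_\bosonA$; then $(U^{(6)})^\dagger$ extracts $\round{\tilde z}_0$ into the qubit with error $7\varepsilon_\bosonC^{1/4}$ by Corollary~\ref{cor:LSBtracedist} (using $\varepsilon_\bosonC<1/2$); the three controlled operations $(U^{(5)})^\dagger,(U^{(4)})^\dagger,(U^{(3)})^\dagger$ exactly subtract $\round{\tilde z}_0$ from~$\bosonC$, multiply~$\bosonB$ by $\alpha^{-\round{\tilde z}_0}$, and add $\round{\tilde z}_0$ to~$\bosonA$; the LSB-reset $(U^{(2)})^\dagger$ then contributes an additional error of $7(2\varepsilon_\bosonA)^{1/4}$ (requiring $2\varepsilon_\bosonA<1/2$, which is exactly the hypothesis $\varepsilon_\bosonA<1/4$); finally $(U^{(1)})^\dagger=(M_{1/2})_\bosonC$ rescales~$\bosonC$ to $\varepsilon_\bosonC/2$-concentration.

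Iterated application of Lemma~\ref{lem: composition closeness} sums the two error contributions to the claimed bound $\delta_{V_\alpha^\dagger}(\varepsilon_\bosonA,\varepsilon_\bosonC)=7(2\varepsilon_\bosonA)^{1/4}+7\varepsilon_\bosonC^{1/4}$. The only bookkeeping point requiring real care is verifying that at each intermediate stage the support remains inside the domain for which the next building block's approximate-evaluation guarantee is valid; the nontrivial case is the application of $(U^{(2)})^\dagger$, where one must check that the relevant concentration parameter on mode~$\bosonA$ is $2\varepsilon_\bosonA$ rather than something larger. This relies on the observation that the intervening $(U^{(3)})^\dagger$ shifts~$\bosonA$ by the integer $\round{\tilde z}_0\in\{0,1\}$ and hence preserves the deviation from~$\mathbb{Z}$. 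Since the argument parallels the proof of Lemma~\ref{lem: V(a) unitary error} step for step, I do not expect any genuine obstacle beyond this bookkeeping.
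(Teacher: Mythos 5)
Your proposal is correct and follows essentially the same approach as the paper: decomposing $V_\alpha^\dagger = (U^{(1)})^\dagger\cdots(U^{(7)})^\dagger$, tracking the concentration parameters layer by layer, identifying that only the two LSB gates (applied at concentration $\varepsilon_\bosonC$ and $2\varepsilon_\bosonA$, respectively) contribute error via Corollary~\ref{cor:LSBtracedist}, and summing via Lemma~\ref{lem: composition closeness}. The paper presents this bookkeeping diagrammatically in Figs.~\ref{fig:V dagger:Us} and~\ref{fig:Valpha dagger:Vs} rather than in prose, but the substance — including the observation that the intervening integer translation on $\bosonA$ preserves the $2\varepsilon_\bosonA$-concentration before $(U^{(2)})^\dagger$ is applied — is identical.
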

	Again, the error bound $\delta_{V^\dagger_\alpha}(\varepsilon_\bosonA,\varepsilon_\bosonC)$ does not depend on the parameter  $\alpha$. We will only need the action of $V^\dagger_\alpha$ restricted to inputs close to non-negative integers in modes $\bosonA$ and $\bosonC$, i.e.
	\begin{align}
		\hspace*{2cm} \raisebox{-65pt}{\mbox{\includegraphics{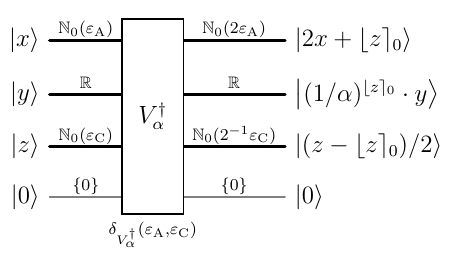}}} \label{eq: V alpha dagger action restricted}
	\end{align}
	This action is well-defined since 
	$x\in\mathbb{N}_0(\varepsilon_\bosonA)$ implies that~$2x+\round{z}_0\in\mathbb{N}_0(2\varepsilon_\bosonA)$ for any $z\in \mathbb{N}_0(\varepsilon_\bosonC)$, and 
	$z\in\mathbb{N}_0(\varepsilon_\bosonC)$ implies that $(z-\round{z}_0)/2 \in \mathbb{N}_0(2^{-1}\varepsilon_\bosonC)$.
	\begin{proof}[Proof of Lemma~\ref{lem: V(a) dagger gate error}]
		The adjoint of $V_\alpha$ from Eq.~\eqref{eq: Valpha circuit} is
		\begin{align}
			V_\alpha^\dagger&=(U^{(1)})^\dagger \cdots (U^{(7)})^\dagger
		\end{align}
		where the unitaries $U^{(1)},\ldots, U^{(7)}$ are defined in Eq.~\eqref{eq: Valpha Us}. The approximate action of the unitaries decomposing $V_\alpha^\dagger$ is given in Fig.~\ref{fig:V dagger:Us}. We will use Lemma~\ref{lem: composition closeness} to derive the composed action. Let us write $(V')^{(T)}=(U^{(7-T+1)})^\dagger\cdots (U^{(7)})^\dagger$ for $T\in\{2,\ldots, 7\}$. Each $(V')^{(T)}$ is a unitary composed of the first $T$ unitaries from the decomposition of $V_\alpha^\dagger$. The approximate action of each of them is given in Fig.~\ref{fig:Valpha dagger:Vs}. The claim follows from $(V')^{(7)}=V_\alpha^\dagger$ and by setting $\delta_{V_\alpha^{\dagger}}(\varepsilon_\bosonA,\varepsilon_\bosonC)=\delta_{U_\bosonC^{\lsb}}(\varepsilon_\bosonC)+\delta_{U_\bosonA^{\lsb}}(2\varepsilon_\bosonA )=7 (2\varepsilon_\bosonA)^{1/4}+7 \varepsilon_\bosonC^{1/4}$.
	\end{proof}

	\begin{figure}[p]
		\centering
		\begin{subfigure}[t]{0.47\textwidth}
			\raisebox{27pt}{\includegraphics{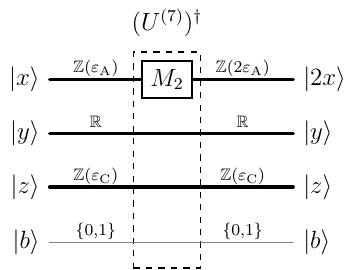}}
			\caption{Action of the unitary $(U^{(7)})^{\dagger}$.}
			\label{fig:Vdagger:U1}
		\end{subfigure}\qquad
		\begin{subfigure}[t]{0.47\textwidth}
			\includegraphics{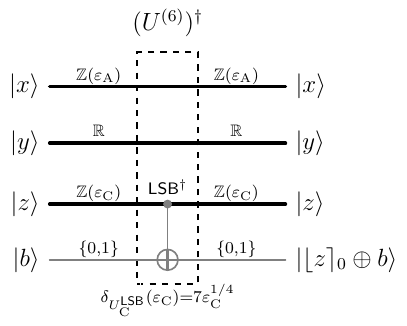}
			\caption{Action of $(U^{(6)})^\dagger$ (by Corollary~\ref{cor:LSBtracedist}).}
			\label{fig:Vdagger:U2}
		\end{subfigure}\\[0.3cm]
		\begin{subfigure}[t]{0.47\textwidth}
			\includegraphics{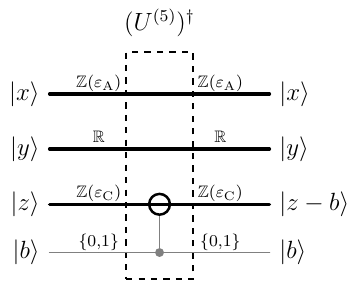}
			\caption{Action of the unitary $(U^{(5)})^{\dagger}$.}
			\label{fig:Vdagger:U3}
		\end{subfigure}\qquad
		\begin{subfigure}[t]{0.47\textwidth}
			\includegraphics{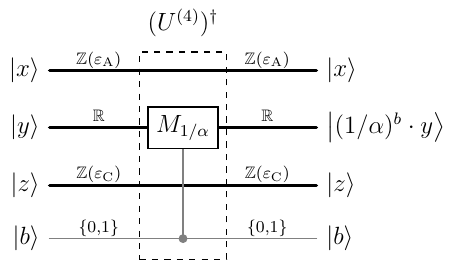}
			\caption{Action of the unitary $(U^{(4)})^\dagger$.}
			\label{fig:Vdagger:U4}
		\end{subfigure}\\[0.3cm]
		\begin{subfigure}[t]{0.47\textwidth}
			\raisebox{+20pt}{\includegraphics{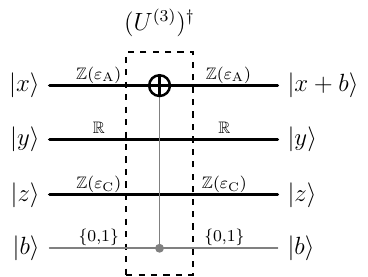}}
			\caption{Action of the unitary $(U^{(3)})^\dagger$.}
			\label{fig:Vdagger:U5}
		\end{subfigure}\qquad
		\begin{subfigure}[t]{0.47\textwidth}
			\includegraphics{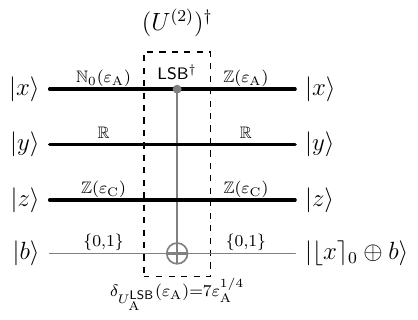}
			\caption{Action of $(U^{(2)})^\dagger$ (by Corollary~\ref{cor:LSBtracedist}).}
			\label{fig:Vdagger:U6}
		\end{subfigure}\\[0.3cm]
		\begin{subfigure}[b]{0.47\textwidth}
			\includegraphics{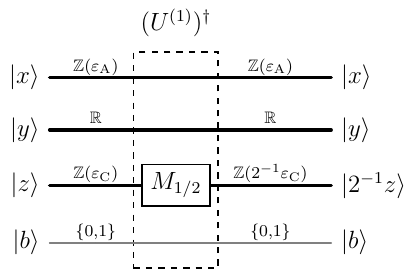}
			\caption{Action of the unitary $(U^{(1)})^\dagger$.}
			\label{fig:Vdagger:U7}
		\end{subfigure}\\
		\caption{The approximate action of the unitaries decomposing $V_{\alpha}^\dagger=(U^{(1)})^\dagger(U^{(2)})^\dagger\cdots (U^{(7)})^\dagger$ in the diagramatic notation~\eqref{eq: diagramatic notation}. The unitaries are ordered in sub-figures in the order they are applied in the circuit~$V_{\alpha}^\dagger$.}
		\label{fig:V dagger:Us}
	\end{figure}

	\begin{figure}[!htbp]
		\centering
		\begin{subfigure}[t]{0.47\textwidth}
			\centering
			\hspace*{-1cm}\includegraphics{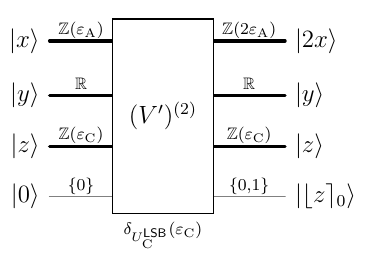}
			\caption{The action of the unitary $(V')^{(2)}$ is a composition of $(U^{(7)})^\dagger$ with $(U^{(6)})^\dagger$ from Figs.~\ref{fig:Vdagger:U1} and~\ref{fig:Vdagger:U2}, respectively.}\label{fig:Vdagger:V2}
		\end{subfigure}\qquad
		\begin{subfigure}[t]{0.47\textwidth}
			\centering
			\includegraphics{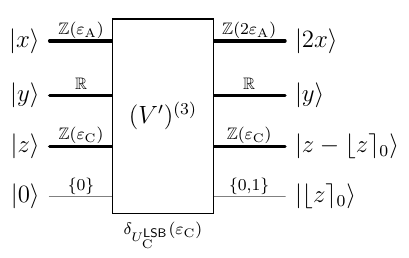}
			\caption{The action of the unitary $(V')^{(3)}$ is given by composing $(V')^{(2)}$ from Fig.~\ref{fig:Vdagger:V2} with $(U^{(5)})^\dagger$ from Fig.~\ref{fig:Vdagger:U3}.}
			\label{fig:Vdagger:V3}
		\end{subfigure}\\[0.5cm]
		\begin{subfigure}[t]{0.47\textwidth}
			\centering
			\includegraphics{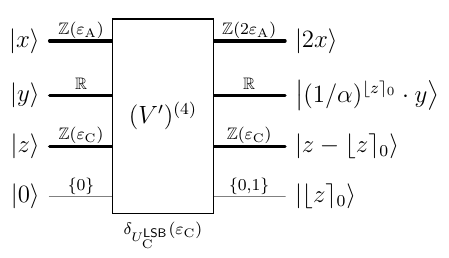}
			\caption{
				The action of $(V')^{(4)}$ is given by composing $(V')^{(3)}$ from Fig.~\ref{fig:Vdagger:V3} with $(U^{(4)})^\dagger$ from Fig.~\ref{fig:Vdagger:U4}.
			}\label{fig:Vdagger:V4}
		\end{subfigure}\qquad
		\begin{subfigure}[t]{0.47\textwidth}
			\centering
			\includegraphics{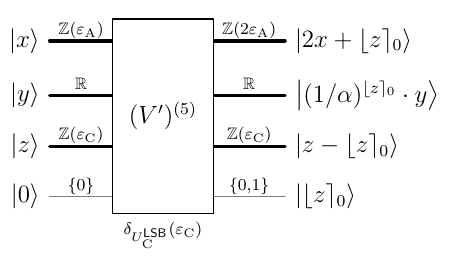}
			\caption{
				The action of $(V')^{(5)}$ is given by composing $(V')^{(4)}$ from Fig.~\ref{fig:Vdagger:V4} with $(U^{(3)})^\dagger$ from Fig.~\ref{fig:Vdagger:U5}.
			}\label{fig:Vdagger:V5}
		\end{subfigure}\\[0.5cm]
		\begin{subfigure}[t]{0.47\textwidth}
			\centering
			\includegraphics{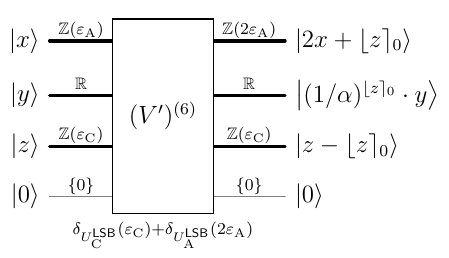}
			\caption{The action of $(V')^{(6)}$ is given by composing $(V')^{(5)}$ from Fig.~\ref{fig:Vdagger:V5} with the unitary $(U^{(2)})^\dagger$ from Fig.~\ref{fig:Vdagger:U6} and by the fact that $\round{2x+\round{z}_0}_0=\round{2x}_0\oplus \round{z}_0=\round{z}_0$ for any $x\in\mathbb{Z}(\varepsilon_\bosonC)$ with $2\varepsilon_\bosonC<1/2$.}\label{fig:Vdagger:V6}
		\end{subfigure}\qquad
		\begin{subfigure}[t]{0.47\textwidth}
			\centering
			\includegraphics{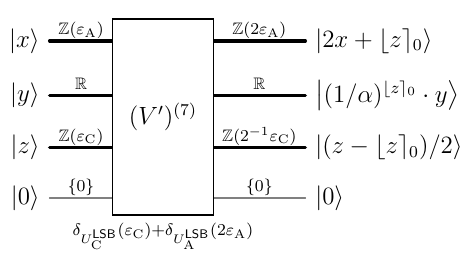}
			\caption{The action of $(V')^{(7)}$ is given by composing $(V')^{(6)}$ from Fig.~\ref{fig:Vdagger:V6} with $(U^{(1)})^\dagger$ from Fig.~\ref{fig:Vdagger:U7}.}
		\end{subfigure}
		\caption{Actions of the unitaries $(V')^{(1)}$, \ldots, $(V')^{(7)}$ from the decomposition of~$V_{\alpha}^\dagger$. These actions were obtained in a similar way as $V^{(1)}$, \ldots, $V^{(7)}$ from the decomposition of $V_\alpha$ in Fig.~\ref{fig:Valpha:Vs} (i.e., by Lemma~\ref{lem: composition closeness}).}
		\label{fig:Valpha dagger:Vs}
	\end{figure}

	To analyze the unitary $\Vam$ and its adjoint it is useful to first prove the following statement.
	\begin{lemma} \label{lem: V m V dagger m}
		Let $m\in\mathbb{N}$ and  $\vec{\alpha}=(\alpha_1,\ldots,\alpha_m)\in (0,\infty)^m$ and $\vec{\beta}=(\beta_1,\ldots,\beta_m)\in (0,\infty)^m$.
		Define the unitary
		\begin{align}
			V(\vec{\beta},\vec{\alpha})&=V^\dagger_{\beta_m}\cdots V^\dagger_{\beta_1}\cdot V_{\alpha_m}\cdots V_{\alpha_1}\  
		\end{align}
		and the functions
		\begin{align}
			\begin{matrix}
				f_{\vec{\alpha}}:&\mathbb{N}_0&\rightarrow & \mathbb{R}\\
				& x & \mapsto  & \prod_{i=1}^{m} \alpha_{i}^{x_{i-1}}
			\end{matrix}\qquad\quad\textrm{ and }\quad\qquad
			\begin{matrix}
				g_{\vec{\beta}}:&\mathbb{N}_0&\rightarrow & \mathbb{R}\\
				&x & \mapsto & \prod_{i=1}^{m} (1/\beta_i)^{x_{m-i}} 
			\end{matrix} \ \ \ ,
		\end{align}
		where $x=\sum_{i=0} 2^ix_i$.
		Let $\varepsilon_\bosonA\in(0,1/4)$ and let $\varepsilon_\bosonC\in(0,2^{-(m+2)})$.
		Then
		\begin{align}\hspace*{2.5cm}
			\vcenter{\hbox{\includegraphics{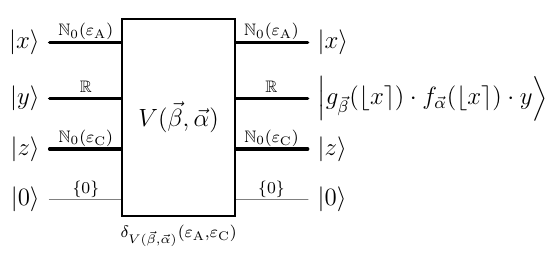}}}
		\end{align}
		where $\delta_{V(\vec{\beta},\vec{\alpha})}(\varepsilon_\bosonA ,\varepsilon_\bosonC ) = 88\cdot \varepsilon_\bosonA^{1/4} + 88\cdot (2^{m}\varepsilon_\bosonC)^{1/4}$.
	\end{lemma}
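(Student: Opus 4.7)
The plan is to prove Lemma~\ref{lem: V m V dagger m} by iteratively applying Lemma~\ref{lem: composition closeness} to the decomposition $V(\vec{\beta},\vec{\alpha}) = V_{\beta_m}^\dagger \cdots V_{\beta_1}^\dagger \cdot V_{\alpha_m}\cdots V_{\alpha_1}$, using Lemma~\ref{lem: V(a) unitary error} (respectively Lemma~\ref{lem: V(a) dagger gate error}) to supply an error bound for each individual factor $V_{\alpha_i}$ (respectively $V_{\beta_i}^\dagger$). The functions $f_{\vec{\alpha}}$ and $g_{\vec{\beta}}$ arise naturally from the bit-by-bit extraction identified in Section~\ref{sec: VaNm idealized}: after $j$ alpha-steps, mode $\bosonA$ has position $x^{(j)}=2^{-j}x-\sum_{i=0}^{j-1}2^{-j+i}x_i$, mode $\bosonB$ accumulates the factor $\prod_{i=1}^{j}\alpha_i^{x_{i-1}}$, and mode $\bosonC$ becomes $z^{(j)}=2^j z+\sum_{i=0}^{j-1}2^{j-1-i}x_i$, whose least significant bit is exactly $x_{j-1}$. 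The subsequent daggered steps then unwind $\bosonA$ and $\bosonC$ back to $x$ and $z$ while multiplying $\bosonB$ by successive factors $\beta_k^{-x_{m-k}}$, giving precisely $f_{\vec{\alpha}}(x)\cdot g_{\vec{\beta}}(x)$.

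The key bookkeeping step is tracking how the ``width'' parameter on each mode evolves. From the restricted action in Eq.~\eqref{eq: V alpha action restricted}, one application of $V_{\alpha}$ halves $\varepsilon_\bosonA$ and doubles $\varepsilon_\bosonC$; from Eq.~\eqref{eq: V alpha dagger action restricted}, $V_{\beta}^\dagger$ does the opposite. Hence, after applying $V_{\alpha_1},\ldots,V_{\alpha_i}$ in sequence, the state is supported (in the sense of Definition~\ref{def: approx computation}) on $\mathbb{N}_0(\varepsilon_\bosonA/2^i)\times\mathbb{R}\times\mathbb{N}_0(2^i\varepsilon_\bosonC)\times\{0\}$, and during the subsequent $V_{\beta_k}^\dagger$ applications the widths reverse to $\mathbb{N}_0(\varepsilon_\bosonA/2^{m-k})\times\mathbb{R}\times\mathbb{N}_0(2^{m-k}\varepsilon_\bosonC)\times\{0\}$. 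The hypotheses $\varepsilon_\bosonA\in(0,1/4)$ and $\varepsilon_\bosonC<2^{-(m+2)}$ ensure that the admissibility conditions $\varepsilon_\bosonA/2^{i-1}\in(0,1/2)$ and $2^{i-1}\cdot 2\varepsilon_\bosonC\in(0,1/4)$ of Lemma~\ref{lem: V(a) unitary error}, and the analogous ones for Lemma~\ref{lem: V(a) dagger gate error}, hold throughout. This guarantees the image of each function lies in the domain of the next, so the composition hypothesis of Lemma~\ref{lem: composition closeness} is satisfied at every stage.

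With these pointwise error bounds in place, Lemma~\ref{lem: composition closeness} gives that the total error of $V(\vec{\beta},\vec{\alpha})$ is at most the sum of the $2m$ individual error bounds. Inserting the explicit expressions from Lemmas~\ref{lem: V(a) unitary error} and \ref{lem: V(a) dagger gate error} yields
\begin{align}
\delta_{V(\vec{\beta},\vec{\alpha})}(\varepsilon_\bosonA,\varepsilon_\bosonC)
&\le \sum_{i=1}^{m}\Bigl(7(\varepsilon_\bosonA/2^{i-1})^{1/4}+7(2^i\varepsilon_\bosonC)^{1/4}\Bigr)+\sum_{k=1}^{m}\Bigl(7(2^{m-k+1}\varepsilon_\bosonA/2^m)^{1/4}+7(2^{m-k+1}\varepsilon_\bosonC)^{1/4}\Bigr).
\end{align}
Each of these four sums is a geometric series. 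Using $\sum_{j=0}^{\infty}2^{-j/4}=(1-2^{-1/4})^{-1}<6.29$ for the $\varepsilon_\bosonA$-type sums and $\sum_{i=1}^{m}2^{i/4}=2^{1/4}(2^{m/4}-1)/(2^{1/4}-1)\le 6.29\cdot 2^{m/4}$ for the $\varepsilon_\bosonC$-type sums, each of the four contributions is bounded by $44\,\varepsilon_\bosonA^{1/4}$ or $44\,(2^m\varepsilon_\bosonC)^{1/4}$, respectively, summing to $88\,\varepsilon_\bosonA^{1/4}+88\,(2^m\varepsilon_\bosonC)^{1/4}$ as claimed.

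The main subtlety I expect to handle carefully is not the error arithmetic itself but verifying that at every intermediate stage the assertion ``supported on $\mathbb{N}_0(\varepsilon_\bosonA^{(i)})\times\mathbb{R}\times\mathbb{N}_0(\varepsilon_\bosonC^{(i)})\times\{0\}$'' is meaningful in the sense of Definition~\ref{def: approx computation}, i.e., that the closure of the image of the previous exact function $U_{f^{(i)}}$ really lies inside the domain declared for $V_{\alpha_{i+1}}$ (or $V_{\beta_k}^\dagger$). This reduces to the elementary observations that $x\in\mathbb{N}_0(\varepsilon)$ implies $(x-\round{x}_0)/2\in\mathbb{N}_0(\varepsilon/2)$, $2x+\round{z}_0\in\mathbb{N}_0(2\varepsilon)$ for $z\in\mathbb{N}_0(\varepsilon')$ with $\varepsilon'<1/2$, and that $\round{z^{(j)}}_0=x_{j-1}$, which are routine but must be checked at each of the $2m$ compositions to justify the inductive application of Lemma~\ref{lem: composition closeness}.
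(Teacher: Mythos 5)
Your proposal is correct and follows essentially the same route as the paper: iterative application of Lemma~\ref{lem: composition closeness} to the decomposition $V(\vec{\beta},\vec{\alpha})=V_{\beta_m}^\dagger\cdots V_{\beta_1}^\dagger V_{\alpha_m}\cdots V_{\alpha_1}$, feeding in the per-gate error bounds from Lemma~\ref{lem: V(a) unitary error} (for $V_{\alpha_i}$) and Lemma~\ref{lem: V(a) dagger gate error} (for $V_{\beta_k}^\dagger$), tracking the width parameters $(\varepsilon_\bosonA,\varepsilon_\bosonC)$ as they are halved/doubled at each step, verifying the admissibility conditions, and summing the resulting geometric series. Your identification of the key checkpoints---that $\round{z^{(j)}}_0=x_{j-1}$ so the $\bosonC$-mode supplies the correct control bit, that the range of each $U_{f^{(i)}}$ lies in the domain of the next gate, and that $\varepsilon_\bosonA<1/4$, $\varepsilon_\bosonC<2^{-(m+2)}$ keep every intermediate width within range---matches the paper's proof exactly.

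One small arithmetic slip: in your displayed sum for the $\beta$-block, the $\varepsilon_\bosonA$-term at step $k$ should be $7(2\tilde\varepsilon_\bosonA^{(k-1)})^{1/4}=7(2^{k-m}\varepsilon_\bosonA)^{1/4}$ (the width has grown back by a factor $2^{k-1}$ from its minimum of $2^{-m}\varepsilon_\bosonA$), whereas you wrote $7(2^{m-k+1}\varepsilon_\bosonA/2^m)^{1/4}=7(2^{1-k}\varepsilon_\bosonA)^{1/4}$, which is the term from step $m+1-k$. Since the summands are merely permuted, the total and the final bound $88\,\varepsilon_\bosonA^{1/4}+88\,(2^m\varepsilon_\bosonC)^{1/4}$ come out correctly all the same; this is a transcription error, not a gap in the argument.
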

	\begin{proof} We will use Lemma~\ref{lem: composition closeness} to obtain the approximate action of $V(\vec{\beta},\vec{\alpha})$ from the approximate actions of its constituent unitaries. Set 
		\begin{align}
			(x^{(0)},y^{(0)},z^{(0)},\varepsilon_\bosonA^{(0)},\varepsilon_\bosonC^{(0)})&=(x,y,z,\varepsilon_\bosonA,\varepsilon_\bosonC)\ .
		\end{align}
		By Lemma~\ref{lem: V(a) unitary error}
		(that is, Eq.~\eqref{eq: V alpha action restricted}) we have
		\begin{align}
			\hspace*{-1.2cm}
			\vcenter{\hbox{\includegraphics{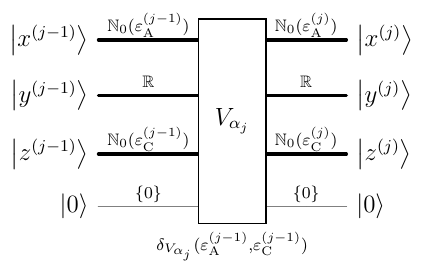}}}
		\end{align}	
		where
		\begin{align}
			\begin{array}{ccl}
				x^{(j)}&=&(x^{(j-1)}-\round{x^{(j-1)}}_0)/2\\
				y^{(j)}&=&\alpha_j^{\round{x^{(j-1)}}_0}\cdot y^{(j-1)}\\
				z^{(j)}&=&2z^{(j-1)}+\round{x^{(j-1)}}_0
			\end{array}\qquad\textrm{ and }\qquad 
			\begin{matrix}
				\varepsilon_\bosonA^{(j)}&=&2^{-1}\varepsilon_\bosonA^{(j-1)}\\
				\varepsilon_\bosonC^{(j)}&=&2 \varepsilon_{\bosonC}^{(j-1)}
			\end{matrix}\label{eq:Valpha input output behaviour}
		\end{align}
		for any $j\in \{1,\ldots,m\}$. Note that Lemma~\ref{lem: V(a) unitary error} is applicable here since $\varepsilon_{\bosonA}^{(j)}=2^{-j}\varepsilon_\bosonA$,  $\varepsilon^{(j)}_\bosonC=2^j\varepsilon_\bosonC$ and   $2^{-j}\varepsilon_\bosonA<1/4$,  $2^j\varepsilon_\bosonC<1/4$  by the  assumptions on $\varepsilon_\bosonA$ and $\varepsilon_\bosonC$,  for all $j\in\{1,\ldots,m\}$.  It follows inductively that
		\begin{align}
			\begin{array}{ccl}
				x^{(m)}&=&2^{-m}x-\sum_{j=0}^{m-1}2^{-m+j}\round{x}_j\\
				y^{(m)}&=&\left(\prod_{j=1}^{m} \alpha_j^{\round{x}_{j-1}}\right)\cdot y = f_{\vec{\alpha}}(\round{x})\cdot y \\
				z^{(m)}&=&2^m z+\sum_{j=0}^{m-1}2^{m-1-j}\round{x}_j
			\end{array}\qquad\textrm{ and }\qquad 
			\begin{matrix}
				\varepsilon_\bosonA^{(m)}&=&2^{-m}\varepsilon_\bosonA\\
				\varepsilon_\bosonC^{(m)}&=&2^m\varepsilon_\bosonC\
			\end{matrix} \label{eq: V alpha beta m step}
		\end{align}
		\newcommand*{\px}{\tilde{x}}
		\newcommand*{\py}{\tilde{y}}
		\newcommand*{\pz}{\tilde{z}}
		\newcommand*{\peps}{\tilde{\varepsilon}} 
		Let us set 
		\begin{align}    (\px^{(0)},\py^{(0)},\pz^{(0)},\peps_\bosonA^{(0)},\peps_\bosonC^{(0)})&=(x^{(m)},y^{(m)},z^{(m)},\varepsilon^{(m)}_\bosonA,\varepsilon^{(m)}_\bosonC)\ .\label{eq:pxpypzdef}
		\end{align}
		By Lemma~\ref{lem: V(a) dagger gate error} (more precisely, Eq.~\eqref{eq: V alpha dagger action restricted}), we have
		\begin{align}
			\hspace*{-1.2cm} \vcenter{\hbox{\includegraphics{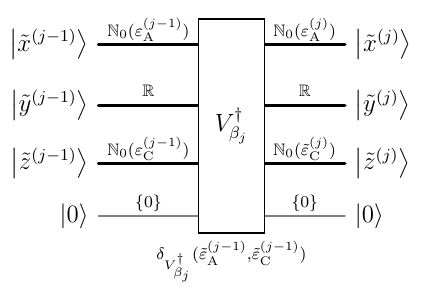}}}
		\end{align}	
		where
		\begin{align}
			\begin{array}{ccl}
				\px^{(j)}&=&2\px^{(j-1)}+\round{\pz^{(j-1)}}_0\\
				\py^{(j)}&=&(1/\beta_j)^{\round{\pz^{(j-1)}}_0}\cdot \py^{(j-1)}\\
				\pz^{(j)}&=&(\pz^{(j-1)}-\round{\pz^{(j-1)}}_0)/2
			\end{array}\qquad\textrm{ and }\qquad 
			\begin{matrix}
				\peps_\bosonA^{(j)}&=&2\peps_\bosonA^{(j-1)}\\
				\peps_\bosonC^{(j)}&=&2^{-1}\peps_{\bosonC}^{(j-1)}
			\end{matrix}\label{eq:Valpha input output behaviour tilde}
		\end{align}
		for any $j\in \{1,\ldots,m\}$. Inductively this implies that
		\begin{align}
			\peps_\bosonA^{(j)}=2^{j}\varepsilon_\bosonA^{(m)}=2^{-m+j}\varepsilon_\bosonA\qquad\textrm{ and }\qquad \peps_\bosonC^{(j)}=2^{-j}\varepsilon_\bosonC^{(m)}=2^{m-j}\varepsilon_{\bosonC}\qquad \textrm{ for all } j\in\{1,\ldots, m\}
		\end{align}
		where we used~\eqref{eq: V alpha beta m step}. Thus we can apply Lemma~\ref{lem: V(a) dagger gate error} since by assumptions on $\varepsilon_\bosonA$ and $\varepsilon_\bosonC$ we have $\peps_\bosonA^{(j)}<1/4$ and $\peps_\bosonC^{(j)}<1/4$ for all $j\in\{1,\ldots, m\}$.

		We claim that
		\begin{align}
			\begin{matrix}
				\px^{(j)}&=&x^{(m-j)}\\
				\pz^{(j)}&=&z^{(m-j)}
			\end{matrix}\qquad\textrm{ for all }\qquad j\in \{0,\ldots,m\}\ . \label{eq: px x pz z}
		\end{align}
		The claim is trivially true for~$j=0$ by definition (see Eq.~\eqref{eq:pxpypzdef}). 
		Inductively, we have for $j\in \{1,\ldots,m\}$
		\begin{align}
			\px^{(j)}&=2\px^{(j-1)}+\round{\pz^{(j-1)}}_0\qquad\ \ \textrm{ by definition (Eq.~\eqref{eq:Valpha input output behaviour tilde})}\\
			&=2x^{(m-(j-1))}+\round{z^{(m-(j-1))}}_0\qquad\textrm{ by the induction hypothesis}\\
			&=(x^{(m-(j-1)-1)}-\round{x^{(m-(j-1)-1)}}_0)+
			\round{2z^{(m-(j-1)-1)}+\round{x^{(m-(j-1)-1)}}_0}_0
			\quad\textrm{by definition (Eq.~\eqref{eq:Valpha input output behaviour})}\notag \\
			&=(x^{(m-(j-1)-1)}-\round{x^{(m-(j-1)-1)}}_0)+\round{x^{(m-(j-1)-1)}}_0\\
			&=x^{(m-j)}\ .
		\end{align}
		In the penultimate step, we used that $z^{(m-j)}\in \mathbb{N}_0(\varepsilon^{(m-j)}_\bosonC)\subseteq \mathbb{N}_0(\varepsilon)$ for some $\varepsilon<1/4$. Similarly, we have
		\begin{align}
			\pz^{(j)}&=(\pz^{(j-1)}-\round{\pz^{(j-1)}}_0)/2\qquad\qquad\ \ \textrm{ by definition (Eq.~\eqref{eq:Valpha input output behaviour tilde})}\\
			&=(z^{(m-(j-1))}-\round{z^{(m-(j-1))}}_0)/2 \qquad\textrm{ by the induction hypothesis}\\
			&=(2z^{(m-(j-1)-1)}+\round{x^{(m-(j-1)-1)}}_0-\round{2z^{(m-(j-1)-1)}+\round{x^{(m-(j-1)-1)}}_0}_0)/2\quad\textrm{by definition (Eq.~\eqref{eq:Valpha input output behaviour})}\notag\\
			&=(2z^{(m-j)}+\round{x^{(m-j)}}_0-\round{x^{(m-j)}}_0)/2\\
			&=z^{(m-j)}\ .
		\end{align}
		In the penultimate step, we used that $z^{(m-(j-1)-1)}=z^{(m-j)}\in \mathbb{N}_0(\varepsilon^{(m-j)}_\bosonC)\subseteq \mathbb{N}_0(\varepsilon)$ for some $\varepsilon<1/4$. 
		
		Next, we show  that 
		\begin{align}
			\py^{(j)}=\left(\prod_{i=1}^{j} (1/\beta_i)^{\round{x}_{m-i}}\right) \cdot y^{(m)}\qquad\textrm{ for all }\qquad j\in\{0,\ldots,m\}\ .\label{eq:pymrec}
		\end{align} Eq.~\eqref{eq:pymrec} is obtained inductively by using
		$\py^{(j)}=(1/\beta_j)^{\round{x}_{m-j}}\cdot \py^{(j-1)}$, which can be shown as follows: We have 
		\begin{align}
			\py^{(j)} &= (1/\beta_j)^{\round{\pz^{(j-1)}}_0}\cdot \py^{(j-1)} \qquad \textrm{ by definition (Eq.~\eqref{eq:Valpha input output behaviour tilde})}\\
			&= (1/\beta_j)^{\round{z^{(m-(j-1))}}_0}\cdot \py^{(j-1)} \qquad \textrm{ by~\eqref{eq: px x pz z}}\\
			&= (1/\beta_j)^{\round{2z^{(m-j)}+\round{x^{(m-j)}}_0}_0}\cdot \py^{(j-1)} \\
			&= (1/\beta_j)^{\round{x^{(m-j)}}_0}\cdot \py^{(j-1)}\label{eq:py to expr one}\\
			&= (1/\beta_j)^{\round{x}_{m-j}}\cdot \py^{(j-1)} \label{eq:xm to x bit}
		\end{align}
		where we obtained~\eqref{eq:py to expr one} from $z^{(m-j)}\in\mathbb{N}_0(\varepsilon_\bosonC^{(m-j)})\subseteq \mathbb{N}_{0}(\varepsilon)$ for some $\varepsilon<1/4$ and Eq.~\eqref{eq:xm to x bit} by 
		\begin{align}
			\round{x^{(j)}}_0=\round{x}_{j} \qquad\textrm{ for $j\in\{0,\ldots, m\}$} \ . \label{eq: xj to jth bit}   
		\end{align}
		This claim can be seen as follows. We have
		\begin{align}
			\round{x^{(j)}}_0 &= \round{(x^{(j-1)}-\round{x^{(j-1)}}_0)/2}_0 \qquad \textrm{ by definition (see~\eqref{eq:Valpha input output behaviour})}\\
			&=\round{x^{(j-1)}}_1 \label{eq:xm to bit one}
		\end{align}
		where we used the identity $\round{(x-\round{x}_0)/2}_0=\round{x}_1$ for any $x\in\mathbb{R}$. We obtain~\eqref{eq: xj to jth bit} by induction.

		In conclusion, we have shown the following action of the unitary $V(\vec{\beta},\vec{\alpha})$.
		\begin{align}
			\vcenter{\hbox{\includegraphics{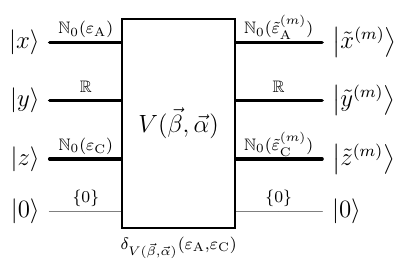}}}
		\end{align}
		where  
		\begin{align}
			\begin{array}{ccl}
				\px^{(m)}&=&x^{(0)} = x\\
				\py^{(m)}&=&\left(\prod_{j=1}^{m} (1/\beta_j)^{\round{x}_{m-j}}\right) \cdot y^{(m)} = g_{\vec{\beta}}(\round{x})\cdot f_{\vec{\alpha}}(\round{x})\cdot y \\
				\pz^{(m)}&=&z^{(0)} = z
			\end{array}\quad\textrm{ and }\quad 
			\begin{matrix}
				\tilde{\varepsilon}_\bosonA^{(m)}&=&\varepsilon_\bosonA\\
				\tilde{\varepsilon}_\bosonC^{(m)}&=&\varepsilon_\bosonC \ 
			\end{matrix}\ .
		\end{align}
		
		It remains to determine the error bound~$\delta_{V(\vec{\beta},\vec{\alpha})}(\varepsilon_\bosonA,\varepsilon_\bosonC)$. 
		By Lemma~\ref{lem: V(a) unitary error} we have the  error bound 
		\begin{align}
			\delta_{V^\dagger_{\alpha_j}}(\varepsilon_{\bosonA}^{(j-1)},  \varepsilon_\bosonC^{(j-1)}) = 7(\varepsilon_{\bosonA}^{(j-1)})^{1/4}+7\cdot(2\varepsilon_\bosonC^{(j-1)})^{1/4}\ ,
		\end{align}
		and by Lemma~\ref{lem: V(a) dagger gate error} we have
		\begin{align}
			\delta_{V^\dagger_{\beta_j}}(\tilde{\varepsilon}_{\bosonA}^{(j-1)},  \tilde{\varepsilon}_\bosonC^{(j-1)}) = 7(\tilde{\varepsilon}_{\bosonA}^{(j-1)})^{1/4}+7\cdot(2\tilde{\varepsilon}_\bosonC^{(j-1)})^{1/4} \ .
		\end{align}
		We obtain an error bound~$\delta_{V_{\alpha_m}\cdots V_{\alpha_1}}$ on the approximate action of the composition of unitaries $V_{\alpha_m}\cdots V_{\alpha_1}$ from Lemma~\ref{lem: composition closeness}, that is, we can use 
		\begin{align}
			\delta_{V_{\alpha_m}\cdots V_{\alpha_1}}&=\sum_{j=1}^{m} \delta_{V_{\alpha_j}}(\varepsilon_\bosonA^{(j-1)},\varepsilon_\bosonC^{(j-1)}) \\
			&=\sum_{j=1}^{m} \left( 7(\varepsilon_{\bosonA}^{(j-1)})^{1/4}+7(2\varepsilon_\bosonC^{(j-1)})^{1/4}\right)\\
			&=\sum_{j=1}^{m} \left(7(2^{-(j-1)} \varepsilon_\bosonA)^{1/4}+7(2\cdot 2^{j-1}\varepsilon_\bosonC)^{1/4}\right)\\
			&=7\varepsilon_\bosonA^{1/4} \sum_{j=0}^{m-1} \left(\frac{1}{2^{1/4}}\right)^j + 7(2\varepsilon_\bosonC)^{1/4}\sum_{j=0}^{m-1} \left(2^{1/4}\right)^j\\
			&=7\varepsilon_\bosonA^{1/4} \frac{1-2^{-m/4}}{1-2^{-1/4}}+7(2\varepsilon_\bosonC)^{1/4}\frac{1-2^{m/4}}{1-2^{1/4}} \\
			&= 7\varepsilon_\bosonA^{1/4} \frac{1-2^{-m/4}}{1-2^{-1/4}}+7 \varepsilon_\bosonC^{1/4}\frac{2^{m/4}-1}{1-2^{-1/4}} \\
			&\le 44\cdot \varepsilon_\bosonA^{1/4}+44\cdot(2^m\varepsilon_\bosonC)^{1/4}
			\ .\label{eq:ctrlMf delta m}
		\end{align}
		An error bound on the composition of unitaries $V^\dagger_{\beta_m}\cdots V^\dagger_{\beta_1}$ is obtained analogously as
		\begin{align}
			\delta_{V^\dagger_{\beta_m}\cdots V^\dagger_{\beta_1}}&=	\sum_{j=1}^{m} \delta_{V^\dagger_{\beta_j}}(\varepsilon_\bosonA^{(m+j)},\varepsilon_\bosonC^{(m+j)}) \\
			&=\sum_{j=1}^{m} \left( 7(2\cdot \varepsilon_{\bosonA}^{(m+j)})^{1/4}+7(\varepsilon_\bosonC^{(m+j)})^{1/4}\right)\\
			&=\sum_{j=1}^{m} \left( 7(2\cdot 2^{-m+(j-1)} \varepsilon_{\bosonA})^{1/4}+7(2^{m-(j-1)}\varepsilon_\bosonC)^{1/4}\right)\\
			&=7\cdot 2^{-(m-1)/4}\varepsilon_\bosonA^{1/4}\sum_{j=0}^{m-1} (2^{1/4})^{j}+7\cdot 2^{m/4}\varepsilon_\bosonC^{1/4} \sum_{j=0}^{m-1}\left(\frac{1}{2^{1/4}}\right)^{j} \\
			&=7\cdot 2^{-(m-1)/4}\varepsilon_\bosonA^{1/4}\frac{1-2^{m/4}}{1-2^{1/4}}+7\cdot 2^{m/4}\varepsilon_\bosonC^{1/4} \frac{1-2^{-m/4}}{1-2^{-1/4}}\\
			&=7\cdot \varepsilon_\bosonA^{1/4}\frac{1-2^{-m/4}}{1-2^{-1/4}}+7\cdot \varepsilon_\bosonC^{1/4} \frac{2^{m/4}-1}{1-2^{-1/4}}\\
			&\le 44 \cdot \varepsilon_A^{1/4}+44\cdot (2^m \varepsilon_\bosonC)^{1/4}
			\ .\label{eq:ctrlMf delta prime m}
		\end{align}
		Combining~\eqref{eq:ctrlMf delta m} and~\eqref{eq:ctrlMf delta prime m} yields
		\begin{align}
			\delta_{V_{\alpha_m}\cdots V_{\alpha_1}} + \delta_{V^\dagger_{\beta_m}\cdots V^\dagger_{\beta_1}} 
			&\le 88\cdot \varepsilon_\bosonA^{1/4} + 88\cdot (2^{m}\varepsilon_\bosonC)^{1/4}  \ . \label{eq: VaNm delta}
		\end{align}
		We set $\delta_{V(\vec{\beta},\vec{\alpha})}(\varepsilon_\bosonA ,\varepsilon_\bosonC ):=88\cdot \varepsilon_\bosonA^{1/4} + 88\cdot( 2^{m}\varepsilon_\bosonC)^{1/4}$.
		This completes the proof.
	\end{proof}

	We show the properties of the unitary $\Vam$ defined in Fig.~\ref{fig:composite gates}  of the main paper, whose definition we recall here:
	\begin{align}
		\raisebox{-47pt}{\mbox{\includegraphics{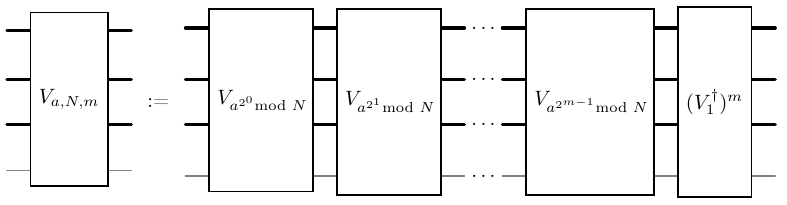}}}
		\label{eq: VaNm circuit}
	\end{align}
	where we used the $m$th power of the unitary $(V_1^\dagger)^m$ to indicate the $m$-fold application of the unitary $V_1^\dagger$.

	\begin{samepage}
		\begin{lemma} \label{lem: controlled M_abullet}
			Let $\varepsilon_\bosonA\in(0,1/4)$ and let $\varepsilon_\bosonC\in(0,2^{-(m+2)})$.
			Consider the unitary $\Vam$ described by the circuit in Fig.~\ref{fig:composite gates}  of the main text. 
			Then
			\begin{align}\hspace*{1.5cm}
				\vcenter{\hbox{\includegraphics{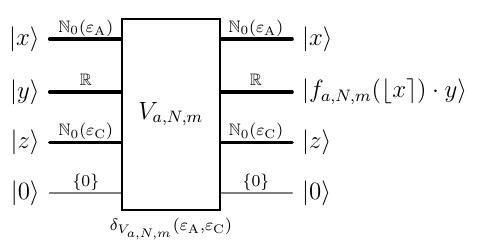}}}
			\end{align}
			where the function $f_{a,N,m}(\round{x})=\prod_{i=0}^{m-1} \left(a^{2^i}\xmod N \right)^{\round{x}_i}$ is the pseudomodular power and where $\delta_{V_{a,N,m}}(\varepsilon_\bosonA ,\varepsilon_\bosonC ) = 88\cdot \varepsilon_\bosonA^{1/4} + 88\cdot (2^{m}\varepsilon_\bosonC)^{1/4}$.
		\end{lemma}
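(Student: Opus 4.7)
The plan is to recognize $V_{a,N,m}$ as a special instance of the composite unitary $V(\vec\beta,\vec\alpha)$ analyzed in Lemma~\ref{lem: V m V dagger m}, and then simply invoke that lemma with the appropriate choice of parameters. Concretely, reading off the circuit~\eqref{eq: VaNm circuit}, I identify
\[
V_{a,N,m} \;=\; (V_1^\dagger)^m \cdot V_{a^{2^{m-1}}\xmod N}\cdots V_{a^{2^0}\xmod N}
\;=\; V(\vec\beta,\vec\alpha)
\]
with the choice
\[
\alpha_i = a^{2^{i-1}}\xmod N \quad\text{and}\quad \beta_i = 1 \qquad\text{for } i=1,\ldots,m.
\]
This matches exactly the form of $V(\vec\beta,\vec\alpha)$ required by Lemma~\ref{lem: V m V dagger m}, with the given ranges $\alpha_i,\beta_i\in(0,\infty)$.

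Next, I compute the two combinatorial functions that appear in the output. By definition,
\[
f_{\vec\alpha}(\round{x}) \;=\; \prod_{i=1}^{m} \alpha_i^{\round{x}_{i-1}}
\;=\; \prod_{i=1}^{m} (a^{2^{i-1}}\xmod N)^{\round{x}_{i-1}}
\;=\; \prod_{j=0}^{m-1} (a^{2^{j}}\xmod N)^{\round{x}_{j}}
\;=\; f_{a,N,m}(\round{x}),
\]
where the last equality is just the definition~\eqref{eq:pseudomodularpowerfirstdef} of the pseudomodular power. Similarly, since every $\beta_i=1$, we have $g_{\vec\beta}(\round{x})=\prod_{i=1}^m 1^{\round{x}_{m-i}} = 1$. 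Hence the ``output'' of $V(\vec\beta,\vec\alpha)$ on the second mode reduces to $g_{\vec\beta}(\round{x})\cdot f_{\vec\alpha}(\round{x})\cdot y = f_{a,N,m}(\round{x})\cdot y$, exactly as claimed.

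The hypotheses of Lemma~\ref{lem: V m V dagger m} on the truncation parameters, $\varepsilon_\bosonA\in(0,1/4)$ and $\varepsilon_\bosonC\in(0,2^{-(m+2)})$, are identical to those of the current statement and so are satisfied by assumption. Applying the lemma therefore yields the claimed approximate action with error bound
\[
\delta_{V_{a,N,m}}(\varepsilon_\bosonA,\varepsilon_\bosonC) \;=\; \delta_{V(\vec\beta,\vec\alpha)}(\varepsilon_\bosonA,\varepsilon_\bosonC)\;=\; 88\,\varepsilon_\bosonA^{1/4} + 88\,(2^{m}\varepsilon_\bosonC)^{1/4}.
\]
There is no real obstacle here beyond a bookkeeping check that the parameter identification is correct and that Lemma~\ref{lem: V m V dagger m} applies verbatim; all substantive work (the $j$-by-$j$ induction, the propagation of the truncation radii $2^{\pm j}\varepsilon$, and the geometric-sum estimate producing the prefactor $88$) has already been done there. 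The lemma follows.
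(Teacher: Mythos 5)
Your proof is correct and follows exactly the same route as the paper's: identify $\Vam = V(\vec\beta,\vec\alpha)$ with $\alpha_j = a^{2^{j-1}}\xmod N$ and $\beta_j = 1$, observe $f_{\vec\alpha} = f_{a,N,m}$ and $g_{\vec\beta} = 1$, and invoke Lemma~\ref{lem: V m V dagger m}. Nothing to add.
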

	\end{samepage}
	\begin{proof}
		We apply Lemma~\ref{lem: V m V dagger m} with 
		\begin{align}
			\vec{\alpha}&=(\alpha_1,\ldots,\alpha_m) \qquad \textrm{ where } \alpha_j=a^{2^{j-1}}\xmod N \ ,\\
			\vec{\beta}&=(\beta_1,\ldots,\beta_m) \qquad \textrm{ where } \beta_j=1 \ 
		\end{align}
		for all $j\in\{1,\ldots m\}$. The claim follows since $\Vam=V(\vec{\beta},\vec{\alpha})$ by definition and since
		\begin{align}
			f_{\vec{\alpha}}(\round{x})&=\prod_{i=1}^{m} \left(a^{2^{i-1}}\xmod N \right)^{\round{x}_{i-1}}=f_{a,N,m}(\round{x}) \qquad \text{and}\\
			g_{\vec{\beta}}(\round{x})&=1\ .
		\end{align}
	\end{proof}
	
	We will need the following statement about the inverse $\Vam^\dagger$.
	\begin{lemma} \label{lem: controlled M_abullet dagger}
		Let $\varepsilon_\bosonA\in (0,1/4)$ and let $\varepsilon_\bosonC\in (0,2^{-(m+2)})$.
		Consider the unitary $\Vam$ described by the circuit in Fig.~\ref{fig:composite gates}  of the main text. 
		Then the action of its inverse $\Vam^\dagger$ is
		\begin{align}
			\hspace*{1.5cm}
			\raisebox{-70pt}{\mbox{\includegraphics{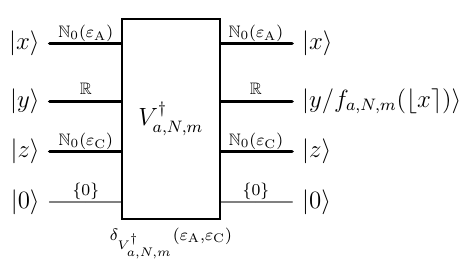}}}
			\ \ \ ,\label{eq:ctrlMf Wm+1}
		\end{align}
		where $\delta_{V_{a,N,m}^\dagger}(\varepsilon_\bosonA ,\varepsilon_\bosonC) = \delta_{V_{a,N,m}}(\varepsilon_\bosonA ,\varepsilon_\bosonC)= 88\cdot \varepsilon_\bosonA^{1/4} + 88\cdot (2^{m}\varepsilon_\bosonC)^{1/4}$
	\end{lemma}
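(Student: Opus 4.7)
The plan is to reduce Lemma~\ref{lem: controlled M_abullet dagger} directly to Lemma~\ref{lem: V m V dagger m}, in the same way that Lemma~\ref{lem: controlled M_abullet} was reduced to it, but with a different choice of the parameter vectors $\vec{\alpha}$ and $\vec{\beta}$. First I would take the adjoint of the circuit identity~\eqref{eq: VaNm circuit} defining $V_{a,N,m}$, using that $(AB)^\dagger=B^\dagger A^\dagger$ and that $(V_1^\dagger)^\dagger = V_1$. This rewrites
\begin{align}
V_{a,N,m}^\dagger \;=\; V^\dagger_{a^{2^0}\!\xmod N}\cdots V^\dagger_{a^{2^{m-1}}\!\xmod N}\cdot V_1\cdots V_1 \;=\; V(\vec\beta,\vec\alpha)\ ,
\end{align}
in the notation of Lemma~\ref{lem: V m V dagger m}, where $\alpha_i=1$ for all $i\in\{1,\dots,m\}$ and $\beta_i=a^{2^{m-i}}\!\xmod N$.

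Next I would compute the two functions appearing in Lemma~\ref{lem: V m V dagger m} for these particular $\vec\alpha$ and $\vec\beta$. Since $\alpha_i=1$, one has $f_{\vec\alpha}(\round{x})=\prod_{i=1}^{m}1^{\round{x}_{i-1}}=1$. For $g_{\vec\beta}$, substituting $\beta_i=a^{2^{m-i}}\!\xmod N$ and reindexing by $j=m-i$,
\begin{align}
g_{\vec\beta}(\round{x}) \;=\; \prod_{i=1}^{m}\bigl(1/\beta_i\bigr)^{\round{x}_{m-i}} \;=\; \prod_{j=0}^{m-1}\bigl(a^{2^j}\!\xmod N\bigr)^{-\round{x}_j} \;=\; \frac{1}{f_{a,N,m}(\round{x})}\ .
\end{align}
Hence the product $g_{\vec\beta}(\round{x})\cdot f_{\vec\alpha}(\round{x})\cdot y$ that appears on the second mode in the conclusion of Lemma~\ref{lem: V m V dagger m} equals $y/f_{a,N,m}(\round{x})$, which is exactly the $\bosonB$-output shown in the diagram~\eqref{eq:ctrlMf Wm+1}. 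The other three outputs ($x$ on $\bosonA$, $z$ on $\bosonC$, $\ket{0}$ on $\qubit$) and the domain conditions $x\in\mathbb{N}_0(\varepsilon_\bosonA)$, $z\in\mathbb{N}_0(\varepsilon_\bosonC)$ match those of Lemma~\ref{lem: V m V dagger m} verbatim.

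Finally, the error bound provided by Lemma~\ref{lem: V m V dagger m} is $\delta_{V(\vec\beta,\vec\alpha)}(\varepsilon_\bosonA,\varepsilon_\bosonC)=88\,\varepsilon_\bosonA^{1/4}+88(2^m\varepsilon_\bosonC)^{1/4}$, which coincides exactly with the bound $\delta_{V_{a,N,m}^\dagger}(\varepsilon_\bosonA,\varepsilon_\bosonC)$ claimed in the lemma and (as noted in the statement) with $\delta_{V_{a,N,m}}(\varepsilon_\bosonA,\varepsilon_\bosonC)$ from Lemma~\ref{lem: controlled M_abullet}. The hypotheses $\varepsilon_\bosonA\in(0,1/4)$ and $\varepsilon_\bosonC\in(0,2^{-(m+2)})$ are precisely those of Lemma~\ref{lem: V m V dagger m}, so no additional verification is required. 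There is no real obstacle here, since all the analytic content already sits in Lemma~\ref{lem: V m V dagger m}; the only thing to be careful about is the bookkeeping of the reversed order in the adjoint, which correctly sends $\beta_i\mapsto a^{2^{m-i}}\!\xmod N$ and produces a reciprocal of the pseudomodular power rather than the pseudomodular power itself.
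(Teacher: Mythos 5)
Your proposal is correct and follows essentially the same route as the paper: take the adjoint of the circuit identity for $V_{a,N,m}$, match it against the form $V(\vec\beta,\vec\alpha)$ of Lemma~\ref{lem: V m V dagger m} with $\alpha_j=1$ and $\beta_j=a^{2^{m-j}}\xmod N$, then evaluate $f_{\vec\alpha}=1$ and $g_{\vec\beta}=1/f_{a,N,m}$ to recover the claimed action and error bound. The only cosmetic difference is that you spell out the reindexing $j=m-i$ in the computation of $g_{\vec\beta}$ more explicitly than the paper does.
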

	\begin{proof}
		By taking the adjoint of $\Vam$ in Eq.~\eqref{eq: VaNm circuit} we have 
		\begin{align}
			\Vam^\dagger = \left((V_{1}^\dagger)^m \cdot V_{a^{2^{m-1}}\xmod N}\cdots V_{a^{2^0}\xmod N}\right)^\dagger=V_{a^{2^0}\xmod N}^\dagger \cdots V_{a^{2^{m-1}}\xmod N}^\dagger\cdot (V_1)^m \ \ .
		\end{align}
		Thus we can apply Lemma~\ref{lem: V m V dagger m} with
		\begin{align}
			\vec{\alpha}&=(\alpha_1,\ldots,\alpha_m) \qquad \textrm{ where } \alpha_j = 1 \ , \\
			\vec{\beta}&=(\beta_1,\ldots,\beta_m) \qquad \textrm{ where } \beta_j=a^{2^{m-j}}\xmod N \ 
		\end{align}
		for all $j\in\{0,1\}$. The claim follows since
		\begin{align}
			f_{\vec{\alpha}}(\round{x})&=1  \qquad \text{and}\\
			g_{\vec{\beta}}(\round{x})&=\prod_{i=1}^{m} 1/\left(a^{2^{m-i}}\xmod N \right)^{\round{x}_{m-i}}=1/f_{a,N,m}(\round{x})\ .
		\end{align}
	\end{proof}

	We now analyse the gate 
	$\Ua$ implementing the controlled translation by a pseudomodular power 
	(see Fig.~\ref{fig:composite gates}  of the main paper) which we recall here.
	\begin{align}
		\vcenter{\hbox{\includegraphics{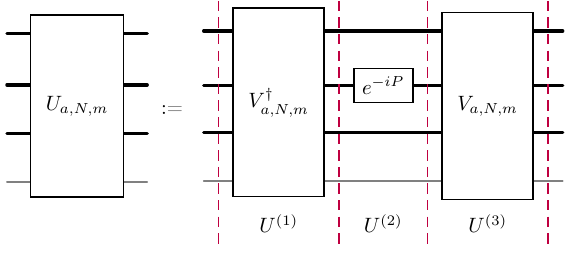}}}
	\end{align}
	This circuit 
	is composed of the unitaries
	\begin{align}
		\Ua=U^{(3)}U^{(2)}U^{(1)}\qquad\textrm{ where }
		\qquad
		\begin{matrix*}[l]
			U^{(1)}&=& \Vam^\dagger \\ 
			U^{(2)}&=& I_\bosonA \otimes e^{-iP_\bosonB}\otimes I_\bosonC\otimes I_\qubit\\
			U^{(3)}&=& \Vam 
		\end{matrix*}\ . \label{eq:UaNm decomposition}
	\end{align}
	The approximate actions of these unitaries on inputs which are close to integers on modes~$\bosonA$ and $\bosonC$ are given in Fig.~\ref{fig:UaNm Us approx action}.
	\begin{figure}[ht]
		\centering
		\begin{subfigure}[t]{0.47\textwidth}
			\includegraphics{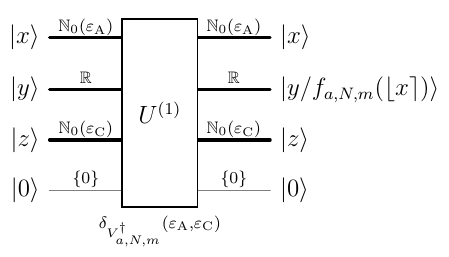}
			\caption{The action of the unitary $U^{(1)}$ is given by Lemma~\ref{lem: controlled M_abullet dagger}.
				Here $\delta_{V^\dagger_{a,N,m}}\!\!\!(\varepsilon_\bosonA,\varepsilon_\bosonC)=\delta_{V_{a,N,m}}(\varepsilon_\bosonA,\varepsilon_\bosonC)=88\cdot \varepsilon_\bosonA^{1/4} + 88\cdot (2^{m}\varepsilon_\bosonC)^{1/4}$.}\label{fig:UaNm U1}
		\end{subfigure}\hfill
		\begin{subfigure}[t]{0.47\textwidth}
			\raisebox{0.48cm}{\includegraphics{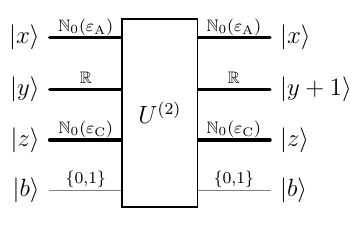}}
			\caption{The action of the unitary $U^{(2)}=I_\bosonA\otimes e^{-iP_\bosonB}\otimes I_\bosonC\otimes I_\qubit $. This unitary is implemented exactly.}\label{fig:UaNm U2}
		\end{subfigure}\\[0.5cm]
		\begin{subfigure}[t]{0.47\textwidth}
			\hspace*{0.3cm}\includegraphics{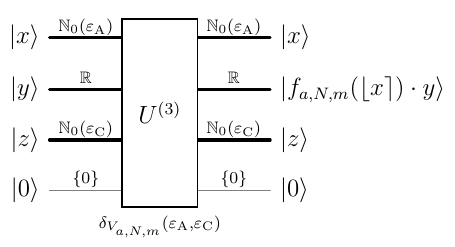}
			\caption{The action of the unitary $U^{(3)}$ given by Lemma~\ref{lem: controlled M_abullet} with $\delta_{V_{a,N,m}}(\varepsilon_\bosonA,\varepsilon_\bosonC)=88\cdot \varepsilon_\bosonA^{1/4} + 88\cdot (2^{m}\varepsilon_\bosonC)^{1/4}$.}\label{fig:UaNm U3}
		\end{subfigure}
		\caption{The approximate actions of the unitaries $U^{(1)}$, $U^{(2)}$ and $U^{(3)}$ constituting the unitary $\Uan=U^{(3)} U^{(2)} U^{(1)}$.}
		\label{fig:UaNm Us approx action}
	\end{figure}

	\begin{lemma}\label{lem: Ua shor circuit error} Let $\varepsilon_\bosonA\in(0,1/4)$ and $\varepsilon_\bosonC\in(0,2^{-(m+2)})$.
		Consider the unitary $\Ua$ described by the circuit in Fig.~\ref{fig:composite gates}  (of the main paper). Then
		\begin{center}
			\hspace*{1.25cm}\includegraphics{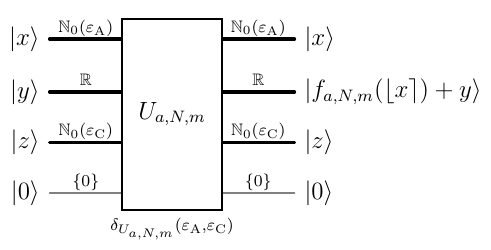}  
		\end{center}
		\vspace*{-0.3cm}
		where $\delta_{U_{a,N,m}}=176\cdot \varepsilon_\bosonA^{1/4} + 176\cdot (2^{m}\varepsilon_\bosonC)^{1/4}$.
	\end{lemma}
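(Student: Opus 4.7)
The plan is to apply the composition lemma (Lemma~\ref{lem: composition closeness}) to the decomposition $\Ua = U^{(3)} U^{(2)} U^{(1)}$ given in Eq.~\eqref{eq:UaNm decomposition}, using the approximate actions of the three constituents established in Fig.~\ref{fig:UaNm Us approx action}. The key observation is that the composed function we want,
\begin{align}
(x,y,z,0) \mapsto (x,\, y + f_{a,N,m}(\round{x}),\, z,\, 0)\ ,
\end{align}
factorises along the circuit exactly as $y \mapsto y/f_{a,N,m}(\round{x}) \mapsto 1 + y/f_{a,N,m}(\round{x}) \mapsto f_{a,N,m}(\round{x}) + y$, so the three exact function evaluations fit together and the composition property of approximate function evaluation yields the claim.

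First I will read off the three approximate actions on $\mathbb{N}_0(\varepsilon_\bosonA) \times \mathbb{R} \times \mathbb{N}_0(\varepsilon_\bosonC) \times \{0\}$: Lemma~\ref{lem: controlled M_abullet dagger} gives $U^{(1)} = \Vam^\dagger$ with error $\delta_{V_{a,N,m}^\dagger}(\varepsilon_\bosonA,\varepsilon_\bosonC) = 88\varepsilon_\bosonA^{1/4} + 88(2^m\varepsilon_\bosonC)^{1/4}$, the displacement $U^{(2)} = I \otimes e^{-iP_\bosonB} \otimes I \otimes I$ is implemented exactly (error $0$), and Lemma~\ref{lem: controlled M_abullet} gives $U^{(3)} = \Vam$ with error $\delta_{V_{a,N,m}}(\varepsilon_\bosonA,\varepsilon_\bosonC) = 88\varepsilon_\bosonA^{1/4} + 88(2^m\varepsilon_\bosonC)^{1/4}$. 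The hypotheses $\varepsilon_\bosonA \in (0,1/4)$ and $\varepsilon_\bosonC \in (0, 2^{-(m+2)})$ are exactly those required by these two lemmas, so their conclusions apply directly.

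Next I verify the range–domain compatibility needed for Lemma~\ref{lem: composition closeness}. The action of $U^{(1)}$ on inputs in $\mathbb{N}_0(\varepsilon_\bosonA) \times \mathbb{R} \times \mathbb{N}_0(\varepsilon_\bosonC) \times \{0\}$ produces outputs in the same set (the $\bosonA,\bosonC$ registers and the qubit are untouched, and $y$ is merely rescaled by a nonzero factor in $\mathbb{R}$). The same is true for $U^{(2)}$, which only shifts $y$ by $1$. Hence each intermediate output is contained in the (closed) domain on which the next unitary's approximate evaluation lemma applies, and two applications of Lemma~\ref{lem: composition closeness} give
\begin{align}
\left\| \Ua \Psi - U_{f^{(3)} \circ f^{(2)} \circ f^{(1)}} \Psi \right\|_1 \le 2\bigl(88 \varepsilon_\bosonA^{1/4} + 88(2^m\varepsilon_\bosonC)^{1/4}\bigr) = 176\varepsilon_\bosonA^{1/4} + 176(2^m\varepsilon_\bosonC)^{1/4}
\end{align}
for every $\Psi$ supported on $\mathbb{N}_0(\varepsilon_\bosonA) \times \mathbb{R} \times \mathbb{N}_0(\varepsilon_\bosonC) \times \{0\}$, where $f^{(3)} \circ f^{(2)} \circ f^{(1)}$ is the claimed composed function.

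There is essentially no obstacle here beyond bookkeeping: all the hard work sits in Lemmas~\ref{lem: controlled M_abullet} and~\ref{lem: controlled M_abullet dagger}, which already absorb the propagation of the tolerance parameters $\varepsilon_\bosonA, \varepsilon_\bosonC$ through the $2m$ iterated $V_{\alpha}$ gates inside $\Vam$. The only small point worth stating explicitly is that the Jacobian factor in Eq.~\eqref{Eq: pointwise normalization U_f} for the $y$-map $y \mapsto f_{a,N,m}(\round{x}) \cdot y$ is the pointwise normalisation already built into $\Vam$ (it comes from the squeezing $M_\alpha$ gates), so the composed exact isometry $U_{f^{(3)} \circ f^{(2)} \circ f^{(1)}}$ really is the translation $y \mapsto y + f_{a,N,m}(\round{x})$ on the second mode, with identity on the rest. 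This yields the diagrammatic identity displayed in the lemma with the stated error bound.
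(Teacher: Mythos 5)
Your proof is correct and follows essentially the same approach as the paper's: decompose $\Ua = U^{(3)}U^{(2)}U^{(1)}$ as in Eq.~\eqref{eq:UaNm decomposition}, invoke Lemmas~\ref{lem: controlled M_abullet dagger} and~\ref{lem: controlled M_abullet} for the two $\Vam^{\pm 1}$ factors, note that the middle displacement is exact, check domain--range compatibility, and apply Lemma~\ref{lem: composition closeness} twice to sum the error bounds $88 + 0 + 88 = 176$. The closing remark about the Jacobian prefactors from the $M_\alpha$ squeezings in $\Vam$ and $\Vam^\dagger$ cancelling to leave a pure translation on mode~$\bosonB$ is a welcome clarification that the paper leaves implicit.
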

	\begin{proof}
		We use Lemma~\ref{lem: composition closeness} to obtain the approximate action of the combined unitaries $U^{(2)}U^{(1)}$ and $U^{(3)}U^{(2)}U^{(1)}$ from the decomposition of unitary $\Ua$. Those are displayed in Fig.~\ref{fig:UaNm accumulated actions}. The claim follows from the action of the unitary $U^{(3)}U^{(2)}U^{(1)}=\Ua$ shown in Fig.~\ref{fig:UaNm U3U2U1} and by the identity $\delta_{V_{a,N,m}^\dagger}(\varepsilon_\bosonA,\varepsilon_\bosonC)=\delta_{V_{a,N,m}}(\varepsilon_\bosonA,\varepsilon_\bosonC)=88\cdot \varepsilon_\bosonA^{1/4} + 88\cdot (2^{m}\varepsilon_\bosonC)^{1/4}$, see Fig.~\ref{fig:UaNm U1}.
	\end{proof}
	
	\begin{figure}[ht]
		\centering
		\begin{subfigure}[t]{\textwidth}
			\centering
			\hspace*{0.5cm}\includegraphics{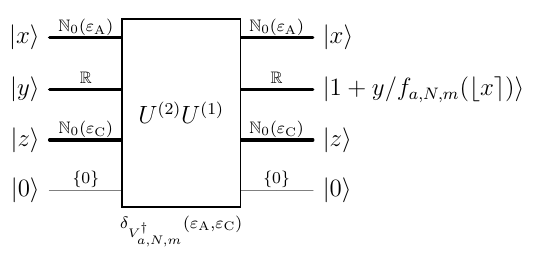}
			\caption{The action of the composed unitary $U^{(2)}U^{(1)}$ given by composing the action of $U^{(1)}$ from Fig.~\ref{fig:UaNm U1} with the action of $U^{(2)}$ from Fig.~\ref{fig:UaNm U2}.}
			\label{fig:UaNm U2U1}
		\end{subfigure}\\[0.5cm]
		\begin{subfigure}[t]{\textwidth}
			\centering
			\hspace*{0.3cm}\includegraphics{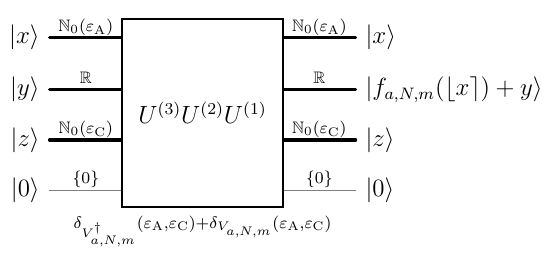}
			\caption{The action of the composed unitary $U^{(3)}U^{(2)}U^{(1)}=\Ua$ obtained by composing the action of the unitary $U^{(2)}U^{(1)}$ from Fig.~\ref{fig:UaNm U2U1} with the action of the unitary $U^{(3)}$ in Fig.~\ref{fig:UaNm U3}.}
			\label{fig:UaNm U3U2U1}
		\end{subfigure}
		\caption{Combined action of the unitaries $U^{(1)}$, $U^{(2)}$ and $U^{(3)}$ from Fig.~\ref{fig:UaNm Us approx action} composing $\Ua$ obtained by repeated application of Lemma~\ref{lem: composition closeness}.}
		\label{fig:UaNm accumulated actions}
	\end{figure}

	\subsection{Implications for the algorithm}\label{sec:approx function evaluation implication for algorithm}
	Finally, we can give a bound on the distance~$\left\|\Psi^{(1)}-\Psi^{(2)}\right\|_1$, the last remaining component of the proof of Proposition~\ref{prop:propositionone} (see Section~\ref{sec: prop one proof}). 
	
	\begin{lemma}\label{lem: approx_unitary} 
		Let $N\in\mathbb{N}$ be an $n$-bit integer. Let $(m,R,\kappa_\bosonA,\Delta_\bosonA,\kappa_\bosonB,\Delta_\bosonB, \Delta_\bosonC)$ be as specified in Table~\ref{tab: parameters} and let $\varepsilon_\bosonA=\sqrt{\Delta_\bosonA}$, $\varepsilon_\bosonB=\sqrt{\Delta_\bosonB}$ and $\varepsilon_\bosonC=\sqrt{\Delta_\bosonC}$. Consider the two states
		\begin{align}
			\ket{\Psi^{(1)}}&=\Uan\big(\cpsi{1}\cdot\Pi_{\scriptsize[-1/2,2R-1/2]} e^{-iRP}\ket{\gkp^{\varepsilon_\bosonA }_{\kappa_\bosonA,\Delta_\bosonA }}\otimes M_N\ket{\gkp^{\varepsilon_\bosonB }_{\kappa_\bosonB ,\Delta_\bosonB }}\otimes\ket{\Psi_{\Delta_\bosonC}^{\varepsilon_\bosonC}}\otimes \ket{0} \big)\\
			\intertext{and}
			\ket{\Psi^{(2)}} &= \cpsi{2}\sum_{z=0}^{2R-1} 
			\eta_{\kappa_\bosonA}(z-R)\ket{\chi^{\varepsilon_\bosonA }_{\Delta_\bosonA }(z)}\otimes
			e^{-if_{a,N,m}(z)P} M_N\ket{\gkp^{\varepsilon_\bosonB }_{\kappa_\bosonB ,\Delta_\bosonB }}\otimes \ket{\Psi_{\Delta_\bosonC}^{\varepsilon_\bosonC}}\otimes \ket{0} .
		\end{align}
		Then
		\begin{align}
			\left\|\Psi^{(1)}-\Psi^{(2)}\right\|_1
			&\le 352\cdot 2^{-2n}
			=:\varepsilon^{(2)} \ .
		\end{align}
	\end{lemma}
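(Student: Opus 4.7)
The plan is to identify $\Psi^{(1)}$ as $U_{a,N,m}$ applied to a well-behaved input state $\Phi^{(1)}$ supported near integers on modes $\bosonA$ and $\bosonC$, then invoke Lemma~\ref{lem: Ua shor circuit error} directly and verify that the ideal action $U_f\Phi^{(1)}$ (in the sense of the approximate-function-evaluation framework of Section~\ref{sec: approx function evaluation}) equals $\Psi^{(2)}$. Let $\Phi^{(1)}=c_1\cdot\Pi_{[-1/2,2R-1/2]}e^{-iRP_\bosonA}\ket{\gkp^{\varepsilon_\bosonA}_{\kappa_\bosonA,\Delta_\bosonA}}\otimes M_N\ket{\gkp^{\varepsilon_\bosonB}_{\kappa_\bosonB,\Delta_\bosonB}}\otimes\ket{\Psi^{\varepsilon_\bosonC}_{\Delta_\bosonC}}\otimes\ket{0}$, so that $\Psi^{(1)}=U_{a,N,m}\Phi^{(1)}$.

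The first step is to expand the first factor in a peak basis: since $e^{-iRP_\bosonA}\ket{\chi^{\varepsilon_\bosonA}_{\Delta_\bosonA}(z)}=\ket{\chi^{\varepsilon_\bosonA}_{\Delta_\bosonA}(z+R)}$ and the peaks are supported in intervals of width $2\varepsilon_\bosonA<1$, the projection retains exactly the peaks with centers $z\in\{0,\dots,2R-1\}$. This yields
\begin{align}
\Pi_{[-1/2,2R-1/2]}e^{-iRP_\bosonA}\ket{\gkp^{\varepsilon_\bosonA}_{\kappa_\bosonA,\Delta_\bosonA}}&=C_{\kappa_\bosonA}\sum_{z=0}^{2R-1}\eta_{\kappa_\bosonA}(z-R)\ket{\chi^{\varepsilon_\bosonA}_{\Delta_\bosonA}(z)}\ .
\end{align}
Consequently $\supp(\Phi^{(1)})\subseteq \{0,\dots,2R-1\}(\varepsilon_\bosonA)\times\mathbb{R}\times[-\varepsilon_\bosonC,\varepsilon_\bosonC]\times\{0\}$, which is contained in the domain $\mathbb{N}_0(\varepsilon_\bosonA)\times\mathbb{R}\times\mathbb{N}_0(\varepsilon_\bosonC)\times\{0\}$ on which Lemma~\ref{lem: Ua shor circuit error} applies.

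Next, I apply Lemma~\ref{lem: Ua shor circuit error} to conclude
\begin{align}
\left\|U_{a,N,m}\Phi^{(1)}-U_f\Phi^{(1)}\right\|_1\le\delta_{U_{a,N,m}}(\varepsilon_\bosonA,\varepsilon_\bosonC)=176\cdot\varepsilon_\bosonA^{1/4}+176\cdot(2^m\varepsilon_\bosonC)^{1/4}\ ,
\end{align}
where $U_f$ is the isometry associated with the function $(x,y,u,b)\mapsto(x,y+f_{a,N,m}(\round{x}),u,b)$ on this domain. Since $\round{z}=z$ for $z\in\{0,\dots,2R-1\}$, inserting the peak expansion and using linearity of $U_f$ together with the fact that displacements act coherently as $e^{-if_{a,N,m}(z)P_\bosonB}$ on mode $\bosonB$ gives
\begin{align}
U_f\Phi^{(1)}&=c_1 C_{\kappa_\bosonA}\sum_{z=0}^{2R-1}\eta_{\kappa_\bosonA}(z-R)\ket{\chi^{\varepsilon_\bosonA}_{\Delta_\bosonA}(z)}\otimes e^{-if_{a,N,m}(z)P_\bosonB}M_N\ket{\gkp^{\varepsilon_\bosonB}_{\kappa_\bosonB,\Delta_\bosonB}}\otimes\ket{\Psi^{\varepsilon_\bosonC}_{\Delta_\bosonC}}\otimes\ket{0}\ .
\end{align}
Since $U_f$ is an isometry on its domain, this state has unit norm, and by the pairwise orthogonality of $\{\chi^{\varepsilon_\bosonA}_{\Delta_\bosonA}(z)\}_{z\in\mathbb{Z}}$ and the unitarity of the remaining operations acting on modes $\bosonB,\bosonC$ and qubit, the normalization constant matches $\cpsi{2}$, so $U_f\Phi^{(1)}=\Psi^{(2)}$.

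Finally, inserting the parameters from Table~\ref{tab: parameters} gives $\varepsilon_\bosonA^{1/4}=2^{-2n}$ and $(2^m\varepsilon_\bosonC)^{1/4}=(2^{17n}\cdot 2^{-25n})^{1/4}=2^{-2n}$, so $\delta_{U_{a,N,m}}(\varepsilon_\bosonA,\varepsilon_\bosonC)\le 352\cdot 2^{-2n}=\varepsilon^{(2)}$. The only subtle point, and the reason one cannot simply expand $\Phi^{(1)}$ as a superposition of order $R=2^{\Theta(n)}$ peaks and bound each term separately by a triangle inequality (which would yield a useless bound), is that Lemma~\ref{lem: Ua shor circuit error} controls the $L^1$-distance uniformly over the whole domain; this uniform control, ultimately tracing back to the $L^2$-based Cauchy--Schwarz argument in Lemma~\ref{lem: approximate function discrete}, is what makes the bookkeeping work.
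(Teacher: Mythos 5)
Your proposal is correct and follows essentially the same approach as the paper's proof: identify $\Psi^{(1)}=U_{a,N,m}\Phi^{(1)}$, observe that $\Phi^{(1)}$ is supported on $\mathbb{N}_0(\varepsilon_\bosonA)\times\mathbb{R}\times\mathbb{N}_0(\varepsilon_\bosonC)\times\{0\}$, invoke Lemma~\ref{lem: Ua shor circuit error} and identify $U_f\Phi^{(1)}=\Psi^{(2)}$, then insert the Table~\ref{tab: parameters} values using the same cancellation $2^m\varepsilon_\bosonC=\varepsilon_\bosonA$. The paper is a bit terser about the peak-expansion and the identification $U_f\Phi^{(1)}=\Psi^{(2)}$ (it simply observes $\ket{f_{a,N,m}(\round{x})+y}=e^{-if_{a,N,m}(\round{x})P}\ket{y}$ and cites Definition~\ref{def: approx computation}); your elaboration — including the Jacobian-free isometry argument for the normalization match and the remark about why a peak-by-peak triangle inequality would lose a factor of order $R$ — is a faithful and slightly more explicit rendering of the same argument.
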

	
	\begin{proof}
		Since $\varepsilon_\bosonA<1/2$ can write 
		\begin{align}
			\ket{\Psi^{(1)}} &= U_{a,N,m}\left(\cpsi{1}\sum_{z=0}^{2R-1} 
			\eta_{\kappa_\bosonA}(z-R)\ket{\chi^{\varepsilon_\bosonA }_{\Delta_\bosonA }(z)}\otimes M_N\ket{\gkp^{\varepsilon_\bosonB }_{\kappa_\bosonB ,\Delta_\bosonB }}\otimes \ket{\Psi_{\Delta_\bosonC}^{\varepsilon_\bosonC}}\otimes \ket{0}\right)\ .
		\end{align}
		In Lemma~\ref{lem: Ua shor circuit error}
		we showed that the approximate action of the unitary~$\Ua$ is 
		\begin{center}
			\hspace*{1cm}\includegraphics{UaNm.pdf}
		\end{center}
		Since $\ket{\lambda+y}=e^{-i\lambda P}\ket{y}$ for any $\lambda,y\in \mathbb{R}$ we have that $\ket{f_{a,N,m}(\round{x})+y}=e^{-if_{a,N,m}(\round{x})P}\ket{y}$. Therefore by Definition~\ref{def: approx computation}, we have
		\begin{align}
			\left\|\Psi^{(1)}-\Psi^{(2)}\right\|_1
			& \le \delta_{U_{a,N,m}(\varepsilon_\bosonA,\varepsilon_\bosonC)}\\
			&= 176\cdot \varepsilon_\bosonA^{1/4} + 176\cdot (2^{m}\varepsilon_\bosonC)^{1/4} \\
			& = 176\cdot \varepsilon_\bosonA^{1/4}+176\cdot (2^{m}(2^{-m}\varepsilon_\bosonA))^{1/4}\\
			& = 352 \cdot \varepsilon_\bosonA^{1/4} \\
			& =352 \cdot 2^{-2n} \ ,
		\end{align}
		where we noted that $\varepsilon_\bosonC=2^{-m}\varepsilon_\bosonA$ and $\varepsilon_\bosonA = \sqrt{\Delta_\bosonA}$ and used the parameters
		$(m,R,\kappa_\bosonA,\Delta_\bosonA,\kappa_\bosonB,\\\Delta_\bosonB, \Delta_\bosonC)$ given in Table~\ref{tab: parameters}.
	\end{proof}

	\clearpage
	\section{Proof of Proposition~\ref{prop:propositiontwo}: Distribution of measurement outcomes}\label{sec: prop two proof}
	The goal of this section is to prove Proposition~\ref{prop:propositiontwo}. This proposition centers around the distribution~$p_{\Phi_a}$ on~$\mathbb{R}$ obtained by performing a homodyne $P$-quadrature measurement on mode~$\bosonA$ of the state
	$\Phi_a \in L^2(\mathbb{R})^{\otimes 3}\otimes \mathbb{C}^2$ defined in Eq.~\eqref{eq:stateoutputm} of Proposition~\ref{prop:propositionone} which we repeat here
	\begin{align}
		\!\!\!\ket{\Phi_a} &=c_{\Phi_a} \sum_{z\in \mathbb{Z}}
		\eta_{\kappa_\bosonA}(z-R)\ket{\chi_{\Delta_\bosonA }(z)}\otimes 
		e^{-i(a^z\xmod N)P_\bosonB}M_N\ket{\gkp^{\varepsilon_\bosonB }_{\kappa_\bosonB ,\Delta_\bosonB }}\otimes \ket{\Psi_{\Delta_\bosonC}^{\varepsilon_\bosonC}} \otimes \ket{0}\   \label{eq:stateoutputmrestated}
	\end{align}
	where~$c_{\Phi_a}>0$ is a normalization constant.
	Recall that by  Proposition~\ref{prop:propositionone}, this state is close to the output state of the circuit~$\cQ_{a,N}$. In particular, the following property of~$p_{\Phi_a}$ immediately implies an analogous property for the actual output distribution of the circuit (see Eq.~\eqref{eq: bound distribution by trace dist}). 
	
	The property of interest --- established by Proposition~\ref{prop:propositiontwo} --- is that the family of distributions~$\{p_{\Phi_a}\}_{a\in\mathbb{Z}_N^*}$ is suitable (for factoring) in the sense of Definition~\ref{def:suitabledistribution}, i.e., it satisfies
	\begin{align}
		\min_{d\in \mathbb{Z}_{r(a)}} \int_{\Gamma_{d}(a)} p_{\Phi_a}(w) dw\ge \Omega(1)\cdot \frac{1}{r(a)}\qquad\textrm{ for all }\qquad a\in\mathbb{Z}_N^*\ ,
	\end{align}
	where $r(a)$ is the multiplicative order of $a$ in $\mathbb{Z}_N^*$. For convenience, let us restate this claim.
	
	\propositiontwo*
	In order to compute the output distribution when measuring the momentum (i.e., $P$-quadrature) of mode~$\bosonA$ of the state $\Phi_a$, we proceed as follows: In Section~\ref{sec:fouriertransformPhi} we compute an expression of~$\Phi_a$ in momentum space (i.e., after a Fourier transform on $L^2(\mathbb{R})$). We will then analyze the Fourier-transform using Poisson's formula, and establish lower bounds on its modulus in certain intervals of interest, i.e., $\Gamma_{d}(a)$ for $d\in\mathbb{Z}_{r(a)}$, see  Section~\ref{sec:lowerboundsmodulusft}.
	Finally, in Section~\ref{sec:boundspdf}, we conclude the proof by  analyzing the probability density function~$p_{\Phi_a}$ given by Born's rule when measuring the $P$-quadrature of mode~$\bosonA$.

	\subsection{The Fourier transform of~$\Phi_a$\label{sec:fouriertransformPhi}}
	In this section, we compute the Fourier transform of~$\Phi_a$ given in~\eqref{eq:stateoutputmrestated}.
	For $a\in\mathbb{Z}_N^*$, it will be convenient to define the set 
	\begin{align}
		\Rem_{a,N}:=\{a^z \xmod N \mid z\in\mathbb{Z} \}\ .\label{eq: Rem a N}
	\end{align}
	That is, $\Rem_{a,N}\subseteq\mathbb{Z}_N^*$ is the subgroup generated by the element~$a$. 
	Observe that the set $\Rem_{a,N}$ has cardinality $|\Rem_{a,N}|=r(a)$ because $a\in\mathbb{Z}_N^*$ has order~$r(a)$.
	
	By straightforward computation, we show the following:
	\begin{lemma}\label{lem:fouriertransformandrhoa}
		We have 
		\begin{align}
			\ket{\Phi_a}=\sum_{k\in\Rem_{a,N}}\ket{\Theta_k}\otimes e^{-ikP_\bosonB}M_N\ket{\gkp^{\varepsilon_\bosonB }_{\kappa_\bosonB ,\Delta_\bosonB }} \otimes \ket{\Psi_{\Delta_\bosonC}^{\varepsilon_\bosonC}}\otimes \ket{0}\,\label{eq:thetakintroductions}
		\end{align}
		where for each~$k\in\Rem_{a,N}$, the  function~$\Theta_k$
		has the Fourier transform
		\begin{align}
			\widehat{\Theta}_k(w)&=c_{\Phi_a}\sum_{\substack{z\in\mathbb{Z}\\a^{z}\xmod N=k}}
			\eta_{\kappa_\bosonA}(z-R)\widehat{\Psi}_{\Delta_\bosonA }(w)e^{2\pi i zw}\ .
			\label{eq: F Theta k w}
		\end{align}
		Furthermore, the reduced density operator
		~$\rho_{\bosonA}=\tr_{\bosonB\bosonC\qubit} \proj{\Phi}$ after tracing out modes~$\bosonB$,$\bosonC$ and the qubit~$\qubit$ is equal to 
		\begin{align}
			\rho_{\bosonA} &=\sum_{k\in\Rem_{a,N}} \proj{\Theta_k}\ .\label{eq:rhoexpressionclaimb}
		\end{align}
	\end{lemma}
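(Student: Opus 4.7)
The plan is to prove the two claims separately, but both follow from the same reorganization of the sum defining $\Phi_a$. The key idea is to group terms in the sum $\sum_{z\in\mathbb{Z}}$ according to the value of $a^z \xmod N$. Since the second, third, and fourth tensor factors in~\eqref{eq:stateoutputmrestated} depend on $z$ only through $a^z\xmod N$, partitioning the sum by fibers of the map $z\mapsto a^z\xmod N$ yields
\begin{align*}
\ket{\Phi_a} = c_{\Phi_a}\sum_{k\in\Rem_{a,N}}\Bigg(\sum_{\substack{z\in\mathbb{Z}\\a^z\xmod N=k}}\eta_{\kappa_\bosonA}(z-R)\ket{\chi_{\Delta_\bosonA}(z)}\Bigg)\otimes e^{-ikP_\bosonB}M_N\ket{\gkp^{\varepsilon_\bosonB}_{\kappa_\bosonB,\Delta_\bosonB}}\otimes\ket{\Psi^{\varepsilon_\bosonC}_{\Delta_\bosonC}}\otimes\ket{0},
\end{align*}
which, upon identifying the inner sum as $\ket{\Theta_k}$ (up to the prefactor $c_{\Phi_a}$), gives~\eqref{eq:thetakintroductions}.

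To obtain the claimed expression for $\widehat{\Theta}_k$, I would use the defining identity $(\chi_\Delta(z))(x)=\Psi_\Delta(x-z)$ (cf.~\eqref{eq:chiDeltadefinition}) together with the translation rule for the Fourier transform: a shift by $z$ in position space produces a phase factor in momentum space, so $\widehat{\chi_{\Delta_\bosonA}(z)}(w) = e^{izw}\widehat{\Psi}_{\Delta_\bosonA}(w)$ (or equivalently $e^{2\pi i z w}\widehat{\Psi}_{\Delta_\bosonA}(w)$ depending on the Fourier convention adopted in this section). Linearity of the Fourier transform applied to the explicit expression for $\Theta_k$ immediately yields~\eqref{eq: F Theta k w}.

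For the reduced density operator~\eqref{eq:rhoexpressionclaimb}, the remaining work is to trace out systems $\bosonB\bosonC\qubit$ and show that the cross terms between distinct~$k,k'\in\Rem_{a,N}$ vanish. The factors on~$\bosonC$ and~$\qubit$ are the same state in every summand and contribute only a multiplicative $1$ upon tracing, so the only nontrivial point is pairwise orthogonality of the $\bosonB$-states $\{e^{-ikP_\bosonB}M_N\ket{\gkp^{\varepsilon_\bosonB}_{\kappa_\bosonB,\Delta_\bosonB}}\}_{k\in\Rem_{a,N}}$. The key observation is that $M_N\ket{\gkp^{\varepsilon_\bosonB}_{\kappa_\bosonB,\Delta_\bosonB}}$ has support contained in $N\mathbb{Z}+[-N\varepsilon_\bosonB,N\varepsilon_\bosonB]$, so translating by distinct integers $k,k'\in\Rem_{a,N}\subseteq\{0,\ldots,N-1\}$ produces states with disjoint supports provided that $2N\varepsilon_\bosonB<1$. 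This condition is amply satisfied for our parameter choices in Table~\ref{tab: parameters} (with $\varepsilon_\bosonB=\sqrt{\Delta_\bosonB}=2^{-9n^2}$ and $N\le 2^n$).

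The main obstacle, if any, is purely bookkeeping: ensuring that the Fourier convention is tracked consistently and that the orthogonality bound for the $\bosonB$-states is verified for the specific parameters. Neither step involves any new analytic ingredient; both follow from elementary properties of translations in position and momentum space together with the explicit support properties of the truncated approximate GKP states introduced in Section~\ref{sec: approximate gkp}.
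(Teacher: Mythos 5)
Your proposal is correct and takes essentially the same approach as the paper: partition the $z$-sum by fibers of $z\mapsto a^z\bmod N$, apply the Fourier translation rule to the Gaussian peaks, and reduce the trace computation to pairwise orthogonality of the $\bosonB$-mode states. The only cosmetic difference is that you argue the $\bosonB$-orthogonality directly from the disjointness of the supports of $e^{-ikP_\bosonB}M_N\gkp^{\varepsilon_\bosonB}_{\kappa_\bosonB,\Delta_\bosonB}$ for distinct $k$, whereas the paper rescales by $M_N^\dagger$ and invokes Lemma~\ref{lem:orthogonalitytruncated gkpstates} (which itself proves the same disjoint-support fact, phrased for the unscaled state with shift $(k-k')/N$), so these are the same argument after rescaling.
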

	We note that the vectors~$\{\ket{\Theta_k}\}_{k\in \Rem_{a,N}}$ in this expression are not normalized.
	\begin{proof}
		It is straightforward to verify that $\Phi_a$ in~\eqref{eq:stateoutputmrestated} can be written in the form of Eq.~\eqref{eq:thetakintroductions} with
		\begin{align}
			\ket{\Theta_k} &= c_{\Phi_a} \sum_{\substack{z\in \mathbb{Z}\\ a^z \xmod N=k}}\eta_{\kappa_\bosonA}(z-R)\ket{\chi_{\Delta_\bosonA }(z)}\qquad\textrm{ for }\qquad k\in\Rem_{a,N}\ .
		\end{align}
		By definition of the Fourier transform and $\chi_{\Delta_\bosonA}(\cdot)$, we have 
		\begin{align}
			\widehat{\Theta}_k(w)&=c_{\Phi_a}
			\sum_{\substack{z\in\mathbb{Z}\\a^{z}\xmod N=k}}
			\eta_{\kappa_\bosonA}(z-R)\int\Psi_{\Delta_\bosonA }(x-z)e^{2\pi i xw}dx\ .
		\end{align}
		Expression~\eqref{eq: F Theta k w} follows 
		from the fact that
		\begin{align}
			\int f(x-z)e^{2\pi i xw}dx&=\widehat{f}(w)e^{2\pi i z w}
		\end{align}
		for any $f\in L^1(\mathbb{R}) \cap L^2(\mathbb{R})$.
		
		The claim~\eqref{eq:rhoexpressionclaimb} about the reduced density operator~$\rho_{\bosonA}$ follows immediately from the pairwise orthogonality of the states $\left\{e^{-ikP_\bosonB}M_N\ket{\gkp^{\varepsilon_\bosonB }_{\kappa_\bosonB ,\Delta_\bosonB }}\right\}_{k\in \Rem_{a,N}}$ which we argue in the following. For $k\not= k'\in \Rem_{a,N}$ we have
		\begin{align}
			\langle e^{-ikP}M_N\gkp^{\varepsilon_\bosonB }_{\kappa_\bosonB ,\Delta_\bosonB },
			e^{-ik'P}M_N\gkp^{\varepsilon_\bosonB }_{\kappa_\bosonB ,\Delta_\bosonB }\rangle
			&=
			\langle \gkp^{\varepsilon_\bosonB }_{\kappa_\bosonB ,\Delta_\bosonB },
			M_N^\dagger e^{i(k'-k')P}M_N\gkp^{\varepsilon_\bosonB }_{\kappa_\bosonB ,\Delta_\bosonB }\rangle\\
			&=
			\langle \gkp^{\varepsilon_\bosonB }_{\kappa_\bosonB ,\Delta_\bosonB },
			M_N^\dagger e^{i(k-k')P/N}M_N\gkp^{\varepsilon_\bosonB }_{\kappa_\bosonB ,\Delta_\bosonB }\rangle \ .
		\end{align}
		It is easy to check that for $k\neq k'$ the real number
		$d=(k-k')/N$ is at distance at least~$1/N$ from the closest integer. Since  
		$\varepsilon_\bosonB = 2^{-9n^2}<2^{-(n+1)}\le 1/(2N)$ by the choice of parameters (see Table~\ref{tab: parameters}), it follows from
		Lemma~\ref{lem:orthogonalitytruncated gkpstates} that
		\begin{align}
			\langle \gkp^{\varepsilon_\bosonB }_{\kappa_\bosonB ,\Delta_\bosonB },
			M_N^\dagger e^{idP}M_N\gkp^{\varepsilon_\bosonB }_{\kappa_\bosonB ,\Delta_\bosonB }\rangle&=0 \ .\label{eq:dkkprimeN}
		\end{align}
	\end{proof}
	
	For later reference, we observe the following: Because of the pairwise orthogonality of the states  $\left\{e^{-ikP_\bosonB}M_N\ket{\gkp^{\varepsilon_\bosonB }_{\kappa_\bosonB ,\Delta_\bosonB }}\right\}_{k\in \Rem_{a,N}}$, 
	Lemma~\ref{lem:fouriertransformandrhoa} 
	implies that the probability density function $p_{\Phi_a}:\mathbb{R}\to\mathbb{R}$ of obtaining 
	an outcome
	$w\in\mathbb{R}$ when measuring the $P$-quadrature 
	of mode~$\bosonA$ in the state~$\Phi_a$ is given by 
	\begin{align}
		p_{\Phi_a}(w)=\sum_{k\in\Rem_{a,N}} \abs{\widehat{\Theta}_k(w)}^2\ .\label{eq: prob density}
	\end{align}
	
	\subsection{Bounds on the Fourier transform\label{sec:lowerboundsmodulusft}}
	We  show that~$|\widehat{\Theta}_k(w)|$ constituting the probability density function in Eq.~\eqref{eq: prob density} can be rewritten as follows. 
	\begin{lemma}\label{lem: F Theta k} Let $N\in \mathbb{N}$, $a\in\mathbb{Z}_N^*$, 
		and $k\in\Rem_{a,N}$ be arbitrary. Let $r=r(a)$ be the order of $a$ in $\mathbb{Z}_N^*$. Then 
		$\widehat{\Theta}_k(w)$ (see Eq.~\eqref{eq: F Theta k w}) satisfies
		\begin{align}
			\abs{\widehat{\Theta}_k(w)}
			&=\abs{c_{\Phi_a}}\frac{|\widehat{\Psi}_{\Delta_\bosonA }(w)|}{r}\abs{\sum_{z\in\mathbb{Z}} e^{2\pi i (R-\ind_a(k))z/r}\widehat{\eta}_{\kappa_\bosonA}(w + z/r)}\qquad\textrm{ for any }w\in\mathbb{R}\ ,\label{eq: F Theta k}
		\end{align}
		where for $k \in \Rem_{a,N}$ we denote the index of $k$ to the base $a$ modulo $N$ by $\ind_a(k)\in\mathbb{Z}_r$, i.e., $\ind_a(k)\in\mathbb{Z}_r$ is the smallest number in~$\{0,\ldots,r-1\}$ satisfying
		$a^{\ind_a(k)}\equiv k \pmod N$.
	\end{lemma}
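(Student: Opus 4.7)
The plan is to start from the formula for $\widehat{\Theta}_k(w)$ in Eq.~\eqref{eq: F Theta k w}, rewrite the constrained sum as an arithmetic progression, and then apply the Poisson summation formula to the Gaussian envelope~$\eta_{\kappa_\bosonA}$ in order to swap the sum on $\mathbb{Z}$ into a sum on the dual lattice~$(1/r)\mathbb{Z}$.

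First, I would use that~$a\in \mathbb{Z}_N^*$ has order~$r=r(a)$, so that for $z\in\mathbb{Z}$ the condition $a^z\equiv k\pmod N$ is equivalent to $z\equiv \ind_a(k)\pmod r$. Parametrizing by $z=\ind_a(k)+jr$ with $j\in\mathbb{Z}$ and setting $u:=\ind_a(k)-R$, expression~\eqref{eq: F Theta k w} becomes
\begin{align}
\widehat{\Theta}_k(w) &= c_{\Phi_a}\widehat{\Psi}_{\Delta_\bosonA}(w)\,e^{2\pi i\,\ind_a(k)\,w}\sum_{j\in\mathbb{Z}}\eta_{\kappa_\bosonA}(u+jr)\,e^{2\pi i jrw}\ .
\end{align}
The inner sum can be written as $e^{-2\pi iuw}\sum_{j\in\mathbb{Z}}\tilde g(u+jr)$ with $\tilde g(y):=\eta_{\kappa_\bosonA}(y)\,e^{2\pi iyw}$, which is Schwartz class because $\eta_{\kappa_\bosonA}$ is Gaussian.

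Next, I would apply the Poisson summation formula $\sum_{j\in\mathbb{Z}}\tilde g(u+jr)=\frac{1}{r}\sum_{n\in\mathbb{Z}}\widehat{\tilde g}(n/r)\,e^{2\pi inu/r}$, using that the Fourier transform of~$\tilde g$ at~$\xi$ equals $\widehat{\eta}_{\kappa_\bosonA}(\xi-w)$ by the modulation rule. Recombining and using $\ind_a(k)-u=R$, one obtains
\begin{align}
\widehat{\Theta}_k(w)=\frac{c_{\Phi_a}\,\widehat{\Psi}_{\Delta_\bosonA}(w)\,e^{2\pi iRw}}{r}\sum_{n\in\mathbb{Z}}\widehat{\eta}_{\kappa_\bosonA}\!\left(n/r-w\right)e^{2\pi in(\ind_a(k)-R)/r}\ .
\end{align}
Taking absolute values eliminates the unimodular prefactor $e^{2\pi iRw}$, and substituting $n\mapsto -n$ (together with the fact that $\eta_{\kappa_\bosonA}$ is an even real Gaussian, hence $\widehat{\eta}_{\kappa_\bosonA}$ is even and real) converts the argument of~$\widehat{\eta}_{\kappa_\bosonA}$ into $w+n/r$ and flips the sign of the exponent in the phase, producing exactly the expression claimed in Eq.~\eqref{eq: F Theta k} after relabeling $n\mapsto z$.

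The only delicate step is the Poisson summation: I would confirm the precise Fourier-transform convention used in Eq.~\eqref{eq: F Theta k w} (the $e^{2\pi ixw}$ kernel appearing there) and apply Poisson summation consistently with that convention, as an inconsistent convention would flip signs or scaling factors. Beyond that, everything else is bookkeeping: extracting the $\widehat{\Psi}_{\Delta_\bosonA}(w)$ prefactor, identifying the translation parameter $u=\ind_a(k)-R$, and using the evenness of the Gaussian to put the final expression into the advertised form.
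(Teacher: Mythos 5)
Your proof is correct and follows essentially the same route as the paper's: isolate the constrained sum as an arithmetic progression indexed by $z \equiv \ind_a(k) \pmod{r}$, peel off the $\widehat{\Psi}_{\Delta_\bosonA}(w)$ prefactor, and apply Poisson summation to pass to the dual lattice $(1/r)\mathbb{Z}$. The one cosmetic difference is that the paper works directly with the convention $\widehat{f}(v)=\int f(y)e^{2\pi i yv}\,dy$ used in Eq.~\eqref{eq: F Theta k w}, which yields $\widehat{\eta}_{\kappa_\bosonA}(w+z/r)$ and the phase $e^{2\pi i(R-\ind_a(k))z/r}$ without any $n\mapsto -n$ reindexing; you correctly anticipated and repaired the sign discrepancy using evenness of the Gaussian, so the two bookkeepings agree.
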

	
	\begin{proof} Let $k \in \Rem_{a,N}$.
		Let $\mathcal{L}$ be the lattice defined as
		\begin{align}
			\cL&=\left\{z\in\mathbb{Z} \mid a^{z} \xmod N=1\right\}\ .
		\end{align}
		Since $r$ is the order of $a$ in $\mathbb{Z}_N^*$, we have $\cL=r\mathbb{Z}$. The corresponding dual lattice $\cL^* = \{ y \in \mathbb{R}: y \cdot z \in \mathbb{Z} \textrm{ for all } z \in \cL \}$
		is $\cL^*=\frac{1}{r}\mathbb{Z}$, which has lattice determinant (see~\cite{SMAntoniZygmund1935})
		\begin{align}
			\det \cL^*=\frac 1 r\ . \label{eq:latticedeterminantduallaticem}
		\end{align}
		We also have that the translated lattice $\ind_a(k)+\cL$ satisfies
		\begin{align}
			\ind_a(k)+\cL
			&=\left\{z\in\mathbb{Z} \mid a^{z} \xmod N=k\right\}\,\label{eq: cL shift}
		\end{align}
		by definition of~$\ind_a(k)$. 
		
		Using~\eqref{eq: cL shift}, we can rewrite $\widehat{\Theta}_k$ given in Eq.~\eqref{eq: F Theta k w} as
		\begin{align}
			\widehat{\Theta}_k(w)&= c_{\Phi_a}\sum_{z\in \ind_a(k)+\cL}
			\eta_{\kappa_\bosonA}(z-R)\widehat{\Psi}_{\Delta_\bosonA }(w)e^{2\pi i z w}\\
			&=
			c_{\Phi_a}
			e^{2\pi i w \cdot\ind_a(k)}\widehat{\Psi}_{\Delta_\bosonA }(w)
			\sum_{y\in\cL}f(y)\ , \label{eq: thetahat with lattice}
		\end{align}
		i.e., as  a sum over lattice points~$y\in\cL$. Here we introduced the function
		\begin{align}
			f(y)&=\eta_{\kappa_\bosonA}(y+\ind_a(k)-R)e^{2\pi i yw}\\
			&=\eta_{\kappa_\bosonA}(y+J)e^{2\pi iyw}\qquad\textrm{ where }\qquad J:=\ind_a(k)-R\ .\label{eq:jsumdefinitionx}
		\end{align}
		The Fourier transform of~$f$
		is
		\begin{align}
			\widehat{f}(v)&=\int f(y)e^{2\pi i y v}dy\\
			&=\int \eta_{\kappa_\bosonA}(y+J)e^{2\pi iyw} e^{2\pi i y v}dy\\
			&=\int \eta_{\kappa_\bosonA}(z) e^{2\pi i(z-J)w}e^{2\pi i(z-J)v}dz\\
			&=e^{-2\pi iJ(v+w)}\int \eta_{\kappa_\bosonA}(z) e^{2\pi iz(v+w)}dz\\
			&=e^{-2\pi iJ(v+w)}\widehat{\eta}_{\kappa_\bosonA}(v+w)\ .
		\end{align}
		where we used the variable substitution $z=y+J$ to obtain the third line.
		With the Poisson summation formula (see e.g.,~\cite{SMAntoniZygmund1935})
		\begin{align}
			\sum_{y\in\cL}f(y)&=(\det \cL^*)\sum_{v\in\cL^*}\widehat{f}(v)
		\end{align}
		and~\eqref{eq: thetahat with lattice} in combination with the definition of dual lattice $\cL^*=\frac1r \mathbb{Z}$ and its determinant from with~\eqref{eq:latticedeterminantduallaticem}, we thus obtain the expression
		\begin{align}
			\widehat{\Theta}_k(w)
			&=c_{\Phi_a} e^{2\pi i w\cdot \ind_a(k)}\widehat{\Psi}_{\Delta_\bosonA }(w)\frac{1}{r}\sum_{y\in \frac{1}{r}\mathbb{Z}} e^{-2\pi i J (y+w)}\widehat{\eta}_{\kappa_\bosonA}(y+w)\\
			&=c_{\Phi_a} e^{2\pi i w\cdot \ind_a(k)}\widehat{\Psi}_{\Delta_\bosonA }(w)\frac{1}{r}\sum_{z\in \mathbb{Z}} e^{-2\pi i J (z/r+w)}\widehat{\eta}_{\kappa_\bosonA}(z/r+w)\\
			&=c_{\Phi_a} e^{2\pi i \left(w\cdot \ind_a(k)-Jw\right)}\widehat{\Psi}_{\Delta_\bosonA }(w)\frac{1}{r}\sum_{z\in \mathbb{Z}} e^{-2\pi i J z/r}\widehat{\eta}_{\kappa_\bosonA}(z/r+w)
		\end{align}
		where we substituted $y=z/r$ to obtain the second line. With the definition of~$J$ in~\eqref{eq:jsumdefinitionx}, it follows that
		\begin{align}
			\abs{\widehat{\Theta}_k(w)}&=\abs{c_{\Phi_a}}\cdot\frac{|\widehat{\Psi}_{\Delta_\bosonA }(w)|}{r}\cdot \abs{\sum_{z\in\mathbb{Z}}
				e^{2\pi i(R-\ind_a(k))z/r}\widehat{\eta}_{\kappa_\bosonA}(w+z/r)}\ .
		\end{align}
	\end{proof}
	Using the expression~\eqref{eq: F Theta k} from Lemma~\ref{lem: F Theta k}, we can lower bound $|\widehat{\Theta}_k(w)|$ for very specific values of~$w$ as follows.

	\begin{lemma}\label{lem: bound on F theta k w} 
		Let $N\geq 2$, $a \in \mathbb{Z}_N^*$ and  $k\in\Rem_{a,N}$ be arbitrary. Let $r = r(a)$ be the order of~$a$ in $\mathbb{Z}_N^*$. Let the parameter $\kappa_A$ be defined by Table~\ref{tab: parameters}.
		Then 
		\begin{align}
			\abs{\widehat{\Theta}_k(\omega+m/r)}&\ge  \frac{e^{-\pi^2/2}}{\sqrt{\kappa_\bosonA}}\cdot |c_{\Phi_a}|\cdot \frac{|\widehat{\Psi}_{\Delta_\bosonA }(\omega+m/r)|}{r}
			\quad \textrm{ for all }\quad \omega\in \left[\frac{\kappa_\bosonA}{4},\frac{\kappa_\bosonA}{2}\right]\textrm{ and }m\in\mathbb{Z}\ .
		\end{align}
	\end{lemma}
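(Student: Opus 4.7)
The plan is to invoke Lemma~\ref{lem: F Theta k} to rewrite $|\widehat{\Theta}_k(\omega+m/r)|$ as a Gaussian prefactor times a sum
\[
\Sigma_m(\omega) := \sum_{z\in\mathbb{Z}} e^{2\pi i (R-\ind_a(k))z/r}\,\widehat{\eta}_{\kappa_\bosonA}(\omega+m/r+z/r),
\]
and then bound $|\Sigma_m(\omega)|$ from below. My first observation will be that $|\Sigma_m(\omega)|$ does not depend on~$m$: the substitution $z\mapsto z-m$ (which is a bijection of~$\mathbb{Z}$) produces only the global phase $e^{-2\pi i (R-\ind_a(k))m/r}$, so $|\Sigma_m(\omega)|=|\Sigma_0(\omega)|$. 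Hence the problem reduces to lower bounding $|\Sigma_0(\omega)|$ for $\omega\in[\kappa_\bosonA/4,\kappa_\bosonA/2]$.

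Next I will use that $\eta_{\kappa}(z)=\tfrac{\sqrt{\kappa}}{\pi^{1/4}}e^{-\kappa^2 z^2/2}$ has Fourier transform (in the convention $\widehat{f}(v)=\int f(y)e^{2\pi i yv}dy$ used in the proof of Lemma~\ref{lem: F Theta k})
\[
\widehat{\eta}_\kappa(v) \;=\; \frac{\sqrt{2\pi}}{\pi^{1/4}\sqrt{\kappa}}\,e^{-2\pi^2 v^2/\kappa^2},
\]
which is positive, peaked at $v=0$, with Gaussian width of order $\kappa$. The reverse triangle inequality yields
\[
|\Sigma_0(\omega)| \;\ge\; \widehat{\eta}_{\kappa_\bosonA}(\omega) \;-\; \sum_{z\neq 0}\widehat{\eta}_{\kappa_\bosonA}(\omega+z/r).
\]
For the leading term, $\omega\le\kappa_\bosonA/2$ gives $2\pi^2\omega^2/\kappa_\bosonA^2\le\pi^2/2$, so
$\widehat{\eta}_{\kappa_\bosonA}(\omega)\ge \tfrac{\sqrt{2\pi}}{\pi^{1/4}\sqrt{\kappa_\bosonA}}e^{-\pi^2/2}$. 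The numerical prefactor $\sqrt{2\pi}/\pi^{1/4}=\sqrt{2}\,\pi^{1/4}\approx 1.88$ exceeds~$1$ by a comfortable margin; the whole point of the proof will be to absorb the tail sum into this slack.

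The main technical step is therefore the tail bound, and it is where the aggressive choice of parameters in Table~\ref{tab: parameters} pays off. Since $r=r(a)\le N<2^n$ and $\kappa_\bosonA=2^{-16n}$, we have $\kappa_\bosonA r\le 2^{-15n}\ll 1$, and hence for every integer $z\neq 0$,
\[
\bigl|\omega+z/r\bigr| \;\ge\; |z|/r-\kappa_\bosonA/2 \;\ge\; |z|/(2r).
\]
Inserting this into the Gaussian gives
$\widehat{\eta}_{\kappa_\bosonA}(\omega+z/r)\le \tfrac{\sqrt{2\pi}}{\pi^{1/4}\sqrt{\kappa_\bosonA}}e^{-\pi^2 z^2/(2 r^2\kappa_\bosonA^2)}$, and summing a geometric-type majorant of $e^{-cz^2}$ over $z\neq 0$ yields a doubly-exponentially small quantity in~$n$ (of order $e^{-\pi^2/(2\kappa_\bosonA^2 r^2)}\le e^{-\Theta(4^{15n})}$). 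Consequently
\[
|\Sigma_0(\omega)| \;\ge\; \tfrac{\sqrt{2\pi}}{\pi^{1/4}\sqrt{\kappa_\bosonA}}e^{-\pi^2/2}\bigl(1-o(1)\bigr) \;\ge\; \tfrac{e^{-\pi^2/2}}{\sqrt{\kappa_\bosonA}}
\]
for all sufficiently large $n$ (the factor $1.88$ swallows the tail with room to spare). Substituting back into Lemma~\ref{lem: F Theta k} proves the claimed inequality. The only non-routine part of the proof will be tracking these constants carefully so as to justify the clean form of the bound; everything else is book-keeping given the Gaussian shape of $\widehat{\eta}_\kappa$ and the size of the parameter~$\kappa_\bosonA r$.
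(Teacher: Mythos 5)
Your proof is correct and takes a genuinely different route from the paper. Both proofs reduce to lower-bounding $\bigl|\sum_{z\in\mathbb{Z}} e^{i\varphi z}\widehat{\eta}_{\kappa_\bosonA}(\omega+z/r)\bigr|$ after the same $z\mapsto z-m$ shift, but at that point the paper rewrites the sum as $\sum_y \xi^y b^{y^2}$ and invokes Lemma~\ref{lem:gausssumjacobitriple}, a consequence of Jacobi's triple product identity, to obtain the correction factor $(1-2e^{-\pi^2 N^2})$; you instead apply the reverse triangle inequality, keep the $z=0$ term as the dominant part, and bound the $z\neq 0$ tail using the fact that the Gaussian $\widehat{\eta}_{\kappa_\bosonA}$ has width $\sim\kappa_\bosonA$ while the lattice spacing is $1/r\gg\kappa_\bosonA$. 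Your approach is more elementary (no theta-function machinery) and it actually requires less: you never use the lower endpoint $\omega\ge\kappa_\bosonA/4$, which the paper needs only to satisfy the hypothesis $\max\{|\xi|,|\xi|^{-1}\}\ge\sqrt{2}+1$ of its Jacobi lemma. The one thing to tighten in your write-up is the phrase ``for all sufficiently large~$n$'': since $r\kappa_\bosonA<2^{-15n}\le 2^{-15}$ already for $n\geq 1$, the tail $\sum_{z\neq0}e^{-\pi^2 z^2/(2r^2\kappa_\bosonA^2)}$ is at most $\frac{2e^{-c}}{1-e^{-c}}$ with $c=\pi^2/(2r^2\kappa_\bosonA^2)>\pi^2\cdot 2^{30}/2$, so the slack $\sqrt{2}\pi^{1/4}-1\approx0.88$ in the prefactor absorbs it for every $N\ge2$, not only asymptotically; you should replace the asymptotic phrasing with this explicit estimate so the lemma's stated hypothesis ($N\ge2$) is covered.
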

	
	\begin{proof}
		Let $\varphi\in\mathbb{R}$ be arbitrary. 
		Then we have for $\omega\in \left[\frac{\kappa_\bosonA}{4},\frac{\kappa_\bosonA}{2}\right]$ and $m\in\mathbb{Z}$
		\begin{align}
			\sum_{z\in\mathbb{Z}}e^{i\varphi z}\widehat{\eta}_{\kappa_\bosonA}((\omega+m/r)+z/r)
			&=e^{-i m\varphi}\sum_{y\in\mathbb{Z}}e^{i\varphi y}\widehat{\eta}_{\kappa_\bosonA}(\omega+y/r)\\
			&=e^{-i m\varphi}\frac{\sqrt{2}\pi^{1/4}}{\sqrt{\kappa_\bosonA}}\sum_{y\in\mathbb{Z}}e^{i\varphi y}e^{-\textfrac{2\pi^2 (\omega+y/r)^2}{\kappa_\bosonA^2}} \label{eq:lowerbound abs value hat theta k inner sum}
		\end{align}
		where we used the variable substitution~$y=m+z$ in the first identity, and the fact that the Fourier transform of $\eta_{\kappa_\bosonA}$ is
		$\widehat{\eta}_{\kappa_\bosonA}(w) = \frac{\sqrt{2}\pi^{1/4}}{\sqrt{\kappa_\bosonA}} e^{-2 \pi^2 w^2/\kappa_\bosonA^2}$
		in the second identity. Defining
		\begin{align}
			\xi &=e^{i\varphi}e^{-4\pi^2 \omega/(\kappa_\bosonA^2 r)}\\
			b&=e^{-2\pi^2/(\kappa_\bosonA^2r^2)}\ ,
		\end{align}
		Eq.~\eqref{eq:lowerbound abs value hat theta k inner sum} can be rewritten as 
		\begin{align}
			\sum_{z\in\mathbb{Z}}e^{i\varphi z}\widehat{\eta}_{\kappa_\bosonA}((\omega+m/r)+z/r)&=e^{-im\varphi}\frac{\sqrt{2}\pi^{1/4}e^{-2\pi^2\omega^2/\kappa_\bosonA^2}}{\sqrt{\kappa_\bosonA}}\sum_{y\in\mathbb{Z}}\xi^y b^{y^2}\ .\label{eq:combncvda}
		\end{align}
		For our choice of parameters 
		we have 
		$\frac{\omega}{\kappa_\bosonA^2 r}\geq \frac{1}{4\kappa_\bosonA r}\geq \frac{N}{4}$ (because $\frac{1}{\kappa_\bosonA}>N^2$ and $r\leq N$). 
		It follows that~$\max\{|\xi|,|\xi|^{-1}\}=|\xi^{-1}|\geq e^{\pi^2 N}\geq \sqrt{2}+1$. Furthermore, we have $b<1$ by definition. This means that we can apply
		Lemma~\ref{lem:gausssumjacobitriple}, which gives a lower bound on a sum of the form~$\sum_{y\in\mathbb{Z}}\xi^y b^{y^2}$.
		Applied to~\eqref{eq:combncvda} this gives
		\begin{align}
			\left|\sum_{z\in\mathbb{Z}}e^{i\varphi z}\widehat{\eta}_{\kappa_\bosonA}((\omega+m/r)+z/r)\right|
			&\geq \frac{\sqrt{2}\pi^{1/4}e^{-2\pi^2\omega^2/\kappa_\bosonA^2}}{\sqrt{\kappa_\bosonA}}\left(1-2e^{-2\pi^2/(\kappa_\bosonA^2r^2)}
			\cdot e^{4\pi^2 \omega/(\kappa_\bosonA^2 r)}\right)\\
			&\geq \frac{\sqrt{2}\pi^{1/4}e^{-2\pi^2\omega^2/\kappa_\bosonA^2}}{\sqrt{\kappa_\bosonA}}
			\left(1-2e^{-\pi^2 N^2}\right)\ ,\label{eq:thetakexprmf}
		\end{align}
		where we used that
		\begin{align}        e^{-2\pi^2/(\kappa_\bosonA^2r^2)}e^{4\pi^2\omega/(\kappa_\bosonA^2r)}
			&\leq e^{-2\pi^2/(\kappa_\bosonA^2 r^2)}e^{2\pi^2/(\kappa_\bosonA r)}\qquad\textrm{ since } \omega\leq \kappa_\bosonA/2\\
			&= e^{-2\pi^2x(x-1)}\qquad\qquad \textrm{ with }x=1/(\kappa_\bosonA r)\geq N\\
			&\leq e^{-2\pi^2N(N-1)}\qquad \textrm{as $x\mapsto x(x-1)$ increases monotonically}\\
			&\leq e^{-\pi^2 N^2}\qquad\textrm{ for }N\geq 2\ .
		\end{align}
		With~\eqref{eq:combncvda} and~\eqref{eq:thetakexprmf}, as well as the choice  $\varphi=2\pi(R-\ind_a(k))/r$, we conclude that
		for $w=\omega+m/r$, we have 
		\begin{align}
			\left|
			\sum_{z\in\mathbb{Z}}e^{2\pi i (R-\ind_a(k))z/r}
			\widehat{\eta}_{\kappa_\bosonA}(w+z/r)\right|&=\frac{\sqrt{2}\pi^{1/4}e^{-2\pi^2\omega^2/\kappa_\bosonA^2}}{\sqrt{\kappa_\bosonA}}\left(1-2e^{-\pi^2N^2}\right)\geq \frac{e^{-2\pi^2\omega^2/\kappa_\bosonA^2}}{\sqrt{\kappa_\bosonA}}
		\end{align}
		where we used that $\sqrt{2}\pi^{1/4}(1-2e^{-\pi^2 N^2})\ge \sqrt{2}\pi^{1/4}(1-2e^{-\pi^2})\geq 1$.     
		Because of Eq.~\eqref{eq: F Theta k}, this means that 
		\begin{align}
			\abs{\widehat{\Theta}_k(w)}&\ge 
			\frac{e^{-2\pi^2\omega^2/\kappa_\bosonA^2}}{\sqrt{\kappa_\bosonA}}\cdot \abs{c_{\Phi_a}}\cdot\frac{|\widehat{\Psi}_{\Delta_\bosonA }(w)|}{r}\qquad\textrm{ for }\qquad w=\omega+m/r\ .
		\end{align}
		The claim follows from the fact that the function $\omega \mapsto e^{-2\pi^2\omega^2/\kappa_\bosonA^2}$ is monotonically decreasing for $\omega \ge 0$ combined with the assumption $\omega\in[\frac{\kappa_A}{4},\frac{\kappa_A}{2}]$.
	\end{proof}

	\subsection{Bounds on the probability distribution of outcomes\label{sec:boundspdf}}
	Recall the definition (see Eq.~\eqref{eq:Omega d})  of the set
	\begin{align}
		\Gamma_{d}(a):=\bigcup_{j\in \mathbb{Z}}\left[j+\frac d r - \frac{1}{2q}, j+\frac d r + \frac{1}{2q} \right]\ ,\label{eq:gammdadefrepeated}
	\end{align}
	for $a\in \mathbb{Z}_N^*$ and $d\in\mathbb{Z}_r$, where $r=r(a)$ is the order of $a$  and $q=\min\{2^k\mid k\in\mathbb{N}, 2^k>N^2\}$ is the smallest power of~$2$ greater than~$N^2$. Here we use  Lemma~\ref{lem: bound on F theta k w} to establish a lower bound on the probability density function~$p_{\Phi_a}$ integrated over 
	the set~$\Gamma_d(a)$.  
	We show that this quantity is lower bounded by a constant multiple of~$1/r$ independent of $a\in\mathbb{Z}_N^*$ and $d\in\mathbb{Z}_r$.  This will completes the proof of Proposition~\ref{prop:propositiontwo}.
	
	\begin{lemma}\label{lem:lowerbounddistributionweight}
		Let $(m,R,\kappa_\bosonA,\Delta_\bosonA,\kappa_\bosonB,\Delta_\bosonB, \Delta_\bosonC)$ be defined by Table~\ref{tab: parameters}. Let $\varepsilon_\bosonB = \sqrt{\Delta_\bosonB}$ and $\varepsilon_\bosonC = \sqrt{\Delta_\bosonC}$. Let $N\geq 4$ and $a\in\mathbb{Z}_N^*$ be arbitrary. Let $r=r(a)$ be the order of~$a$. 
		Then 
		\begin{align}
			\int_{\Gamma_{d}(a)} p_{\Phi_a}(w) dw
			&\ge \frac{e^{-\pi^2}}{16}\cdot \frac{1}{r} \qquad\textrm{ for any }\qquad d\in\mathbb{Z}_r\ .\label{eq:integralstolowerbound}
		\end{align}
	\end{lemma}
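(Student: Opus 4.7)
The strategy is to combine the pointwise lower bound on $|\widehat{\Theta}_k(w)|$ from Lemma~\ref{lem: bound on F theta k w} with the formula $p_{\Phi_a}(w)=\sum_{k\in\Rem_{a,N}}|\widehat{\Theta}_k(w)|^2$ of Eq.~\eqref{eq: prob density} and to integrate over a suitably chosen sub-region of $\Gamma_d(a)$. Since the parameters of Table~\ref{tab: parameters} satisfy $\kappa_\bosonA/2 \ll 1/(2q)$, each connected component $[j+d/r-1/(2q),\,j+d/r+1/(2q)]$ of $\Gamma_d(a)$ contains the translated window $[j+d/r+\kappa_\bosonA/4,\,j+d/r+\kappa_\bosonA/2]$. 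Writing points of such a window as $w=m/r+\omega$ with $m=d+jr$ and $\omega\in[\kappa_\bosonA/4,\kappa_\bosonA/2]$, Lemma~\ref{lem: bound on F theta k w} applies directly, and since its lower bound is $k$-independent, summing over the $r$ elements of $\Rem_{a,N}$ produces
\begin{align}
p_{\Phi_a}(m/r+\omega) \ \ge\ \frac{e^{-\pi^2}\, c_{\Phi_a}^2}{\kappa_\bosonA}\cdot \frac{|\widehat{\Psi}_{\Delta_\bosonA}(m/r+\omega)|^2}{r}\ .
\end{align}

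Integrating over $\omega\in[\kappa_\bosonA/4,\kappa_\bosonA/2]$ and summing over $j\in\mathbb{Z}$ (equivalently over $m\in d+r\mathbb{Z}$) reduces the task to a lower bound on the $1$-periodization $g(\omega):=\sum_{j\in\mathbb{Z}}|\widehat{\Psi}_{\Delta_\bosonA}(\omega+d/r+j)|^2$. Since $|\widehat{\Psi}_{\Delta_\bosonA}|^2$ is an explicit Gaussian of width $\sim 1/\Delta_\bosonA = 2^{16n}$, Poisson summation identifies the Fourier series of $g$ as $g(\omega)=\sum_{n\in\mathbb{Z}} e^{-n^2/(4\Delta_\bosonA^2)}\, e^{2\pi i n(\omega+d/r)}$. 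Every non-constant Fourier coefficient is doubly-exponentially small, so $g(\omega)\ge 1/2$ uniformly (with vast slack), and therefore $\int_{\kappa_\bosonA/4}^{\kappa_\bosonA/2} g(\omega)\, d\omega \ge \kappa_\bosonA/8$. Substituting back gives
\begin{align}
\int_{\Gamma_d(a)}p_{\Phi_a}(w)\, dw \ \ge\ \frac{e^{-\pi^2}\, c_{\Phi_a}^2}{8\, r}\ .
\end{align}

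It remains to verify $c_{\Phi_a}^2\ge 1/2$, and this is where the main (though mild) bookkeeping issue lies. I will expand $c_{\Phi_a}^{-2}$ as the squared norm of the unnormalized superposition in~\eqref{eq:stateoutputmrestated}. The tensor factors on $\bosonB\bosonC\qubit$ pair orthogonally for indices with $z\not\equiv z'\pmod{r}$ by Lemma~\ref{lem:orthogonalitytruncated gkpstates} (using $\varepsilon_\bosonB<1/(2N)$), while the surviving cross terms with $|z-z'|\ge 1$ carry a factor $\langle\chi_{\Delta_\bosonA}(z),\chi_{\Delta_\bosonA}(z')\rangle=e^{-(z-z')^2/(4\Delta_\bosonA^2)}$ that is astronomically small for $\Delta_\bosonA=2^{-16n}$. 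The diagonal contribution reduces to $\sum_z \eta_{\kappa_\bosonA}(z-R)^2 = 1+O(e^{-\pi^2/\kappa_\bosonA^2})$ by Poisson summation, giving $c_{\Phi_a}^2\ge 1/2$ comfortably. Assembling the three pieces yields the claimed bound $e^{-\pi^2}/(16r)$; the only delicate point is that the final factor $1/16$ requires both $g\ge 1/2$ and $c_{\Phi_a}^2\ge 1/2$ simultaneously, which is ensured by the very generous choices in Table~\ref{tab: parameters} with many orders of magnitude to spare.
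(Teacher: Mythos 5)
Your proposal is correct, and it takes a genuinely different route from the paper's in the two nontrivial estimates. The paper handles the periodization of $|\widehat{\Psi}_{\Delta_\bosonA}|^2$ by passing to a $j$-wise worst case ($\min_{\omega\in[\kappa_\bosonA/4,\kappa_\bosonA/2]}|\widehat{\Psi}_{\Delta_\bosonA}(j+d/r+\omega)|^2\ge|\widehat{\Psi}_{\Delta_\bosonA}(|j|+1)|^2$, using $\kappa_\bosonA/2\le 1/r$) and then invoking an elementary lower bound on $\sum_j e^{-c(|j|+1)^2}$ (Corollary~\ref{cor: Delta sum abs bound}). You instead keep the $\omega$-dependence and observe via Poisson summation that the $1$-periodization $g(\omega)$ has only doubly-exponentially small non-constant Fourier modes, so $g\ge 1/2$ uniformly. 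For $c_{\Phi_a}$, the paper (Lemma~\ref{lem: c phi}) discards the $\delta_{a^y,a^z}$-constraint, upper-bounds $|c_{\Phi_a}|^{-2}$ by $C_{\kappa_\bosonA,\Delta_\bosonA}^{-2}$, and cites the GKP-normalization bound from~\cite{SMbrenneretalGKP2024} to get $|c_{\Phi_a}|^2\ge 1/4$; you estimate $|c_{\Phi_a}|^{-2}$ directly as a diagonal $\sum_z\eta_{\kappa_\bosonA}(z-R)^2\approx 1$ plus doubly-exponentially small cross terms from $\langle\chi_{\Delta_\bosonA}(y),\chi_{\Delta_\bosonA}(z)\rangle=e^{-(y-z)^2/(4\Delta_\bosonA^2)}$, which gives the tighter $|c_{\Phi_a}|^2\ge 1/2$. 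The paper's approach is more modular (it re-uses established constants and yields bounds that hold for a range of parameters, not just the extreme ones of Table~\ref{tab: parameters}); yours is more self-contained and gives a slightly better constant. In fact, chasing the paper's own constants through its proof yields $\frac{e^{-\pi^2}}{64}\cdot\frac 1r$ rather than the $\frac{e^{-\pi^2}}{16}\cdot\frac 1r$ stated in the lemma, whereas your derivation reaches $\frac{e^{-\pi^2}}{16r}$ exactly --- this is a harmless constant-factor typo in the paper, since only $\Omega(1/r)$ is used downstream. Two small things worth writing out if you expand this: (i) the containment $[j+d/r+\kappa_\bosonA/4,\,j+d/r+\kappa_\bosonA/2]\subseteq\Gamma_d(a)$ requires $\kappa_\bosonA\le 1/q$, which the parameters deliver with room to spare, and (ii) the cross-term bound needs $\langle\chi_{\Delta_\bosonA}(y),\chi_{\Delta_\bosonA}(z)\rangle\ge 0$, so the off-diagonal mass can only increase $|c_{\Phi_a}|^{-2}$, and you must show it does so negligibly --- which you do, by $|y-z|\ge 1$ together with a crude $(\sum_y\eta_{\kappa_\bosonA}(y))^2 = O(1/\kappa_\bosonA)$ prefactor, swallowed by $e^{-1/(4\Delta_\bosonA^2)}$.
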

	\begin{proof}
		Let $a\in\mathbb{Z}_N^*$ be arbitrary.
		Fix an arbitrary $d\in\mathbb{Z}_r$, and an arbitrary~$k\in \Rem_{a,N}$  (cf.~Eq.~\eqref{eq: Rem a N}).  To lower bound each integral in~\eqref{eq:integralstolowerbound}
		we will replace the set~$\Gamma_d(a)$ by a suitable subset. Since $q>1$, the intervals in the definition~\eqref{eq:gammdadefrepeated} of~$\Gamma_d(a)$ are pairwise disjoint, i.e., we have 
		\begin{align}
			\left[j+\frac d r - \frac{1}{2q}, j+\frac d r + \frac{1}{2q} \right]\cap \left[j'+\frac d r - \frac{1}{2q}, j'+\frac d r + \frac{1}{2q} \right]&=\emptyset\qquad\textrm{ for }\qquad j\neq j'\ .
		\end{align}
		Since $\left[\frac{\kappa_\bosonA}{4}, \frac{\kappa_\bosonA}{2}\right]\subseteq\left[-\frac{1}{2q},\frac{1}{2q}\right]$ as $\kappa_\bosonA\le q^{-1}$ by the choice of parameters
		from Table~\ref{tab: parameters}, we have
		\begin{align}
			\bigcup_{j\in\mathbb{Z}} \left[j+\frac{d}{r}+\frac{\kappa_\bosonA}{4}, j+\frac{d}{r}+\frac{\kappa_\bosonA}{2} \right] \subseteq\Gamma_d(a)\ .
		\end{align}
		It follows that 
		\begin{align}
			\int_{\Gamma_d(a)} |\widehat{\Theta}_k(w)|^2 dw
			&\ge \sum_{j\in\mathbb{Z}}\int_{j+\frac d r + \frac{\kappa_{\bosonA}}{4}}^{j+\frac d r + \frac{\kappa_\bosonA}{2}} \abs{\widehat{\Theta}_k(w)}^2 dw\\
			&= \sum_{j\in\mathbb{Z}}\int_{\frac{\kappa_{\bosonA}}{4}}^{\frac{\kappa_\bosonA}{2}} \abs{\widehat{\Theta}_k\left(j+\frac{d}{r}+\omega\right)}^2 d\omega \qquad \textrm{ by substitution $w=j+\frac d r + \omega$}\\
			&\ge \frac{e^{-\pi^2}}{\kappa_\bosonA}\cdot \frac{|c_{\Phi_a}|^2}{r^2}\sum_{j\in \mathbb{Z}}\int_{\frac{\kappa_\bosonA}{4}}^{\frac{\kappa_\bosonA}{2}} \abs{\widehat{\Psi}_{\Delta_\bosonA }\left(j+\frac{d}{r}+\omega\right)}^2 d\omega\qquad\textrm{ by Lemma~\ref{lem: bound on F theta k w}}\\
			&\ge\frac{e^{-\pi^2}}{\kappa_\bosonA}\cdot\frac{|c_{\Phi_a}|^2}{r^2}\sum_{j\in \mathbb{Z}}\frac{\kappa_\bosonA}{4}\min_{\omega\in [\frac{\kappa_\bosonA}{4},\frac{\kappa_\bosonA}{2}]} \abs{\widehat{\Psi}_{\Delta_\bosonA }\left(j+\frac{d}{r}+\omega\right)}^2\\
			&= \frac{e^{-\pi^2}}{4}\cdot\frac{|c_{\Phi_a}|^2}{r^2}\sum_{j\in \mathbb{Z}}\min_{\omega\in [\frac{\kappa_\bosonA}{4},\frac{\kappa_\bosonA}{2}]} \abs{\widehat{\Psi}_{\Delta_\bosonA }\left(j+\frac{d}{r}+\omega\right)}^2\ .\label{eq: hat theta k bound}
		\end{align}
		We have
		\begin{align}
			\abs{j+\frac{d}{r}+\omega}
			&\le \abs{j}+\frac{d}{r}+\abs{\omega}\\
			&\leq \abs{j}+\frac d r + \frac{\kappa_\bosonA}{2}\qquad\textrm{ for }\qquad \omega\in [\kappa_\bosonA/4,\kappa_\bosonA/2]\\
			&\leq \abs{j}+\frac{r-1}{r} + \frac{\kappa_\bosonA}{2}\qquad\textrm{ because }d\in\mathbb{Z}_r\\
			&\leq \abs{j}+1\ , \label{eq: j dr omega bound}
		\end{align}
		where  in the last step, we used that $\kappa_\bosonA/2\le 1/r$ (by the choice of the parameters in Table~\ref{tab: parameters}).
		Because $\widehat{\Psi}_{\Delta_\bosonA}(w)= \sqrt{2}\pi^{1/4}\sqrt{\Delta_\bosonA} e^{-2\pi^2w^2\Delta_\bosonA^2}$ is symmetric and monotonically decreasing for~$w\geq 0$, it follows from~\eqref{eq: j dr omega bound} that for any $j\in\mathbb{Z}$ and $d\in\mathbb{Z}_r$ we have 
		\begin{align}
			\min_{\omega\in [\frac{\kappa_\bosonA}{4},\frac{\kappa_\bosonA}{2}]} \left|\widehat{\Psi}_{\Delta_\bosonA}\left(j+\frac{d}{r}+\omega\right)\right|^2&\geq  |\widehat{\Psi}_{\Delta_\bosonA }(|j|+1)|^2\\
			&=2\sqrt{\pi}\Delta_\bosonA  e^{-4\pi^2\Delta_\bosonA ^2(|j|+1)^2}\ .\label{eq:jsumdeltaajb}
		\end{align}
		Inserting~\eqref{eq:jsumdeltaajb} into~\eqref{eq: hat theta k bound} we conclude that 
		\begin{align}
			\int_{\Gamma_d(a)} |\widehat{\Theta}_k(w)|^2 dw&\ge 
			\frac{e^{-\pi^2}\sqrt{\pi}}{2}\cdot \frac{\Delta_\bosonA \cdot |c_{\Phi_a}|^2}{r^2}\sum_{j\in\mathbb{Z}}e^{-4\pi^2\Delta_\bosonA ^2 (|j|+1)^2}\\
			&=
			\frac{e^{-\pi^2}\sqrt{\pi}}{2}\cdot \frac{\Delta_\bosonA \cdot |c_{\Phi_a}|^2}{r^2}\sum_{j\in\mathbb{Z}}e^{-c (|j|+1)^2}\qquad \textrm{ where }\qquad c=4\pi^2\Delta_\bosonA ^2\\
			&\ge \frac{e^{-\pi^2}}{16}\cdot\frac{|c_{\Phi_a}|^2}{r^2}\ ,\label{eq:integrallowerbndthetakww}
		\end{align}
		where we used Corollary~\ref{cor: Delta sum abs bound} to obtain the last inequality. The latter can be applied because of our choice of parameters, which implies that $c<\pi/16$. Using the  lower bound on $|c_{\Phi_a}|^2$ given in Lemma~\ref{lem: c phi} below we obtain 
		\begin{align}
			\int_{\Gamma_d(a)} |\widehat{\Theta}_k(w)|^2 dw&\geq\frac{e^{-\pi^2}}{64}\cdot \frac{1}{r^2}\qquad\textrm{ for all }\qquad d\in\mathbb{Z}_r\textrm{ and } k\in\Rem_{a,N}\ .\label{eq:gammadarlowerbndv}
		\end{align}
		Combining Eq.~\eqref{eq: prob density}, i.e., the expression
		\begin{align}
			p_{\Phi_a}(w)=\sum_{k\in\Rem_{a,N}} \abs{\widehat{\Theta}_k(w)}^2\ .
		\end{align}
		with~\eqref{eq:gammadarlowerbndv}
		implies (by the linearity of the integral) that 
		\begin{align}
			\int_{\Gamma_{d}(a)} p_{\Phi_a}(w) dw& = \sum_{k\in\Rem_{a,N}} \int_{\Gamma_{d}(a)} \abs{\widehat{\Theta}_k(w)}^2dw\\
			&\geq 
			\frac{e^{-\pi^2}}{64}\cdot \frac{1}{r}
		\end{align}
		where we used that the set $\Rem_{a,N}$ consists  of~$|\Rem_{a,N}|=r$ elements. This is the claim.
	\end{proof}
	
	It remains to establish a bound on the normalization constant~$c_{\Phi_a}$ in the definition~\eqref{eq:stateoutputmrestated} of the state~$\Phi_a$.

	\begin{lemma}\label{lem: c phi}
		Let $(m,R,\kappa_\bosonA,\Delta_\bosonA,\kappa_\bosonB,\Delta_\bosonB, \Delta_\bosonC)$ be as specified  in Table~\ref{tab: parameters}. Let $\varepsilon_\bosonB = \sqrt{\Delta_\bosonB}$ and $\varepsilon_\bosonC = \sqrt{\Delta_\bosonC}$. Then 
		\begin{align}
			|c_{\Phi_a}|^2 \geq 1/4 \ .
		\end{align}
	\end{lemma}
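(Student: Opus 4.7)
My approach is to turn the normalization constraint $\|\Phi_a\|^2 = 1$ into an explicit formula for $|c_{\Phi_a}|^{-2}$, and then bound it above by $4$. Expanding the inner product and using that $\|\Psi_{\Delta_\bosonC}^{\varepsilon_\bosonC}\|=1$ and $\|\ket{0}\|=1$, one obtains
\begin{align}
|c_{\Phi_a}|^{-2} \;=\; \sum_{z,z'\in\mathbb{Z}}\eta_{\kappa_\bosonA}(z-R)\,\eta_{\kappa_\bosonA}(z'-R)\,\langle\chi_{\Delta_\bosonA}(z),\chi_{\Delta_\bosonA}(z')\rangle \cdot I_\bosonB(z,z')\ ,
\end{align}
where $I_\bosonB(z,z')=\langle e^{-i(a^{z'}\!\!\xmod N)P}M_N\gkp^{\varepsilon_\bosonB}_{\kappa_\bosonB,\Delta_\bosonB},\,e^{-i(a^{z}\!\!\xmod N)P}M_N\gkp^{\varepsilon_\bosonB}_{\kappa_\bosonB,\Delta_\bosonB}\rangle$. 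The A-mode overlap is the standard Gaussian convolution $\langle\chi_{\Delta_\bosonA}(z),\chi_{\Delta_\bosonA}(z')\rangle = e^{-(z-z')^2/(4\Delta_\bosonA^2)}$.

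The first step is to reuse the orthogonality argument that already appears in the proof of Lemma~\ref{lem:fouriertransformandrhoa}: using $M_N^\dagger e^{i\beta P}M_N=e^{i\beta P/N}$ together with Lemma~\ref{lem:orthogonalitytruncated gkpstates}, one gets $I_\bosonB(z,z')\in\{0,1\}$, with $I_\bosonB(z,z')=1$ iff $a^z\equiv a^{z'}\pmod N$, i.e.\ iff $z\equiv z'\pmod{r}$ where $r=r(a)$. The hypothesis of Lemma~\ref{lem:orthogonalitytruncated gkpstates} is met because for $k\neq k'\in\Rem_{a,N}$ the shift $d=(k-k')/N$ has distance at least $1/N\geq 2^{-n}$ from $\mathbb{Z}$, while $2\varepsilon_\bosonB=2^{-9n^2+1}$ is vastly smaller under Table~\ref{tab: parameters}. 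Next I split the sum into a diagonal piece ($z=z'$) and an off-diagonal piece ($z\neq z'$, $r\mid z-z'$). For the diagonal, shift invariance of the summation over $\mathbb{Z}$ and Poisson summation give
\begin{align}
\sum_{z\in\mathbb{Z}}\eta_{\kappa_\bosonA}(z-R)^2 \;=\; \sum_{m\in\mathbb{Z}}e^{-\pi^2 m^2/\kappa_\bosonA^2} \;\leq\; 1+2e^{-\pi^2/\kappa_\bosonA^2} \;\leq\; 2\ ,
\end{align}
since $\kappa_\bosonA=2^{-16n}$ makes the tail doubly-exponentially small. For the off-diagonal piece, every admissible pair satisfies $|z-z'|\geq r\geq 1$, so the A-overlap is uniformly bounded by $e^{-1/(4\Delta_\bosonA^2)}$; applying Cauchy--Schwarz to the $\eta$-factors and summing the resulting geometric-type series in $d=z-z'$ gives an overall contribution that is likewise doubly-exponentially small in $n$.

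The \emph{main obstacle} (really the only delicate point) is checking that the B-mode orthogonality is \emph{exact} under the parameter choices, since this is what collapses the double sum onto the class $z\equiv z'\pmod r$ and thereby makes the Gaussian overlap tight. Once that collapse is justified, combining the two bounds yields $|c_{\Phi_a}|^{-2}\leq 2 + o(1) \leq 4$, hence $|c_{\Phi_a}|^2\geq 1/4$. No dependence on $a$ enters the bound beyond the (harmless) appearance of $r=r(a)$, so the estimate is uniform in $a\in\mathbb{Z}_N^*$.
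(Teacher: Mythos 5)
Your proof is correct, but it takes a genuinely different route than the paper's. Both proofs begin from the same double-sum expansion
\begin{align}
|c_{\Phi_a}|^{-2} = \sum_{y,z\in\mathbb{Z}}\eta_{\kappa_\bosonA}(y-R)\,\eta_{\kappa_\bosonA}(z-R)\,\langle\chi_{\Delta_\bosonA}(y),\chi_{\Delta_\bosonA}(z)\rangle\,\delta_{a^y\bmod N,\,a^z\bmod N}\,,
\end{align}
which is obtained using the B-mode orthogonality you establish via Lemma~\ref{lem:orthogonalitytruncated gkpstates} (this step is the same in both treatments). The divergence happens next: you keep the Kronecker delta and estimate the resulting restricted sum directly, splitting into the diagonal $y=z$ (controlled via Poisson summation, giving $\sum_m e^{-\pi^2 m^2/\kappa_\bosonA^2}\leq 2$) and an off-diagonal remainder that is doubly-exponentially small because the Gaussian overlap $e^{-(y-z)^2/(4\Delta_\bosonA^2)}$ is tiny once $|y-z|\geq 1$. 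The paper instead exploits that every term in the sum is nonnegative (both $\eta_{\kappa_\bosonA}\geq 0$ and $\langle\chi_{\Delta_\bosonA}(y),\chi_{\Delta_\bosonA}(z)\rangle\geq 0$), so \emph{dropping} the delta only increases the sum; the unrestricted double sum is then recognized---after a harmless integer shift by $R$---as $C_{\kappa_\bosonA,\Delta_\bosonA}^{-2}$, the normalization constant of the non-truncated approximate GKP state, and the conclusion follows from the already-proved bound $C_{\kappa_\bosonA,\Delta_\bosonA}^2\geq 1/4$ (Lemma~\ref{lem: norm and ratio norm gkp gkp ep}). The paper's version is shorter and more modular, since the hard estimate is hidden inside an existing lemma and no Poisson or off-diagonal analysis is needed; your version is self-contained and in fact yields a slightly sharper numerical constant (roughly $|c_{\Phi_a}|^{-2}\leq 1+o(1)$ rather than $\leq 4$), at the cost of more bookkeeping.

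One minor imprecision in your write-up: the intermediate bound ``$\sum_m e^{-\pi^2 m^2/\kappa_\bosonA^2}\leq 1+2e^{-\pi^2/\kappa_\bosonA^2}$'' should read $\leq 1+2\sum_{m\geq 1}e^{-\pi^2 m^2/\kappa_\bosonA^2}\leq 1+2e^{-\pi^2/\kappa_\bosonA^2}/(1-e^{-\pi^2/\kappa_\bosonA^2})$; this does not affect your conclusion $\leq 2$, and $R$ being an integer (as per Table~\ref{tab: parameters}) is indeed what makes the translation argument before Poisson clean.
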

	\begin{proof}
		By  definition of the state $\Phi_a$  (cf.~\eqref{eq:stateoutputmrestated}) and 
		the pairwise orthogonality of the family of states $\left\{e^{-ikP_\bosonB}M_N\ket{\gkp^{\varepsilon_\bosonB }_{\kappa_\bosonB ,\Delta_\bosonB }}\right\}_{k\in \Rem_{a,N}}$ (see~\eqref{eq:dkkprimeN}) we have
		\begin{align}
			|c_{\Phi_a}|^{-2}&=
			\sum_{y,z\in\mathbb{Z}}
			\eta_{\kappa_\bosonA}(y-R)\eta_{\kappa_\bosonA}(z-R)
			\langle \chi_{\Delta_\bosonA}(y),\chi_{\Delta_{\bosonA}}(z)\rangle\cdot \delta_{a^y \xmod N, a^z\xmod N}\ .\label{eq:upperbndmvz}
		\end{align}
		Since 
		\begin{align}
			\eta_{\kappa_\bosonA}(y-R)\eta_{\kappa_\bosonA}(z-R)
			\langle \chi_{\Delta_\bosonA}(y),\chi_{\Delta_{\bosonA}}(z)\rangle\geq 0\qquad\textrm{ for all }\qquad y,z\in\mathbb{Z}\ ,
		\end{align}
		we can omit the term~$\delta_{a^y \xmod N,a^z\xmod N}$ in Eq.~\eqref{eq:upperbndmvz} to get the upper bound 
		\begin{align}
			|c_{\Phi_a}|^{-2}&\leq \sum_{y,z\in\mathbb{Z}}
			\eta_{\kappa_\bosonA}(y-R)\eta_{\kappa_\bosonA}(z-R)
			\langle \chi_{\Delta_\bosonA}(y),\chi_{\Delta_{\bosonA}}(z)\rangle\ .
		\end{align}
		Recalling that~$R$ is an integer, we can use a substitution to obtain
		\begin{align}
			|c_{\Phi_a}|^{-2}&\leq \sum_{y,z\in\mathbb{Z}}
			\eta_{\kappa_\bosonA}(y)\eta_{\kappa_\bosonA}(z)\langle \chi_{\Delta_{\kappa_\bosonA}
			}(y+R),\chi_{\Delta_{\kappa_\bosonA}
			}(z+R)\rangle\\
			&=\sum_{y,z\in\mathbb{Z}}
			\eta_{\kappa_\bosonA}(y)\eta_{\kappa_\bosonA}(z)\langle \chi_{\Delta_{\kappa_\bosonA}
			}(y),\chi_{\Delta_{\kappa_\bosonA}
			}(z)\rangle\label{eq: C_Phi first}
		\end{align}
		where we used that $\langle \chi_{\Delta_\bosonA }(y+R),\chi_{\Delta_\bosonA }(z+R)\rangle = \langle e^{-iRP}\chi_{\Delta_\bosonA }(y), e^{-iRP}\chi_{\Delta_\bosonA }(z)\rangle = \langle \chi_{\Delta_\bosonA }(y),\chi_{\Delta_\bosonA }(z)\rangle$ because $e^{-iRP}$ is unitary.
		Comparing~\eqref{eq: C_Phi first} with the definition of the approximate GKP state~$\ket{\gkp_{\kappa,\Delta}}$ (cf.~\eqref{eq:gkp approximate eta chi}), we obtain
		\begin{align}
			|c_{\Phi_a}|^{-2} \le C_{\kappa_\bosonA,\Delta_\bosonA }^{-2}\ .\label{eq: c phi sec}
		\end{align}
		Our choice of parameters~$(m,R,\kappa_\bosonA,\Delta_\bosonA,\kappa_\bosonB,\Delta_\bosonB, \Delta_\bosonC)$ 
		(see Table~\ref{tab: parameters}) guarantees that
		$\Delta_{\bosonA}<1/8$ and $\kappa_\bosonA<1/8$. Thus Lemma~\ref{lem: norm and ratio norm gkp gkp ep} can be applied, giving the bound
		\begin{align}
			C_{\kappa_\bosonA,\Delta_\bosonA}^2 \geq 1/4\ .\label{eq:bndckmvac}
		\end{align}
		Combining~\eqref{eq: c phi sec} and~\eqref{eq:bndckmvac} gives the claim.
	\end{proof}

	\clearpage
	\section{Properties of approximate GKP states}\label{sec: properties appx GKP}
	In this section, we state some approximation results on Gaussians and their truncated versions, as well as analogous results for GKP-states.
	In separate work~\cite{brenneretalGKP2024}, we have used identical definitions, and established a few of the corresponding results. We refer to~\cite{brenneretalGKP2024} for proofs in these cases. 
	
	First, recall that the  for $\Delta>0$, we denote by $\Psi_\Delta(x)=\frac{1}{(\pi \Delta^2)^{1/4}}e^{-x^2/(2\Delta^2)}$  a centered Gaussian (corresponding to a squeezed vacuum state $\ket{\Psi_\Delta}=M_\Delta \ket{\vac}$). The following lemma shows that it is well approximated by its truncated version~$\Psi^\varepsilon_\Delta=
	\frac{\Pi_{[-\varepsilon,\varepsilon]}\Psi_\Delta}{\|\Pi_{[-\varepsilon,\varepsilon]}\Psi_\Delta\|}
	$ obtained by restricting its support to the interval~$[-\varepsilon,\varepsilon]$, and renormalizing.
	
	\begin{lemma}\cite[Lemma A.2 and Corollary A.3]
		{brenneretalGKP2024}\label{lem: gaussian gaussian epsilon}
		Let $\Delta>0$ and $\varepsilon\in [\sqrt{\Delta},1/2)$. Then 
		\begin{align}
			\left|\langle \Psi_\Delta, \Psi_\Delta^\varepsilon\rangle \right|^2 
			&\ge 1-2\Delta \  \label{eq: Psi Psi eps overlap restricted ep lemma}
		\end{align}
		and
		\begin{align}
			\left\|\Psi_\Delta - \Psi_\Delta^{\varepsilon}\right\|_1
			&\le 3 \sqrt{\Delta} \ .
		\end{align}
	\end{lemma}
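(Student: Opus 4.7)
My plan is to reduce both claims to a single tail estimate for the truncated Gaussian $|\Psi_\Delta|^2$. The key observation is that the projector $\Pi_{[-\varepsilon,\varepsilon]}$ is self-adjoint and idempotent, so by the definition $\Psi_\Delta^\varepsilon = \Pi_{[-\varepsilon,\varepsilon]}\Psi_\Delta / \|\Pi_{[-\varepsilon,\varepsilon]}\Psi_\Delta\|$ one has
\begin{align}
\langle \Psi_\Delta,\Psi_\Delta^\varepsilon\rangle \;=\; \frac{\langle \Psi_\Delta,\Pi_{[-\varepsilon,\varepsilon]}\Psi_\Delta\rangle}{\|\Pi_{[-\varepsilon,\varepsilon]}\Psi_\Delta\|} \;=\; \|\Pi_{[-\varepsilon,\varepsilon]}\Psi_\Delta\| \;=\; \Bigl(\int_{-\varepsilon}^{\varepsilon} |\Psi_\Delta(x)|^2\,dx\Bigr)^{1/2}.
\end{align}
Squaring yields $|\langle \Psi_\Delta,\Psi_\Delta^\varepsilon\rangle|^2 = 1 - T(\varepsilon,\Delta)$ where $T(\varepsilon,\Delta) := \int_{|x|>\varepsilon}|\Psi_\Delta(x)|^2\,dx$ is the Gaussian tail mass outside $[-\varepsilon,\varepsilon]$.

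Next I would bound $T(\varepsilon,\Delta)$ from above. Since $|\Psi_\Delta(x)|^2 = \frac{1}{\sqrt{\pi}\,\Delta}\, e^{-x^2/\Delta^2}$, the substitution $y=x/\Delta$ gives $T(\varepsilon,\Delta) = \erfc(\varepsilon/\Delta)$. Applying the standard tail estimate $\erfc(t)\le e^{-t^2}$ for $t\ge 0$ and using the hypothesis $\varepsilon\ge\sqrt{\Delta}$ (so that $\varepsilon/\Delta \ge 1/\sqrt{\Delta}$), I obtain
\begin{align}
T(\varepsilon,\Delta) \;\le\; e^{-\varepsilon^2/\Delta^2} \;\le\; e^{-1/\Delta}.
\end{align}
The remaining step is the elementary inequality $e^{-1/\Delta}\le 2\Delta$ on the relevant range. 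Note that the assumption $\varepsilon \in [\sqrt{\Delta},1/2)$ forces $\Delta < 1/4$; for such $\Delta$ the inequality is equivalent to $1/\Delta + \ln(2\Delta) \ge 0$, which I would verify by checking that the function $\Delta\mapsto 1/\Delta + \ln(2\Delta)$ is decreasing on $(0,1/4]$ and is non-negative at $\Delta=1/4$ (where it equals $4+\ln(1/2) = 4-\ln 2 > 0$). Combining these gives $|\langle \Psi_\Delta,\Psi_\Delta^\varepsilon\rangle|^2 \ge 1-2\Delta$, which is the first claim.

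For the second claim I would invoke the pure-state identity $\|\psi-\phi\|_1 = 2\sqrt{1-|\langle\psi,\phi\rangle|^2}$ (already used elsewhere in the paper as Eq.~\eqref{eq: trace distance - fidelity}) to conclude
\begin{align}
\|\Psi_\Delta-\Psi_\Delta^\varepsilon\|_1 \;\le\; 2\sqrt{2\Delta} \;=\; 2\sqrt{2}\,\sqrt{\Delta} \;\le\; 3\sqrt{\Delta}.
\end{align}

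I do not expect a genuine obstacle here; the proof is essentially a one-step tail bound dressed up with the overlap-to-projector reduction. The only point that requires a moment's care is confirming the numerical inequality $e^{-1/\Delta}\le 2\Delta$ on the admissible range $\Delta\in(0,1/4)$, but as noted this follows by monotonicity together with the endpoint check.
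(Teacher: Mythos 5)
The statement you are proving is not given a proof in this paper — it is cited as Lemma A.2 and Corollary A.3 of \cite{brenneretalGKP2024} — so there is no in-text proof to compare against. That said, your argument is correct and is the natural way to establish both bounds: the self-adjoint/idempotent reduction $\langle \Psi_\Delta,\Psi_\Delta^\varepsilon\rangle = \|\Pi_{[-\varepsilon,\varepsilon]}\Psi_\Delta\|$, the identification of the defect with the Gaussian tail mass $\erfc(\varepsilon/\Delta)$, the tail bound $\erfc(t)\le e^{-t^2}$, and the passage to the $L^1$-distance via the pure-state identity. All the numerical checks go through: $\varepsilon\ge\sqrt\Delta$ gives $e^{-\varepsilon^2/\Delta^2}\le e^{-1/\Delta}$, and the constraint $\varepsilon<1/2$ does force $\Delta<1/4$, on which range $e^{-1/\Delta}\le 2\Delta$ holds by the monotonicity-plus-endpoint argument you give. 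One small simplification worth noting: the elementary inequality $e^{-x}\le 1/x$ for $x>0$ (which the paper itself invokes in the proof of its Lemma~\ref{lem: overlap truncated gkp}) directly gives $e^{-1/\Delta}\le\Delta\le 2\Delta$ with no case analysis at all and without needing $\Delta<1/4$, so your calculus step is more work than required. The final bound $2\sqrt{2\Delta}\le 3\sqrt{\Delta}$ is correct since $2\sqrt{2}\approx 2.83$.
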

	
	Recall the definitions of the approximate GKP state (defined in~\eqref{eq:gkp approximate eta chi})
	\begin{align}
		\ket{\gkp_{\kappa, \Delta}}&= C_{\kappa,\Delta} \sum_{z\in\mathbb{Z}}\eta_\kappa(z)\ket{\chi_\Delta (z)} \  ,\label{eq: approx GKP eta chi}
	\end{align}
	(with $\chi_\Delta(z)=\Psi_\Delta(x-z)$) 
	as well as its truncated variant with $\varepsilon<1/2$ (defined in~\eqref{eq:approximate gkp def eta xi eps})
	\begin{align}
		\ket{\gkp_{\kappa, \Delta}^{\varepsilon}}&= C_{\kappa} \sum_{z\in\mathbb{Z}}\eta_\kappa(z)\ket{\chi_\Delta^\varepsilon(z)} \ .
	\end{align}
	We use the following estimates on the normalization constants~$C_{\kappa,\Delta}$ and $C_{\kappa}$.
	\begin{lemma}\cite[Lemma A.8 (specialized)]{brenneretalGKP2024}\label{lem: norm and ratio norm gkp gkp ep} 
		We have 
		\begin{align}
			C_{\kappa,\Delta}^{2} & \ge 1/4 \qquad\textrm{ for }\qquad \kappa,\Delta\in(0,1/8)\ \label{eq:hmf}
		\end{align}
		and
		\begin{align}
			\frac{C_{\kappa, \Delta}^{2}}{C_{\kappa}^{2}}
			&\ge 
			1-7\Delta\qquad\textrm{ for }\qquad \kappa,\Delta\in (0,1/4)\ .\label{eq:lowerboundsecond}
		\end{align}
	\end{lemma}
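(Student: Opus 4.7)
\textbf{Proof plan for Lemma~\ref{lem: norm and ratio norm gkp gkp ep}.}

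The plan is to derive explicit expressions for the two normalization constants $C_\kappa^{-2}$ and $C_{\kappa,\Delta}^{-2}$ using the Gaussian structure of the peaks $\chi_\Delta(z)$, and then compare them.  First, since $\varepsilon<1/2$, the truncated bumps $\{\chi^\varepsilon_\Delta(z)\}_{z\in\mathbb{Z}}$ have pairwise disjoint supports and are mutually orthogonal, so the normalization condition for $\gkp^\varepsilon_{\kappa,\Delta}$ in Eq.~\eqref{eq:approximate gkp def eta xi eps} gives immediately
\begin{align}
C_\kappa^{-2} \;=\; \sum_{z\in\mathbb{Z}} \eta_\kappa(z)^2\ .
\end{align}
For the non-truncated peaks a direct Gaussian computation (completing the square) yields the overlap $\langle \chi_\Delta(y),\chi_\Delta(z)\rangle = e^{-(y-z)^2/(4\Delta^2)}$, hence from Eq.~\eqref{eq: approx GKP eta chi} I get
\begin{align}
C_{\kappa,\Delta}^{-2} \;=\; \sum_{y,z\in\mathbb{Z}} \eta_\kappa(y)\eta_\kappa(z)\, e^{-(y-z)^2/(4\Delta^2)}\ .
\end{align}

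Next I would split this double sum into the diagonal part $y=z$ (which equals $C_\kappa^{-2}$) and the off-diagonal part, then bound the latter using Cauchy--Schwarz together with translation-invariance of the discrete $\ell^2$-sum: for each fixed $d\in\mathbb{Z}\setminus\{0\}$,
\begin{align}
\sum_{y\in\mathbb{Z}} \eta_\kappa(y)\eta_\kappa(y+d) \;\le\; \sum_{y\in\mathbb{Z}} \eta_\kappa(y)^2 \;=\; C_\kappa^{-2}\ .
\end{align}
Summing over $d\neq 0$ yields the key comparison
\begin{align}
C_\kappa^{-2} \;\le\; C_{\kappa,\Delta}^{-2} \;\le\; C_\kappa^{-2}\Bigl(1 + 2\sum_{d\ge 1} e^{-d^2/(4\Delta^2)}\Bigr)\ .
\end{align}
For the ratio bound~\eqref{eq:lowerboundsecond}, I would then show $2\sum_{d\ge 1} e^{-d^2/(4\Delta^2)} \le 7\Delta$ for $\Delta\in(0,1/4)$.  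Using $d^2\ge d$ for $d\ge 1$, the sum is dominated by a geometric series $e^{-1/(4\Delta^2)}/(1-e^{-1/(4\Delta^2)})$, which for $\Delta\le 1/4$ is bounded by a small absolute constant (less than $0.04$), and for small $\Delta$ decays super-exponentially while $7\Delta$ only decays linearly; a uniform monotonicity-type comparison on $(0,1/4]$ then closes the estimate.  The ratio bound follows since $C_{\kappa,\Delta}^2/C_\kappa^2 = C_\kappa^{-2}/C_{\kappa,\Delta}^{-2} \ge 1/(1+7\Delta) \ge 1-7\Delta$.

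For the absolute lower bound~\eqref{eq:hmf}, I would combine the above upper bound on $C_{\kappa,\Delta}^{-2}$ with an elementary bound on $C_\kappa^{-2} = (\kappa/\sqrt{\pi})\sum_z e^{-\kappa^2 z^2}$: by integral comparison $\sum_z e^{-\kappa^2 z^2}\le 1+\sqrt{\pi}/\kappa$, giving $C_\kappa^{-2}\le 1+\kappa/\sqrt{\pi}$, which for $\kappa<1/8$ is well below $2$.  Together with the ratio bound (applied at $\Delta<1/8$, where the geometric tail is even smaller), this yields $C_{\kappa,\Delta}^{-2}\le 4$, i.e.\ $C_{\kappa,\Delta}^2 \ge 1/4$.

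The main obstacle, though fairly mild, is calibrating the explicit constant $7$ in~\eqref{eq:lowerboundsecond} so that the inequality $2\sum_{d\ge 1} e^{-d^2/(4\Delta^2)}\le 7\Delta$ holds uniformly on the entire range $\Delta\in(0,1/4)$.  This is where one must be careful: the left-hand side is non-monotone in a useful way relative to the linear right-hand side, so I would either argue via a two-regime split (small versus moderate $\Delta$, with $\Delta=1/4$ providing the worst case since the exponential tail is maximized there) or simply verify the inequality at the boundary $\Delta=1/4$ and exploit the vastly faster decay of the exponential tail as $\Delta\to 0$.  Everything else is routine Gaussian bookkeeping once the overlap formula and the Cauchy--Schwarz step are in place.
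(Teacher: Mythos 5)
Your proposal is correct, but it takes a genuinely different route from the paper. The paper's proof is a two-line invocation of Lemma~A.8 of the companion paper~\cite{brenneretalGKP2024}, which already supplies the inequalities $C_{\kappa,\Delta}^{-2}\le 1+\kappa/\sqrt{\pi}+2(\sqrt{2\pi}+\kappa)\Delta$ and $C_{\kappa,\Delta}^{2}/C_{\kappa}^{2}\ge 1-\frac{2(\sqrt{2\pi}+\kappa)}{1-\kappa/\sqrt{\pi}}\Delta$; the constants $1/4$ and $7$ then drop out by plugging in $\kappa,\Delta<1/8$ and $\kappa<1/4$ respectively. You instead re-derive everything from first principles: you obtain $C_\kappa^{-2}=\sum_z\eta_\kappa(z)^2$ from orthogonality of the truncated bumps, compute the Gaussian overlap $\langle\chi_\Delta(y),\chi_\Delta(z)\rangle=e^{-(y-z)^2/(4\Delta^2)}$ (which I verified is correct), split the resulting double sum into diagonal plus off-diagonal, and control the off-diagonal by Cauchy--Schwarz and a geometric-series tail bound $2\sum_{d\ge1}e^{-d^2/(4\Delta^2)}\le 2e^{-1/(4\Delta^2)}/(1-e^{-1/(4\Delta^2)})$. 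All steps are sound; the final calibration $2\sum_{d\ge1}e^{-d^2/(4\Delta^2)}\le 7\Delta$ on $\Delta\in(0,1/4)$ does hold with enormous slack (the left side is at most about $0.04$ at $\Delta=1/4$ while the right side is $1.75$), so the ``non-monotonicity'' you worry about never bites: write $u=1/\Delta\ge 4$ and note that $-u^2/4+\ln u$ is decreasing for $u\ge\sqrt 2$ and already at most $-2.6$ at $u=4$. The trade-off is that your argument is self-contained and elementary, whereas the paper's is shorter but outsourced. Interestingly, both roads arrive at the constant $7$ for quite different reasons: for the paper it is nearly tight (the cited bound at $\kappa=1/4$ evaluates to about $6.4$), while for you it is a gross overestimate and any constant $\gtrsim 0.2$ would have worked. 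Your absolute lower bound also ends up considerably stronger than $1/4$ (closer to $1/2$), again because you only pay for the exponentially small off-diagonal contribution rather than the full linear-in-$\Delta$ bound from Lemma~A.8.
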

	\begin{proof}
		By~\cite[Lemma A.8]{brenneretalGKP2024}, we have 
		\begin{align}
			C_{\kappa,\Delta}^{-2} & \le 1 + \frac{\kappa}{\sqrt{\pi}} + 2(\sqrt{2\pi}+\kappa)\Delta\qquad\textrm{ for }\kappa,\Delta>0\ ,
		\end{align}hence
		\begin{align}
			C_{\kappa,\Delta}^{2} & \ge 1 - \frac{\kappa}{\sqrt{\pi}} - 2(\sqrt{2\pi}+\kappa)\Delta\qquad\textrm{ for }\kappa,\Delta>0
		\end{align}
		by the inequality~$1/(1+x)\ge 1-x$ for all $x>0$. The claim~\eqref{eq:hmf} follows. 
		
		Furthermore, \cite[Lemma A.8]{brenneretalGKP2024} shows that 
		\begin{align}
			\frac{C_{\kappa, \Delta}^{2}}{C_{\kappa}^{2}}
			&\ge 1-\frac{2(\sqrt{2\pi}+\kappa)}{1-\kappa/\sqrt{\pi}}\Delta\qquad \textrm{ for }\kappa>0\ ,
		\end{align}
		which implies the claim~\eqref{eq:lowerboundsecond} for $\kappa<1/8$.
	\end{proof}

	The fact that the states~$\gkp_{\kappa,\Delta}$ and its truncated version~$\gkp_{\kappa, \Delta}^\varepsilon$  are close (for suitably chosen parameters) is expressed by the following statement.
	\begin{lemma}\cite[Lemma A.9 (specialized)]{brenneretalGKP2024}\label{lem: overlap truncated gkp}
		Let $\kappa\in(0,1/4)$, $\Delta>0$ and $\varepsilon\in [\sqrt{\Delta},1/2)$.
		Then
		\begin{align}
			\left|\left\langle \gkp_{\kappa, \Delta}, \gkp_{\kappa, \Delta}^\varepsilon \right\rangle \right|^2 \ge 1-9\Delta
		\end{align}
	\end{lemma}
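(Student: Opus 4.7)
The plan is to expand the overlap using the series definitions, exploit positivity to keep only the diagonal terms, and conclude by invoking the two preceding lemmas of this section. Substituting the definitions~\eqref{eq: approx GKP eta chi} and~\eqref{eq:approximate gkp def eta xi eps} gives
\begin{align}
\langle \gkp_{\kappa,\Delta}, \gkp_{\kappa,\Delta}^{\varepsilon}\rangle
= C_{\kappa,\Delta}\, C_\kappa \sum_{z,z'\in\mathbb{Z}} \eta_\kappa(z)\,\eta_\kappa(z')\, \langle \chi_\Delta(z), \chi_\Delta^{\varepsilon}(z')\rangle.
\end{align}
Every factor on the right-hand side is non-negative: the normalization constants are positive by convention, the Gaussian envelope values $\eta_\kappa(\cdot)$ are positive, and each overlap $\langle \chi_\Delta(z), \chi_\Delta^{\varepsilon}(z')\rangle$ is the integral of a product of two pointwise non-negative functions (a Gaussian and a truncated Gaussian). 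Consequently, discarding all off-diagonal contributions yields a valid lower bound.

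On the diagonal $z=z'$, translation invariance gives $\langle \chi_\Delta(z), \chi_\Delta^{\varepsilon}(z)\rangle = \langle \Psi_\Delta, \Psi_\Delta^{\varepsilon}\rangle$, a $z$-independent constant. Combined with the identity $C_\kappa^{-2}=\sum_z \eta_\kappa(z)^2$ --- which follows from the normalization of $\gkp_{\kappa,\Delta}^{\varepsilon}$ together with the mutual orthogonality of the truncated peaks $\{\chi_\Delta^{\varepsilon}(z)\}_{z\in\mathbb{Z}}$ (guaranteed by $\varepsilon<1/2$) --- the diagonal contribution simplifies to
\begin{align}
\frac{C_{\kappa,\Delta}}{C_\kappa}\, \langle \Psi_\Delta, \Psi_\Delta^{\varepsilon}\rangle.
\end{align}

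Squaring this and applying Lemma~\ref{lem: gaussian gaussian epsilon} (which yields $|\langle \Psi_\Delta, \Psi_\Delta^{\varepsilon}\rangle|^2 \ge 1-2\Delta$ under the hypothesis $\varepsilon\in[\sqrt{\Delta},1/2)$) together with Lemma~\ref{lem: norm and ratio norm gkp gkp ep} (which yields $(C_{\kappa,\Delta}/C_\kappa)^2 \ge 1-7\Delta$ when $\kappa,\Delta\in(0,1/4)$; the claim is trivial for $\Delta\ge 1/9$) gives
\begin{align}
\left|\langle \gkp_{\kappa,\Delta}, \gkp_{\kappa,\Delta}^{\varepsilon}\rangle\right|^2 \ge (1-7\Delta)(1-2\Delta) \ge 1-9\Delta,
\end{align}
where the final step uses $(1-x)(1-y)\ge 1-x-y$ for $x,y\ge 0$. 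I anticipate no substantive obstacle here: the essential input is the positivity observation that lets one discard the off-diagonal terms, and the rest amounts to bookkeeping in terms of two already-proved auxiliary lemmas.
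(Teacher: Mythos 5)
Your proof is correct, and it takes a genuinely different route from the paper. The paper proves this lemma by invoking Lemma~A.9 of the cited companion work, which gives the bound $\left|\langle \gkp_{\kappa,\Delta},\gkp_{\kappa,\Delta}^\varepsilon\rangle\right|^2 \ge 1-7\Delta-2e^{-(\varepsilon/\Delta)^2}$, and then specializes it by observing that $e^{-(\varepsilon/\Delta)^2}\le (\Delta/\varepsilon)^2\le \Delta$ under the hypothesis $\varepsilon\ge\sqrt{\Delta}$. You instead give a self-contained derivation within the present paper: expand the overlap in the peak decomposition, use the pointwise non-negativity of both GKP states to drop the off-diagonal terms, apply the orthogonality of the truncated peaks together with $C_\kappa^{-2}=\sum_z\eta_\kappa(z)^2$ to reduce the diagonal to $\tfrac{C_{\kappa,\Delta}}{C_\kappa}\langle\Psi_\Delta,\Psi_\Delta^\varepsilon\rangle$, and then feed in Lemmas~\ref{lem: gaussian gaussian epsilon} and~\ref{lem: norm and ratio norm gkp gkp ep}. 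All the manipulations check out: the squaring step is justified because both the original overlap and your lower bound are manifestly non-negative, and $(1-7\Delta)(1-2\Delta)\ge 1-9\Delta$ is immediate. One small remark: the case split on $\Delta\ge 1/9$ that you include to license Lemma~\ref{lem: norm and ratio norm gkp gkp ep} is actually unnecessary, since $\varepsilon\in[\sqrt{\Delta},1/2)$ already forces $\Delta<1/4$; but it does no harm. What your route buys is independence from the companion paper's Lemma~A.9 --- the proof becomes self-contained given the two lemmas already reproduced here --- at the cost of re-deriving the reduction to diagonal terms, which is presumably the substance of Lemma~A.9 anyway. What the paper's route buys is brevity.
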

	\begin{proof}
		\cite[Lemma A.9]{brenneretalGKP2024} shows that
		\begin{align}
			\left|\left\langle \gkp_{\kappa, \Delta}, \gkp_{\kappa, \Delta}^\varepsilon \right\rangle \right|^2 \ge 1-7\Delta - 2e^{-(\varepsilon/\Delta)^2} \qquad \textrm{ for $\kappa\in(0,1/4)$, $\Delta>0$ and $\varepsilon\in(0,1/2)$} \ .\label{eq: truncated GKP repeated}
		\end{align}
		The inequality $e^{-x}\le x^{-1}$ for $x>0$ together with the assumption on $\varepsilon\in[\sqrt{\Delta},1/2)$ imply the claim.
	\end{proof}

	We can show that the function~$\gkp^\varepsilon_{\kappa,\Delta}$ has most of its support on the interval~$[-r,r]$ for $r\gg 1/\kappa$, as expressed by the following lemma.
	\begin{lemma}\label{lem:proj norm}
		Let $r\ge 4$ and $\kappa>0$. Let $\varepsilon\in (0,1/2)$. Let $\Pi_{[-r,r]}$ be the projection onto the subspace of~$L^2(\mathbb{R})$ of functions having support on~$[-r,r]$. Then
		\begin{align}
			\left\|\Pi_{[-r,r]}\gkp^\varepsilon_{\kappa,\Delta}\right\|^2 \geq 
			1-2e^{-(\kappa r)^2}\ .
		\end{align}
	\end{lemma}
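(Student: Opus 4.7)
The plan is to exploit the mutual orthogonality of the truncated peaks $\{\chi^\varepsilon_\Delta(z)\}_{z\in\mathbb{Z}}$, which holds because $\varepsilon<1/2$ makes their supports $[z-\varepsilon,z+\varepsilon]$ pairwise disjoint. Combined with the definition $\gkp^\varepsilon_{\kappa,\Delta}=C_\kappa\sum_{z\in\mathbb{Z}}\eta_\kappa(z)\chi^\varepsilon_\Delta(z)$ and the normalization identity $C_\kappa^{-2}=\sum_{z\in\mathbb{Z}}\eta_\kappa(z)^2$, this orthogonality yields the diagonal decomposition
\begin{align*}
\|\Pi_{[-r,r]}\gkp^\varepsilon_{\kappa,\Delta}\|^2 \;=\; C_\kappa^2\sum_{z\in\mathbb{Z}}\eta_\kappa(z)^2\,\|\Pi_{[-r,r]}\chi^\varepsilon_\Delta(z)\|^2.
\end{align*}

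Since each $\chi^\varepsilon_\Delta(z)$ is supported on $[z-\varepsilon,z+\varepsilon]$, we have $\|\Pi_{[-r,r]}\chi^\varepsilon_\Delta(z)\|^2=1$ whenever $|z|+\varepsilon\le r$. Discarding the (non-negative) contributions from peaks only partially contained in $[-r,r]$ gives the lower bound $\|\Pi_{[-r,r]}\gkp^\varepsilon_{\kappa,\Delta}\|^2 \ge C_\kappa^2 \sum_{|z|\le r-\varepsilon,\,z\in\mathbb{Z}}\eta_\kappa(z)^2$, so that
\begin{align*}
1-\|\Pi_{[-r,r]}\gkp^\varepsilon_{\kappa,\Delta}\|^2 \;\le\; \frac{\sum_{|z|>r-\varepsilon,\,z\in\mathbb{Z}} e^{-\kappa^2 z^2}}{\sum_{z\in\mathbb{Z}} e^{-\kappa^2 z^2}},
\end{align*}
where the common $\kappa/\sqrt{\pi}$ factors inside $\eta_\kappa(z)^2$ cancel. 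From here the task reduces to controlling a ratio of unnormalized Gaussian sums.

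For the denominator I would apply Poisson summation, giving $\sum_{z\in\mathbb{Z}}e^{-\kappa^2 z^2}=(\sqrt{\pi}/\kappa)\sum_{k\in\mathbb{Z}}e^{-\pi^2 k^2/\kappa^2}\ge \sqrt{\pi}/\kappa$. For the numerator I would use the standard comparison $\sum_{z\ge N}e^{-\kappa^2 z^2}\le e^{-\kappa^2 N^2}+\int_{N}^{\infty}e^{-\kappa^2 x^2}\,dx$ combined with the tail estimate $\int_t^\infty e^{-u^2}\,du\le e^{-t^2}/(2t)$ (valid for $t>0$). Applied with $N$ equal to the smallest integer exceeding $r-\varepsilon$, this yields a numerator bounded by a constant multiple of $\kappa^{-1}\,e^{-\kappa^2(r-\varepsilon)^2}$. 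Dividing by the Poisson lower bound $\sqrt{\pi}/\kappa$ then cancels the $\kappa$-dependence and leaves an estimate of the form $C\cdot e^{-\kappa^2(r-\varepsilon)^2}$ for an explicit constant $C$.

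\textbf{Main obstacle.} The delicate point is extracting precisely the constant $2$ with the exponent $(\kappa r)^2$ rather than $\kappa^2(r-\varepsilon)^2$: since $\varepsilon<1/2$, the tail can include integers $|z|$ as small as $\lceil r-1/2\rceil$, so $z^2$ can be as small as $(r-1/2)^2=r^2-r+1/4$, producing a stray factor $e^{\kappa^2(r-1/4)}$ that must be absorbed into constants. I expect to handle this by a case split on the size of $\kappa r$: when $\kappa r\le \sqrt{\ln 2}$ the claim is trivial because the right-hand side $1-2e^{-(\kappa r)^2}$ is non-positive, while for $\kappa r>\sqrt{\ln 2}$ the assumption $r\ge 4$ together with monotonicity should allow absorbing the $\kappa^2 r$ slack into the Poisson factor $\sqrt{\pi}/\kappa$, so that the coefficient reduces to the stated $2$. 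An alternative route that avoids this bookkeeping is to keep the partial-peak contributions explicitly (there are at most two such peaks, and each contributes an exponentially small Gaussian fragment determined by $\Delta$), and to observe that the fully excluded integers then satisfy $|z|\ge r+\varepsilon-1>r-1$, which yields the cleaner exponent at the cost of a slightly worse numerical constant that can still be reconciled with $2$ via the hypothesis $r\ge 4$.
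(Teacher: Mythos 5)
Your outline agrees with the paper's in its first step: exploit orthogonality of the truncated peaks to diagonalize $\|\Pi_{[-r,r]}\gkp^\varepsilon_{\kappa,\Delta}\|^2$, and reduce to a ratio of discrete Gaussian sums. The paper then invokes its Lemma~\ref{thm:discretegaussianconcentration}, a sub-Gaussian tail bound $\Pr[|X_s|\ge r']\le 2e^{-\frac{3\pi}{4}(r'/s)^2}$ for the discrete Gaussian $X_s$ on $\mathbb{Z}$, applied to the real threshold $r'=r-1$ so that there is no integer bookkeeping at all; your plan replaces that tool with a Poisson lower bound on the denominator together with the elementary comparison $\sum_{z\ge N}e^{-\kappa^2 z^2}\le e^{-\kappa^2 N^2}+\int_N^\infty e^{-\kappa^2 x^2}\,dx$. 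That substitution is legitimate in principle, and the case split $\kappa r\le\sqrt{\ln 2}$ correctly disposes of the small-$\kappa r$ regime where the claim is vacuous.

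The gap is exactly the one you flag, and neither of your two proposed fixes closes it. Trace your own estimate: the tail ratio is at most $\frac{2\kappa}{\sqrt{\pi}}e^{-\kappa^2 N^2}\bigl(1+\frac{1}{2\kappa^2 N}\bigr)$, and you must show this is $\le 2e^{-\kappa^2 r^2}$, i.e., $\frac{\kappa}{\sqrt{\pi}}e^{\kappa^2(r^2-N^2)}\bigl(1+\frac{1}{2\kappa^2 N}\bigr)\le 1$. When $r$ lies just above an integer $m$ and $\varepsilon$ is close to $1/2$, the smallest integer exceeding $r-\varepsilon$ is $N=m<r$, and the slack exponent $\kappa^2(r^2-N^2)$ approaches $\kappa^2(r-1/4)$. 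For $\kappa$ of order one and $r\ge 4$ this is already $\ge 3.75$, so the slack factor exceeds $40$ while $\kappa/\sqrt{\pi}$ and the reciprocal of $(1+\frac{1}{2\kappa^2 N})$ are each at most about $1$. Being above or below $\sqrt{\ln 2}\approx 0.83$ does not help in this regime, since $\kappa r$ is comfortably above that threshold. Your ``alternative route'' fails too: you cannot ``keep the partial-peak contributions'' because those contributions are $\|\Pi_{[-r,r]}\chi^\varepsilon_\Delta(z)\|^2$ for $|z|\in(r-\varepsilon,r+\varepsilon)$, and they are not determined by $\Delta$ in any useful lower-bounding sense — they can be essentially arbitrary in $[0,1]$ depending on $\Delta$ and the precise overlap geometry, and the lemma imposes no restriction on $\Delta$. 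Your claim that the fully excluded integers then satisfy $|z|>r-1$ does not deal with these intermediate $z$, which still contribute to the loss.

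There is a more basic issue worth knowing about: as stated (with $r$ allowed to be any real $\ge 4$ and $\Delta>0$ unconstrained), the inequality is in fact false — take $r\approx 4.3$, $\kappa=1$, $\varepsilon\to 1/2$, $\Delta\to\infty$; the peak at $z=4$ then loses a fraction $\approx 0.2$ of its mass, and one can check numerically that $1-\|\Pi_{[-r,r]}\gkp^\varepsilon_{\kappa,\Delta}\|^2\approx 2.5\times 10^{-8}$ while $2e^{-r^2}\approx 1.9\times 10^{-8}$. The paper's own proof avoids noticing this only because it identifies $\rho_s(z)=e^{-\pi z^2/s^2}$ with $\eta_\kappa(z)^2\propto e^{-\kappa^2 z^2}$ at $s=1/\kappa$, which inserts a spurious factor of $\pi$ in the exponent of the tail bound; the correct identification is $s=\sqrt{\pi}/\kappa$, and then the constant $\frac{3\pi}{4}$ becomes $\frac{3}{4}$ and the concluding step ``$3\pi/8\ge 1$'' no longer holds. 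In the paper's actual applications $r$ is an integer power of two ($r=R/2$) and $\kappa r\gg 1$, which removes the slack entirely. So a robust fix for you would be either to prove the lemma only for integer $r$ (then $N\ge r$, the slack vanishes, and your Poisson/integral argument does yield the factor $2$ after handling $\kappa$ large and small separately), or to weaken the exponent to something like $\frac{3}{8}(\kappa r)^2$.
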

	\begin{proof}
		Recall the definition~\eqref{eq: truncated GKP repeated} of the state~$\ket{\gkp^\varepsilon_{\kappa,\Delta}}$. Because of the pairwise orthogonality of the states~$\{\ket{\chi^\varepsilon_\Delta(z)}\}_{z\in\mathbb{Z}}$, the normalization constant~$C_\kappa$ is given by
		\begin{align}
			C_{\kappa}^{-2}&=\sum_{z\in\mathbb{Z}}\eta_\kappa(z)^2\  .
		\end{align}
		On the other hand, we have 
		\begin{align}
			\left\lVert \Pi_{[-r,r]}\gkp_{\kappa, \Delta}^\varepsilon\right\rVert^2&=
			C_\kappa^2 \sum_{z,z'\in\mathbb{Z}}\eta_\kappa(z)\eta_\kappa(z')\langle \chi^\varepsilon_\Delta(z),\Pi_{[-r,r]}\chi^\varepsilon_\Delta(z')\rangle\\
			&\geq C_\kappa^2 \sum_{z=-t }^{t }\eta_\kappa(z)^2\qquad\textrm{ where }t:=\lfloor r\rfloor-1\ .
		\end{align}
		where we used that each function~$\chi_\Delta^\varepsilon(z)$ is supported on~$[z-\varepsilon,z+\varepsilon]$
		(implying that $\langle \chi^\varepsilon_\Delta(z),\Pi_{[-r,r]}\chi^\varepsilon_\Delta(z')\rangle=0$ unless $z=z'$), and the fact that each such function is non-negative (which implies that 
		$\langle \chi^\varepsilon_\Delta(z),\Pi_{[-r,r]}\chi^\varepsilon_\Delta(z)\rangle\leq \langle \chi^\varepsilon_\Delta(z),\chi^\varepsilon_\Delta(z)\rangle=\left\|\chi^\varepsilon_\Delta(z)\right\|^2=1$). 
		It follows that
		\begin{align}
			\left\| \Pi_{[-r,r]}\gkp_{\kappa, \Delta}^\varepsilon\right\|^2&
			\geq \frac{\sum_{z=-t}^t \eta_\kappa(z)^2}{\sum_{z\in\mathbb{Z}}\eta_\kappa(z)^2}\\
			&=\frac{\sum_{z=-t}^t\rho_s(z)}{\sum_{z\in\mathbb{Z}}\rho_s(z)}\qquad\textrm{ with }\qquad s=1/\kappa\ ,\label{eq:proj -rr gkp}
		\end{align}
		where we introduced the Gaussian function~$\rho_s:\mathbb{R}\rightarrow\mathbb{R}$ for $s>0$ as 
		\begin{align}
			\rho_s(z)&=e^{-\pi z^2/s^2}\ .
		\end{align}
		A random variable $X_s$ on $\mathbb{Z}$ sampled according to discrete Gaussian distribution with (width) parameter $s$ (see e.g.,~\cite{SMMicciancioRegev04}) is defined by
		\begin{align}
			\Pr\left[X_s=x\right]&=\frac{\rho_s(x)}{\rho_s(\mathbb{Z})}\qquad \textrm{where}\qquad
			\rho_s(\mathbb{Z}) = \sum_{z\in\mathbb{Z}}\rho_s(z)\ .
		\end{align}
		We thus have (by inserting $s=1/\kappa$ and $t=\lfloor r\rfloor -1$)
		\begin{align}
			\left\| \Pi_{[-r,r]}\gkp_{\kappa, \Delta}^\varepsilon\right\|^2
			&\ge \Pr[|X_{1/\kappa}|\le t]\\
			&= \Pr[|X_{1/\kappa}|\le \lfloor r\rfloor-1]\\
			&= \Pr[|X_{1/\kappa}|\le r - 1] \label{eq:proj -rr discrete prop}\\
			&\ge \Pr[|X_{1/\kappa}|< r - 1]\\
			&=1-\Pr[|X_{1/\kappa}|\ge r - 1] \ ,
		\end{align}
		where we obtained~\eqref{eq:proj -rr discrete prop} by the fact that the random variable $X_{1/\kappa}$ has support on $\mathbb{Z}$.
		
		From Lemma~\ref{thm:discretegaussianconcentration} below we obtain
		\begin{align}
			\left\| \Pi_{[-r,r]}\gkp_{\kappa, \Delta}^\varepsilon\right\|^2
			&\ge 1-2e^{-\frac{3\pi}{4}\left(\kappa (r-1) \right)^2}\ . \label{eq: proj -rr tailbound}\\
			&\ge 1-2e^{-\frac{3\pi}{8}\left(\kappa r\right)^2/2} \qquad \textrm{whenever } r\ge 4 \\
			&\ge 1-2e^{-(\kappa r)^2} \qquad\textrm{ since  } \textfrac{3\pi}{8}\ge 1 .
		\end{align}
		where we used that $(x-1)^2\ge x^2/2$ for all $x \ge 4$ to obtain the inequality in the second step. This is the claim.
	\end{proof}
	
	Ideal GKP states were originally introduced because of there invariance under shifts. Here we show that approximate GKP states are approximately invariant under integer shifts that are small compared to~$1/\kappa$. That is, we have:
	\begin{lemma}[Approximate shift-invariance of  approximate GKP states]\label{lem:approximateshiftinvarianceGKP}
		Let  $\varepsilon\in (0,1/2)$ and $\kappa,\Delta\in (0,1/4)$. 
		Then we have 
		\begin{align}
			\langle \gkp^{\varepsilon}_{\kappa,\Delta},e^{id P}\gkp^{\varepsilon}_{\kappa,\Delta}\rangle \geq 1-\frac{\kappa^2 d^2}{2}\qquad\textrm{  for any  }\qquad d \in\mathbb{Z}\ .
		\end{align}
	\end{lemma}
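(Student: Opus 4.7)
My plan is to reduce the overlap to a purely discrete sum involving only the envelope $\eta_\kappa$, and then bound the resulting ``defect'' using a discrete-to-continuous derivative estimate. First, I will expand both copies of $\ket{\gkp^\varepsilon_{\kappa,\Delta}}$ via the definition~\eqref{eq:approximate gkp def eta xi eps} and exploit two facts: for integer $d$ the translation satisfies $e^{idP}\chi^\varepsilon_\Delta(z) = \chi^\varepsilon_\Delta(z-d)$, and the family $\{\chi^\varepsilon_\Delta(z)\}_{z\in\mathbb{Z}}$ is orthonormal whenever $\varepsilon<1/2$. Together these collapse the double sum to
\begin{align}
\langle \gkp^\varepsilon_{\kappa,\Delta}, e^{idP} \gkp^\varepsilon_{\kappa,\Delta}\rangle = C_\kappa^2 \sum_{z\in\mathbb{Z}} \eta_\kappa(z)\eta_\kappa(z+d)\ .
\end{align}

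Second, I will apply the polarization identity $2ab = a^2+b^2-(a-b)^2$ together with the translation-invariance $\sum_z\eta_\kappa(z+d)^2=\sum_z\eta_\kappa(z)^2 = C_\kappa^{-2}$ (which uses $d\in\mathbb{Z}$) to obtain the \emph{exact} identity
\begin{align}
\langle \gkp^\varepsilon_{\kappa,\Delta}, e^{idP} \gkp^\varepsilon_{\kappa,\Delta}\rangle = 1 - \frac{C_\kappa^2}{2}\sum_{z\in\mathbb{Z}}(\eta_\kappa(z+d)-\eta_\kappa(z))^2\ .
\end{align}
It therefore suffices to show $C_\kappa^2\le 1$ and $\sum_z(\eta_\kappa(z+d)-\eta_\kappa(z))^2\le \kappa^2d^2/2$.

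Third, for the defect bound I will write $\eta_\kappa(z+d)-\eta_\kappa(z)$ as a telescoping sum of $|d|$ unit-step differences and apply Cauchy--Schwarz to gain a factor $|d|$; after swapping the order of summation and using translation-invariance in $z$, this reduces the problem to controlling a single unit step, yielding
\begin{align}
\sum_z(\eta_\kappa(z+d)-\eta_\kappa(z))^2 \le d^2\sum_z(\eta_\kappa(z+1)-\eta_\kappa(z))^2\ .
\end{align}
A second Cauchy--Schwarz, applied to $\eta_\kappa(z+1)-\eta_\kappa(z) = \int_z^{z+1}\eta_\kappa'(t)\,dt$, converts the right-hand side into the $L^2$-norm of the derivative: $\sum_z(\eta_\kappa(z+1)-\eta_\kappa(z))^2 \le \int_\mathbb{R} |\eta_\kappa'(t)|^2\,dt = \kappa^2/2$, where the final equality is a direct Gaussian computation using $\int t^2 e^{-\kappa^2 t^2}\,dt = \sqrt{\pi}/(2\kappa^3)$. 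The bound $C_\kappa^2\le 1$ then follows from Poisson summation applied to $\sum_z e^{-\kappa^2 z^2}$, which gives $C_\kappa^{-2} = \sum_{k\in\mathbb{Z}} e^{-\pi^2 k^2/\kappa^2}\ge 1$. Assembling these pieces yields the claim.

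I do not expect a serious obstacle in this argument; all steps are elementary once the polarization reduction is in place. The only point of mild care is ensuring that the telescoping-plus-Cauchy--Schwarz estimate produces exactly the factor $d^2$ (not $d(d+1)$ or similar, as a naive continuous-range counting argument would give), which is why I split the estimate into a discrete telescoping step followed by a continuous $L^2$-estimate on $\eta_\kappa'$ rather than attempting to pass directly from $\eta_\kappa(z+d)-\eta_\kappa(z)$ to an integral of $\eta_\kappa'$ in a single step.
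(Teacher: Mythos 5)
Your proof is correct, and it yields a slightly sharper constant ($1-\kappa^2 d^2/4$ via $C_\kappa^2\le 1$), but the ``bulk'' estimate is handled by a genuinely different argument than the paper's. Both proofs start the same way: expand both copies via the definition, use orthogonality of $\{\chi^\varepsilon_\Delta(z)\}_{z\in\mathbb{Z}}$ and $d\in\mathbb{Z}$ to collapse the overlap to $C_\kappa^2\sum_{z}\eta_\kappa(z)\eta_\kappa(z\pm d)$. From there the paper writes the overlap as a ratio of two sums (so $C_\kappa^2$ never enters) and invokes a separate ratio lemma (Lemma~\ref{lem:sum gaussians ratio shift}): after completing the square, the numerator acquires a prefactor $e^{-c\delta^2/2}$, and a convexity bound on the periodic Gaussian $f_s$ (Eq.~\eqref{eq:perodicfunctionftransf}) supplies another $e^{-c\delta^2/2}$, giving the Gaussian lower bound $e^{-c\delta^2}\ge 1-c\delta^2$ with $c=\kappa^2/2$. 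You instead convert the cross term via the polarization identity $2ab=a^2+b^2-(a-b)^2$ and the integer-shift-invariance of $\sum_z\eta_\kappa(z)^2$ into an \emph{exact} defect formula $1-\frac{C_\kappa^2}{2}\sum_z(\eta_\kappa(z+d)-\eta_\kappa(z))^2$, then control the defect by a discrete telescoping Cauchy--Schwarz step (producing $d^2$) followed by a continuous Cauchy--Schwarz step converting unit differences to $\int|\eta_\kappa'|^2=\kappa^2/2$, and handle $C_\kappa^2\le 1$ via Poisson summation. The paper's route, once its lemma is available, is shorter and returns the exponential form $e^{-\kappa^2 d^2/2}$ (better for large $d$); your route is more self-contained, avoids the completion-of-square/periodic-Gaussian machinery, and gives a slightly better polynomial constant for small $d$ --- though it does pull in Poisson summation for the normalization constant, a step the paper sidesteps entirely by working with the ratio. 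Your care in doing telescoping first and integrating $\eta_\kappa'$ second, to land exactly on $d^2$ rather than a coarser count, is correct and necessary.
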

	\begin{proof}
		Using that $e^{id P}\ket{\chi^\varepsilon_\Delta(z)}=\ket{\chi^\varepsilon_\Delta(z+d)}$ we have, by the orthogonality of the states~$\ket{\chi^\varepsilon_\Delta(z)}\}_{z\in\mathbb{Z}}$ and the assumption that~$d$ is an integer, that
		\begin{align}
			\langle \gkp^{\varepsilon}_{\kappa,\Delta},e^{ id P}\gkp^{\varepsilon}_{\kappa,\Delta}\rangle
			&=C_{\kappa}^2 \sum_{z,z'\in\mathbb{Z}}
			\eta_{\kappa}(z)\eta_\kappa(z')
			\langle\chi^\varepsilon_{\Delta}(z),\chi^\varepsilon_\Delta(z'+d)\rangle\\
			&= C_\kappa^2 \sum_{z\in \mathbb{Z}}\eta_\kappa(z)\eta_\kappa(z-d)\ ,
		\end{align}
		where we used a variable substitution. 
		Since $C_\kappa^{-2}=\sum_{z\in\mathbb{Z}}\eta_\kappa(z)^2$  
		it follows that 
		\begin{align}
			\langle \gkp^{\varepsilon}_{\kappa,\Delta},e^{ id P}\gkp^{\varepsilon}_{\kappa,\Delta}\rangle
			&= \frac{\sum_{z\in \mathbb{Z}}\eta_\kappa(z)\eta_\kappa(z-d)}{\sum_{z\in\mathbb{Z}}\eta_\kappa(z)^2}\\
			&=\frac{\sum_{z\in \mathbb{Z}}e^{-\kappa^2z^2/2}e^{-\kappa^2(z-d)^2/2}}{\sum_{z\in\mathbb{Z}}(e^{-\kappa^2z^2/2})^2}\qquad \textrm{ by the definition of $\eta_\kappa(\cdot)$, see~\eqref{eq:eta kappa def}}
		\end{align}
		Hence the claim follows from Lemma~\ref{lem:sum gaussians ratio shift} (with $c=\kappa^2/2$ and~$\delta=d$).
	\end{proof}
	
	We also note that these states are orthogonal if shifted by 
	a real number with a large distance from the set of integers compared to the parameter~$\varepsilon$. That is, we have
	\begin{lemma}[Orthogonality of displaced GKP states]\label{lem:orthogonalitytruncated gkpstates}
		Let $\kappa,\Delta>0$, $\varepsilon\in (0,1/2)$ 
		and $d\in\mathbb{R}$ be such that
		\begin{align}
			|d-\round{d}|>2\varepsilon\ ,
		\end{align}
		i.e., the distance of~$d$ to $\mathbb{Z}$ is at least~$\varepsilon$. Then
		\begin{align}
			\langle \gkp^\varepsilon_{\kappa,\Delta},e^{-idP}\gkp^\varepsilon_{\kappa,\Delta}\rangle &=0\ .
		\end{align}
	\end{lemma}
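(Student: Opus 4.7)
The proof plan is to reduce the claim to a disjointness-of-supports argument using the definition of the truncated approximate GKP state and the translation action of $e^{-idP}$.

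First I would expand the inner product using the definition~\eqref{eq:approximate gkp def eta xi eps} of $\gkp^\varepsilon_{\kappa,\Delta}$:
\begin{align}
\langle \gkp^\varepsilon_{\kappa,\Delta}, e^{-idP}\gkp^\varepsilon_{\kappa,\Delta}\rangle = C_\kappa^2 \sum_{z,z'\in\mathbb{Z}} \eta_\kappa(z)\eta_\kappa(z')\, \langle \chi^\varepsilon_\Delta(z), e^{-idP}\chi^\varepsilon_\Delta(z')\rangle.
\end{align}
Since $e^{-idP}$ acts as translation by $d$ in position space, we have $e^{-idP}\chi^\varepsilon_\Delta(z')=\chi^\varepsilon_\Delta(z'+d)$ where the latter denotes the shifted Gaussian $x\mapsto \Psi^\varepsilon_\Delta(x-z'-d)$. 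By construction $\chi^\varepsilon_\Delta(z)$ is supported on $[z-\varepsilon,z+\varepsilon]$ and $\chi^\varepsilon_\Delta(z'+d)$ is supported on $[z'+d-\varepsilon,z'+d+\varepsilon]$.

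The key step is to show that each inner product $\langle \chi^\varepsilon_\Delta(z), \chi^\varepsilon_\Delta(z'+d)\rangle$ vanishes under the assumption $|d-\round{d}|>2\varepsilon$. The two supports have nonempty intersection only when $|z-(z'+d)|\le 2\varepsilon$, i.e., when $|(z-z')-d|\le 2\varepsilon$. Since $z-z'\in\mathbb{Z}$, this would imply
\begin{align}
|d-\round{d}| = \min_{k\in\mathbb{Z}}|d-k| \le |d-(z-z')| \le 2\varepsilon,
\end{align}
contradicting the hypothesis. Hence every summand vanishes, and so does the inner product.

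There is essentially no obstacle here beyond carefully tracking the support of the translated truncated Gaussian; everything follows from the localization built into the definition of $\chi^\varepsilon_\Delta(z)$. The only mild subtlety is the factor of $2$ in the bound $2\varepsilon$, which arises because both factors in the inner product have supports of half-width $\varepsilon$, so the relevant collision threshold for the centers is $2\varepsilon$ rather than $\varepsilon$.
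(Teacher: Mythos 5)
Your argument is correct and is essentially the same as the paper's: the paper also reduces to disjointness of the supports of $\gkp^\varepsilon_{\kappa,\Delta}$ (contained in $\mathbb{Z}(\varepsilon)$) and $e^{-idP}\gkp^\varepsilon_{\kappa,\Delta}$ (contained in $(\mathbb{Z}+\{d\})(\varepsilon)$). You expand the inner product into peak-by-peak overlaps and check each one vanishes, which is a slightly more explicit presentation of the same disjointness observation.
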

	\begin{proof}
		This follows immediately from the fact that
		$\gkp^\varepsilon_{\kappa,\Delta}$ 
		is supported on~$\mathbb{Z}(\varepsilon)$, whereas 
		$e^{-idP}\gkp^\varepsilon_{\kappa,\Delta}$ 
		has support on the translated set~$(\mathbb{Z}+\{d\})(\varepsilon)$. If the 
		deviation of real~$d$ from the closest integer satisfies $|d-\round{d}|> 2\varepsilon$,
		these sets are disjoint and the claim follows.
	\end{proof}

	\clearpage
	\section{Combinatorial bounds}\label{sec: combinatorial bounds}
	In this section, we state
	various bounds on sums of Gaussians. Some of these are derived by elementary means (e.g., using monotonicity of the exponential function and Gaussian integrals).
	Others are more involved and based on more sophisticated results about lattices, including Poisson summation formula, or the Jacobi triple product identity.  We refer to~\cite{SMAntoniZygmund1935,SMJacobi, SMBanaszczyk1993,SMMicciancioRegev04,SMPeikert16,SMdadushdukas} for more details.

	\subsection{Elementary bounds on sums of Gaussians}
	\begin{corollary}\label{cor: Delta sum abs bound} Let $c\in (0,\pi/16)$. Then
		\begin{align}
			\sum_{z\in\mathbb{Z}} e^{-c(|z|+1)^2} \ge \frac{\sqrt{\pi}}{4\sqrt{c}}\,.
		\end{align}
	\end{corollary}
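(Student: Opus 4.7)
The plan is to reduce the shifted sum to a standard theta-type sum and then invoke Poisson summation to lower bound it.

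First I would exploit symmetry to rewrite
\begin{align}
\sum_{z\in\mathbb{Z}} e^{-c(|z|+1)^2} = e^{-c} + 2\sum_{z=1}^{\infty} e^{-c(z+1)^2} = e^{-c} + 2\sum_{n=2}^{\infty} e^{-cn^2},
\end{align}
where the second equality uses the substitution $n=z+1$. Comparing with the symmetric expansion $\sum_{n\in\mathbb{Z}} e^{-cn^2} = 1 + 2e^{-c} + 2\sum_{n=2}^{\infty} e^{-cn^2}$, this rearranges to the identity
\begin{align}
\sum_{z\in\mathbb{Z}} e^{-c(|z|+1)^2} = \sum_{n\in\mathbb{Z}} e^{-cn^2} - 1 - e^{-c}.
\end{align}

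Next I would apply Poisson summation (or equivalently the Jacobi theta transformation, available from the results already cited in Section~\ref{sec: combinatorial bounds}) to the right-hand side. This gives $\sum_{n\in\mathbb{Z}} e^{-cn^2} = \sqrt{\pi/c}\,\sum_{m\in\mathbb{Z}} e^{-\pi^2 m^2/c}$, and since every term on the right is nonnegative with the $m=0$ term contributing $1$, we obtain the clean lower bound $\sum_{n\in\mathbb{Z}} e^{-cn^2} \ge \sqrt{\pi/c}$. Using $e^{-c}\le 1$, this yields
\begin{align}
\sum_{z\in\mathbb{Z}} e^{-c(|z|+1)^2} \ge \sqrt{\pi/c} - 2.
\end{align}

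Finally, it suffices to show $\sqrt{\pi/c} - 2 \ge \tfrac{1}{4}\sqrt{\pi/c}$ under the hypothesis $c<\pi/16$, i.e.\ $\sqrt{\pi/c} \ge 8/3$. This follows immediately: $c < \pi/16$ implies $\sqrt{\pi/c} > 4 > 8/3$. There is no serious obstacle here; the main (minor) subtlety is keeping the bookkeeping straight when going from the $|z|+1$-shifted sum to the centered Gaussian sum, and making sure the constant losses ($1$ and $e^{-c}$) are absorbable by the slack in the hypothesis on $c$. The range $c \in (0,\pi/16)$ in the statement is precisely tight enough to leave a factor $1/4$ rather than $1/2$ of $\sqrt{\pi/c}$ after subtracting these constants.
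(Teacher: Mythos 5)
Your proposal is correct, and the initial algebraic decomposition (the identity $\sum_{z\in\mathbb{Z}} e^{-c(|z|+1)^2} = \sum_{n\in\mathbb{Z}} e^{-cn^2} - 1 - e^{-c}$) is exactly what the paper does. Where you diverge is in bounding the centered theta-sum: you invoke Poisson summation to get $\sum_{n\in\mathbb{Z}} e^{-cn^2} \ge \sqrt{\pi/c}$, whereas the paper's proof is purely elementary, using monotonicity of $x\mapsto e^{-cx^2}$ on $[0,\infty)$ to compare the sum to the Gaussian integral via $e^{-cz^2} \ge \int_z^{z+1}e^{-cx^2}\,dx$, arriving at the weaker bound $\sum_{n\in\mathbb{Z}} e^{-cn^2}\ge \sqrt{\pi/c}-1$. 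Consequently the paper ends with $\sqrt{\pi/c}-3 \ge \tfrac14\sqrt{\pi/c}$, which holds exactly when $\sqrt{\pi/c}\ge 4$, i.e.\ $c\le\pi/16$ — so for the paper's route the hypothesis is indeed exactly tight. Your stronger Poisson bound leaves $\sqrt{\pi/c}-2$, which already beats $\tfrac14\sqrt{\pi/c}$ once $\sqrt{\pi/c}\ge 8/3$, i.e.\ $c \le 9\pi/64$; so your final remark that the hypothesis $c<\pi/16$ is ``precisely tight'' is slightly off for your argument (you have extra slack), though it is accurate for the paper's elementary variant. Both approaches are legitimate — the paper's keeps this corollary self-contained with only calculus-level tools, while yours reuses the Poisson machinery that the paper already marshals elsewhere in Section~\ref{sec: combinatorial bounds} and yields a marginally sharper constant.
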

	\begin{proof}
		We show that for $c>0$
		\begin{align}
			\sum_{z\in\mathbb{Z}} e^{-c(|z|+1)^2} \ge \sqrt{\frac \pi c} - 3\ . \label{eq: sum gauss abs - 1 bound}
		\end{align}
		The claim is a simple corollary for $c\in (0,\pi/16)$.
		
		We have
		\begin{align}
			\sum_{z\in\mathbb{Z}} e^{-c(|z|+1)^2} &=  \left(\sum_{z\in\mathbb{Z}} e^{-cz^2} \right) - 1 - e^{-c} \label{eq: sum gauss abs - 1}
		\end{align}
		Since $x\mapsto e^{-cx^2}$ with $c>0$ is monotonously decreasing for $x\ge 0$ we have
		\begin{align}
			e^{-cz^2} &\ge  \int_{z}^{z+1} e^{-cx^2} dx \qquad\textrm{ for any $z\ge 0$} \ . \label{eq: gauss one step integral bound}
		\end{align}
		Since $x\mapsto e^{-cx^2}$ is even we have
		\begin{align}
			\sum_{z\in\mathbb{Z}} e^{-cz^2}& = 2\sum_{z\in\mathbb{N}_0} e^{-cz^2}-1 \ge \int e^{-cx^2}dx -1 \ge \sqrt{\frac \pi c} - 1\ .
		\end{align}
		We obtain~\eqref{eq: sum gauss abs - 1 bound} by inserting this bound into~\eqref{eq: sum gauss abs - 1} and by noting that $e^{-c}\le 1$.
	\end{proof}

	For $s>0$ and $x\in \mathbb{R}$ let $\rho_s(z)=e^{-\pi x^2/s^2}$ be a Gaussian function. Define
	\begin{align}
		f_s(t)&=\sum_{z\in\mathbb{Z}}\rho_s(z+t)=\rho_s(\mathbb{Z}+t)\qquad\textrm{ for }t\in\mathbb{R}\ \label{eq:fsdefinitionperiodicgaussian}
	\end{align}
	sometimes called the periodic Gaussian function over $\mathbb{Z}$ (cf.~\cite[Lecture 7]{SMdadushdukas}).
	
	We will need the inequality
	\begin{align}
		f_s(t)\geq f_s(0)e^{-\pi t^2/s^2}\qquad\textrm{ for all }t\in\mathbb{R}\ .\label{eq:perodicfunctionftransf}
	\end{align}
	\begin{proof}[Proof of~\eqref{eq:perodicfunctionftransf}]
		This proof is borrowed from~\cite[Lecture 7]{SMdadushdukas}, where a more general statement is shown. 
		We have
		\begin{align}
			f_s(t)&=\sum_{z\in\mathbb{Z}}e^{-\pi ((z+t)/s)^2}\\
			&=\sum_{z\in\mathbb{Z}}\frac{1}{2}\left(
			e^{-\pi ((z+t)/s)^2}+e^{-\pi ((t-z)/s)^2}\right)\ .\label{eq:sumfstxim}
		\end{align}
		But 
		\begin{align}
			\frac{1}{2}\left(
			e^{-\pi ((z+t)/s)^2}+e^{-\pi ((t-z)/s)^2}\right)
			&=e^{-\pi(t/s)^2}e^{-\pi (z/s)^2}
			\cdot \left(\frac{1}{2}e^{-2\pi zt/s^2}+\frac{1}{2}e^{2\pi zt/s^2}\right)\\
			&\geq e^{-\pi(t/s)^2}e^{-\pi (z/s)^2}\  ,\label{eq:intermediateusefulconvex}
		\end{align}
		where we used that $\frac{1}{2}(e^x+e^{-x})\geq e^{0}=1$ for all $x\in\mathbb{R}$ by convexity of the exponential function. Inserting~\eqref{eq:intermediateusefulconvex} into~\eqref{eq:sumfstxim} yields
		\begin{align}
			f_s(t)\geq e^{-\pi t^2/s^2}\sum_{z\in\mathbb{Z}}e^{-\pi z^2/s^2}\ ,
		\end{align}
		which is the claim.
	\end{proof}
	As an immediate application of Eq.~\eqref{eq:perodicfunctionftransf}, we give a bound on the ratio of two sums of Gaussians.
	\begin{lemma}\label{lem:sum gaussians ratio shift} Let $c>0$ and $\delta>0$.  Then 
		\begin{align}
			\frac{\sum_{z\in\mathbb{Z}} e^{-c z^2}e^{-c(z-\delta)^2}}{\sum_{z\in\mathbb{Z}} \left(e^{-c z^2}\right)^2 }&\ge e^{-c\delta^2}\geq 1-c\delta^2\ .
		\end{align}
	\end{lemma}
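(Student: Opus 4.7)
The plan is to reduce both sums in the ratio to values of the periodic Gaussian function $f_s(t) = \sum_{z\in\mathbb{Z}} e^{-\pi(z+t)^2/s^2}$ (defined in Eq.~\eqref{eq:fsdefinitionperiodicgaussian}) evaluated at two points, and then invoke the pointwise inequality $f_s(t) \ge f_s(0) e^{-\pi t^2/s^2}$ stated in Eq.~\eqref{eq:perodicfunctionftransf}.

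The first step is a straightforward algebraic manipulation in the numerator: completing the square in $z$ gives
\begin{equation}
e^{-cz^2} e^{-c(z-\delta)^2} = e^{-c\delta^2/2}\, e^{-2c(z-\delta/2)^2}\ .
\end{equation}
Summing over $z\in\mathbb{Z}$, the prefactor $e^{-c\delta^2/2}$ pulls out and the remaining sum is a shifted periodic Gaussian. Choosing $s$ with $\pi/s^2 = 2c$, i.e.\ $s=\sqrt{\pi/(2c)}$, we recognize both sums as values of $f_s$: the numerator equals $e^{-c\delta^2/2} f_s(-\delta/2)$ while the denominator equals $f_s(0)$.

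The second step is then to apply Eq.~\eqref{eq:perodicfunctionftransf} at $t = -\delta/2$, which yields $f_s(-\delta/2) \ge f_s(0) e^{-\pi(\delta/2)^2/s^2} = f_s(0) e^{-c\delta^2/2}$ after substituting the value of $s$. Combining this with the prefactor from the first step gives
\begin{equation}
\frac{\sum_{z\in\mathbb{Z}} e^{-cz^2}e^{-c(z-\delta)^2}}{\sum_{z\in\mathbb{Z}} e^{-2cz^2}} \ge e^{-c\delta^2/2}\cdot e^{-c\delta^2/2} = e^{-c\delta^2}\ .
\end{equation}
The final inequality $e^{-c\delta^2}\ge 1-c\delta^2$ is the standard bound $e^{-x}\ge 1-x$ for $x\ge 0$.

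No step here is a real obstacle: the completing-the-square identity is routine, and the convexity-based pointwise bound~\eqref{eq:perodicfunctionftransf} is already in place. The only point requiring slight care is matching the parameter $s$ correctly (the $2c$ in the exponent of the denominator forces $\pi/s^2 = 2c$, which is what makes the two $e^{-c\delta^2/2}$ factors combine neatly into $e^{-c\delta^2}$); getting this identification right is really the whole content of the proof.
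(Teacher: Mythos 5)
Your proof is correct and follows exactly the paper's route: complete the square to factor out $e^{-c\delta^2/2}$, recognize the resulting ratio as $f_s(-\delta/2)/f_s(0)$ with $s=\sqrt{\pi/(2c)}$, and invoke Eq.~\eqref{eq:perodicfunctionftransf}. There is nothing to add.
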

	\begin{proof}
		We have
		\begin{align}
			\frac{\sum_{z\in\mathbb{Z}} e^{-c z^2}e^{-c(z-\delta)^2}}{\sum_{z\in\mathbb{Z}} \left(e^{-c z^2}\right)^2 }
			&=e^{-c \delta^2/2}\cdot \frac{\sum_{z\in\mathbb{Z}} e^{-2c (z-\delta/2)^2}}{\sum_{z\in\mathbb{Z}} e^{-2c z^2} }\\
			&=e^{-c \delta^2/2}\cdot\frac{f_s(-\delta/2)}{f_s(0)}\qquad \textrm{ with } s=\sqrt{\pi/(2c)}\ ,
		\end{align}
		and $f_s$ as defined in Eq.~\eqref{eq:fsdefinitionperiodicgaussian}.
		Application of~\eqref{eq:perodicfunctionftransf} gives the lower bound
		\begin{align}
			\frac{f_s(-\delta/2)}{f_s(0)}\geq e^{-\pi (\delta/2)^2/s^2}=e^{-c\delta^2/2}\ ,
		\end{align}
		and the claim follows.
	\end{proof}

	\subsection{A tail bound for discrete Gaussian distributions}\label{sec:tailbound discrete gaussian}
	
	We need a tail bound for discrete Gaussian distribution on $\mathbb{Z}$. This is a specializations of standard results that we reproduce here for completeness. We refer to e.g.,~\cite[Lecture~7]{SMdadushdukas} and \cite{SMBanaszczyk1993, SMMicciancioRegev04, SMPeikert16} for more general results.
	
	To derive the bound we need the following statement relating the mass of discrete Gaussian distributions on $\mathbb{Z}$ with different width parameters.
	\begin{lemma}\label{lem:gaussiansumrelativeboundpoisson}
		For $s>0$, let $\rho_s(z)=\exp(-\pi z^2/s^2)$ and $\rho_s(\mathbb{Z})=\sum_{z\in\mathbb{Z}} \rho_s(z)$.
		We have
		\begin{align}
			\rho_{\alpha s}(\mathbb{Z})\leq \alpha \rho_s(\mathbb{Z})\qquad\textrm{ for all }s>0, \alpha\geq 1\ .
		\end{align}
	\end{lemma}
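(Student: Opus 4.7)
The plan is to prove the inequality by invoking the Poisson summation formula, which converts a sum of widely-spread Gaussians into a sum of narrow ones (and vice versa), and then exploit monotonicity. Concretely, with the Fourier convention $\widehat{f}(k)=\int f(z)e^{-2\pi i k z}\,dz$, one has $\widehat{\rho_s}(k)=s\,\rho_{1/s}(k)$, so Poisson summation on the lattice $\mathbb{Z}$ yields the duality
\begin{align}
\rho_s(\mathbb{Z})\;=\;s\,\rho_{1/s}(\mathbb{Z})\qquad\text{for every }s>0.
\end{align}
This is the key identity, and it is classical (see e.g.\ \cite{SMAntoniZygmund1935}).

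Applying this identity to both $\rho_{\alpha s}(\mathbb{Z})$ and $\rho_s(\mathbb{Z})$, I would rewrite the ratio as
\begin{align}
\frac{\rho_{\alpha s}(\mathbb{Z})}{\rho_s(\mathbb{Z})}
\;=\;\frac{\alpha s\cdot \rho_{1/(\alpha s)}(\mathbb{Z})}{s\cdot \rho_{1/s}(\mathbb{Z})}
\;=\;\alpha\cdot\frac{\rho_{1/(\alpha s)}(\mathbb{Z})}{\rho_{1/s}(\mathbb{Z})}.
\end{align}
The remaining step is to observe that $\sigma\mapsto \rho_\sigma(\mathbb{Z})=\sum_{z\in\mathbb{Z}}e^{-\pi z^2/\sigma^2}$ is a monotonically nondecreasing function of $\sigma>0$, since each summand $e^{-\pi z^2/\sigma^2}$ is nondecreasing in $\sigma$ for every fixed $z\in\mathbb{Z}$. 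Because $\alpha\geq 1$ implies $1/(\alpha s)\leq 1/s$, monotonicity gives $\rho_{1/(\alpha s)}(\mathbb{Z})\leq \rho_{1/s}(\mathbb{Z})$, so the displayed ratio is bounded by $\alpha$, which is exactly the claim.

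The proof is quite short and its only real ingredient is the Poisson summation identity $\rho_s(\mathbb{Z})=s\,\rho_{1/s}(\mathbb{Z})$. There is no serious obstacle; the only subtle point is that the direct, ``pointwise'' monotonicity argument goes the wrong way --- $\sigma\mapsto\rho_\sigma(\mathbb{Z})$ is increasing in $\sigma$, so one cannot directly compare $\rho_{\alpha s}(\mathbb{Z})$ with $\alpha \rho_s(\mathbb{Z})$ term by term. Passing through the Poisson dual is what makes the monotonicity work in the right direction, and absorbs the extra factor of $\alpha$ coming from the Fourier scaling $\widehat{\rho_s}=s\,\rho_{1/s}$.
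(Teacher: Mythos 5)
Your proof is correct and takes essentially the same route as the paper's: both apply the Poisson summation identity $\rho_s(\mathbb{Z})=s\,\rho_{1/s}(\mathbb{Z})$, reduce the comparison to $\rho_{1/(\alpha s)}(\mathbb{Z})\leq\rho_{1/s}(\mathbb{Z})$ via monotonicity of $\sigma\mapsto\rho_\sigma(\mathbb{Z})$, and then apply Poisson again. The only differences are cosmetic (ratio form vs.\ a chain of (in)equalities, sign in the Fourier convention, which is immaterial since the Gaussian is even).
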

	\begin{proof} The proof is included for completeness, see e.g.,~\cite[Lecture~7]{SMdadushdukas} and~\cite{SMBanaszczyk1993}.
		This is a consequence of the Poisson summation formula~\cite{SMAntoniZygmund1935}, which states
		that for an aperiodic function $f:\mathbb{R}\rightarrow\mathbb{R}$, we have 
		\begin{align}
			\sum_{z\in\mathbb{Z}} f(z)&=\sum_{k\in\mathbb{Z}} \widehat{f}(k)\qquad\textrm{ where }\qquad \widehat{f}(k)=\int f(x)e^{2\pi ikx }dx\ . 
		\end{align}
		In particular, since the function~$f=\rho_t$ has Fourier transform 
		$\widehat{f}=t\rho_{1/t}$, we have
		\begin{align}
			\rho_t(\mathbb{Z})&=t\rho_{1/t}(\mathbb{Z})\qquad\textrm{ for any }\qquad t>0\ .
		\end{align}
		With $t=\alpha s$, we obtain
		\begin{align}
			\rho_{\alpha s}(\mathbb{Z})&=(\alpha s) \rho_{1/(\alpha s)}(\mathbb{Z})\\
			&=(\alpha s)\sum_{z\in\mathbb{Z}}e^{-\pi z^2 (\alpha s)^2}\\
			&\leq (\alpha s)\sum_{z\in\mathbb{Z}}e^{-\pi z^2  s^2}\qquad\textrm{ because }\alpha\geq 1\\
			&=\alpha s\rho_{1/s}(\mathbb{Z})\ .
		\end{align}
		Applying the Poisson summation formula again 
		gives~$s\rho_{1/s}(\mathbb{Z})=\rho_s(\mathbb{Z})$, implying the claim.
	\end{proof}

	We have the following tail bound for discrete Gaussian distributions on $\mathbb{Z}$.
	
	\begin{lemma}\label{thm:discretegaussianconcentration}
		Let $s > 0$.  Set $\rho_s(z)=\exp(-\pi z^2/s^2)$. Define a random variable~$X_s$ on~$\mathbb{Z}$ by
		\begin{align}
			\Pr\left[X_s=x\right]&=\frac{\rho_s(x)}{\rho_s(\mathbb{Z})}\ ,
		\end{align}
		where we write $\rho_s(\mathbb{Z})=\sum_{z\in\mathbb{Z}}\rho_s(z)$. 
		Then we have 
		\begin{align}
			\Pr\left[|X_s|\ge r \right]&\le 2\cdot e^{-\frac{3\pi}{4}(r/s)^2}\ 
		\end{align}
		for $r>0$.
	\end{lemma}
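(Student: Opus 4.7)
The plan is to prove this tail bound by combining a pointwise Gaussian reweighting (a Chernoff-style splitting of the exponent) with the relative bound on Gaussian masses provided by Lemma~\ref{lem:gaussiansumrelativeboundpoisson}. The parameter in the reweighting will be tuned so that the final constant matches the $3\pi/4$ in the claim.

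First I would rewrite the tail probability as a ratio of Gaussian sums,
\begin{equation}
\Pr[|X_s| \geq r] \;=\; \frac{\sum_{x\in\mathbb{Z},\,|x|\geq r} \rho_s(x)}{\rho_s(\mathbb{Z})}\ .
\end{equation}
The core idea is then to factor the summand as $\rho_s(x) = \rho_{2s}(x)\cdot \exp\!\bigl(-\pi x^2(1 - 1/4)/s^2\bigr) = \rho_{2s}(x)\cdot e^{-\tfrac{3\pi}{4}(x/s)^2}$, which is an identity that holds for every $x \in \mathbb{R}$. On the restricted range $|x|\geq r$, the second factor is monotonically bounded by $e^{-\tfrac{3\pi}{4}(r/s)^2}$, so one obtains
\begin{equation}
\sum_{|x|\geq r}\rho_s(x) \;\leq\; e^{-\tfrac{3\pi}{4}(r/s)^2}\sum_{|x|\geq r}\rho_{2s}(x)\;\leq\; e^{-\tfrac{3\pi}{4}(r/s)^2}\,\rho_{2s}(\mathbb{Z})\ .
\end{equation}

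Next I would apply Lemma~\ref{lem:gaussiansumrelativeboundpoisson} with $\alpha=2$ to conclude $\rho_{2s}(\mathbb{Z}) \leq 2\,\rho_s(\mathbb{Z})$, which is exactly the consequence of the Poisson summation formula proved there. Dividing by $\rho_s(\mathbb{Z})$ yields
\begin{equation}
\Pr[|X_s|\geq r] \;\leq\; 2\,e^{-\tfrac{3\pi}{4}(r/s)^2}\ ,
\end{equation}
which is the claim.

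The proof is essentially a two-line calculation once the factorization is chosen correctly, so there is no serious obstacle; the only modest subtlety is picking the right split of the exponent. A different choice of parameter $\alpha > 1$ would produce the bound $\alpha\,\exp(-\pi(1-\alpha^{-2})(r/s)^2)$, and the value $\alpha = 2$ is precisely what yields the stated constants $(2,\,3\pi/4)$. Note that sharper Banaszczyk-type bounds with constant $\pi$ rather than $3\pi/4$ exist in the literature, but are not needed here and would require a more delicate argument; the weaker constant suffices throughout the rest of the paper (e.g.\ in Lemma~\ref{lem:proj norm}).
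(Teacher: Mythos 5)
Your proof is correct and rests on the same essential ingredient as the paper's: the comparison $\rho_{2s}(\mathbb{Z})\le 2\rho_s(\mathbb{Z})$ supplied by Lemma~\ref{lem:gaussiansumrelativeboundpoisson} (Poisson summation). The paper reaches the bound via the Chernoff/MGF route, applying the estimate $\Pr[A\ge t]\le \ExpE[e^{\lambda A^2}]/e^{\lambda t^2}$ to $A=\sqrt{2\pi}\,|X_s|$ and computing the ratio $\ExpE[e^{\alpha\pi X_s^2}]=\rho_{s/\sqrt{1-\alpha s^2}}(\mathbb{Z})/\rho_s(\mathbb{Z})$ before optimizing $\alpha=3/(4s^2)$, which makes $s/\sqrt{1-\alpha s^2}=2s$ so the same ratio appears. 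You instead carry out the pointwise factorization $\rho_s(x)=\rho_{2s}(x)\,e^{-\frac{3\pi}{4}(x/s)^2}$ directly on the tail sum, bound the second factor by its value at $|x|=r$, and extend the remaining sum to all of $\mathbb{Z}$. Mathematically the two derivations coincide after the choice $\alpha=3/(4s^2)$; your version is slightly more elementary in that it avoids invoking Markov's inequality and the MGF abstraction, and it also makes the parameter freedom (general $\alpha>1$, trading constant prefactor against exponent) more transparent. Either presentation is appropriate here.
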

	
	\begin{proof}
		The proof follows~\cite[Lecture~7]{SMdadushdukas} to which we refer to for more general results, and more details. We use the tail estimate
		\begin{align}
			\Pr\left[A\geq t\right]\leq \frac{\ExpE\left[e^{\lambda A^2}\right]}{e^{\lambda t^2}}\qquad\textrm{ for any }t,\lambda>0\ ,
		\end{align}
		applicable to any random variable~$A$ taking on non-negative values. We apply this bound to $A:=\sqrt{2\pi }|X_s|$, $t:=\sqrt{2\pi}\cdot r$ and $\lambda:=\alpha/2$ and obtain
		\begin{align}
			\Pr\left[|X_s|\ge r\right]&\le \frac{\ExpE\left[e^{\alpha\pi X_s^2}\right]}{e^{\alpha\pi r^2}}\ .
		\end{align}
		We have
		\begin{align}
			\ExpE\left[e^{\alpha \pi X_s^2}\right]&=\frac{
				\sum_{x\in\mathbb{Z}}e^{\alpha\pi x^2}e^{-\pi x^2/s^2}
			}{\rho_s(\mathbb{Z})}\\
			&= \frac{\rho_{s/\sqrt{1-\alpha s^2}}(\mathbb{Z})}{\rho_s(\mathbb{Z})}\\
			&\leq \frac{1}{\sqrt{1-\alpha s^2}} \qquad \textrm{ whenever } 0<\alpha<1/s^2\ ,
		\end{align}
		where we obtained the last inequality by Lemma~\ref{lem:gaussiansumrelativeboundpoisson} 
		in the form~$\rho_{s/\sqrt{1-\alpha s^2}}(\mathbb{Z})\leq \frac{1}{\sqrt{1-\alpha s^2}} \rho_{s}(\mathbb{Z})$ since $1/\sqrt{1-\alpha s^2}\ge 1$ for $0<\alpha<1/s^2$. We have
		\begin{align}
			\Pr\left[|X_s| \ge r\right]&\le \frac{1}{\sqrt{1-\alpha s^2}} e^{-\alpha\pi r^2} \qquad \textrm{ whenever } 0<\alpha<1/s^2\ .
		\end{align}
		Choosing $\alpha:=3/(4s^2)< 1/s^2$ gives the claim
		\begin{align}
			\Pr\left[|X_s|\ge r\right] &\le 2\cdot e^{-\frac{3\pi}{4}(r/s)^2}\ .
			\label{eq:lasteqrsmv}
		\end{align}
	\end{proof}

	\subsection{A bound derived from Jacobi's triple product identity}
	We will also use the following lemma to lower bound sums of the form $\sum_{k \in \mathbb{Z}} \xi^{k} b^{k^2}$ when $\xi\in\mathbb{C}\backslash \{0\}$ and $b\in [0,1)$.
	\begin{lemma}\label{lem:gausssumjacobitriple}
		For $b\in [0,1)$ and $\xi\in\mathbb{C}\backslash \{0\}$ such that
		\begin{align}
			\max\{|\xi|,|\xi|^{-1}\}&\geq \sqrt{2}+1\,.\label{eq:lowerboundxinorm}
		\end{align} we have 
		\begin{align}
			\left|\sum_{k \in \mathbb{Z}} \xi^{k} b^{k^2}\right|&\ge \left(1-2b\max\{|\xi|,|\xi|^{-1}\}\right)\sum_{k=0}^\infty (b^4)^{k^2}\\
			&\ge \left(1-2b\max\{|\xi|,|\xi|^{-1}\}\right)\,.
		\end{align}
	\end{lemma}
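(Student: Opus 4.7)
The plan is to lower-bound $|S| := |\sum_k \xi^k b^{k^2}|$ in two stages: first, via the Jacobi triple product identity together with a phase-minimization argument, reduce to a real-variable theta-type series $T$; then verify $T \ge (1-2bM)\sum_{k\ge 0}b^{4k^2}$ by a careful term-by-term comparison.

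Set $M := \max\{|\xi|, |\xi|^{-1}\}$. Since $S(\xi) = S(\xi^{-1})$ (substitute $k\mapsto -k$), I may assume $|\xi|=M$. If $1-2bM<0$ then $(1-2bM)\sum_{k\ge 0}b^{4k^2}\le 0 \le |S|$ and the first inequality is trivial; the second is meaningful only when $1-2bM\ge 0$. So I assume $bM\le 1/2$. Jacobi's triple product identity reads
\begin{align}
S = \prod_{m\ge 1}(1-b^{2m})(1+b^{2m-1}\xi)(1+b^{2m-1}\xi^{-1}).
\end{align}
For each $m\ge 1$, I will show the factor-wise bound $|1+b^{2m-1}\xi|\,|1+b^{2m-1}\xi^{-1}| \ge (1-b^{2m-1}M)(1-b^{2m-1}/M)$. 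Writing $\xi=Me^{i\theta}$ and $u:=\cos\theta$, the square of the LHS is a quadratic in $u$ with positive leading coefficient $4b^{4m-2}$; its unconstrained minimizer $u^* = -(s+s')(1+ss')/(4ss')$ (where $s:=b^{2m-1}M$, $s':=b^{2m-1}/M$) satisfies $u^*\le -1$, since $(s+s')(1+ss')\ge 4ss'$ follows from AM-GM applied twice: $s+s'\ge 2\sqrt{ss'}$ and $1+ss'\ge 2\sqrt{ss'}$. Hence the minimum on $[-1,1]$ is at $u=-1$, giving the factor-wise bound (both $1-b^{2m-1}M$ and $1-b^{2m-1}/M$ being positive under $bM\le 1/2$). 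Multiplying over $m$ and applying Jacobi's identity in reverse with $\xi$ replaced by $-M$ yields
\begin{align}
|S| \ge \sum_n (-M)^n b^{n^2} =: T.
\end{align}

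For the second stage, pair $n=\pm k$ to write $T = 1 + \sum_{k\ge 1}(-1)^k(M^k+M^{-k})b^{k^2}$. The difference $T - (1-2bM)\sum_{k\ge 0}b^{4k^2}$ decomposes into a positive $b^1$-piece $(M-M^{-1})b\ge 0$ and, for each $k\ge 1$, a block
\begin{align}
A_k = b^{4k^2}\bigl[(M^{2k}+M^{-2k}-1) + 2bM - (M^{2k+1}+M^{-(2k+1)})b^{4k+1}\bigr],
\end{align}
obtained via the identity $(2k+1)^2=4k^2+4k+1$. Since $M^{2k}+M^{-2k}\ge 2$ and $(M^{2k+1}+M^{-(2k+1)})b^{4k+1}\le 2M(Mb)^{2k}b^{2k+1}\le 2Mb/4^k$ under $bM\le 1/2$, the bracket exceeds $1+2bM-2Mb/4^k \ge 1>0$, so each $A_k\ge 0$ and the first inequality follows. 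The second inequality $(1-2bM)\sum_{k\ge 0}b^{4k^2}\ge (1-2bM)$ is immediate from $\sum\ge 1$ together with $1-2bM\ge 0$.

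The main technical hurdle will be the bookkeeping in the second stage: the exponents $k^2$ appearing in $T$ are misaligned with the exponents $4k^2$ and $4k^2+1$ in the target, and the signs in $T$ alternate. The identity $(2k+1)^2=4k^2+4k+1$ supplies the right grouping --- pairing each negative $T$-contribution with the preceding positive one --- while the geometric decay $(Mb)^{2k}\le 4^{-k}$ furnished by $bM\le 1/2$ ensures the negatives are absorbed by the positives.
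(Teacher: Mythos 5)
Your proof is correct. It follows the same overall architecture as the paper's: Jacobi's triple product identity, a factor-wise lower bound, reversal of Jacobi's identity to obtain a real theta series $T$, and a final term-by-term comparison of $T$ to $(1-2bM)\sum_{k\geq 0}b^{4k^2}$. Within that skeleton, two details differ. First, for the factor-wise bound the paper expands $(1+\xi w)(1+\xi^{-1}w)=1+(\xi+\xi^{-1})w+w^2$, applies the triangle inequality to get $\geq 1-|\xi+\xi^{-1}|w+w^2$, and then introduces an auxiliary $x\in[1,M]$ with $x+x^{-1}=|\xi+\xi^{-1}|$ to recognize this as $(1-xw)(1-x^{-1}w)$; you instead fix $|\xi|=M$ and minimize $|1+w\xi|^2|1+w\xi^{-1}|^2$ directly over $u=\cos\theta$, showing via AM--GM that the unconstrained minimizer of the resulting quadratic lies at $u^*\leq -1$, so the minimum on $[-1,1]$ is at $u=-1$, giving the (weaker but sufficient) bound $(1-Mw)(1-M^{-1}w)$. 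This buys you the conclusion with $-M$ directly, at the cost of a slightly looser intermediate estimate than the paper's $-x$. Second, you compare $T_M:=\sum_n(-M)^nb^{n^2}$ to the target by a block decomposition $A_k$ pairing the $(2k)$-th and $(2k+1)$-th terms, whereas the paper compares $T_x$ by bounding the even sum from below and the odd sum from above by $2bx\sum_{k\geq 0}x^{2k}b^{4k^2}$, then uses $x\geq 1$ and $x\leq M$. Both devices do the same arithmetic absorption of odd-index negative terms into even-index positive ones; yours is a bit more explicit in the bookkeeping. One cosmetic caveat: as you note, when $1-2bM<0$ the second displayed inequality of the lemma is literally false (since $\sum(b^4)^{k^2}>1$), though the lemma's net conclusion $|S|\geq 1-2bM$ is trivially true; this wrinkle is already present in the paper's statement and proof, and your explicit remark about it is welcome.
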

	
	\begin{proof}
		Jacobi's triple product identity~\cite{SMJacobi} states that for any $0 \leq b<1$ and any $\xi \in \mathbb{C} \backslash \{0\}$,  we have
		\begin{align}
			\sum_{k \in \mathbb{Z}} \xi^{k} b^{k^2}&=\prod_{k=1}^\infty\left(1-b^{2 k}\right)\left(1+\xi b^{2 k-1}\right)\left(1+\xi^{-1} b^{2 k-1}\right)\ .\label{eq:jacobitriple}
		\end{align}
		We note that
		\begin{align}
			\left(1+\xi b^{2 k-1}\right)\left(1+\xi^{-1} b^{2 k-1}\right)
			&=1+(\xi+\xi^{-1}) b^{2k-1}+b^{2(2k-1)}\,
		\end{align}
		and thus
		\begin{align}
			\left|\left(1+\xi b^{2 k-1}\right)\left(1+\xi^{-1} b^{2 k-1}\right)\right|
			&\geq 1-|\xi+\xi^{-1}| b^{2k-1}+b^{2(2k-1)}\,.\label{eq:intermediatelowerboundb2kxi}
		\end{align}
		Let $r:=\max\{|\xi|,|\xi|^{-1}\}$.
		It is easy to check that 
		\begin{align}
			|\xi+\xi^{-1}|\in [r-r^{-1},r+r^{-1}]\,.\label{eq:containmentboundm}
		\end{align}
		In particular, 
		the assumption~\eqref{eq:lowerboundxinorm}
		implies that
		\begin{align}
			|\xi+\xi^{-1}|&\geq \left|r-r^{-1}\right|\geq 2\,.\label{eq:rrinvlowerboundtwo}
		\end{align}
		Since $|\xi+\xi^{-1}|\geq 2$, there is some $x\geq 1$ such that $x+x^{-1}=|\xi+\xi^{-1}|$. 
		Indeed, we can use the inverse~
		\begin{align}
			\begin{matrix}
				f^{-1}&:[2,\infty) & \rightarrow & [1,\infty)\\
				& z& \mapsto &f^{-1}(z)=\frac{1}{2}(z+\sqrt{z^2-4})
			\end{matrix}
		\end{align}
		of the function
		\begin{align}
			\begin{matrix}
				f&:[1,\infty) & \rightarrow & [2,\infty)\\
				& y & \mapsto & f(y):=y+y^{-1}\,,
			\end{matrix} 
		\end{align}
		and set 
		\begin{align}
			x:=f^{-1}(|\xi+\xi^{-1}|)\,.
		\end{align}
		Since~$f^{-1}$ is monotonically increasing in $z$, we have by~\eqref{eq:containmentboundm} that
		\begin{align}
			x\leq f^{-1}(r+r^{-1})=r\,,
		\end{align}
		as well as
		\begin{align}
			f^{-1}(r-r^{-1})&\leq x\,.
		\end{align}
		By a simple calculation, we also have
		\begin{align}
			f^{-1}(r-r^{-1})&\geq f^{-1}(2)=1\,,
		\end{align}
		where we used Eq.~\eqref{eq:rrinvlowerboundtwo} and the fact that~$f^{-1}$ is monotonically increasing. In conclusion, we have shown that 
		\begin{align}
			1&\leq x\leq r\,.\label{eq:xinequalityxm}
		\end{align}
		
		By definition, we have 
		\begin{align}
			(-x)+(-x)^{-1}&=-(x+x^{-1})=-|\xi+\xi^{-1}|\,.\label{eq:keyidentityxxi}
		\end{align}
		It follows that
		\begin{align}
			\left|\sum_{k \in \mathbb{Z}} \xi^{k} b^{k^2}\right|
			&=\prod_{k\geq 1} (1-b^{2k})\left|\left(1+\xi b^{2 k-1}\right)\left(1+\xi^{-1} b^{2 k-1}\right)\right|\qquad\textrm{ by \eqref{eq:jacobitriple}}\\
			&\geq \prod_{k=1}^\infty (1-b^{2k}) \left(1-|\xi+\xi^{-1}|b^{2k-1}+b^{2(2k-1)}\right)\qquad\textrm{ by~\eqref{eq:intermediatelowerboundb2kxi}}\\
			&=\prod_{k=1}^\infty (1-b^{2k}) \left(1+((-x)+(-x)^{-1})b^{2k-1}+b^{2(2k-1)}\right)\qquad\textrm{ by \eqref{eq:keyidentityxxi}}\\
			&=\prod_{k=1}^\infty (1-b^{2k}) 
			\left(1+(-x)b^{2k-1}\right)\left(1+(-x)^{-1}b^{2k-1}\right)\\
			&=\sum_{k\in\mathbb{Z}}(-x)^{k}b^{k^2}\qquad\textrm{ by Jacobi's triple product identity \eqref{eq:jacobitriple}}\\
			&=\sum_{k\in\mathbb{Z}}x^{2k}b^{(2k)^2}-\sum_{k\in\mathbb{Z}}x^{2k+1}b^{(2k+1)^2}\,.\label{eq:lastrelevantjcaobi}
		\end{align}
		We have (using that $x\ge 1$ by~\eqref{eq:xinequalityxm} and thus $x^{-2k}\geq 0$) that
		\begin{align}
			\sum_{k\in\mathbb{Z}}x^{2k}b^{(2k)^2}&=1+ \sum_{k=1}^\infty (x^{2k}+x^{-2k})b^{(2k)^2}\\
			&\geq 1+\sum_{k=1}^\infty x^{2k}b^{(2k)^2}\\
			&=\sum_{k=0}^\infty x^{2k}b^{(2k)^2}\,,\label{eq:expressionexplicitsumnv}
		\end{align}
		and
		\begin{align}
			\sum_{k\in\mathbb{Z}}x^{2k+1}b^{(2k+1)^2}
			&=\sum_{k=0}^\infty \left(x^{2k+1}+x^{-(2k+1)}\right)b^{(2k+1)^2}\\
			&\leq 2\sum_{k=0}^\infty x^{2k+1}b^{(2k+1)^2}\\
			&\leq 2\sum_{k=0}^\infty(b^{4k+1}x) x^{2k}b^{(2k)^2}\\
			&\leq 2bx\sum_{k=0}^\infty x^{2k}b^{(2k)^2}\,.\label{eq:lowerboundmznx}
		\end{align}
		By 
		combining~\eqref{eq:lastrelevantjcaobi} with~\eqref{eq:expressionexplicitsumnv} and~\eqref{eq:lowerboundmznx},
		we obtain
		\begin{align}
			\left|\sum_{k \in \mathbb{Z}} \xi^{k} b^{k^2}\right|
			&\geq(1-2bx) \sum_{k=0}^\infty x^{2k}b^{(2k)^2}\\
			&\ge (1-2bx) \sum_{k=0}^\infty (b^4)^{k^2}
		\end{align}
		where we used that $x\ge 1$, see~\eqref{eq:xinequalityxm}. Since we also have (also by~\eqref{eq:xinequalityxm}) that $x\leq \max\{|\xi|,|\xi|^{-1}\}$, the claim follows.
	\end{proof}
	
	\newpage
	\normalem
	\renewcommand{\refname}{References for the  Supplementary Material}
	{\normalsize
		
	}

\end{document}